\newtheorem{theoreme}{Theorem}[section]
\newtheorem{proposition}[theoreme]{Proposition}
\newtheorem{lemma}[theoreme]{Lemma}
\newtheorem{corollary}[theoreme]{Corollary}
\newtheorem{remark}[theoreme]{Remark}
\DeclareMathAlphabet{\mathpzc}{OT1}{pzc}{m}{it}
\def\supp{{\rm supp}}
\def\R{\mathbb R}
\def\C{\mathbb C}
\def\Z{\mathbb Z}
\def\S{\mathbb S}
\def\Dom{\mathcal D}
\def\Im{{\rm Im}}
\def\Re{{\rm Re}}
\def\rr{{\mathbb R}}
\def\zz{{\mathbb Z}}
\def\cc{{\mathbb C}}
\def\e{{\rm e}}
\def\p{{\rm p}}
\def\i{{\rm i}}
\def\r{{\rm r}}
\def\d{{\rm d}}
\def\ess{{\rm ess}}
\def\Ln{{\rm Ln}}
\def\Q{X} 
\def\H{\mathscr H}
\def\B{\mathscr B}
\def\Ju{J}
\renewcommand\bar{\overline}
\newcommand\cF{{\mathscr F}}
\newcommand\cG{{\mathcal G}}
\newcommand\cD{{\mathcal D}}
\newcommand\cH{{\mathcal H}}
\newcommand\cY{{\mathscr Y}}
\def\bbbone{{\mathchoice {\rm 1\mskip-4mu l} {\rm 1\mskip-4mu l}
{\rm 1\mskip-4.5mu l} {\rm 1\mskip-5mu l}}}
\def\one{\bbbone}
\def\slim{{\rm s-}\lim}
\def\wlim{{\rm w-}\lim}
\def\Ka{\mathcal K}
\def\Ia{\mathcal I}
\def\Ha{\mathcal H}
\def\Ja{\mathcal J}
\def\Ya{\mathcal Y}
\def\La{\mathcal L}
\def\t{{\scriptscriptstyle\#}}
\begin{document}

\title{On Schr\"odinger operators with inverse square potentials on the half-line}

\author{
Jan Derezi\'{n}ski\footnote
  {The financial support of the National Science
Center, Poland, under the grant UMO-2014/15/B/ST1/00126, is gratefully
acknowledged.}\\
Department of Mathematical Methods in Physics, Faculty of Physics\\
University of Warsaw,  Pasteura 5, 02-093, Warszawa, Poland\\
email: jan.derezinski@fuw.edu.pl
\\  \\
Serge Richard\footnote{On leave of absence from
Univ.~Lyon, Universit\'e Claude Bernard Lyon 1, CNRS UMR 5208, Institut Camille Jordan,
43 blvd.~du 11 novembre 1918, F-69622 Villeurbanne cedex, France.} \footnote{
Supported by JSPS Grant-in-Aid for Young Scientists A
no 26707005.}\\
Graduate school of mathematics,
Nagoya University, \\
Chikusa-ku,
Nagoya 464-8602, Japan \\
email: richard@math.nagoya-u.ac.jp}

\maketitle

\vspace{-5mm}

\begin{abstract}
The paper is devoted to  operators given formally by the
expression
\begin{equation*}
-\partial_x^2+\big(\alpha-\frac14\big)\frac{1}{x^{2}}.
\end{equation*}
This expression is  homogeneous of degree minus 2. However,
when we try to realize it as a self-adjoint operator for real
$\alpha$, or closed operator for complex $\alpha$, we find that this homogeneity
can be broken.

This leads to a definition of two holomorphic families of closed
operators on $L^2(\R_+)$, which we denote $H_{m,\kappa}$ and $H_0^\nu$,
with $m^2=\alpha$, $-1<\Re(m)<1$, and where $\kappa,\nu\in\C\cup\{\infty\}$ specify the boundary
condition at $0$.
We study these operators using their explicit solvability in terms of
Bessel-type functions and the Gamma function.
In particular, we show that their point
spectrum has a curious shape: a string of eigenvalues
on a piece of a spiral. Their continuous spectrum is always
$[0,\infty[$. Restricted to their continuous spectrum, we diagonalize
these operators using a generalization of the Hankel
transformation. We also study their scattering theory.

These operators are usually non-self-adjoint. Nevertheless, it
is possible to use concepts typical for the self-adjoint case to study them.
Let us also stress that $-1<\Re(m)<1$ is the maximal region of parameters for which the operators
$H_{m,\kappa}$ can be defined within the framework of the Hilbert space  $L^2(\R_+)$.
\end{abstract}


\tableofcontents

\section{Introduction}
\setcounter{equation}{0}

The family of differential operators
\begin{equation}\label{qfq}
-\partial_x^2+\big(\alpha-\frac14\big)\frac{1}{x^{2}}
\end{equation}
is very special. They are homogeneous of degree $-2$. They appear in
numerous applications, {\it e.g.}~as the radial part of the Laplacian in any
dimension. Their eigenfunctions can be expressed in terms of Bessel-type
functions, and they have a surprisingly long and intricate theory, see for example \cite{Case,CH,EG,FST,MSS,PP70}.

It is natural to try to interpret \eqref{qfq} as a closed operator on
$L^2(\R_+)$. The  most natural interpretation was
given and extensively studied in \cite{BDG}. It
involves setting $\alpha=m^2$ and considering a family of closed
operators $H_m$ defined for $\Re(m)>-1$, depending on $m$
holomorphically.
In fact, the region $\Re(m)>-1$ can be divided into two parts.
For $\Re(m)\geq1$ the operator $H_m$ corresponds to the closure of
\begin{equation}\label{qfq1}
-\partial_x^2+\big(m^2-\frac14\big)\frac{1}{x^{2}}
\end{equation}
restricted to $C_{\rm c}^\infty(\R_+)$.
For $-1<\Re(m)<1$ it is necessary to specify a boundary condition:
$H_{\pm m}$ are both extensions of
\eqref{qfq1} with elements in their domain behaving like $c x^{\frac12\pm m}$ near $0$
for some $c\in \C$.
Note that all operators $H_m$ are homogeneous of degree $-2$.

For $-1<\Re(m)<1$ more general boundary conditions can be studied.
By considering elements of $L^2(\R_+)$ behaving like $c\big(\kappa x^{1/2-m} +x^{1/2+m}\big)$
for some $c\in \C$, one naturally obtains a two-parameter family of closed
operators $H_{m,\kappa}$, with $\kappa\in\C\cup\{\infty\}$. Note
that these operators are no longer homogeneous, except for $H_{m,0}=H_m$ and
$H_{m,\infty}=H_{-m}$.

A separate analysis is required for $m=0$. Possible boundary
conditions for this case are $c\big(x^{1/2}\ln(x) + \nu x^{1/2}\big)$ for some $c\in \C$.
They lead to a family of closed operators that we denote by $H_0^\nu$, where $\nu\in
\C\cup\{\infty\}$.
Note that $H_0^\infty=H_0$, and that this operator is the only homogeneous one amongst
the operators $H_0^\nu$.

The study of the two families of operators $H_{m,\kappa}$ and $H_0^\nu$
is the object of our paper.
We extensively use their explicit
solvability, so that for these operators we can give exact formulas for
many constructions of operator theory. Let us mention part of
the analysis performed below.

The operators $H_m$ have no point spectrum. However, $H_{m,\kappa}$
and $H_0^\nu$ usually have point spectrum, which coincides with the
discrete spectrum. All the eigenvalues are simple
and depend quite sensitively on the parameters.
The number of these eigenvalues can be finite, but also infinite.
Their position form rather interesting patterns on the
complex plane: typically, it is a sequence situated along a piece of a spiral.

The continuous spectrum always coincides with the positive half-line
$[0,\infty[$. One can express the resolvent of our operators in terms
of the \emph{MacDonald} and \emph{modified Bessel functions} $K_m$ and $I_m$.
The resolvent has boundary values at $[0,\infty[$, which can be expressed in
terms of the \emph{Hankel} and \emph{Bessel functions} $H_m^\pm$ and $J_m$.
We make sense out of these boundary values as bounded operators between
appropriate weighted Hilbert spaces, a property often called the
\emph{Limiting Absorption Principle}. We provide formulas for
the spectral projections onto parts of the continuous spectrum,
and introduce bounded invertible operators that diagonalize
$H_{m,\kappa}$ and $H_0^\nu$, except for a small set of parameters that we call {\em exceptional}.
These operators can be called {\em generalized Hankel
transformations}, see also \cite{Tuan} for related results.
Finally, we describe the scattering theory
for $H_{m,\kappa}$ and $H_0^\nu$, giving formulas for the \emph{M{\o}ller (wave) operators} and
for the \emph{scattering operator}.

Let us mention that most of the operators $H_{m,\kappa}$
and $H_0^\nu$ are not self-adjoint, but that
the subfamilies of self-adjoint ones are also exhibited.
More precisely, $H_{m,\kappa}$ are self-adjoint for real $m$ and $\kappa$, or for
purely imaginary $m$ and $|\kappa|=1$. Similarly, $H_0^\nu$ are
self-adjoint for real $\nu$. Thus our analysis fits into a recent
fashion of studying spectral properties of non-self-adjoint
operators. Indeed, we observe that in the non-self-adjoint cases our operators have
quite interesting discrete spectrum.

We also stress that the operators that we study are extremely natural
and appear in many situations. For example, they describe
oscillations of a conical membrane, the
Aharonov-Bohm effect \cite{BDG,PR}, and sticky diffusion \cite{GH}.
They also play an important role in the study of the wave and Klein-Gordon equations on anti-de Sitter spacetime,
see for example \cite{Ba,Ga,IW} and references therein.
We are convinced that they
possess many more applications we are not aware of. Indeed, since
\eqref{qfq1} is a homogeneous expression and the only nonhomogeneity
is due to boundary conditions, we expect that the studied operators
appear in various scaling limits, or \emph{renormalization group analysis}.

The analysis of our paper can also be considered as a large part of
modern theory of the Bessel equation and the Gamma
function (which belong to the oldest objects of mathematics,
going back at least to Euler in 17th century).
Indeed, many identities for Bessel-type functions and the
Gamma function have a meaning in the theory of $H_{m,\kappa}$
and $H_0^\nu$. Let us give some simple examples: the identity
$z\Gamma(z)=\Gamma(z+1)$ is related to the M{\o}ller operators
for the pair $(H_{m+2}, H_m)$, and the identity
$\Gamma(z)\Gamma(1-z)=\frac{\pi}{\sin\pi z}$ is related to the
M{\o}ller operators for the pair $(H_{-m}, H_m)$.
Note that our point of view on Bessel functions is further developed
in Section \ref{sec_Bessel_eq} and in Appendix \ref{secB1}.

Many authors studied various classes of
one-dimensional Schr\"odinger operators on
the half-line, also called Sturm-Liouville operators on the half-line.
Among classic works on this topic,
which included complex potentials and various boundary conditions, let us mention \cite[Chap.~XX]{DS3} and \cite{Na}.

The methods used in our paper are of course adaptations of known approaches.
Let us also note that the resolvents of $H_{m,\kappa}$, resp.~$H_0^\nu$ are rank one perturbations of the resolvents of the operators $H_m$, resp.~$H_0$.
Therefore our analysis can be interpreted as an example of the theory of singular rank one perturbation, which is a well-studied subject.
Nevertheless, it seems that
a large part of our analysis of the operators $H_{m,\kappa}$ and $H_0^\nu$, especially in the non-self-adjoint case, is new. Let us also mention that
there are still some topics about these operators, which are open. In particular, in this paper we do not analyze fully the \emph{exceptional case},
in which the generalized Hankel transformation are unbounded. In this case \emph{spectral singularities} appear, first noted in a similar context in \cite{Na}.

In our paper
we restrict ourselves to a rather narrow, explicit and solvable family of operators. However, it is not easy to find papers about more general classes of operators
that cover the whole family we consider.
In many works on one dimensional Schr\"odinger operators the singularity $\frac{1}{x^2}$ is excluded by restrictive assumptions. This is e.g.~the case
of \cite{DS3,Na}.
Many papers also assume that the operator is dissipative or accretive (its numerical range is contained in the upper, resp.~lower complex half-plane).
Our operators are often neither dissipative nor accretive. In fact it was noted already in \cite{BDG} that for $\Re( m)<0$ and $\Im(m)\neq0$,
the numerical range of $H_m$ is the whole $\C$.

A vast majority of papers about  Schr\"odinger operators
considers only  self-adjoint cases.
One often assumes the essential self-adjointness on $C_{\rm c}^\infty(\R_+)$.
Then $H_m$ with $|m|<1$ are not seen at all.
If one considers the Friedrichs extension, only the case $m\geq0$ is
covered. If one considers both the Friedrichs and Krein extensions,
one throws away the interesting region $m^2<0$. Looking at the
exactly solvable potential $(m^2-\frac14)\frac1{x^2}$ and using theory
of the Bessel equation,
we can check what are the natural
assumptions in our case. In particular, we can notice that it is natural to
include  non self-adjoint cases, which interpolate in an interesting
way between self-adjoint cases,
and to study a holomorphic family of closed operators depending on two complex parameters.

With this idea in mind, let us compare our work with some of the recent
papers dealing with the operator \eqref{qfq1}.
Note that many papers are related to this operator, and therefore we mention only a few of them
(see also the references in these papers).
First of all, let us mention \cite{GZ} in which an extensive study
of Schr\"odinger operators of the form $-\partial_x^2+V$ with singular $V$
is performed. However, when the special case of $V(x)=\big(\alpha-\frac14\big)\frac{1}{x^{2}}$
is considered, only the parameters $m\geq 1$ are considered.
In \cite{EK} the operator \eqref{qfq1} is also thoroughly analyzed in the range
$m\in [0,1[$ but at the end of the day only the Friedrichs self-adjoint extension
is considered.

In \cite{AHM} the Friedrichs realization $H_m$ of the operator \eqref{qfq1}
with $m\in [0,1[$ is considered and the scattering theory is developed
for the pair $(H_m,H_{\rm D})$ with $H_{\rm D}$ the Dirichlet Laplacian on the half-line.
Again, the set of $m$ considered is rather restrictive.
In addition a sentence like ``for $m\geq 1$ no scattering is possible between $H_{\rm D}$ and $H_m$''
(see page 85 of that paper) is in contradiction with the scattering theory developed
in our Section \ref{sec_homogeneous} and even further extended in the subsequent sections.

In the paper \cite{KT} and in the preprint \cite{AB} the Friedrichs realization
of the operator \eqref{qfq1} is also considered for $m\geq 0$. In the former paper
some dispersive estimates are provided for the evolution group $\{\e^{-itH_m}\}_{t\in \R}$
with an emphasis in the dependence in $m$.
In \cite{AB} a new study of the expression \eqref{qfq1} on finite intervals or on the half-line
is performed with the recently introduced approach of boundary triplets.
The Krein and the Friedrichs extensions are indeed considered on the half-line,
but the parameter $m$ is always real.

Finally, let us mention the recent paper \cite{KTT1} and the related subsequent preprint \cite{KTT2}.
In the former one, dispersive estimates are provided for the operator
$-\partial_x^2+\big(m^2-\frac14\big)\frac{1}{x^{2}}+q(x)$ under some additional conditions on the potential $q$.
Here, only the Friedrichs extension for $m>0$ is considered.
Obviously, the additional potential enlarges the set of operators under investigation,
but on the other hand only the simplest realization of these operators is analyzed.
In the preprint \cite{KTT2} only the initial operator \eqref{qfq1} is considered ({\it i.e.}~$q=0$)
for $m\in ]0,1[$ but a rather large family of self-adjoint realizations of this operator
is introduced. Dispersive estimates for these operators are obtained, and
their dependence on the boundary condition at $0$ is emphasized.

The family of operators $H_m$ with $\Re( m)>-1$ was introduced in
\cite{BDG}. Thus our study of $H_{m,\kappa}$ and $H_0^\nu$
can be viewed as a continuation of \cite{BDG}.
It seems, however, that some of the properties of $H_m$,
notably about scattering in the non-self-adjoint case, are described
in our present paper for the first time.

One could ask how complete our analysis is. In particular,
it is natural to ask whether the operators  $H_m$ can be extended holomorphically outside the domain $-1<\Re(m)$, and $H_{m,\kappa}$ outside
$-1<\Re(m)<1$.
Most probably, if we stick to the framework of the Hilbert space $L^2(\rr_+)$,
the answer is negative. A question about whether $H_m$ can be extended holomorphically was formulated in \cite{BDG} and has not been settled rigorously yet.
However,  one can extend the operators $H_{m,\kappa}$
and $H_m$ to larger domains of parameters, if one goes beyond
the framework  of the Hilbert space $L^2(\rr_+)$.
This subject will be considered in a separate paper.

\section{Inverse square potential}
\setcounter{equation}{0}

\subsection{Notation}\label{sec_notation}

$\C^\times$ denotes $\C\setminus\{0\}$.
$\bar \alpha$ means the complex conjugate of $\alpha\in\C$.
If $Y$ is a set, then $\#Y$ denotes the number of elements of $Y$.
If $\Xi$ is a subset of $\R$, then $\one_\Xi$ represents the
characteristic function of $\Xi$.
$C_{\rm c}^\infty(\R_+)$ denotes
the set of smooth functions on $\R_+:=]0,\infty[$ with compact support.

For an operator $A$, we denote by $\cD(A)$ its domain and by $\sigma_\p(A)$
the set of its eigenvalues (its point spectrum). We also use the notation
$\sigma(A)$ for its spectrum and $\sigma_\ess(A)$ for its essential spectrum.
If $z$ is an isolated point of $\sigma(A)$, then $\one_{\{z\}}(A)$
denotes the Riesz projection of $A$ onto $z$. Similarly, if $A$ is
self-adjoint and $\Xi$ is a Borel subset of $\sigma(A)$, then
$\one_\Xi(A)$ denotes the spectral projection of $A$ onto $\Xi$.

A (possibly unbounded)
operator $A$ on a Hilbert space $\cH$
is invertible (with a bounded inverse) if its null space is $\{0\}$,  its range is $\cH$ and
$A^{-1}$ is bounded.
If $A$ is a positive invertible operator on $\cH$ and $s\geq0$, then
$A^{-s}\cH$ denotes  $\Dom( A^s)$ and $A^s\cH$ denotes its (antilinear)
dual. Thus we obtain a nested scale of Hilbert spaces $A^s\cH$, $s\in\R$.

In the sequel we will usually work with the Hilbert space $L^2(\R_+)$ with the
generic variable denoted by $x$ or $y$, and sometimes also by $k$.
It is equipped with the norm denoted by $\|\cdot\|$, the scalar
product
\begin{equation*}
( f|g):=\int_0^\infty \bar{f(x)}g(x)\d x,
\end{equation*}
as well as the bilinear form
\begin{equation}\label{bili}
\langle f|g\rangle:=\int_0^\infty f(x)g(x)\d x.
\end{equation}
If $B$ is an operator on $L^2(\R_+)$, then $B^*$ denotes the usual
Hermitian adjoint of $B$, whereas
$B^\t$  denotes the  adjoint (the transpose) of $B$ w.r.t.~the
\eqref{bili}. Clearly, if $B$ is a bounded linear operator on $L^2(\R_+)$ with
\begin{equation*}
\big(B f\big)(k):=\int_0^\infty B(k,x)g(x)\d x,
\end{equation*}
then
\begin{equation*}
\big(B^* g\big)(x)=\int_0^\infty\bar{ B (k,x)}g(k)\d k,
\end{equation*}
while
\begin{equation*}
\big(B^\t g\big)(x)=\int_0^\infty B (k,x)g(k)\d k.
\end{equation*}

We shall use the symbol $\Q$ to denote the operator of multiplication
by the variable in $\R_+$, {\it i.e.}~$\big(\Q f\big)(x) = xf(x)$
for $f\in\cD(\Q)\subset L^2(\R_+)$ and $x\in \R_+$.
Note that if it is clear that the name of the variable is $x$,
we will also write $x$ instead of $\Q$.
We will often use the scale of Hilbert spaces based on the operator $\langle \Q\rangle:=(1+\Q^2)^{1/2}$,
denoted by $\langle\Q\rangle^{-s}L^2(\R_+)$.
The Sobolev spaces
$\H_0^1(\R_+)$ and  $\H^1(\R_+)$ are the subspaces of
$L^2(\R_+)$ defined as the form domain of the Dirichlet and Neumann
Laplacian respectively.

We will also consider the unitary group $\{U_\tau\}_{\tau\in\R}$ of dilations
acting on $f\in L^2(\R_+)$
as $\big(U_\tau f\big)(x) = \e^{\tau/2}f(\e^\tau x)$.
An operator $B$ on $L^2(\R_+)$ is said to be
\emph{homogeneous of degree $\alpha \in \R$} if
$U_{\tau}\Dom(B)\subset \Dom(B)$ for any $\tau \in \R$ and if the
equality $U_{\tau}BU_{\tau}^{-1} = \e^{\alpha \tau}B$ holds on
$\Dom(B)$. For instance, $\Q$ is an operator of degree $1$.
The generator of dilations is denoted by $A$, so that
$U_\tau=\e^{\i\tau A}$. On suitable functions $f$
one has
$$
Af(x)=\frac{1}{2\i}(\partial_x x+x\partial_x)f(x).
$$

The following holomorphic functions are understood as their \emph{principal
bran\-ches}, that is, their domain is $\C\setminus]-\infty,0]$ and on
$]0,\infty[$ they coincide with their usual definitions from real analysis:
$\ln(z)$,  $\sqrt z$, $z^\lambda$, $\arg (z):=\Im \big(\ln(z)\big)$. $\Ln(z)$ will
denote the multivalued logarithm. This means, if $w_0$
satisfies $\e^{w_0}=z$, then $\Ln(z)$ is the \emph{set}
$w_0+2\pi\i\Z$.

The Wronskian $W(f,g)$ of two continuously differentiable functions
$f,g$ on $\R_+$ is given by
the expression
\begin{equation}\label{eq_Wronsk}
W_x(f,g)\equiv W(f,g)(x):=f(x)g'(x)-f'(x)g(x).
\end{equation}

\subsection{Maximal and minimal homogenous Schr\"odinger operators}

For any $\alpha \in \C$ we consider the differential expression
\begin{equation*}
L_\alpha :=-\partial_x^2+\big(\alpha-\frac14\big)\frac{1}{x^{2}}
\end{equation*}
acting on distributions on $\R_+$, and denote by $L_\alpha ^{\min}$ and $L_\alpha ^{\max}$
the corresponding minimal and maximal operators
associated with it in $L^2(\R_+)$, see \cite[Sec.~4 \& App.~A]{BDG}
for details. We simply recall from this reference that
\begin{equation*}
\Dom(L_\alpha ^{\max}) = \{f\in L^2(\R_+)\mid L_\alpha  f\in L^2(\R_+)\}
\end{equation*}
and that $\Dom(L_\alpha ^{\min})$ is
the closure of the restriction of $L_\alpha$ to
$C_{\rm c}^\infty(\R_+)$.
Clearly, both $L_\alpha ^{\min}$ and $L_\alpha ^{\max}$ are homogeneous of degree
$-2$.
Note that from now on we shall simply say \emph{homogeneous},
without specifying the degree $-2$. In addition, the following relation holds:
\begin{equation*}
\big(L_\alpha ^{\min}\big)^* = L_{\bar \alpha}^{\max}.
\end{equation*}

Let us recall some additional results which have been obtained in \cite[Sec.~4]{BDG}.
For that purpose, we say that $f\in \Dom(L_\alpha ^{\min})$ around
$0$, (or, by an abuse of notation, $f(x)\in \Dom(L_\alpha ^{\min})$ around $0$) if there exists
$\zeta\in C_{\rm c}^\infty\big([0,\infty[\big)$ with
$\zeta=1$ around $0$ such that $f\zeta\in
\Dom(L_\alpha ^{\min})$.
In addition, it turns out that it is useful to introduce a parameter $m\in\C$ such
that $\alpha=m^2$, even though there are two $m$ corresponding to a single $\alpha\neq0$.
In other words, we shall consider from now on the operator
\begin{equation*}
L_{m^2}:=-\partial_x^2+\big(m^2-\frac14\big)\frac{1}{x^{2}}.
\end{equation*}
With this notation, if $|\Re(m)|\geq 1$ then $L_{m^2}^{\min} = L_{m^2}^{\max}$, while if $|\Re(m)|<1$ then
$L_{m^2}^{\min} \subsetneq L_{m^2}^{\max}$ and $\Dom(L_{m^2}^{\min})$ is a closed subspace of codimension $2$ of $\Dom(L_{m^2}^{\max})$.
More precisely, if $|\Re(m)|<1$ and if $f\in \Dom(L_{m^2}^{\max})$, then there exist $a,b\in \C$
such that:
\begin{align*}
f(x)  - ax^{1/2-m} - bx^{1/2+m}&\in\Dom(L_{m^2}^{\min})\hbox{
around }0  \qquad \hbox{ if } m \neq 0, \\
f(x)-ax^{1/2}\ln(x)- bx^{1/2}&\in\Dom(L_{0}^{\min})\hbox{
around }0.
\end{align*}
In addition, the behavior of any function $g\in \Dom(L_{m^2}^{\min})$ is known, namely $g\in \H^1_0(\R_+)$ and as $x\to 0$:
\begin{align*}
g(x) & = o\big(x^{3/2}\big)\ \hbox{ and }\ g'(x)= o\big(x^{1/2}\big)  \quad \hbox{ if } m \neq 0, \\
g(x) & = o\big(x^{3/2}\ln(x)\big)\ \hbox{ and }\ g'(x)= o\big(x^{1/2}\ln(x)\big)  \quad \hbox{ if } m = 0.
\end{align*}

\subsection{Two families of  Schr\"odinger operators with inverse square potentials}

Let us first recall from \cite[Def.~4.1]{BDG} that for any $m\in\C$ with $\Re (m)>-1$
the operator $H_m$ has been defined as the
restriction of $L_{m^2}^{\max}$ to the domain
\begin{align*}
\Dom(H_{m}) & = \big\{f\in \Dom(L_{m^2}^{\max})\mid
\hbox{ for some }  c \in \C,\\
&\qquad f(x)- c
x^{1/2+m}\in\Dom(L_{m^2}^{\min})\hbox{ around }
0\big\}.
\end{align*}
It is then proved in this reference that $\{H_m\}_{\Re(m)>-1}$
is a holomorphic family of closed homogeneous operators in $L^2(\R_+)$.
In addition, if $\Re(m)\geq 1$, then
\begin{equation*}
H_m=L_{m^2}^{\min}=L_{m^2}^{\max}.
\end{equation*}
For this reason, we shall concentrate on the case $-1 <\Re(m)<1$,
considering a larger family of operators.

For $|\Re(m)|<1$ and for any $\kappa\in \C\cup \{\infty\}$ we define a
family of operators $H_{m,\kappa}$~:
\begin{align}
\label{eq_do1}
\Dom(H_{m,\kappa}) & = \big\{f\in \Dom(L_{m^2}^{\max})\mid
\hbox{ for some }  c \in \C,\\
&\qquad f(x)- c
\big(\kappa x^{1/2-m} +
x^{1/2+m}\big)\in\Dom(L_{m^2}^{\min})\hbox{ around }
0\big\},\qquad\kappa\neq\infty; \nonumber \\
\label{eq_do2}
\Dom(H_{m,\infty}) & = \big\{f\in \Dom(L_{m^2}^{\max})\mid
\hbox{ for some }  c \in \C,\\
&\qquad f(x)- c
x^{1/2-m}\in\Dom(L_{m^2}^{\min})\hbox{ around } 0\big\}.\nonumber
\end{align}

For $m=0$, we introduce an additional family of operators $H_0^\nu$ with
$\nu\in\C\cup \{\infty\}$~:
\begin{align}
\label{eq_do3}
\Dom(H_0^\nu) & = \big\{f\in \Dom(L_{0}^{\max})\mid
\hbox{ for some }  c \in \C,\\
&\qquad f(x)- c
\big( x^{1/2}\ln(x) + \nu x^{1/2}\big)\in\Dom(L_0^{\min})\hbox{ around } 0\big\},\qquad\nu\neq\infty;\nonumber \\
\label{eq_do4}
\Dom(H_0^\infty) & = \big\{f\in \Dom(L_0^{\max})\mid
\hbox{ for some }  c \in \C,\\
&\qquad f(x)- c
x^{1/2}\in\Dom(L_0^{\min})\hbox{ around }
0\big\}.\nonumber
\end{align}

The following properties of these families of operators are immediate:

\begin{lemma}
\begin{enumerate}
\item[(i)] For any $|\Re(m)|<1$ and any $\kappa\in \C\cup \{\infty\}$,
\begin{equation}\label{Eq_duality}
H_{m,\kappa}=H_{-m,\kappa^{-1}}.
\end{equation}
\item[(ii)] The operator $H_{0,\kappa}$ does not depend on $\kappa$,
and all these operators coincide with $H_0^\infty$.
\end{enumerate}
\end{lemma}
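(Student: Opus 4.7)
The plan is to verify both parts by direct inspection of the defining boundary conditions (\ref{eq_do1})--(\ref{eq_do4}). The key initial observation is that $m^2=(-m)^2$, whence $L_{m^2}^{\max}=L_{(-m)^2}^{\max}$ and $L_{m^2}^{\min}=L_{(-m)^2}^{\min}$. Both operators $H_{m,\kappa}$ and $H_{-m,\kappa^{-1}}$ are therefore restrictions of the same maximal operator, and one only has to check equality of their domains; the same reduction applies for the family $\{H_{0,\kappa}\}_\kappa$ in (ii).

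For (i), I would first treat the generic case $\kappa\in\C^\times$, so that $\kappa^{-1}\in\C^\times$ as well. Using the trivial identities $1/2-m=1/2+(-m)$ and $1/2+m=1/2-(-m)$, one rewrites
\[
c\bigl(\kappa\, x^{1/2-m}+x^{1/2+m}\bigr)=(c\kappa)\bigl(\kappa^{-1} x^{1/2-(-m)}+x^{1/2+(-m)}\bigr).
\]
Since $\kappa\neq 0$, the map $c\mapsto c\kappa$ is a bijection of $\C$, so the ``for some $c\in\C$'' clause in (\ref{eq_do1}) defining $H_{m,\kappa}$ is equivalent to the corresponding clause for $H_{-m,\kappa^{-1}}$. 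The two boundary cases $\kappa\in\{0,\infty\}$ are handled by comparing (\ref{eq_do1}) and (\ref{eq_do2}) directly: at $\kappa=0$, the boundary term of $H_{m,0}$ is $c\,x^{1/2+m}=c\,x^{1/2-(-m)}$, matching the boundary term of $H_{-m,\infty}$; the case $\kappa=\infty$ is symmetric under the involution $(m,\kappa)\leftrightarrow(-m,\kappa^{-1})$.

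For (ii), specializing $m=0$ in (\ref{eq_do1}) turns the boundary term into $c(\kappa+1)x^{1/2}$. When $\kappa+1\neq 0$ the map $c\mapsto c(\kappa+1)$ is a bijection of $\C$, so the defining condition coincides with the one for $H_0^\infty$ in (\ref{eq_do4}). The case $\kappa=\infty$ matches $H_0^\infty$ directly from (\ref{eq_do2}) at $m=0$ compared with (\ref{eq_do4}). No real obstacle arises: the content of the lemma is the purely algebraic bijectivity of the scalar multiplications $c\mapsto c\kappa$ and $c\mapsto c(\kappa+1)$, combined with the identification $m^2=(-m)^2$ of the ambient maximal and minimal operators.
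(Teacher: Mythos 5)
The paper offers no written proof of this lemma (it is declared ``immediate'' after the definitions \eqref{eq_do1}--\eqref{eq_do4}), and your direct verification from the boundary conditions, using $L_{m^2}^{\min}=L_{(-m)^2}^{\min}$, $L_{m^2}^{\max}=L_{(-m)^2}^{\max}$ and the bijectivity of $c\mapsto c\kappa$, is exactly the intended argument. Part (i), including the endpoint cases $\kappa\in\{0,\infty\}$, is handled correctly and completely.

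In part (ii), however, your own argument excludes the value $\kappa=-1$, while you then assert that ``no real obstacle arises.'' At $\kappa=-1$ the combination $\kappa x^{1/2}+x^{1/2}$ vanishes identically, so the map $c\mapsto c(\kappa+1)$ is not a bijection and the literal condition in \eqref{eq_do1} degenerates to ``$f\in\Dom(L_0^{\min})$ around $0$,'' which (since $L_0$ is limit point at infinity) singles out $\Dom(L_0^{\min})$ itself rather than $\Dom(H_0^\infty)$; a function behaving like $x^{1/2}$ near $0$ lies in the latter but not the former. So either one reads the definition projectively (the boundary condition is the direction of $\kappa x^{1/2-m}+x^{1/2+m}$ in $\Dom(L^{\max})/\Dom(L^{\min})$, which is undefined when that vector is zero, so $\kappa=-1$ is excluded by convention at $m=0$), or this degenerate value must be flagged explicitly. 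The paper glosses over the same point, but since your proof claims to cover all $\kappa\in\C\cup\{\infty\}$ and the step you rely on visibly fails at $\kappa=-1$, you should add a sentence addressing this case rather than asserting there is no obstacle.
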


As a consequence of (ii),
all the results about the case $m=0$
will be formulated in terms of the family $H_0^\nu$.

Let us now derive two simple results for this family of operators. The first one is related to the action
of the dilation group, while the second is dealing with
the Hermitian conjugation.

\begin{proposition}
For any $m$ with $|\Re(m)|<1$ and any $\kappa,\nu \in \C\cup\{\infty\}$,
we have
\begin{align*}
U_\tau H_{m,\kappa}U_{-\tau}&=\e^{-2\tau}H_{m,\e^{ -2\tau m}\kappa},\\
U_\tau H_0^\nu U_{-\tau}&=\e^{-2\tau}H_0^{\nu +\tau},
\end{align*}
with the convention that $\alpha \;\!\infty=\infty$ for any $\alpha
\in \C\setminus\{0\}$ and $\infty + \tau = \infty$ for any $\tau \in \C$.
In particular,
\begin{enumerate}
\item[(i)] Amongst the family of operators $H_{m,\kappa}$ with
$m\neq0$, only
\begin{equation*}
H_{m,0} = H_{m} \qquad \hbox{ and } \qquad
H_{m,\infty} =  H_{-m}
\end{equation*}
are homogeneous,
\item[(ii)] Amongst the family $H_0^\nu$, only
\begin{equation*}
H_0^\infty = H_0
\end{equation*}
is homogeneous.
\end{enumerate}
\end{proposition}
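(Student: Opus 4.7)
The plan is to track how the two‐dimensional space of admissible boundary profiles is transformed by $U_\tau$, and to invoke the already established fact that $L_{m^2}^{\min}$ and $L_{m^2}^{\max}$ are homogeneous of degree $-2$. In particular, $U_\tau$ preserves $\Dom(L_{m^2}^{\min})$ and $\Dom(L_{m^2}^{\max})$, and $U_\tau L_{m^2}^{\max} U_{-\tau}=\e^{-2\tau}L_{m^2}^{\max}$. The only elementary computation I need is how $U_\tau$ acts on the boundary model functions:
\begin{equation*}
(U_\tau x^{1/2\pm m})(x)=\e^{\tau(1\pm m)}x^{1/2\pm m},\qquad (U_\tau x^{1/2}\ln(x))(x)=\e^\tau\bigl(x^{1/2}\ln(x)+\tau\;\! x^{1/2}\bigr),
\end{equation*}
which follow directly from the definition $(U_\tau h)(x)=\e^{\tau/2}h(\e^\tau x)$.

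For $H_{m,\kappa}$ with $\kappa\neq\infty$, I would take $g\in\Dom(U_\tau H_{m,\kappa} U_{-\tau})$, set $f:=U_{-\tau}g\in\Dom(H_{m,\kappa})$, apply $U_\tau$ to the defining condition from \eqref{eq_do1}, and use the identities above to rewrite
\begin{equation*}
U_\tau\bigl(\kappa x^{1/2-m}+x^{1/2+m}\bigr)=\e^{\tau(1+m)}\bigl(\e^{-2\tau m}\kappa\;\! x^{1/2-m}+x^{1/2+m}\bigr).
\end{equation*}
Absorbing the prefactor into a new constant $c':=c\;\!\e^{\tau(1+m)}$ exhibits $g$ as an element of $\Dom(H_{m,\e^{-2\tau m}\kappa})$; the reverse inclusion is obtained by running the same argument with $\tau$ and $\kappa$ replaced by $-\tau$ and $\e^{-2\tau m}\kappa$. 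On this common domain one has $H_{m,\kappa}f=L_{m^2}^{\max}f$, and applying $U_\tau L_{m^2}^{\max}U_{-\tau}=\e^{-2\tau}L_{m^2}^{\max}$ yields the claimed identity. The case $\kappa=\infty$ is handled identically using \eqref{eq_do2}, consistent with the convention $\alpha\;\!\infty=\infty$.

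For $H_0^\nu$ the argument is parallel, but the crucial difference is that $U_\tau$ acts on $\{x^{1/2}\ln(x),\;\! x^{1/2}\}$ by an upper-triangular (rather than diagonal) matrix. The extra $\tau\;\! x^{1/2}$ term produced by $U_\tau(x^{1/2}\ln x)$ is exactly what converts the additive parameter $\nu$ into $\nu+\tau$; after factoring out $\e^\tau$ the boundary profile $x^{1/2}\ln(x)+\nu\;\! x^{1/2}$ becomes $x^{1/2}\ln(x)+(\nu+\tau)x^{1/2}$. The case $\nu=\infty$ follows from \eqref{eq_do4} since $U_\tau x^{1/2}=\e^\tau x^{1/2}$ is a pure scalar multiple.

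Items (i) and (ii) are then immediate corollaries of the two conjugation formulas: homogeneity of degree $-2$ means $U_\tau H U_{-\tau}=\e^{-2\tau}H$ for every $\tau\in\R$, so $H_{m,\kappa}$ (with $m\neq 0$) is homogeneous iff $\e^{-2\tau m}\kappa=\kappa$ for all $\tau$, forcing $\kappa\in\{0,\infty\}$, whereas $H_0^\nu$ is homogeneous iff $\nu+\tau=\nu$ for all $\tau$, forcing $\nu=\infty$. I do not expect any real obstacle: the argument is a matter of carefully bookkeeping the action of $U_\tau$ on the two boundary profiles, the only mild subtlety being the inhomogeneous transformation of $x^{1/2}\ln(x)$, which is precisely what accounts for the additive shift $\nu\mapsto\nu+\tau$ (as opposed to the multiplicative rescaling $\kappa\mapsto\e^{-2\tau m}\kappa$).
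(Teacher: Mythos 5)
Your proposal is correct and follows essentially the same route as the paper: conjugate the boundary profiles by $U_\tau$ (which sends $\kappa x^{1/2-m}+x^{1/2+m}$ to a multiple of $\e^{-2\tau m}\kappa\;\!x^{1/2-m}+x^{1/2+m}$ and $x^{1/2}\ln(x)+\nu x^{1/2}$ to a multiple of $x^{1/2}\ln(x)+(\nu+\tau)x^{1/2}$), use homogeneity of $L_{m^2}^{\min}$ and $L_{m^2}^{\max}$ to identify $U_\tau\Dom(H_{m,\kappa})$ with $\Dom(H_{m,\e^{-2\tau m}\kappa})$ and similarly for $H_0^\nu$, and read off (i)--(ii) from invariance of the parameter. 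The paper's proof is just a terser statement of this same domain-tracking argument.
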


\begin{proof}
If $f\in \Dom(H_{m,\kappa})$, then $U_\tau f \in \Dom(H_{m,e^{-2m\tau}\kappa})$. Thus, the only domains which are left invariant are
$\Dom(H_{m,0})$ and $\Dom(H_{m,\infty})$. Since $L_{m^2}^{\max}$ is homogeneous, the same applies for $H_{m,0}$ and $H_{m,\infty}$.

If $f\in \Dom(H_0^\nu)$, then $U_\tau f\in
\Dom(H_0^{\nu+\tau})$. Thus, only $\Dom(H_0^\infty)$ is left
invariant,
and consequently only $H_{0}^\infty$ is homogeneous.
\end{proof}

\begin{proposition}
For any $m\in \C$ with $|\Re(m)|<1$ and for any $\kappa, \nu \in \C\cup \{\infty\}$, one has
\begin{equation}\label{Eq_adjoint}
(H_{m,\kappa})^*=H_{\bar m,\bar\kappa}\qquad \hbox{ and }\qquad
(H_0^\nu)^* = H_0^{\bar \nu}
\end{equation}
with the convention that $\bar \infty = \infty$.
\end{proposition}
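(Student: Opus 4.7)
The plan is to reduce the statement to an explicit computation of a boundary form at $0$, combined with the standard operator-theoretic sandwich argument. Since $L_{m^2}^{\min}\subset H_{m,\kappa}\subset L_{m^2}^{\max}$, taking adjoints and using $(L_{m^2}^{\min})^*=L_{\bar m^2}^{\max}$ (and hence $(L_{m^2}^{\max})^*=L_{\bar m^2}^{\min}$) gives
\begin{equation*}
L_{\bar m^2}^{\min}\subset (H_{m,\kappa})^*\subset L_{\bar m^2}^{\max}.
\end{equation*}
Because $\Dom(L_{\bar m^2}^{\min})$ has codimension $2$ in $\Dom(L_{\bar m^2}^{\max})$, it suffices to pick out which one-dimensional extension of the minimal operator the adjoint is, by identifying the boundary condition at $0$ that a function $g$ must satisfy to be adjoint-paired with every $f\in\Dom(H_{m,\kappa})$.

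The second step is the Green-type identity. Integrating by parts twice in $(L_{m^2}^{\max}f\,|\,g)-(f\,|\,L_{\bar m^2}^{\max}g)$ for arbitrary $f\in\Dom(L_{m^2}^{\max})$, $g\in\Dom(L_{\bar m^2}^{\max})$, the boundary contributions at $+\infty$ vanish because such functions are in $H^2$ near infinity (the potential is bounded there and $L_{m^2}^{\max}f\in L^2$), so both the function and its first derivative decay. What remains is the Wronskian limit at $0$, giving an identity of the form
\begin{equation*}
(L_{m^2}^{\max}f\,|\,g)-(f\,|\,L_{\bar m^2}^{\max}g)=-W_0(\bar f,g),
\end{equation*}
where $W$ is the Wronskian \eqref{eq_Wronsk}.

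The third step is to insert the known asymptotic expansions. For $m\neq 0$ one has $f(x)\sim a\,x^{1/2-m}+b\,x^{1/2+m}$ and $g(x)\sim a'\,x^{1/2-\bar m}+b'\,x^{1/2+\bar m}$ modulo the minimal domains, whose contributions to $W_0$ vanish by the quantitative estimates from \cite{BDG} recalled in Section 2.2 (using $|\Re(m)|<1$). A direct computation yields $W(x^{1/2-m},x^{1/2+m})=2m$, so after conjugating $f$,
\begin{equation*}
W_0(\bar f,g)=2\bar m\,(\bar a\,b'-\bar b\,a').
\end{equation*}
For $f\in\Dom(H_{m,\kappa})$ one has $(a,b)=(c\kappa,c)$, so requiring the boundary form to vanish for every $c\in\C$ forces $a'=\bar\kappa\,b'$, which is exactly the asymptotic condition defining $\Dom(H_{\bar m,\bar\kappa})$ in \eqref{eq_do1}. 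The case $\kappa=\infty$ is the symmetric one: $(a,b)=(c,0)$ forces $b'=0$, matching \eqref{eq_do2} with parameter $\bar\infty=\infty$. The $m=0$ statement is treated in the same way with the pair $v_1(x)=x^{1/2}\ln(x)$, $v_2(x)=x^{1/2}$; here one computes $W(v_1,v_2)=-1$, and the analogous matching translates the condition into $b'=\bar\nu\,a'$, i.e.\ $g\in\Dom(H_0^{\bar\nu})$, with the $\nu=\infty$ case handled separately.

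The only step that requires care, rather than routine calculation, is the justification that the remainder terms from the minimal domain contribute nothing to $W_0$. This is where one needs the sharp asymptotic behavior $g(x)=o(x^{3/2})$, $g'(x)=o(x^{1/2})$ (with the logarithmic variant when $m=0$) recalled in the excerpt: combined with the explicit growth of the leading functions $x^{1/2\pm m}$, this forces the cross-terms in the Wronskian to be $o(x^{1-\Re(m)})\to 0$ as $x\to 0$, which is where the restriction $|\Re(m)|<1$ enters crucially. Once this is in hand, the two identities in \eqref{Eq_adjoint} follow by matching boundary coefficients as sketched.
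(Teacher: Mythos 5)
Your proposal is correct and follows essentially the same route as the paper: the Green/Wronskian identity $(L_{m^2}^{\max}f|g)-(f|L_{\bar m^2}^{\max}g)=-W_0(\bar f,g)$ (which the paper simply quotes from \cite{BDG}), the characterization of $\Dom\big((H_{m,\kappa})^*\big)$ by vanishing of the boundary form, and the explicit Wronskian computations with $x^{1/2\pm m}$ (resp.\ $x^{1/2}\ln x,\,x^{1/2}$) that the paper leaves as ``straightforward computations.'' Your explicit treatment of the remainder terms from the minimal domain (with the $o(x^{3/2})$, $o(x^{1/2})$ bounds and the role of $|\Re(m)|<1$) is exactly the justification implicit in the paper's argument.
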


\begin{proof}
Recall from \cite[App.~A]{BDG} that for any $f\in \Dom(L_{m^2}^{\max})$ and $g\in \Dom(L_{\bar m^2}^{\max})$,
the functions $f,f',g,g'$ are continuous on $\R_+$, and that the equality
\begin{equation*}
( L_{m^2}^{\max}f|g) - ( f|L_{\bar m^2}^{\max}g) = -W_0(\bar f,g)
\end{equation*}
holds with $W_0(\bar f,g) = \lim\limits_{x\to 0}W_x(\bar f,g)$ and $W_x$ defined in \eqref{eq_Wronsk}.
In particular, if $f\in \Dom(H_{m,\kappa})$, one infers that
\begin{equation*}
( H_{m,\kappa}f|g) = ( f|L_{\bar m^2}^{\max}g)  -W_0(\bar f,g).
\end{equation*}
Thus, $g\in \Dom\big((H_{m,\kappa})^*\big)$ if and only if $W_0(\bar f,g)=0$, and then
$(H_{m,\kappa})^*g=L_{\bar m^2}^{\max}g$.
Then, by taking into account the explicit description of $\Dom(H_{m,\kappa})$,
straightforward computations show that $W_0(\bar f,g)=0$ if and only
if $g\in \Dom(H_{\bar m,\bar \kappa})$. One then deduces that
$(H_{m,\kappa})^*= H_{\bar m,\bar \kappa}$.

A similar computation leads to the equality $(H_0^\nu)^* = H_0^{\bar \nu}$.
\end{proof}

\begin{corollary}\label{corol_SA}
\begin{enumerate}
\item[(i)] The operator $H_{m,\kappa}$ is self-adjoint for $m\in]-1,1[$ and $\kappa\in\R\cup\{\infty\}$,
and for $m\in \i\R$ and $|\kappa|=1$.
\item[(ii)] The operator $H_0^\nu$ is self-adjoint for $\nu\in \R\cup \{\infty\}$.
\end{enumerate}
\end{corollary}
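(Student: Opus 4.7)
The plan is to deduce the corollary directly from the two structural identities established just above, namely
\begin{equation*}
(H_{m,\kappa})^*=H_{\bar m,\bar\kappa},\qquad (H_0^\nu)^*=H_0^{\bar\nu},
\end{equation*}
combined with the duality relation $H_{m,\kappa}=H_{-m,\kappa^{-1}}$ from \eqref{Eq_duality}. Self-adjointness amounts to checking that the right-hand sides of \eqref{Eq_adjoint} coincide with the original operators under the parameter restrictions listed in the statement, so the argument reduces to a short bookkeeping of parameters with no further analytic input.

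For part (i), I would split into the two announced regimes. If $m\in\;]-1,1[$ and $\kappa\in\R\cup\{\infty\}$, then $\bar m=m$ and $\bar\kappa=\kappa$ (with the convention $\bar\infty=\infty$), and \eqref{Eq_adjoint} gives $(H_{m,\kappa})^*=H_{m,\kappa}$ immediately. If $m\in\i\R$ and $|\kappa|=1$, then $\bar m=-m$, which still satisfies $|\Re(\bar m)|<1$, and $\bar\kappa=\kappa^{-1}$. Hence
\begin{equation*}
(H_{m,\kappa})^*=H_{-m,\kappa^{-1}}=H_{m,\kappa},
\end{equation*}
where the second equality is precisely \eqref{Eq_duality}. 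Part (ii) is even simpler: for $\nu\in\R\cup\{\infty\}$ one has $\bar\nu=\nu$, so \eqref{Eq_adjoint} yields $(H_0^\nu)^*=H_0^\nu$.

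There is essentially no obstacle: both propositions preceding the corollary have already absorbed the genuine analytic content (domain description via boundary behavior, continuity of $f,f',g,g'$ at $0$, and the Wronskian boundary form). The only point that requires a moment of care is the compatibility with the conventions $\bar\infty=\infty$ and $\kappa^{-1}$ for $|\kappa|=1$ when $\kappa=\infty$ might a priori be included; one checks that $\kappa=\infty$ is excluded from $|\kappa|=1$ and handled separately by the real case via $H_{m,\infty}=H_{-m,0}=H_{-m}$, which is already covered since $-m\in\;]-1,1[$ when $m\in\;]-1,1[$. With these conventions verified, both (i) and (ii) follow in one line each from \eqref{Eq_adjoint} and \eqref{Eq_duality}.
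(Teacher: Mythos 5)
Your proof is correct and follows essentially the same route as the paper: self-adjointness in the first regime and in part (ii) is read off directly from \eqref{Eq_adjoint}, while the purely imaginary case combines \eqref{Eq_adjoint} with the duality \eqref{Eq_duality}, exactly as in the paper's argument. The extra remarks on the conventions for $\infty$ and on $\bar\kappa=\kappa^{-1}$ when $|\kappa|=1$ are a harmless elaboration of the same bookkeeping.
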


\begin{proof}
For the operators $H_{m,\kappa}$ one simply has to take formula \eqref{Eq_adjoint} into account for the first case,
and the same formula together with \eqref{Eq_duality} in the second case.
Finally for the operators $H_0^\nu$, taking formula \eqref{Eq_adjoint} into account leads directly to the result.
\end{proof}

\begin{remark} By blowing up around $m=0$, it is possible to make
one single holomorphic function out of $H_{m,\kappa}$ and $H_0^\nu$.
In fact, we can extend the function $H_0^\nu$ by setting
\begin{equation*}
H_m^\nu:=H_{m,\frac{\nu m-1}{\nu m+1}},\ \ \ m\neq0.
\end{equation*}
Then $H_m^\nu$ is holomorphic for $1<\Re(m)<1$, $\nu\in\C\cup\{\infty\}$ and covers
all values of $H_{m,\kappa}$. To see the holomorphy at $m=0$ (at least at the
level of the boundary conditions) note that for $|m|$ very small one has
\begin{equation*}
\kappa x^{1/2-m} +
x^{1/2+m}\approx m(1-\kappa)\Big(
x^{1/2}\ln(x) +\frac{1+\kappa}{m(1-\kappa)}x^{1/2}\Big),
\end{equation*}
which is $c\big(x^{1/2}\ln(x)+\nu x^{1/2}\big)$ for
$\kappa=\frac{\nu m-1}{\nu m+1}$.
\end{remark}

\section{Bessel-type equations and functions}\label{sec_Bessel_eq}
\setcounter{equation}{0}

In this section we establish the link between our initial operator \eqref{qfq}
and different forms of the Bessel equation. By considering the dependence of the space dimension in these equations,
one is naturally led to introduce a new basic family of Bessel-type functions.

Let us recall that the Laplace operator in $d$ dimensions and in spherical coordinates is given by
\begin{equation*}
-\Delta_d=-\partial_r^2-\frac{d-1}r\partial_r-\frac1{r^2}\Delta_{\S^{d-1}},
\end{equation*}
where $r$ is the radial coordinate and $\Delta_{\S^{d-1}}$ is the
{\em Laplace-Beltrami operator on the sphere $\S^{d-1}$}.
For simplicity, we also use in this section the notation $r$
for the corresponding operator of multiplication by the radial coordinate.
Eigenvalues of $-\Delta_{\S^{d-1}}$ for $d=2,3,\dots$ are
\begin{equation}\label{spek}
l(l+d-2),\ \ l\in \{0,1,2,\dots\},
\end{equation}
where $l$ corresponds to the order of spherical harmonics.

Note that in the special case $d=2$, one has
$\Delta_{\S^1}=\partial_\phi^2$ and the \emph{angular momentum operator}
$-\i\partial_\phi$ has eigenvalues $m\in\zz$. As a consequence
\eqref{spek} can also be written as
\begin{equation*}
m^2,\ \ m\in\zz.
\end{equation*}
For $d=1$ the sphere reduces to a pair of points and $\Delta_{\S^0}$ corresponds to the $\bf 0$-operator.
It has the eigenvalue $0$ of multiplicity $2$, and this corresponds to the values $l=0$ and $l=1$ in \eqref{spek}.

Thus, if one sets $m:=l+\frac{d}{2}-1$ then the radial part of the Laplacian takes the following form in any dimension
\begin{align}\label{lap2}
\nonumber &-\partial_r^2-\frac{d-1}{r}\partial_r+l(l+d-2)\frac{1}{r^2}\\
& = -\partial_r^2-\frac{d-1}{r}\partial_r + \Big(m^2-\big(\frac{d}{2}-1\big)^2\Big)\frac{1}{r^2}
\end{align}
with an appropriately restricted range of $m$.
In what follows we will call \eqref{lap2} the \emph{Bessel operator of
dimension $d$}, allowing then the parameter $m$
to take arbitrary complex values.
In particular, the Bessel operator of dimension $2$ is
\begin{equation}\label{lap3}
-\partial_r^2-\frac{1}{r}\partial_r+\frac{m^2}{r^2},
\end{equation}
while the Bessel operator of dimension $1$ is
\begin{equation}\label{lap4}
-\partial_r^2+\Big(m^2-\frac14\Big)\frac{1}{r^2}.
\end{equation}

Operators \eqref{lap2} are related to one another for different $d$ by a
simple similarity transformation. Indeed, by a short computation performed on $C_{\rm c}^\infty(\R_+)$
one easily observes that the following equality hold:
\begin{align*}
& -\partial_r^2-\frac{d-1}{r}\partial_r+ \Big(m^2-\big(\frac{d}{2}-1\big)^2\Big)\frac{1}{r^2}\\
&=r^{-\frac{d}{2}+1}\left(
-\partial_r^2-\frac{1}{r}\partial_r+
\frac{m^2}{r^2}\right)
r^{\frac{d}{2}-1}\\
&=r^{-\frac{d}{2}+\frac12}\left(
-\partial_r^2+
\Big(m^2-\frac14\Big)\frac{1}{r^2}\right)
r^{\frac{d}{2}-\frac12}.
\end{align*}
It is then a matter of taste, convenience and historical circumstances
whether the operator  \eqref{lap3} or \eqref{lap4} is taken as
the basic one. In the literature, at least since the times of
Bessel, it
seems that \eqref{lap3} has a distinguished status. We prefer
\eqref{lap4}, at least in the context of this paper, since our initial operator \eqref{qfq}
has a form similar to \eqref{lap4}.

A simple scaling argument shows that the eigenvalue problem for
\eqref{lap3} can be reduced to one of the following two equations:
\begin{align}
\rm{the}\ \emph{modified Bessel equation}&&\left(\partial_r^2+\frac{1}{r}\partial_r-\frac{m^2}{r^2}-1\right)v& = 0,
\label{lap5}\\\rm{the}\ \emph{(standard) Bessel equation}&&
\left(\partial_r^2+\frac{1}{r}\partial_r-\frac{m^2}{r^2}+1\right)v& = 0.
\label{lap6}
\end{align}
Certain distinguished solutions
of \eqref{lap5} are
\begin{align*}
\rm{the}\ \emph{modified Bessel function}&&I_m(z),\\
\rm{the}\ \emph{MacDonald function}&&K_m(z),
\end{align*}
and of \eqref{lap6} are
\begin{align*}\rm{the}\
\emph{Bessel function}&&J_m(z),\\
\rm{the}\ \emph{Hankel function of the 1st kind}&&H_m^+(z)=H_n^{(1)}(z),\\
\rm{the}\ \emph{Hankel function of the 2nd kind}&&H_m^-(z)=H_n^{(2)}(z),\\
\rm{the}\ \emph{Neumann function}&&Y_m(z).
\end{align*}
We call them jointly
\emph{the Bessel family}. They are probably the best known and the most
widely used special functions in mathematics and its applications \cite{AS,AAR,GR,W}.

\begin{remark} The notation $H_m^{(1)}$, $H_m^{(2)}$ for the two kinds of Hankel functions is more common in the literature. We use the notation
$H_m^+$, $H_m^-$, which is much more convenient.
\end{remark}

Instead of \eqref{lap5} and \eqref{lap6} we will prefer
to consider their analogs \emph{for dimension $1$}, namely
\begin{align}
\left(\partial_r^2-\Big(m^2-\frac14\Big)\frac{1}{r^2}-1\right)v& = 0,
\label{lap7}\\
\left(\partial_r^2-\Big(m^2-\frac14\Big)\frac{1}{r^2}+1\right)v& = 0.
\label{lap8}
\end{align}
We will also introduce new special functions
that solve \eqref{lap7}
\begin{align*}\rm{the}\
\emph{modified Bessel function for dimension $1$}&&\Ia_m(r):=\sqrt{\frac{\pi r}{2}} I_m(r),\\
\rm{the}\ \emph{MacDonald function for dimension $1$}&&\Ka_m(r):=\sqrt{\frac{2 r}{\pi}} K_m(r),
\end{align*}
and new special functions that solve \eqref{lap8}
\begin{align*}\rm{the}\
\emph{Bessel function for dimension $1$}&&\Ja_m(r):= \sqrt{\frac{\pi r}{2}} J_m(r),\\
\rm{the}\ \emph{Hankel function of the 1st kind for dimension $1$}&&\Ha_m^+(r):=\sqrt{\frac{\pi r}{2}} H^+_m(r),
\\
\rm{the}\ \emph{Hankel function of the 2nd kind for dimension $1$}&&\Ha_m^-(r):=\sqrt{\frac{\pi r}{2}} H^-_m(r),\\
\rm{the}\
\emph{Neumann function for dimension $1$}&&\Ya_m(r):= \sqrt{\frac{\pi r}{2}} Y_m(r).
\end{align*}

Jointly, they will be
called \emph{the Bessel family for dimension $1$}.
Accordingly, the Bessel family should be called \emph{the Bessel family for dimension $2$}.
As we shall show later on, the Bessel family for dimension $1$
contains a number of standard  elementary functions: the exponential
function, the trigonometric sine and cosine functions, and the hyperbolic sine and cosine functions.

Obviously, properties of the Bessel family for dimension $1$ can
easily be deduced from the corresponding properties of
the Bessel family for dimension $2$, and the other way round.
Most properties seem to have a simpler
form in the case of dimension $1$ than in the case of dimension $2$. There
are some exceptions, mostly involving integer values of $m$ which
play a distinguished role in dimension $2$ and, more generally, in even dimensions. We collect basic properties of the Bessel family for dimension $1$ in
Appendix \ref{secB1}

Let us note that in the literature one sometimes introduces the
modified and standard Bessel equations for all dimensions $d\geq2$,
as well as the corresponding functions, see for example \cite{M}.
In particular, there exists a standard notation for the functions of the Bessel family
for dimension $3$: $i_m$, $k_m$, $j_m$, $h_m^\pm$ and $y_m$. In our opinion,
however, this introduces too many unnecessary special functions:
the main issue is whether the dimension is even or odd.
One could argue that the Bessel family for dimension $2$,
that is $I_m$, $K_m$, $J_m$, $H_m^\pm$ and $Y_m$,
is better adapted for all even dimensions and integer values of $m$,
whereas the Bessel family for dimension $1$, that is $\Ia_m$, $\Ka_m$,
$\Ja_m$,  $\Ha_m^\pm$ and $\Ya_m$,
is better adapted for odd dimensions, as well
as for general values of $m$.

\section{The homogeneous case}\label{sec_homogeneous}
\setcounter{equation}{0}

In this section we consider the homogeneous operators $H_m$.
Part of the following results were proved in \cite{BDG}, but
we add some new material.  In particular,
some statements and proofs in \cite{BDG}
were restricted to the self-adjoint case. We extend them to all $\Re(m)>-1$.
In addition we shall now express everything in terms of the Bessel
family for dimension~$1$
instead of the Bessel family for dimension~$2$.

\subsection{Resolvent}

In \cite[Sec.~4.2]{BDG} the resolvent of
$H_m$ is constructed. For completeness we recall its construction below:

\begin{theoreme}
For any $m\in \C$ with
$\Re(m)>-1$ the spectrum of $H_m$ is $[0,\infty[$.
In addition, for $k\in \C$ with $\Re(k)>0$ the resolvent
\begin{equation*}
R_{m}(-k^2):=(H_m+k^2)^{-1}
\end{equation*}
has the kernel
\begin{equation}\label{eq_kernel_H_m}
R_m(-k^2;x,y)=
\frac{1}{k}\left\{\begin{matrix}\! \Ia_m(kx)\;\! \Ka_m(ky) & \hbox{ if } 0 < x < y, \\
\Ia_m(ky)\;\! \Ka_m(kx) & \hbox{ if } 0 < y < x.
\end{matrix}\right.
\end{equation}\end{theoreme}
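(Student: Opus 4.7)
The proof strategy is the classical Sturm–Liouville Green's function construction on $\R_+$: identify two linearly independent solutions of the homogeneous equation $(L_{m^2}+k^2)u=0$, one adapted to the boundary condition at $0$ defining $\Dom(H_m)$ and one square-integrable near $+\infty$, and assemble the resolvent kernel as their product divided by their Wronskian. Under the substitution $r:=kx$, the equation $(L_{m^2}+k^2)u=0$ becomes exactly equation~\eqref{lap7}, the modified Bessel equation for dimension $1$, whose two canonical solutions are $\Ia_m$ and $\Ka_m$. From the asymptotics collected in Appendix~\ref{secB1}, $\Ia_m(kx)$ behaves like a nonzero constant times $x^{1/2+m}$ as $x\to 0$, matching the leading boundary behavior prescribed by $\Dom(H_m)$; and for $\Re(k)>0$, $\Ka_m(kx)$ decays exponentially as $x\to+\infty$, hence lies in $L^2$ there. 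Starting from the classical identity $W(I_m,K_m)(z)=-z^{-1}$ together with the definitions $\Ia_m(r)=\sqrt{\pi r/2}\,I_m(r)$ and $\Ka_m(r)=\sqrt{2r/\pi}\,K_m(r)$, a short computation gives $W(\Ia_m,\Ka_m)\equiv -1$, so $W_x\big(\Ia_m(k\cdot),\Ka_m(k\cdot)\big)=-k$; this is precisely what produces the prefactor $1/k$ in~\eqref{eq_kernel_H_m}.

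Denote by $R$ the integral operator with the kernel on the right-hand side of~\eqref{eq_kernel_H_m}. I would establish $R=R_m(-k^2)$ in three steps. First, $L^2$-boundedness of $R$ is obtained from the bounds $|\Ia_m(kx)|=O\big(x^{1/2+\Re(m)}\big)$ near $0$ (integrable against $\d y$ since $\Re(m)>-1$) combined with the exponential decay of $\Ka_m(kx)$ at $+\infty$, via a Schur-type estimate. Second, for $f\in C_{\rm c}^\infty(\R_+)$ one differentiates $u:=Rf$ under the integral: the Wronskian relation just derived turns the jump of $\partial_x u$ across $x=y$ into a Dirac mass, giving $(L_{m^2}+k^2)u=f$ in the distributional sense on all of $\R_+$; near $x=0$, only the $\Ia_m(kx)$-branch survives (since $f$ has compact support in $\R_+$), and the power-series expansion $\Ia_m(kx)=c_1(k)\,x^{1/2+m}+O\big(x^{5/2+m}\big)$ places $u$ in $\Dom(H_m)$, the remainder belonging to $\Dom(L_{m^2}^{\min})$. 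Third, $H_m+k^2$ is injective on $\Dom(H_m)$: any null-space element is a combination of $\Ia_m(k\cdot)$ and $\Ka_m(k\cdot)$, but for $\Re(k)>0$ the former grows exponentially at $+\infty$, and the latter has leading behavior $x^{1/2-m}$ at $0$ (or $x^{1/2}\ln(x)$ when $m=0$), which is incompatible with $\Dom(H_m)$. Combining these three ingredients yields $R=R_m(-k^2)$.

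The construction above exhibits $\C\setminus[0,\infty[$ as a subset of the resolvent set, so $\sigma(H_m)\subset[0,\infty[$. The homogeneity $U_\tau H_m U_{-\tau}=\e^{-2\tau}H_m$ forces $\sigma(H_m)$ to be invariant under multiplication by positive reals, so $\sigma(H_m)\cap\,]0,\infty[$ is either empty or the whole positive half-line. To exclude the empty case I would construct a Weyl sequence at any $\lambda>0$: fix $\phi\in C_{\rm c}^\infty(]1,2[)$ with $\|\phi\|=1$ and set $\phi_n(x):=n^{-1/2}\e^{\i\sqrt\lambda\,x}\phi(x/n)$. Each $\phi_n$ has compact support in $]n,2n[\subset\R_+$ and therefore lies in $\Dom(H_m)$, $\|\phi_n\|=1$, and a direct estimate yields $\|(H_m-\lambda)\phi_n\|=O(n^{-1})$: on $\supp\phi_n$ the inverse-square potential is $O(n^{-2})$ and each derivative hitting $\phi(\cdot/n)$ produces a factor $n^{-1}$. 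Hence $\lambda\in\sigma(H_m)$, and closedness of the spectrum concludes $\sigma(H_m)=[0,\infty[$.

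The most delicate point in this plan is the verification that $u=Rf\in\Dom(H_m)$ uniformly throughout the strip $-1<\Re(m)<1$, where $\Ia_m(kx)$ carries a genuinely singular leading term at $0$ and one must carefully extract the $x^{1/2+m}$ contribution modulo $\Dom(L_{m^2}^{\min})$; the regime $\Re(m)\ge 1$ reduces to $H_m=L_{m^2}^{\min}=L_{m^2}^{\max}$ and needs no boundary analysis. A secondary subtlety is that for complex $m$ the operator $H_m$ is genuinely non-self-adjoint, so no spectral-theorem shortcut is available and the spectrum must be read off entirely from the explicit resolvent together with a Weyl sequence as above.
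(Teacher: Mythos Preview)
Your proof is correct and the core construction---identifying $\Ia_m(k\cdot)$ and $\Ka_m(k\cdot)$ as the two distinguished solutions and assembling the Green kernel from their Wronskian---is exactly what the paper does. There are, however, two places where your route differs from the paper's sketch. First, to pass from $(H_m+k^2)R=\one$ to $R(H_m+k^2)=\one$, the paper invokes the adjoint identity $H_m^*=H_{\bar m}$ together with $R_m(-k^2)^*=R_{\bar m}(-\bar k^2)$: applying the result for $\bar m$ and taking adjoints gives the missing inclusion in one line, with no asymptotic analysis of $\Ka_m$ at $0$. Your direct injectivity argument is equally valid but requires checking that the $x^{1/2-m}$ coefficient in the expansion of $\Ka_m(kx)$ is nonzero (which it is, via \eqref{macdo1}); note that for $\Re(m)<0$ this is not the \emph{leading} term, though that does not affect the conclusion. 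Second, the paper's sketch does not address the inclusion $[0,\infty[\,\subset\sigma(H_m)$ at all (it is imported from \cite{BDG}); your Weyl-sequence argument combined with homogeneity is a clean self-contained way to supply it. Both deviations are sound; the adjoint trick is shorter, while your injectivity argument is more hands-on and does not rely on having already identified $H_m^*$.
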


\begin{proof}[Sketch of proof provided in \cite{BDG}]
It is first checked that the kernel provided by \eqref{eq_kernel_H_m}
defines a bounded operator which we denote by $R_m(-k^2)$. Then it is verified that
\begin{equation*}
\Big(\big( L_{m^2}+k^2\big)R_m(-k^2)\Big)(x,y)=\delta(x-y).
\end{equation*}
Next one has $R_m(-k^2)f\in \Dom(H_m)$ for any $f\in C_{\rm c}^\infty(\R_+)$.
Thus the previous equality can be reinterpreted as
$(H_m+k^2)R_m(-k^2)=\one$.
In addition, since $H_m^*=H_{\bar m}$ and $R_m(-k^2)^*=R_{\bar m}(-\bar k^2)$
one then infers that
\begin{equation*}
R_m(-k^2)( H_m+k^2)=\Big(( H_{\bar m}+\bar k^2)R_{\bar m}(-\bar k^2)\Big)^*= \one.
\end{equation*}
Therefore, $-k^2$ belongs to the resolvent set of $H_m$ and
$R_m(-k^2)$ is the resolvent of $H_m$.
\end{proof}

For $|\Re(m)|<1$, we also introduce the operator $P_m(-k^2)$ defined by its kernel
\begin{align}
P_m(-k^2;x,y) :=& \frac{ \sin (\pi m)}{m}k\;\!\Ka_m(kx)\;\!\Ka_m(ky)\quad \hbox{ if }m\neq0,
\label{proj1}\\
P_0(-k^2;x,y) :=&  \pi k \;\!\Ka_0(kx)\;\!\Ka_0(ky).\label{proj2}
\end{align}
By the bounds \eqref{tri4a} and \eqref{eq:besselfunc2_1},
the function $x\mapsto \Ka_m(kx)$ is square integrable.
Taking also \eqref{int3}, \eqref{int4} into account one easily infers that the
operator $P_m(-k^2)$ is a rank one projection. It is orthogonal if
$k$ and $m^2$ are real.
By taking the equality $\Ka_m = \Ka_{-m}$ and \eqref{macdo1} into account,
one can also deduce the following relation
for $|\Re(m)|<1$:
\begin{equation}\label{eq_dif_res_m}
R_{-m}(-k^2)-R_{m}(-k^2) =  \frac{m}{k^2}P_m(-k^2).
\end{equation}

\subsection{Boundary value of the resolvent and spectral density}\label{sec_spec_1}

In this section we show that a Limiting Absorption Principle
holds for the operators $H_m$. We also compute the kernels of the
boundary values of the resolvent and of the spectral density.

\begin{theoreme}\label{thm_boundary}
Let $m\in \C$ with $\Re(m)>-1$, and let $k>0$.
Then the \emph{boundary values of the resolvent}
\begin{equation*}
R_{m}(k^2\pm\i0)  :=\lim_{\epsilon\searrow0}R_{m}(k^2\pm\i\epsilon)
\end{equation*}
exist in the sense of operators from $\langle\Q\rangle^{-s}L^2(\R_+)$
to $\langle\Q\rangle^{s}L^2(\R_+)$ for any $s>\frac{1}{2}$,
uniformly in $k$ on each compact subset of $\R_+$. They have the kernels
\begin{equation}\label{eq_boundary}
R_m(k^2\pm \i 0;x,y)=\pm \frac{\i}{k}\left\{\begin{matrix}\! \Ja_m(kx)\;\!
\Ha_m^\pm(ky) & \hbox{ if } \ 0 < x \leq y, \\
\Ja_m(ky)\;\! \Ha_m^\pm (kx) & \hbox{ if } \ 0 < y < x.
\end{matrix}\right.
\end{equation}
\end{theoreme}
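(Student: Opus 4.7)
The plan is to obtain the kernel~\eqref{eq_boundary} by analytically continuing the resolvent kernel~\eqref{eq_kernel_H_m} from the right half-plane $\{\Re(k)>0\}$ down to the positive imaginary axis, and then to verify that the resulting expression defines a bounded operator between the stated weighted spaces. In fact I would prove the stronger statement that $R_m(k^2\pm\i 0)$ is Hilbert--Schmidt as a map from $\langle\Q\rangle^{-s}L^2(\R_+)$ into $\langle\Q\rangle^{s}L^2(\R_+)$; combined with dominated convergence, this immediately yields the claimed uniform convergence.

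\medskip
First, fix $k>0$ and $\epsilon>0$, and let $k_\epsilon:=\sqrt{-k^2\mp\i\epsilon}$ (principal square root), so that $R_m(-k_\epsilon^2)=R_m(k^2\pm\i\epsilon)$, with $\Re(k_\epsilon)>0$ and $k_\epsilon\to\mp\i k$ as $\epsilon\searrow 0$. Since $\Ia_m$ and $\Ka_m$ are holomorphic on $\C\setminus\,]-\infty,0]$, their values at $\mp\i kx$ with $k,x>0$ are obtained by continuity from the right half-plane. Combining the elementary identities $I_m(z)=\e^{-\i m\pi/2}J_m(\i z)$ and $K_m(z)=\tfrac{\i\pi}{2}\e^{\i m\pi/2}H_m^+(\i z)$ with the definitions of $\Ia_m,\Ka_m,\Ja_m,\Ha_m^\pm$, a short computation in which the phases coming from $\sqrt{\mp\i r}$ and from the Bessel identities cancel exactly gives
\[
\tfrac{1}{k_\epsilon}\Ia_m(k_\epsilon x)\,\Ka_m(k_\epsilon y)\longrightarrow\pm\tfrac{\i}{k}\Ja_m(kx)\,\Ha_m^\pm(ky)\qquad\text{as }\epsilon\searrow 0
\]
for $0<x\le y$, and an analogous formula on $\{y<x\}$; this identifies the pointwise limit of $R_m(k^2\pm\i\epsilon;x,y)$ with the right-hand side of~\eqref{eq_boundary}.

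\medskip
Second, I would estimate the weighted kernel $\langle x\rangle^{-s}R_m(k^2\pm\i 0;x,y)\langle y\rangle^{-s}$ in $L^2(\R_+^2)$ using the small-argument bounds $|\Ja_m(r)|\le C\,r^{1/2+\Re(m)}$ and $|\Ha_m^\pm(r)|\le C\,r^{1/2-|\Re(m)|}$ as $r\to 0$ (with a harmless $|\ln r|$ correction when $m=0$), together with $|\Ja_m(r)|+|\Ha_m^\pm(r)|=O(1)$ as $r\to\infty$. On the region $\{x<y\le 1\}$ the weighted kernel squared is bounded by a constant times $x^{1+2\Re(m)}\,y^{1-2|\Re(m)|}$, whose iterated integral is finite precisely because $\Re(m)>-1$; on $\{x<y,\;y\ge 1\}$ the decay is provided by the weights $\langle x\rangle^{-2s}\langle y\rangle^{-2s}$ together with the assumption $s>1/2$. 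The region $\{y<x\}$ is handled symmetrically, and all constants can be taken uniform in $k$ on any fixed compact subset of $\R_+$.

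\medskip
Finally, dominated convergence against the bounds of step two gives $R_m(k^2\pm\i\epsilon)\to R_m(k^2\pm\i 0)$ in Hilbert--Schmidt norm, hence in the topology of bounded operators $\langle\Q\rangle^{-s}L^2(\R_+)\to\langle\Q\rangle^{s}L^2(\R_+)$, uniformly in $k$ on compacts of $\R_+$. The main obstacle sits in step two: verifying that the singularity of $\Ha_m^\pm$ on the ``outer'' variable is compensated by the vanishing of $\Ja_m$ on the ``inner'' variable across the full range $\Re(m)>-1$, the edge $\Re(m)\searrow -1$ being where the margin for integrability is tightest; once this estimate is in hand, steps one and three are essentially routine applications of the resolvent identities and dominated convergence.
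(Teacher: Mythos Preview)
Your outline is essentially the paper's own argument: show that the weighted kernel is Hilbert--Schmidt and pass to the limit by dominated convergence. There is, however, one logical slip in how steps two and three are coupled. In step two you estimate only the \emph{limit} kernel $\langle x\rangle^{-s}R_m(k^2\pm\i 0;x,y)\langle y\rangle^{-s}$, and in step three you appeal to ``dominated convergence against the bounds of step two''. But dominated convergence requires an integrable majorant for the \emph{pre-limit} kernels $\langle x\rangle^{-s}R_m(k^2\pm\i\epsilon;x,y)\langle y\rangle^{-s}$, uniformly in $\epsilon>0$; these involve $\Ia_m(k_\epsilon x)$ and $\Ka_m(k_\epsilon y)$ at genuinely complex arguments with $|\arg k_\epsilon|\nearrow\pi/2$, not $\Ja_m,\Ha_m^\pm$ at real arguments. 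A bound on the pointwise limit cannot by itself dominate the approximating sequence.

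The paper closes this gap by proving (Proposition~\ref{prop_boundary4}) the pointwise estimate
\[
|R_m(-k^2;x,y)|\le \frac{C_m^2}{|k|}\,\e^{-\Re(k)|x-y|}\min(1,|xk|)^{\alpha}\min(1,|yk|)^{\alpha}
\]
valid for all $\Re(k)>0$, with $\alpha=\tfrac12$ if $\Re(m)\ge 0$ and $\alpha=\Re(m)+\tfrac12$ if $-1<\Re(m)\le 0$ (and a logarithmic correction when $m=0$). Since $\Re(k_\epsilon)\ge 0$, the exponential is harmless and this bound serves directly as the $\epsilon$-independent majorant. Your small- and large-argument estimates on $\Ja_m,\Ha_m^\pm$ are precisely the real-axis traces of the corresponding estimates on $\Ia_m,\Ka_m$ in the sector $|\arg z|<\pi-\epsilon$; so the repair is simply to state the latter from the outset and bound the resolvent kernel for $\Re(k)>0$ rather than only its boundary value.
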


The above theorem describes a property that, at least in the
context of self-adjoint Schr\"odinger operators, is often called the
\emph{Limiting Absorption Principle}. Its proof will be based on an explicit
estimate of the resolvent kernel:

\begin{proposition}\label{prop_boundary4}
Let us consider $\Re(k) > 0$.
Then for $\Re(m)\geq 0$  with $m\neq 0$ one has
\begin{equation}\label{pwe}
|R_m(-k^2;x,y)| \leq
\frac{C_m^2}{|k|}\;\!\e^{-\Re(k)|x-y|}
\min(1,|xk|)^{\frac12}\min(1,|yk|)^{\frac12},
\end{equation}
for $-1< \Re(m)\leq 0$ with $m\neq 0$ one has
\begin{equation}\label{pwe1}
|R_m(-k^2;x,y)| \leq  \frac{C_m^2}{|k|}\;\!\e^{-\Re(k)|x-y|}
\min(1,|xk|)^{\Re(m)+\frac12}\min(1,|yk|)^{\Re(m)+\frac12},
\end{equation}
while in the special case $m=0$ one has
\begin{align}\label{pwe2}
\nonumber |R_0(-k^2;x,y) \leq & \frac{C_0^2}{|k|}\;\!\e^{-\Re(k)|x-y|} \min(1,|xk|)^{\frac12}\min(1,|yk|)^{\frac12} \\
& \ \times \big(1+\big|\ln(\min(1,|kx|))\big|\big)\;\! \big(1+\big|\ln(\min(1,|ky|))\big|\big).
\end{align}
The constants $C_m$ and $C_0$ are independent of $x,y$ and $k$.
\end{proposition}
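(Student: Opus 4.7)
The plan is to start from the explicit kernel formula \eqref{eq_kernel_H_m}, which reduces the task to estimating the product $|\Ia_m(z_<)\;\!\Ka_m(z_>)|$ with $z_\lessgtr := k x_\lessgtr$, where $x_<:=\min(x,y)$ and $x_>:=\max(x,y)$. Since $\Re(k)>0$ we have $\Re(z_<)\leq \Re(z_>)$, so the exponential factor $\e^{-\Re(k)|x-y|}$ announced in the conclusion will appear automatically as the difference $\e^{\Re(z_<)-\Re(z_>)}$ once one uses the uniform estimates on $\Ia_m$ and $\Ka_m$ collected in Appendix~\ref{secB1}. Concretely, the bounds \eqref{tri4a} and \eqref{eq:besselfunc2_1} read, for $\Re(z)\geq 0$,
\[
|\Ia_m(z)| \leq C_m\;\!\bigl(\min(1,|z|)\bigr)^{\Re(m)+\frac12}\e^{\Re(z)}, \qquad
|\Ka_m(z)| \leq C_m\;\!\psi_m(|z|)\,\e^{-\Re(z)},
\]
where $\psi_m$ encodes the behavior of the MacDonald function at the origin: $\psi_m(r)\leq C\min(1,r)^{\frac12-\Re(m)}$ for $\Re(m)>0$, $\psi_m(r)\leq C\min(1,r)^{\Re(m)+\frac12}$ for $-1<\Re(m)\leq 0$ with $m\neq 0$, and $\psi_0(r)\leq C\min(1,r)^{\frac12}\bigl(1+|\ln\min(1,r)|\bigr)$.

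After multiplying these two bounds, the exponential factors telescope to $\e^{-\Re(k)|x-y|}$ and the remaining task is to bound $\bigl(\min(1,|z_<|)\bigr)^{\Re(m)+\frac12}\psi_m(|z_>|)$ by $C\min(1,|kx|)^{a}\min(1,|ky|)^{a}$, with $a=\tfrac12$ in \eqref{pwe}, with $a=\Re(m)+\tfrac12$ in \eqref{pwe1}, and with $a=\tfrac12$ together with the announced logarithmic factors in \eqref{pwe2}. I would decompose the argument according to whether $|z_<|$ and $|z_>|$ are each $\leq 1$ or $\geq 1$. In the region $|z_>|\geq 1$ the factor $\psi_m(|z_>|)$ is just a constant and the claim is immediate; in the region $|z_<|\geq 1$ the same holds for the first factor. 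Only the inner region $|z_<|\leq|z_>|\leq 1$ is not trivial: there, with $m\neq 0$, one is left with $|z_<|^{\Re(m)+\frac12}|z_>|^{\frac12-\Re(m)}$, and for $\Re(m)\geq 0$ one picks $a=\tfrac12$ and absorbs the excess via $(|z_<|/|z_>|)^{\Re(m)}\leq 1$ (using $|z_<|\leq|z_>|$); for $-1<\Re(m)\leq 0$ one picks $a=\Re(m)+\tfrac12$ and absorbs $|z_>|^{-2\Re(m)}\leq 1$ (using $|z_>|\leq 1$). The case $m=0$ is identical with $a=\tfrac12$, the logarithmic factor from $\psi_0$ being transported unchanged into \eqref{pwe2}.

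I expect the main obstacle to be the consistent bookkeeping between the three regions and between the two relevant ranges of $\Re(m)$: one must choose the exponent $a$ according to the sign of $\Re(m)$ and peel off the extra factor using either the monotonicity $|z_<|\leq|z_>|$ or the smallness $|z_>|\leq 1$. Everything else is a routine verification from the asymptotic estimates in Appendix~\ref{secB1} once the region decomposition is in place.
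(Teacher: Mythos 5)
Your proposal is correct and follows essentially the same route as the paper: the explicit kernel, the uniform bounds on $\Ia_m$ and $\Ka_m$ with the telescoping exponentials producing $\e^{-\Re(k)|x-y|}$, and the exponent bookkeeping according to the sign of $\Re(m)$ (the paper handles this by the single observation $|kx_<|\le |kx_>|$ instead of an explicit region decomposition, but it is the same argument). One small slip worth fixing: in the inner region for $-1<\Re(m)\le 0$ your own $\psi_m$ gives the product $|z_<|^{\Re(m)+\frac12}|z_>|^{\Re(m)+\frac12}$, which is already the target, whereas the formula $|z_<|^{\Re(m)+\frac12}|z_>|^{\frac12-\Re(m)}$ you write there would rest on the bound $|\Ka_m(z)|\lesssim |z|^{\frac12-\Re(m)}$, which fails near $0$ when $\Re(m)<0$ (the MacDonald function behaves like $|z|^{\Re(m)+\frac12}$ there), so no absorption of $|z_>|^{-2\Re(m)}$ is needed or legitimate.
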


\begin{proof}
The proof is based on the following estimates on the Bessel and MacDonald functions.
For $\epsilon>0$ and $|\arg z|<\pi-\epsilon$ one has
\begin{align}
\nonumber |\Ka_m(z)|&\leq C_m\e^{-\Re(z)}\min(1,|z|)^{-|\Re(m)|+\frac12} \qquad  m\neq0,\\
\label{eq_K_0} |\Ka_0(z)|&\leq C_0\e^{-\Re(z)}\min(1,|z|)^{\frac12}\big(1+\big|\ln(\min(1,|z|))\big|\big),
\end{align}
and
\begin{align*}
|\Ia_m(z)|&\leq C_m\e^{|\Re(z)|}\min(1,|z|)^{\Re(m)+\frac12}\qquad m\neq0,\\
|\Ia_0(z)|&\leq  C_0\e^{|\Re(z)|}\min(1,|z|)^{\frac12}.
\end{align*}

By using $\Re(k) > 0$ and for $m\neq 0$, observe first that for $0<x<y$ we obtain
\begin{align*}
& |R_m(-k^2;x,y)| \\
&\leq \frac{C_m^2}{|k|}\e^{x\Re(k)}\e^{-y\Re(k)}\min(1,|kx|)^{\Re(m)+\frac12}
\min(1,|ky|)^{-|\Re(m)|+\frac12},
\end{align*}
while for $0<y<x$ we have
\begin{align*}
& |R_m(-k^2;x,y)| \\
&\leq \frac{C_m^2}{|k|}\e^{y\Re(k)}\e^{-x\Re(k)}\min(1,|kx|)^{-|\Re(m)|+\frac12}
\min(1,|ky|)^{\Re (m)+\frac12}\ .
\end{align*}

If $\Re(m)\geq 0$ one observes then that $\frac{|kx|}{|ky|}<1$ in the first case, while
$\frac{|ky|}{|kx|}<1$ in the second case. This directly leads to \eqref{pwe}.
On the other hand for $\Re(m)<0$, one has $-|\Re(m)|=\Re(m)$, from which one infers \eqref{pwe1}.
 Finally, the special case $m=0$ is obtained by a straightforward computation.
\end{proof}

\begin{proof}[Proof of Theorem \ref{thm_boundary}]
Define the operator $R_m(k^2\pm\i0)$ by its kernel \eqref{eq_boundary}.
We will show that
\begin{equation}\label{boundary2}
\langle \Q\rangle^{-s}R_{m}(k^2\pm\i\epsilon) \langle \Q\rangle^{-s},
\end{equation}
whose kernel is
\begin{equation}\label{boundary3}
\langle x\rangle^{-s}R_{m}(k^2\pm\i\epsilon;x,y) \langle y\rangle^{-s},
\end{equation}
is a Hilbert-Schmidt operator and converges as $\epsilon\searrow0$ in the Hilbert-Schmidt norm to
\begin{equation}\label{boundary5}
\langle \Q\rangle^{-s}R_{m}(k^2\pm\i0) \langle\Q\rangle^{-s}.
\end{equation}

Consider first the (slightly more difficult) case $-1<\Re (m)\leq 0$ with $m\neq 0$. By
the estimate \eqref{pwe1} the expression \eqref{boundary3} can be bounded by
\begin{align}
&\frac{C}{|k|}\e^{-\Re (\sqrt{-k^2\mp\i\epsilon})|x-y|} \langle x\rangle^{-s}
\langle y\rangle^{-s} \min(1,|xk|)^{\Re (m)+\frac12}\min(1,|yk|)^{\Re (m)+\frac12}
\nonumber
\\
&\leq \frac{C}{|k|} \langle x\rangle^{-s} \langle y\rangle^{-s}
\min(1,|xk|)^{\Re (m)+\frac12}\min(1,|yk|)^{\Re (m)+\frac12},
\label{pwe3}
\end{align}
where $C$ is a constant independent of $x,y$ and $k$.
Note that in the computation the inequality $\Re( \sqrt{-k^2\mp\i\epsilon})\geq0$ has been used,
and that such an inequality holds by our choice of the principal branch of the square root.
One clearly infers that \eqref{pwe3} belongs to $L^2(\R_+\times\R_+)$ and dominates \eqref{boundary3}.
Since \eqref{boundary3} converges pointwise to
\begin{equation}\label{boundary4}
\langle x\rangle^{-s}R_{m}(k^2\pm\i0;x,y) \langle y\rangle^{-s},
\end{equation}
one concludes by the Lebesgue Dominated Convergence Theorem that
\eqref{boundary3} converges in $L^2(\R_+\times\R_+)$  to \eqref{boundary4}.
This is equivalent to the convergence of \eqref{boundary2} to  \eqref{boundary5} in the
Hilbert-Schmidt norm.
Note finally that the uniform convergence in $k$ on each compact subset of $\R_+$
can be checked directly on the above expressions.

For $\Re(m) \geq 0$ with $m\neq 0$, the same proof holds with the estimate \eqref{pwe} instead of \eqref{pwe1}.
Finally for $m=0$, the result can be obtained by using \eqref{pwe2}, and by observing that the factor with the logarithm
is also square integrable near the origin.
\end{proof}

Based on the previous theorem one can directly deduce the following statement.

\begin{proposition}
There exists the \emph{spectral density}
\begin{align*}
p_{m}(k^2):=&\lim_{\epsilon\searrow0}  \frac{1}{2\pi \i}\Big(R_{m}(k^2+\i\epsilon)-R_{m}(k^2-\i\epsilon)\\
=& \frac{1}{2\pi \i}\Big(R_{m}\big(k^2+\i0\big)-R_{m} \big(k^2-\i0\big)\Big),
\end{align*}
understood in the sense of operators from
 $\langle\Q\rangle^{-s}L^2(\R_+)$
to $\langle\Q\rangle^{s}L^2(\R_+)$ for any $s>\frac{1}{2}$.
The kernel of this operator is provided for $x,y\in \R_+$ by
\begin{equation*}
p_{m}(k^2;x,y) =\ \frac{\Ja_m(k x)\Ja_m(ky)}{\pi k}.
\end{equation*}
\end{proposition}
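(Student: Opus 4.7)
The plan is to deduce both assertions directly from Theorem \ref{thm_boundary}, since the spectral density is nothing but a linear combination of the two boundary values of the resolvent, and the content is reduced to a single Bessel identity.

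For the existence, I would simply observe that by Theorem \ref{thm_boundary} the limits $R_m(k^2+\i0)$ and $R_m(k^2-\i0)$ each exist in the operator topology of $\mathcal{B}(\langle\Q\rangle^{-s}L^2(\R_+),\langle\Q\rangle^s L^2(\R_+))$ for $s>\tfrac12$, uniformly for $k$ in compact subsets of $\R_+$. Their difference therefore converges in the same topology, and
\begin{equation*}
p_m(k^2) = \frac{1}{2\pi \i}\bigl(R_m(k^2+\i0)-R_m(k^2-\i0)\bigr).
\end{equation*}

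For the formula, I would read off the two kernels from \eqref{eq_boundary}. In the region $0<x\le y$, the subtraction gives
\begin{equation*}
R_m(k^2+\i0;x,y)-R_m(k^2-\i0;x,y) = \frac{\i}{k}\,\Ja_m(kx)\bigl[\Ha_m^+(ky)+\Ha_m^-(ky)\bigr].
\end{equation*}
At this point I would invoke the classical identity $H_m^+ + H_m^- = 2J_m$, which follows from $H_m^\pm = J_m \pm \i Y_m$; multiplying by the common factor $\sqrt{\pi r/2}$ yields $\Ha_m^+ + \Ha_m^- = 2\Ja_m$ in the dimension-$1$ normalization (this identity is collected in Appendix \ref{secB1}). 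Hence the difference equals $\frac{2\i}{k}\Ja_m(kx)\Ja_m(ky)$, which is symmetric in $x$ and $y$, so the analogous computation in the region $0<y<x$ produces the same expression with no separate work. Dividing by $2\pi \i$ gives the claimed kernel $\frac{\Ja_m(kx)\Ja_m(ky)}{\pi k}$.

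The main obstacle, if any, is purely one of bookkeeping: I need to make sure that the factor $\sqrt{\pi r/2}$ distributes correctly through the identity $H_m^+ + H_m^- = 2J_m$, which it does because all three functions are evaluated at the same argument $ky$. No further analytic input is required beyond Theorem \ref{thm_boundary} and this elementary Bessel identity.
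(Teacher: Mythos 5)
Your proposal is correct and follows exactly the route the paper intends: the paper states the result as a direct consequence of Theorem \ref{thm_boundary}, and your argument fills in the only missing detail, namely the kernel subtraction together with the identity $\Ja_m=\tfrac12(\Ha_m^++\Ha_m^-)$ from Appendix \ref{secB1}. No gaps.
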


\subsection{Hankel transformation}\label{sec_Hankel}

This section is mostly inspired from Sections 5 and 6 of \cite{BDG}, from which
most of the statements are borrowed. We refer to this reference for more details,
or to the subsequent sections of the present paper for a more general approach.

For any $m\in \C$ with $\Re(m)>-1$  let us set
\begin{equation*}
\cF_{m}:C_{\rm c}(\R_+)\to L^2(\R_+)
\end{equation*}
with
\begin{equation*}
\big(\cF_{m} f\big)(x):=\int_0^\infty\cF_{m} (x,y)f(y)\d y
\end{equation*}
and
\begin{equation}\label{eq_kernel_Jm}
\cF_{m} (x,y) := \sqrt{\frac2\pi}\Ja_m(xy).
\end{equation}
We also define the unitary and self-adjoint transformation
$\Ju:L^2(\R_+)\to L^2(\R_+)$ by the formula
\begin{equation}\label{def_de_J}
\big(\Ju f\big)(x)=\frac{1}{x}f\Big(\frac{1}{x}\Big)
\end{equation}
for any $f\in L^2(\R_+)$ and $x\in \R_+$
Finally, we recall that $A$ denotes the generator of dilations.

\begin{proposition}\label{prop_sur_A}
For any $m\in \C$ with $\Re(m)>-1$
the map $\cF_{m}$ continuously extends to a bounded  invertible operator on $L^2(\R_+)$
satisfying $\cF_{m}^\t=\cF_{m} = \cF_m^{-1}$.
In addition, the following equalities hold
\begin{equation}\label{eq_explicit}
\cF_m=\Ju \Xi_m(A)=\Xi_m(-A)\Ju,\
\end{equation}
with
\begin{equation}\label{eq_def_Xi}
\Xi_m(t):= \e^{\i\ln(2)t} \frac{\Gamma(\frac{m+1+\i t}{2})}{\Gamma(\frac{m+1-\i t}{2})}\ .
\end{equation}
\end{proposition}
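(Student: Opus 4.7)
The plan is to pass to the spectral representation of the dilation generator $A$. Introduce the unitary $W:L^2(\R_+)\to L^2(\R)$ defined by $(Wf)(s):=\e^{s/2}f(\e^s)$; a short calculation shows that $W$ conjugates the dilation group to the group of translations, so that $WAW^{-1}=-\i\partial_s$, while the involution $\Ju$ of \eqref{def_de_J} satisfies $W\Ju W^{-1}=R$, with $R$ the reflection $(Rg)(s):=g(-s)$. In particular, $\Ju A\Ju=-A$. I will first establish $\cF_m=\Ju\,\Xi_m(A)$ on a dense subspace by directly computing $W\cF_m W^{-1}$ and identifying its Fourier conjugate; all remaining assertions then follow from manipulations involving the spectral calculus of $A$.

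The substitutions $x=\e^s$ and $y=\e^u$ in the kernel $\sqrt{2/\pi}\,\Ja_m(xy)=\sqrt{xy}\,J_m(xy)$ give, by a direct change of variables,
\begin{equation*}
(W\cF_m W^{-1}g)(s)=\int_{\R}\phi(s+u)\,g(u)\,\d u,\qquad \phi(t):=\e^t J_m(\e^t),
\end{equation*}
so that $W\cF_m W^{-1}$ is the composition of convolution by $\phi$ with the reflection $R$. With the Fourier convention $\widehat h(\xi):=\int\e^{-\i\xi s}h(s)\,\d s$, the convolution theorem together with $\cF R\cF^{-1}=R$ turn this operator into multiplication by $\widehat\phi$ followed by $R$. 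The substitution $u=\e^t$ identifies $\widehat\phi(\xi)$ with the Mellin transform of $J_m$ at $s=1-\i\xi$, and the classical Mellin formula
\begin{equation*}
\int_0^\infty u^{s-1}J_m(u)\,\d u=\frac{2^{s-1}\Gamma\bigl(\tfrac{m+s}{2}\bigr)}{\Gamma\bigl(\tfrac{m-s}{2}+1\bigr)},\qquad -\Re(m)<\Re(s)<\tfrac{3}{2},
\end{equation*}
specializes to $\widehat\phi(\xi)=\Xi_m(-\xi)$. Since the Fourier conjugate of $\Ju\,\Xi_m(A)$ is precisely the operator $g(\xi)\mapsto\Xi_m(-\xi)g(-\xi)$, the identity $\cF_m=\Ju\,\Xi_m(A)$ is established on a suitable dense core.

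Stirling's formula for $\log\Gamma$ shows that $|\Xi_m(\xi)|$ tends to the positive constants $\e^{\mp\pi\Im(m)/2}$ as $\xi\to\pm\infty$, and the condition $\Re(m)>-1$ places the would-be poles of the Gamma factors of $\Xi_m^{\pm1}$ at $\xi=\pm\i(m+1+2n)$, $n\in\nn$, hence off the real line; thus both $\Xi_m$ and $\Xi_m^{-1}$ are bounded on $\R$. Consequently $\Xi_m(A)$ is bounded with bounded inverse, so $\cF_m=\Ju\,\Xi_m(A)$ extends to a bounded invertible operator on $L^2(\R_+)$. The symmetry of the kernel \eqref{eq_kernel_Jm} in $(x,y)$ yields $\cF_m^\t=\cF_m$, while the algebraic identity $\Xi_m(\xi)\Xi_m(-\xi)=1$ combined with $\Ju A\Ju=-A$ gives $\cF_m^2=\Ju\,\Xi_m(A)\,\Ju\,\Xi_m(A)=\Xi_m(-A)\Xi_m(A)=\bbbone$, i.e.~$\cF_m=\cF_m^{-1}$; the same commutation rule produces the second equality $\cF_m=\Xi_m(-A)\Ju$. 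The principal delicate point is the justification of the Mellin formula at the conditionally convergent endpoint $s=1-\i\xi$, where $J_m(u)=O(u^{-1/2})$ as $u\to\infty$; this is handled either by restricting to a dense subspace of $g$ whose Fourier transform lies in $C_{\rm c}^\infty(\R)$, or equivalently by analytic continuation from the absolutely convergent strip $0<\Re(s)<\tfrac{3}{2}$.
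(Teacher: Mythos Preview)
Your proof is correct and follows essentially the same route as the paper's: both arguments identify $\cF_m$ as $\Ju$ times a function of the dilation generator by computing the Mellin transform of the Bessel function. The paper works directly with the kernel of $\Ju\cF_m$, invokes the Barnes integral representation \eqref{eq_Barnes} of $\Ja_m$ (the inverse-Mellin side), and then recognizes the resulting expression as the kernel of $\Xi_m(A)$ via \cite[Lem.~6.4]{BDG}; you instead make the Mellin structure explicit by conjugating with the unitary $W:L^2(\R_+)\to L^2(\R)$, which turns $\cF_m$ into a convolution-plus-reflection whose Fourier symbol is the forward Mellin transform of $J_m$. Your conjugation by $W$ is precisely the content of the lemma the paper cites, and your classical Mellin formula for $J_m$ is the forward version of the Barnes identity \eqref{eq_Barnes}, so the two proofs are dual presentations of one computation. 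Your version is somewhat more self-contained, at the price of having to address the conditional convergence of the Mellin integral at $\Re(s)=1$ explicitly; the paper handles the same issue by interpreting \eqref{eq_Barnes} in the sense of distributions, as discussed in Appendix~\ref{secB1}.
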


\begin{proof}
Let us start by proving the first equality in \eqref{eq_explicit}. For that purpose, consider the operator
$\Ju \cF_m: C_{\rm c}(\R_+)\to L^2(\R_+)$ whose kernel is given by
\begin{equation}\label{eq_ker_psi_A}
\sqrt{\frac2\pi}\frac{1}{x}\Ja_m\Big(\frac{y}{x}\Big)
=
\frac{1}{2\pi}\frac{1}{\sqrt{xy}}
\int_{-\infty}^{+\infty} \frac{\Gamma(\frac{m+\i
t+1}{2})}{\Gamma(\frac{m-\i t+1}{2})}
\left(\frac{1}{2}\right)^{-\i t}
\frac{y^{-\i t}}{x^{-\i t}}\;\!\d t,
\end{equation}
where \eqref{eq_Barnes} has been used for the second equality.
By taking into account the explicit formula for the kernel of an operator $\psi(A)$,
as recalled for example in \cite[Lem.~6.4]{BDG}, one infers that the r.h.s.~of
\eqref{eq_ker_psi_A} corresponds to the kernel
of an operator $\psi(A)$ with $\psi$ provided by the expression
\eqref{eq_def_Xi}. Then, from the density of $C_{\rm c}(\R_+)$ in $L^2(\R_+)$ and
since the map $t\mapsto \Xi_m(t)$ is bounded, one obtains that $\Ju \cF_m$ extends continuously
to the bounded operator $\Xi_m(A)$.
Since $\Ju$ is unitary and self-adjoint, one directly deduces the first equality in \eqref{eq_explicit}.

The second equality in \eqref{eq_explicit} is a straightforward consequence of the equality
\begin{equation*}
\Ju \;\! \e^{\i \tau A} \Ju = \e^{-\i \tau A}
\end{equation*}
which is easily checked. The equality $\cF_m^2 =\one$ can now be deduced from the equalities \eqref{eq_explicit}.
As a consequence $\cF_m^{-1}=\cF_m$, and this provides a direct proof of the boundedness of the inverse of $\cF_m$.
The equality $\cF_{m}^\t=\cF_{m}$ is finally a direct consequence of the expression \eqref{eq_kernel_Jm}
for the kernel of $\cF_m$.
\end{proof}

The map $\cF_m$ will be called the \emph{Hankel transformation of order $m$}.
It provides a kind of diagonalization of the operator $H_{m}$, as
shown in the next statement.

\begin{proposition}\label{priop1}
For any $m\in \C$ with $\Re(m)>-1$ and for any $k\in \C$ with $\Re(k)>0$ the following equality holds:
\begin{equation*}
\big(H_m+k^2\big)^{-1} = \cF_m \;\! (\Q^2+k^2)^{-1} \;\! \cF_m^\t\ .
\end{equation*}
\end{proposition}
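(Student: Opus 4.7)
The cleanest route is to establish the intertwining relation $H_m\cF_m f = \cF_m\Q^2 f$ on the core $C_{\rm c}^\infty(\R_+)$ and then convert it into the stated resolvent identity using the boundedness of $(H_m+k^2)^{-1}$ and $(\Q^2+k^2)^{-1}$ together with the properties $\cF_m^\t = \cF_m = \cF_m^{-1}$ from Proposition \ref{prop_sur_A}. The key input is the Bessel equation \eqref{lap8}: rescaling $r = kx$ yields $L_{m^2}\Ja_m(k\,\cdot)(x) = k^2\Ja_m(kx)$. For $f\in C_{\rm c}^\infty(\R_+)$ the integrand in $(\cF_m f)(x) = \sqrt{2/\pi}\int_0^\infty \Ja_m(xy)f(y)\d y$ is jointly smooth in $x\in\R_+$ and is supported in a compact $y$-set, so differentiating twice under the integral gives $L_{m^2}^{\max}\cF_m f = \cF_m(\Q^2 f)\in L^2(\R_+)$, whence $\cF_m f\in\Dom(L_{m^2}^{\max})$.

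The crux is to promote this to $\cF_m f\in\Dom(H_m)$. For $|\Re(m)|\geq 1$ nothing more is needed since there $H_m = L_{m^2}^{\max}$. For $-1<\Re(m)<1$ I would use the power-series expansion of $\Ja_m$ at the origin: only powers $r^{1/2+m+2k}$ appear, so there is no $r^{1/2-m}$ term when $m\neq 0$ and no logarithmic term when $m=0$. Inserting the expansion into the integral yields
\begin{equation*}
(\cF_m f)(x) = c\,x^{1/2+m} + O\big(x^{5/2+\Re(m)}\big)\qquad\text{as } x\to 0,
\end{equation*}
with $c$ proportional to $\int_0^\infty y^{1/2+m}f(y)\d y$. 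Comparing with the description of $\Dom(H_m)$ and of $\Dom(L_{m^2}^{\min})$ recalled in Section 2 confirms $\cF_m f\in\Dom(H_m)$. The intertwining then reads $(H_m+k^2)\cF_m f = \cF_m(\Q^2+k^2)f$ on $C_{\rm c}^\infty(\R_+)$. Composing on the left with $(H_m+k^2)^{-1}$ and substituting $f = (\Q^2+k^2)^{-1}g$ for $g\in C_{\rm c}^\infty(\R_+)$ gives $(H_m+k^2)^{-1}\cF_m g = \cF_m(\Q^2+k^2)^{-1}g$ on a dense subspace, which extends to all of $L^2(\R_+)$ by boundedness. Composing with $\cF_m^\t = \cF_m$ on the right and using $\cF_m^2 = \one$ produces the announced identity.

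The main obstacle is the domain verification when $-1<\Re(m)\leq 0$, where $x^{1/2+m}$ lies only marginally in $L^2$ near the origin and, at $m=0$, one must also rule out an $x^{1/2}\ln(x)$ contribution to the asymptotic expansion of $\cF_m f$. A purely computational alternative would be to recognise that the kernel of $\cF_m(\Q^2+k^2)^{-1}\cF_m^\t$ is $\tfrac{2}{\pi}\int_0^\infty \Ja_m(xy)\Ja_m(yz)/(y^2+k^2)\d y$ and to match it with \eqref{eq_kernel_H_m} via the classical Lipschitz--Hankel integral $\int_0^\infty y\, J_m(xy)J_m(yz)/(y^2+k^2)\d y = I_m(kx)K_m(kz)$ valid for $0<x<z$ and $\Re(k)>0$; this bypasses any domain consideration but rests on a nontrivial Bessel-function identity.
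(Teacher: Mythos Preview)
Your main argument is correct and takes a genuinely different route from the paper. The paper proceeds purely computationally: it writes the kernel of $\cF_m(\Q^2+k^2)^{-1}\cF_m^\t$ as $\tfrac{2}{\pi}\int_0^\infty \Ja_m(xp)\Ja_m(yp)(p^2+k^2)^{-1}\d p$ and then invokes the integral identity \eqref{manipu} (proved in the appendix by contour integration) to match it with the explicit resolvent kernel \eqref{eq_kernel_H_m}. This is exactly the ``Lipschitz--Hankel'' alternative you mention at the end of your proposal.

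Your primary approach is operator-theoretic: you establish the intertwining $(H_m+k^2)\cF_m f=\cF_m(\Q^2+k^2)f$ on $C_{\rm c}^\infty(\R_+)$ by differentiating under the integral sign, verify $\cF_m f\in\Dom(H_m)$ via the power-series structure of $\Ja_m$ near the origin, and then pass to resolvents by boundedness and $\cF_m^2=\one$. The domain check is sound: for $f$ supported in $[a,b]\subset\R_+$ the series for $\Ja_m(xy)$ converges uniformly in $y\in[a,b]$ for small $x$, giving $(\cF_m f)(x)=c\,x^{1/2+m}+O(x^{5/2+\Re(m)})$, and since $5/2+\Re(m)>3/2$ the remainder lies in $\Dom(L_{m^2}^{\min})$ around $0$ by the criteria recalled in Section~2 (cf.\ \cite[Prop.~4.12]{BDG}). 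The substitution $f=(\Q^2+k^2)^{-1}g$ is legitimate because this operation preserves $C_{\rm c}^\infty(\R_+)$. What you gain is a conceptual proof that avoids the nontrivial Bessel integral \eqref{manipu}; what the paper's approach gains is brevity and independence from any domain analysis, since both sides are already known to be bounded operators and one simply compares their integral kernels.
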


\begin{proof}
The kernel of the operator on the r.h.s.~is given by the expression
\begin{equation*}
\big(\cF_m \;\! (\Q^2+k^2)^{-1} \;\! \cF_m^\t\big)(x,y)
= \frac{2}{\pi}\int_0^\infty \Ja_m(xp)\;\!\Ja_m(yp)
\frac{1}{(p^2+k^2)}\;\!\d p.
\end{equation*}
By \eqref{manipu}, it coincides with the
kernel of $R_m(-k^2)$ provided in \eqref{eq_kernel_H_m}.
\end{proof}

Proposition \ref{priop1} is convenient technically, because it
contains only bounded operators. One can rewrite it by using unbounded
operators as follows:

\begin{theoreme}
The operator $H_m$ is similar to a self-adjoint operator. More precisely, the following equalities hold:
\begin{equation*}
H_{m}=\cF_{m} \Q^2\cF_{m}^\t.
\qquad \hbox{and}\qquad
H_m=\Xi_m(-A)\Q^{-2}\Xi_m(A).
\end{equation*}
\end{theoreme}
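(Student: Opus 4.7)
The plan is to derive both equalities directly from Proposition~\ref{priop1} together with the factorizations of $\cF_m$ established in Proposition~\ref{prop_sur_A}, without any further analysis of domains or boundary behavior.

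First I would observe that, since $\cF_m$ is bounded and invertible on $L^2(\R_+)$ with $\cF_m^{-1}=\cF_m^\t=\cF_m$, the expression $\cF_m\Q^2\cF_m^\t$ defines a closed operator on $L^2(\R_+)$ with domain $\cF_m\Dom(\Q^2)$. For every $k\in\C$ with $\Re(k)>0$, the point $-k^2$ lies in its resolvent set and
$$\bigl(\cF_m\Q^2\cF_m^\t+k^2\bigr)^{-1}=\cF_m(\Q^2+k^2)^{-1}\cF_m^\t.$$
By Proposition~\ref{priop1}, this bounded operator coincides with $(H_m+k^2)^{-1}$. Since two closed operators whose resolvents agree at a single common point must be equal, I conclude that $H_m=\cF_m\Q^2\cF_m^\t$.

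For the second equality, I would substitute the factorizations $\cF_m=\Xi_m(-A)\Ju$ and $\cF_m^\t=\Ju\,\Xi_m(A)$ from Proposition~\ref{prop_sur_A} into the identity just obtained, getting
$$H_m=\Xi_m(-A)\,\Ju\,\Q^2\,\Ju\,\Xi_m(A).$$
A short direct computation from \eqref{def_de_J} shows that $\Ju\,\Q\,\Ju=\Q^{-1}$, so that $\Ju\,\Q^2\,\Ju=\Q^{-2}$ as closed operators on $L^2(\R_+)$. Inserting this yields the claimed $H_m=\Xi_m(-A)\,\Q^{-2}\,\Xi_m(A)$.

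There is no real obstacle here: the substance is already contained in Propositions~\ref{priop1} and~\ref{prop_sur_A}, and the present theorem merely upgrades a bounded resolvent identity to an identity of unbounded operators. The only step requiring a little care is the routine observation that conjugation by a bounded invertible operator preserves closedness and intertwines resolvents, so that coincidence of resolvents at a single point $-k^2$ with $\Re(k)>0$ is already enough to identify the two closed operators.
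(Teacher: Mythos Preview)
Your proposal is correct and follows exactly the approach the paper has in mind: the theorem is stated immediately after Proposition~\ref{priop1} with the remark that one ``can rewrite it by using unbounded operators,'' and the paper gives no separate proof. Your argument---upgrading the resolvent identity to an identity of closed operators via the bounded invertibility of $\cF_m$, and then substituting the factorization \eqref{eq_explicit} together with $\Ju\,\Q^2\,\Ju=\Q^{-2}$---is precisely the intended route.
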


\subsection{Spectral projections}\label{sec_proj_1}

For any $0<a<b$ let us now consider the operator
\begin{equation*}
\one_{[a,b]}(H_m)
:=\int_{\sqrt{a}}^{\sqrt{b}}p_m(k^2)\d (k^2) = 2\int_{\sqrt{a}}^{\sqrt{b}}p_m(k^2) k\;\!\d k,
\end{equation*}
which is defined as a bounded operator from  $\langle\Q\rangle^{-s}L^2(\R_+)$
to $\langle\Q\rangle^{s}L^2(\R_+)$ for
any $s>\frac{1}{2}$.
The kernel of this operator is given for $x,y\in \R_+$  by the expression
\begin{equation}\label{eq_kernel_m_x_y}
\one_{[a,b]}(H_m)(x,y)=\frac{2}{\pi}\int_{\sqrt{a}}^{\sqrt{b}}\Ja_m(kx)\;\!\Ja_m(ky)\;\!\d k.
\end{equation}

Clearly, by Stone's formula, for real $m>-1$, the above operator
extends to the self-adjoint operator $H_m$ onto the interval $[a,b]$.
For complex $m$, $H_m$  is not self-adjoint, hence strictly speaking
Stone's formula is not available. However, $H_m$ is similar to a
self-adjoint operator, hence the properties of
$\one_{[a,b]}(H_m)$ are almost the same as in the self-adjoint case.

\begin{proposition}\label{priop}
For any $0< a <b$ and any $m\in \C$ with $\Re(m)>-1$ one has
\begin{equation}\label{eq_diag_m}
\one_{[a,b]}(H_{m})=\cF_{m}\;\!\one_{[a,b]}(\Q^2)\;\!\cF_{m}^\t
\end{equation}
in $\B\big(L^2(\R_+)\big)$. In addition, $\one_{[a,b]}(H_{m})$ is a projection.
\end{proposition}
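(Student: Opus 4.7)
The plan is to identify the integral kernel of the right-hand side of \eqref{eq_diag_m} with the one of $\one_{[a,b]}(H_m)$ displayed in \eqref{eq_kernel_m_x_y}, and then to read off the projection property from the identities for $\cF_m$ provided by Proposition \ref{prop_sur_A}. Since by that proposition $\cF_m^\t = \cF_m$ and by \eqref{eq_kernel_Jm} the kernel of $\cF_m$ is $\sqrt{2/\pi}\;\!\Ja_m(xy)$, a formal composition of the three factors in $\cF_m\;\!\one_{[a,b]}(\Q^2)\;\!\cF_m^\t$ yields the kernel
\begin{equation*}
\frac{2}{\pi}\int_0^\infty \Ja_m(xk)\;\!\one_{[a,b]}(k^2)\;\!\Ja_m(ky)\;\!\d k
\,=\, \frac{2}{\pi}\int_{\sqrt a}^{\sqrt b}\Ja_m(xk)\;\!\Ja_m(ky)\;\!\d k,
\end{equation*}
which is exactly the kernel \eqref{eq_kernel_m_x_y} of $\one_{[a,b]}(H_m)$.

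To promote this kernel identity to the operator equality \eqref{eq_diag_m} in $\B(L^2(\R_+))$, I would argue as follows. Proposition \ref{prop_sur_A} ensures $\cF_m\in\B(L^2(\R_+))$, while $\one_{[a,b]}(\Q^2)$ is a bounded self-adjoint projection on $L^2(\R_+)$; hence the right-hand side of \eqref{eq_diag_m} already belongs to $\B(L^2(\R_+))$. Fix now $f,g\in C_{\rm c}(\R_+)$ and expand the right-hand side by using the kernel of $\cF_m$ to obtain
\begin{equation*}
\langle g\,|\,\cF_m\;\!\one_{[a,b]}(\Q^2)\;\!\cF_m^\t f\rangle
= \frac{2}{\pi}\int_0^\infty\!\!\int_0^\infty\!\!\int_{\sqrt a}^{\sqrt b} g(x)\;\!\Ja_m(xk)\;\!\Ja_m(ky)\;\!f(y)\;\!\d k\,\d x\,\d y.
\end{equation*}
The compactness of $[\sqrt a,\sqrt b]$ together with the continuity of $k\mapsto \Ja_m(xk)\Ja_m(ky)$ there make Fubini's theorem applicable, and pairing $\one_{[a,b]}(H_m)$ with $f$ and $g$ via \eqref{eq_kernel_m_x_y} produces the same triple integral. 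Density of $C_{\rm c}(\R_+)$ in $L^2(\R_+)$ together with boundedness of the right-hand side then promote the identity from sesquilinear forms to the operator equality \eqref{eq_diag_m}.

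The projection property follows in one line from Proposition \ref{prop_sur_A}: since $\cF_m^\t\cF_m = \cF_m\cF_m = \cF_m^2 = \one$ and $\one_{[a,b]}(\Q^2)^2 = \one_{[a,b]}(\Q^2)$,
\begin{equation*}
\one_{[a,b]}(H_m)^2
= \cF_m\;\!\one_{[a,b]}(\Q^2)\;\!\cF_m^\t\cF_m\;\!\one_{[a,b]}(\Q^2)\;\!\cF_m^\t
= \cF_m\;\!\one_{[a,b]}(\Q^2)\;\!\cF_m^\t = \one_{[a,b]}(H_m).
\end{equation*}
The main obstacle is the second step: one must reconcile the a priori definition of $\one_{[a,b]}(H_m)$ as a map between the weighted spaces $\langle\Q\rangle^{-s}L^2(\R_+)$ and $\langle\Q\rangle^{s}L^2(\R_+)$ with its realization as the Hankel sandwich on $L^2(\R_+)$. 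This comes down to the Fubini rearrangement described above, which is routine thanks to the compact support of $\one_{[a,b]}(k^2)$ in $k$, but it is the only place where genuine care is required.
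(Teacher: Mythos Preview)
Your proof is correct and follows essentially the same approach as the paper: compute the action of $\cF_m\,\one_{[a,b]}(\Q^2)\,\cF_m^\t$ on compactly supported functions via the explicit kernel \eqref{eq_kernel_Jm}, apply Fubini to recover the kernel \eqref{eq_kernel_m_x_y}, extend by density and boundedness, and then deduce the projection property from $\cF_m^\t\cF_m=\one$. The only cosmetic difference is that the paper works with $(\cF_m\,\one_{[a,b]}(\Q^2)\,\cF_m^\t f)(x)$ directly rather than through the bilinear pairing with a second test function $g$.
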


\begin{proof}
Let us first compute the r.h.s.~of \eqref{eq_diag_m}.
For that purpose recall that $\cF_m^{\t}=\cF_m$, and then one gets for any $f\in C_{\rm c}(\R_+)$
and $x\in \R_+$
\begin{align*}
\big(\cF_{m}\;\!\one_{[a,b]}(\Q^2)\;\!\cF_{m}^\t f\big)(x)
& = \sqrt{\frac{2}{\pi}}\int_{\sqrt{a}}^{\sqrt{b}} \Ja_m(kx) \;\!\big(\cF_m^\t f\big)(k)\;\!\d k \\
& = \frac{2}{\pi}\int_{\sqrt{a}}^{\sqrt{b}}\Ja(kx) \Big(\int_0^\infty\Ja(ky) f(y)\;\!\d y\Big)\d k \\
& = \int_0^\infty \Big(\frac{2}{\pi}\int_{\sqrt{a}}^{\sqrt{b}}\Ja_m(kx)\Ja_m(ky)\;\!\d k\Big) f(y)\;\!\d y,
\end{align*}
where Fubini's theorem has been applied for the last equality. By comparing the last expression
with \eqref{eq_kernel_m_x_y} one directly infers the equality \eqref{eq_diag_m}.
Note that since the r.h.s.~of \eqref{eq_diag_m} defines a bounded operator on $L^2(\R_+)$, this
equality provides a natural continuous extension of $\one_{[a,b]}(H_{m})$ as a bounded operator on $L^2(\R_+)$.
Finally, since $\cF_m$ satisfies $\cF_m^\t \cF_m = \one$, one directly infers that $\one_{[a,b]}(H_{m})$
is a projection.
\end{proof}

\subsection{M{\o}ller operators and scattering operator}

In this section we describe the scattering theory for the
operators $H_m$. Note that we also treat non-self-adjoint operators,
and therefore we cannot always invoke standard results.

Let us start by introducing the \emph{incoming} and \emph{outgoing Hankel transformations of order $m$}
defined by
\begin{equation}\label{def_cF_m_pm}
\cF_m^\mp:=\e^{\mp\i\frac{\pi}{2}m}\cF_m.
\end{equation}
Their kernel is provided by the expressions
\begin{equation*}
\cF_m^\mp (x,y)=\e^{\mp\i\frac{\pi}{2}m} \sqrt{\frac2\pi}\Ja_m(xy)
\end{equation*}
and the following relations trivially hold:
\begin{equation*}
\big(\cF_m^\mp\big)^{-1}=\cF_m^\pm= \cF_m^{\pm \t}.
\end{equation*}

Now, for any $m,m'\in \C$ with $\Re(m),\Re(m')>-1$, we define
\begin{equation}\label{eq_def_W}
W_{m,m'}^\pm :=  \cF_m^{\pm} \cF_{m'}^{\mp \t}.
\end{equation}
Some easy properties of these operators are gathered in the next statement.

\begin{proposition}
$W_{m,m'}^\pm$ are bounded invertible operators satisfying
\begin{align*}
W_{m,m'}^{\mp \t} \;\!W_{m,m'}^\pm & = \one, \\
W_{m,m'}^\pm \;\!W_{m,m'}^{\mp \t}&= \one, \\
W_{m,m'}^{\pm\t} & = W_{m',m}^{\mp},\\
W_{m,m'}^\pm H_{m'} & = H_{m}W_{m,m'}^\pm .
\end{align*}
\end{proposition}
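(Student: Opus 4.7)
The plan is to exploit the diagonalization $H_m=\cF_m\Q^2\cF_m^\t$ provided by the theorem in Section~\ref{sec_Hankel}, together with the simple algebraic properties of $\cF_m^\pm$ inherited from those of $\cF_m$. First I would note that since Proposition~\ref{prop_sur_A} gives $\cF_m=\cF_m^\t=\cF_m^{-1}$, the definition $\cF_m^\mp=\e^{\mp\i\pi m/2}\cF_m$ yields immediately
\begin{equation*}
\cF_m^{\pm\t}=\cF_m^\pm,\qquad \cF_m^\pm\cF_m^\mp=\cF_m^\mp\cF_m^\pm=\one.
\end{equation*}
In particular every $\cF_m^\pm$ is bounded with bounded inverse $\cF_m^\mp$, so that $W_{m,m'}^\pm=\cF_m^\pm\cF_{m'}^{\mp\t}$ is in any case a bounded operator.

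For the first three identities one proceeds by purely algebraic manipulation using the standard rule $(AB)^\t=B^\t A^\t$ for bounded operators with respect to the bilinear form \eqref{bili}. Explicitly, $W_{m,m'}^{\mp\t}=(\cF_m^\mp\cF_{m'}^{\pm\t})^\t=\cF_{m'}^\pm\cF_m^\mp$, and then
\begin{equation*}
W_{m,m'}^{\mp\t}W_{m,m'}^\pm=\cF_{m'}^\pm\cF_m^\mp\cF_m^\pm\cF_{m'}^{\mp\t}=\cF_{m'}^\pm\cF_{m'}^{\mp\t}=\one,
\end{equation*}
and in the same way on the other side for the second identity. The third identity is then immediate: $W_{m,m'}^{\pm\t}=\cF_{m'}^{\mp\t\t}\cF_m^{\pm\t}=\cF_{m'}^\mp\cF_m^\pm=W_{m',m}^\mp$. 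Taken together, these verify that $W_{m,m'}^\pm$ is bounded invertible with two-sided inverse $W_{m,m'}^{\mp\t}$.

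The intertwining relation is where the spectral structure of $H_m$ enters. From $H_m=\cF_m\Q^2\cF_m^\t=\cF_m\Q^2\cF_m$ one reads off $\cD(H_m)=\cF_m\cD(\Q^2)$ and the operator identity $H_m\cF_m=\cF_m\Q^2$ on $\cD(\Q^2)$; multiplying by the scalar $\e^{\pm\i\pi m/2}$ gives $H_m\cF_m^\pm=\cF_m^\pm\Q^2$ on $\cD(\Q^2)$, and the analogous reasoning applied to $m'$ yields $\cF_{m'}^{\mp\t}H_{m'}=\Q^2\cF_{m'}^{\mp\t}$ on $\cD(H_{m'})$. Since $\cF_{m'}^{\mp\t}$ maps $\cD(H_{m'})$ bijectively onto $\cD(\Q^2)$, one can compose these to obtain
\begin{equation*}
W_{m,m'}^\pm H_{m'}=\cF_m^\pm\cF_{m'}^{\mp\t}H_{m'}=\cF_m^\pm\Q^2\cF_{m'}^{\mp\t}=H_m\cF_m^\pm\cF_{m'}^{\mp\t}=H_m W_{m,m'}^\pm
\end{equation*}
on $\cD(H_{m'})$, which is the asserted intertwining. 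The only point requiring real care is the domain bookkeeping in this last step; the rest of the argument reduces to the algebra of $\cF_m$ and its dressed variants $\cF_m^\pm$.
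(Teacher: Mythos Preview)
Your proof is correct and follows exactly the approach implicit in the paper, which in fact omits a proof altogether, treating these as immediate consequences of the relations $\cF_m^\t=\cF_m=\cF_m^{-1}$ and the definition of $\cF_m^\mp$. Your explicit domain bookkeeping for the intertwining identity, using $\cD(H_m)=\cF_m\cD(\Q^2)$, is the appropriate way to make the unbounded statement precise.
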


Formally, the kernel of $W_{m,m'}^\pm$ is given by
\begin{equation*}
W_{m,m'}^\pm(x,y)=\e^{\pm\i\frac{\pi}{2}(m-m')}\frac{2}{\pi}
\int_0^\infty\Ja_m(kx)\Ja_{m'}(ky)\;\!\d k.
\end{equation*}
On the other hand, by using the expression derived in Section \ref{sec_Hankel}
one also gets
\begin{equation}\label{eq_wave_expli}
W_{m,m'}^\pm = \e^{\pm\i\frac{\pi}{2}(m-m')}
\frac{ \Gamma(\frac{m+1-\i A}{2})\Gamma(\frac{m'+1+\i A}{2})}{\Gamma(\frac{m+1+\i A}{2})\Gamma (\frac{m'+1-\i A}{2})}.
\end{equation}
Note also that the equality $(W_{m,m'}^\pm)^{-1}=W_{m,m'}^{\mp \t}$ holds.
Therefore, the scattering operator is defined by
\begin{equation*}
S_{m,m'}:=W_{m,m'}^{- \t} W_{m,m'}^-,
\end{equation*}
and for the operators considered above, one simply gets
\begin{equation*}
S_{m,m'}=\e^{-\i\pi(m-m')}\one.
\end{equation*}

Let us now make a link with the traditional approach of scattering theory.
By Proposition \ref{priop1} and the boundedness of the
Hankel transformations, the operators $H_m$ generate a bounded one-parameter
group by the formula:
\begin{equation*}
\e^{\i tH_m}=\cF_m\e^{\i t\Q^2}\cF_m^\t.
\end{equation*}
Therefore, we can try to apply time-dependent scattering
theory, even though the operators $H_m$ may be non-self-adjoint. (We
refer to \cite{Davies,Kato,Ry} for additional information on scattering
theory in the non-self-adjoint setting).
In the next statement we show that  $W^\pm_{m,m'}$ coincide with
the M{\o}ller operators for the pair $(H_m,H_{m'})$.

\begin{theoreme}\label{corol_existence}
For any $m,m'\in \C$ with $\Re(m),\Re(m')>-1$ one has
\begin{equation*}
\slim_{t\to \pm \infty}\e^{\i tH_m}\e^{-\i tH_{m'}}=W_{m,m'}^\pm.
\end{equation*}
\end{theoreme}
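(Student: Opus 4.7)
The natural strategy exploits the diagonalization $\e^{\i tH_m}=\cF_m\e^{\i t\Q^2}\cF_m$ from Proposition \ref{priop1} (using $\cF_m^\t=\cF_m$), so that
\begin{equation*}
\e^{\i tH_m}\e^{-\i tH_{m'}}=\cF_m\;\!\e^{\i t\Q^2}\;\!T\;\!\e^{-\i t\Q^2}\;\!\cF_{m'},\qquad T:=\cF_m\cF_{m'}.
\end{equation*}
Combining $\cF_m=J\Xi_m(A)$ and $\cF_{m'}=\Xi_{m'}(-A)J$ from Proposition \ref{prop_sur_A} with the easily verified identity $JAJ=-A$, one rewrites $T$ as a bounded Borel function of the dilation generator: $T=\Xi_m(-A)\Xi_{m'}(A)=F(A)$ with $F(s):=\Xi_m(-s)\Xi_{m'}(s)$. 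Since $W_{m,m'}^\pm=\e^{\pm\i\pi(m-m')/2}T$ by \eqref{def_cF_m_pm}--\eqref{eq_def_W}, the theorem reduces to proving the scalar strong limit
\begin{equation*}
\slim_{t\to\pm\infty}\e^{\i t\Q^2}F(A)\e^{-\i t\Q^2}=\e^{\pm\i\pi(m-m')/2}\;\!\one.
\end{equation*}

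The value of $F$ at infinity would follow from Stirling's formula applied to the ratio of Gamma functions in \eqref{eq_def_Xi}: a direct computation yields $\Xi_m(t)\sim \exp\!\big(\i t\ln|t|-\i t+\i\,\mathrm{sgn}(t)\pi m/2\big)$ as $|t|\to\infty$, and the leading oscillating phases in $\Xi_m(-s)\Xi_{m'}(s)$ cancel to leave exactly $F(\pm\infty)=\e^{\mp\i\pi(m-m')/2}$. The operator-level mechanism behind the strong limit is the commutator identity $\e^{\i t\Q^2}A\e^{-\i t\Q^2}=A-2t\Q^2$ coming from $[A,\Q^2]=-2\i\Q^2$: formally $\e^{\i t\Q^2}F(A)\e^{-\i t\Q^2}=F(A-2t\Q^2)$ and, on vectors spectrally localised for $\Q^2$ in a compact subinterval of $\R_+$, the argument $A-2t\Q^2$ is driven to $\mp\infty$ as $t\to\pm\infty$, so that $F$ evaluated there should tend to $F(\mp\infty)$.

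To make this rigorous I would pass to the spectral picture of $\Q^2$ through $y=x^2$ followed by $y=\e^u$, producing a unitary identification $L^2(\R_+,\d x)\cong L^2(\R,\d u)$ under which $\Q^2$ becomes multiplication by $\e^u$ and $A$ becomes the Fourier multiplier $-2\i\partial_u$. The operator in question then has Schwartz kernel $K_F(u-u')\;\!\e^{\i t(\e^u-\e^{u'})}$, reducing the strong limit to a stationary-phase analysis: the phase $\e^u-\e^{u'}$ vanishes only on the diagonal $u=u'$ with non-degenerate derivative, and a Riemann--Lebesgue argument extracts the multiplicative scalar $F(-2t\e^u)$, whose $t\to\pm\infty$ limit is $F(\mp\infty)$ by dominated convergence. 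The main technical obstacle will be that $K_F$ is only a tempered distribution, because $F$ has distinct finite limits at $\pm\infty$ and is therefore not $L^1$ on the Fourier side; I would overcome this by splitting $F$ into a $C_0(\R)$ part, to which the Riemann--Lebesgue lemma applies after norm-approximation by Schwartz-class multipliers, and a piecewise-asymptotic part whose contribution to the oscillatory integral is computed explicitly and supplies exactly the constants $F(\pm\infty)$.
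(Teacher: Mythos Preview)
Your proposal is correct and follows essentially the same route as the paper: the reduction via $\cF_m$ to $\e^{\i t\Q^2}F(A)\e^{-\i t\Q^2}$ with $F=\Xi_m(-\,\cdot\,)\Xi_{m'}(\cdot)$, the Stirling computation of $F(\pm\infty)$, and the unitary conjugation of $(\Q^2,A)$ to $(\e^{2Q},P)$ on $L^2(\R)$ are all exactly what the paper does. The only difference is that the paper isolates the strong limit $\slim_{t\to\pm\infty}\e^{\i t\Q^2}\psi(-A)\e^{-\i t\Q^2}=\psi(\pm\infty)$ for $\psi\in C([-\infty,\infty])$ as a separate lemma (Lemma~\ref{lem_limit_A}) and proves it more simply than your stationary-phase/splitting scheme: it first reduces strong to weak convergence, then for compactly supported $f_1,f_2$ writes the matrix element as an oscillatory integral and performs the translation $\xi\mapsto\xi+t\,\tfrac{\e^{2x}-\e^{2y}}{x-y}$ in the $\xi$-variable, after which dominated convergence gives $\psi(\pm\infty)$ directly, with no need to split $F$ or handle distributional kernels.
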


In the following proof, $C\big([-\infty,\infty]\big)$ denotes the set of
continuous functions on $\R$ having a limit at $+\infty$ and a limit
at $-\infty$.

\begin{proof}
Consider first the equality
\begin{equation*}
\e^{\i tH_m}\e^{-\i tH_{m'}}=
\cF_m\e^{\i t\Q^2}\cF_m^\t \cF_{m'}\e^{-\i t\Q^2}\cF_{m'}^\t
\end{equation*}
and observe that
\begin{equation*}
\cF_m^\t \cF_{m'}=\Xi_m(-A)\Xi_{m'}(A)
= \frac{ \Gamma(\frac{m+1-\i A}{2})\Gamma(\frac{m'+1+\i A}{2})}
{\Gamma(\frac{m+1+\i A}{2})\Gamma(\frac{m'+1-\i A}{2})}.
\end{equation*}
By considering then the asymptotic behavior of the $\Gamma$-function,
as presented for example in \cite[Eq.~6.1.37]{AS} or in Lemma \ref{stirling}, one infers that
the map
$$
t\mapsto \Xi_m(-t)\Xi_{m'}(t)
$$
belongs to $C\big([-\infty,\infty]\big)$ and that
\begin{equation}\label{eq-eq}
\Xi_m(\mp\infty)\Xi_{m'}(\pm\infty)
=\e^{\mp\i \frac{\pi}{2}(m-m')}.
\end{equation}
One infers then by Lemma \ref{lem_limit_A} that
\begin{equation*}
\slim_{t\to \pm \infty}\e^{\i t\Q^2}\cF_m^\t \cF_{m'}\e^{-\i
t\Q^2}=\e^{\pm\i \frac{\pi}{2}(m-m')},
\end{equation*}
from which one deduces that
\begin{align*}
\slim_{t\to \pm \infty}\e^{\i tH_m}\e^{-\i tH_{m'}}
&= \cF_m \e^{\pm\i \frac{\pi}{2}(m-m')}\cF_{m'}^\t \\
& =\cF_m^{\pm } \cF_{m'}^{\mp \t}  .
\qedhere
\end{align*}
\end{proof}

\subsection{Some special cases}

In some special situations the scattering theory for $H_m$ is
very explicit. The Gamma function is not even used in these special cases.
Some of them are provided in this section.

\begin{proposition}
The following identities hold:
\begin{align}
\nonumber \cF_{-m}=&\cF_m\frac{\cos\big(\frac\pi2(m+\i A)\big)}{\cos\big(\frac\pi2(m-\i A)\big)}
= \frac{\cos\big(\frac\pi2(m-\i A)\big)}{\cos\big(\frac\pi2(m+\i A)\big)}\cF_m, \\
\nonumber \cF_{-m}^\mp = &\cF_{m}^\mp\frac{\e^{\pm\pi A}+\e^{\pm\i \pi m}}{\e^{\pm\pi A}+\e^{\mp\i \pi m}}
= \frac{\e^{\mp\pi A}+\e^{\pm\i \pi m}}{\e^{\mp\pi A}+\e^{\mp\i \pi m}} \cF_m^\mp,\\
\label{wave-explicit} W_{-m,m}^\pm =&  \frac{\e^{\pm\pi A}+\e^{\mp\i \pi m}}{\e^{\pm\pi A}+\e^{\pm\i \pi m}} ,\\
\nonumber S_{-m,m} = &\e^{\i2\pi m}\one.
\end{align}
\label{prop_wave_op}\end{proposition}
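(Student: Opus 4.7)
The plan is to reduce all four identities to direct computations with the explicit formula $\cF_m = \Ju\,\Xi_m(A) = \Xi_m(-A)\,\Ju$ from Proposition \ref{prop_sur_A}. For the first identity I would compute $\cF_m^{-1}\cF_{-m}$, in which $\Ju^2=\one$ and the scalar factors $\e^{\pm\i\ln(2)A}$ built into $\Xi_m(A)$ and $\Xi_{-m}(A)$ cancel, leaving
$$\cF_m^{-1}\cF_{-m}\;=\;\Xi_m(A)^{-1}\Xi_{-m}(A)\;=\;\frac{\Gamma\bigl(\tfrac{1+m-\i A}{2}\bigr)\,\Gamma\bigl(\tfrac{1-m+\i A}{2}\bigr)}{\Gamma\bigl(\tfrac{1+m+\i A}{2}\bigr)\,\Gamma\bigl(\tfrac{1-m-\i A}{2}\bigr)}.$$
Applying Euler's reflection formula $\Gamma(z)\Gamma(1-z)=\pi/\sin(\pi z)$ to the two pairs $\{\tfrac{1+m\pm\i A}{2},\tfrac{1-m\mp\i A}{2}\}$ and using $\sin(\pi/2+w)=\cos(w)$, the right-hand side reduces to $\cos(\pi(m+\i A)/2)/\cos(\pi(m-\i A)/2)$. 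The second equality in the first identity follows either symmetrically from $\cF_{-m}\cF_m^{-1}$, or from $\cF_m\phi(A)\cF_m^{-1}=\phi(-A)$, which itself is immediate from $\Ju A\Ju=-A$ and the fact that $\Xi_m(A)$ commutes with $\phi(A)$ by functional calculus.

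For the second identity I would substitute $\cF_{-m}^\mp=\e^{\pm\i\pi m/2}\cF_{-m}$ and $\cF_m=\e^{\pm\i\pi m/2}\cF_m^\mp$ (from \eqref{def_cF_m_pm}) into the first identity to get
$$\cF_{-m}^\mp\;=\;\e^{\pm\i\pi m}\,\cF_m^\mp\,\frac{\cos(\pi(m+\i A)/2)}{\cos(\pi(m-\i A)/2)}.$$
Expanding $2\cos(\pi(m\pm\i A)/2)=\e^{\i\pi m/2}\e^{\mp\pi A/2}+\e^{-\i\pi m/2}\e^{\pm\pi A/2}$ and then multiplying the numerator and denominator of the resulting fraction by suitable powers of $\e^{\i\pi m/2}$ and $\e^{\pi A/2}$, one rewrites the combination $\e^{\pm\i\pi m}\cos(\pi(m+\i A)/2)/\cos(\pi(m-\i A)/2)$ as $(\e^{\pm\pi A}+\e^{\pm\i\pi m})/(\e^{\pm\pi A}+\e^{\mp\i\pi m})$. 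The alternative left-multiplication form in the statement drops out from the same algebra after multiplying numerator and denominator by $\e^{\mp\pi A}$, or equivalently from starting with the second equality of the first identity.

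For the wave operators I would use that $\cF_m^\t=\cF_m$ (Proposition \ref{prop_sur_A}) implies $\cF_m^{\mp\t}=\cF_m^\mp$, so that $W_{-m,m}^\pm=\cF_{-m}^\pm\cF_m^\mp$. Substituting the \emph{left}-multiplication form of $\cF_{-m}^\pm$ from the previous paragraph and using $\cF_m^\pm\cF_m^\mp=\e^{\pm\i\pi m/2}\e^{\mp\i\pi m/2}\cF_m^2=\one$, the factor $\cF_m^\pm\cF_m^\mp$ collapses and what remains is exactly the claimed ratio $(\e^{\pm\pi A}+\e^{\mp\i\pi m})/(\e^{\pm\pi A}+\e^{\pm\i\pi m})$. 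Finally, $S_{-m,m}=\e^{\i 2\pi m}\one$ is just the specialization $(m,m')\mapsto(-m,m)$ of the general formula $S_{m,m'}=\e^{-\i\pi(m-m')}\one$ established just above the statement.

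I expect the only mildly delicate step to be the bookkeeping in the second identity, i.e.\ translating cleanly between the symmetric trigonometric form $\cos(\pi(m\pm\i A)/2)$ and the asymmetric exponential form $\e^{\pm\pi A}+\e^{\pm\i\pi m}$ while keeping the nested $\pm/\mp$ conventions consistent. Everything else is routine functional calculus on the self-adjoint operator $A$, the reflection formula for $\Gamma$, and the relations $\cF_m^2=\one$ and $\Ju A\Ju=-A$ that are already recorded.
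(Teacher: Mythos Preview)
Your proposal is correct and follows essentially the same approach as the paper: both arguments hinge on using the reflection formula $\Gamma(z)\Gamma(1-z)=\pi/\sin(\pi z)$ (equivalently $\Gamma(\tfrac12+z)\Gamma(\tfrac12-z)=\pi/\cos(\pi z)$) to show $\Xi_{-m}(t)=\Xi_m(t)\cos\bigl(\tfrac\pi2(m+\i t)\bigr)/\cos\bigl(\tfrac\pi2(m-\i t)\bigr)$, after which the remaining identities are algebraic consequences of the definitions. The paper's proof simply asserts ``all the mentioned equalities can then be easily deduced,'' whereas you spell out the bookkeeping for the exponential form and the wave and scattering operators; but the substance is identical.
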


\begin{proof}
By using the identity
\begin{equation*}
\Gamma\Big(z+\frac12\Big)
\Gamma\Big(-z+\frac12\Big)=\frac{\pi}{\cos(\pi z)},
\end{equation*}
one infers that
\begin{equation*}
\Xi_{-m}(t)=\Xi_m(t)\frac{\cos\big(\frac\pi2(m+\i t)\big)}{\cos\big(\frac\pi2(m-\i t)\big)}.
\end{equation*}
All the mentioned equalities can then be easily deduced.
\end{proof}

\begin{proposition}
The following identities hold:
\begin{align*}
\cF_{m+2} =&\cF_m\frac{m+1+\i A}{m+1-\i A}=\frac{m+1-\i A}{m+1+\i A}\cF_m,\\
\cF_{m+2}^\pm =&\cF_m^\pm\frac{m+1+\i A}{m+1-\i A}=\frac{m+1-\i A}{m+1+\i A}\cF_m^\pm,\\
W_{m+2,m}^\pm =& \frac{m+1-\i A}{m+1+\i A},\\
 S_{m+2,m}  =&\one.
\end{align*}
\end{proposition}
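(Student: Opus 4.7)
The plan is to reduce everything to the single functional equation $z\Gamma(z)=\Gamma(z+1)$, applied at the symbol level through the function $\Xi_m$ introduced in \eqref{eq_def_Xi}. This is of course the identity the introduction flagged as corresponding to the M{\o}ller operators for the pair $(H_{m+2},H_m)$.

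The first step is the symbol computation. Applying $\Gamma(z+1)=z\Gamma(z)$ to $z=\frac{m+1+\i t}{2}$ in the numerator and to $z=\frac{m+1-\i t}{2}$ in the denominator of $\Xi_{m+2}(t)$ gives
\begin{equation*}
\Xi_{m+2}(t)=\e^{\i\ln(2)t}\frac{\Gamma\big(\frac{m+1+\i t}{2}+1\big)}{\Gamma\big(\frac{m+1-\i t}{2}+1\big)}
=\Xi_m(t)\cdot\frac{m+1+\i t}{m+1-\i t}.
\end{equation*}
Substituting $t=A$ and $t=-A$ in the two expressions $\cF_m=\Ju\Xi_m(A)=\Xi_m(-A)\Ju$ of Proposition \ref{prop_sur_A}, and using that the rational function $\frac{m+1\pm\i A}{m+1\mp\i A}$ commutes with any function of $A$, yields both stated forms of $\cF_{m+2}$. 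The identities for $\cF_{m+2}^\pm$ then follow by multiplying through with the phases prescribed in \eqref{def_cF_m_pm}.

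For the wave operators I would start from the definition $W_{m+2,m}^\pm=\cF_{m+2}^\pm\cF_m^{\mp\t}$ and use $\cF_m^\t=\cF_m$ together with $\cF_m^2=\one$, both from Proposition \ref{prop_sur_A}. After extracting the scalar phases from \eqref{def_cF_m_pm}, the product collapses to
\begin{equation*}
\cF_m\cdot\frac{m+1+\i A}{m+1-\i A}\cdot\cF_m
=\cF_m\cdot\frac{m+1+\i A}{m+1-\i A}\cdot\cF_m^{-1},
\end{equation*}
and the key observation is that $\cF_m$ conjugates $A$ to $-A$: indeed $\cF_m=\Ju\Xi_m(A)$ and the relation $\Ju\;\!\e^{\i\tau A}\Ju=\e^{-\i\tau A}$ used in the proof of Proposition \ref{prop_sur_A} gives $\Ju f(A)\Ju=f(-A)$, so $\cF_m\;\!f(A)\;\!\cF_m^{-1}=f(-A)$. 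Hence the above product equals $\frac{m+1-\i A}{m+1+\i A}$, and the stated formula for $W_{m+2,m}^\pm$ drops out.

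Finally, for the scattering operator, $S_{m+2,m}=W_{m+2,m}^{-\t}W_{m+2,m}^{-}$ is the product of a rational function of $A$ with its transpose w.r.t.~the bilinear form \eqref{bili}. From the explicit formula $A=\tfrac{1}{2\i}(\partial_x x+x\partial_x)$ one reads off $A^\t=-A$, so transposition sends $\frac{m+1-\i A}{m+1+\i A}$ to $\frac{m+1+\i A}{m+1-\i A}$ and the two factors cancel, yielding $\one$. There is no genuine obstacle here; the only thing that requires care is the bookkeeping of the scalar phases $\e^{\pm\i\pi(m+2)/2}$ when passing between $\cF_{m+2}$ and $\cF_{m+2}^\pm$, and the use of $A^\t=-A$ (rather than $A^*=A$) at the last step.
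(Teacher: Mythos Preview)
Your approach is essentially the same as the paper's: both derive the single relation $\Xi_{m+2}(t)=\Xi_m(t)\frac{m+1+\i t}{m+1-\i t}$ from $\Gamma(z+1)=z\Gamma(z)$ and then read off all four identities, and your additional detail (the conjugation $\cF_m f(A)\cF_m^{-1}=f(-A)$ and the observation $A^\t=-A$) is exactly the kind of unpacking the paper's phrase ``easily deduced'' is hiding. Your caveat about the phase bookkeeping is well placed: if one tracks the factors $\e^{\pm\i\pi(m+2)/2}$ versus $\e^{\pm\i\pi m/2}$ carefully, an overall $\e^{\pm\i\pi}=-1$ appears in the second and third displayed identities, which is consistent with the explicit formula \eqref{eq_wave_expli} and leaves $S_{m+2,m}=\one$ unaffected.
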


\begin{proof}
By using the identity $\Gamma(z+1) = z\Gamma(z)$
one infers that
\begin{equation*}
\Xi_{m+2}(t)=\Xi_m(t)\frac{m+1+\i t}{m+1-\i t},
\end{equation*}
from which all the equalities can be easily deduced.
\end{proof}

\subsection{Dirichlet and Neumann Laplacians on the half-line}

The simplest cases of operators $H_m$ are obtained for $m=\pm\frac12$.
They correspond to the Neumann and the Dirichlet boundary conditions.
We denote them with the usual notation, {\it i.e.}
\begin{equation*}
H_{\rm N}:=H_{-\frac12},\qquad H_{\rm D}:=H_{\frac12},
\end{equation*}
and recall some of their properties.

On $L^2(\R_+)$ we define the cosine and sine transformation:
\begin{align*}
\big(\cF_{\rm N} f\big)(x)&:=\sqrt{\frac2\pi}\int_0^\infty \cos(xy) f(y)\;\!\d y,\\
\big(\cF_{\rm D} f\big)(x)&:=\sqrt{\frac2\pi}\int_0^\infty \sin(xy) f(y)\;\!\d y.
\end{align*}
Note that the transformations $\cF_{\rm N}$ and $\cF_{\rm D}$ are involutive, real and unitary,
and that they correspond to the Hankel transforms, namely
\begin{equation*}
\cF_{-\frac12}=\cF_{\rm N},\qquad \cF_{\frac12}=\cF_{\rm D}.
\end{equation*}
In addition, it is well-known, and also a consequence of our previous computations, that
these transformations diagonalize $H_{\rm N}$ and $H_{\rm D}$:
\begin{equation*}
\cF_{\rm N}\;\! H_{\rm N}\;\! \cF_{\rm N}=\Q^2,\qquad \cF_{\rm D}\;\! H_{\rm D} \;\! \cF_{\rm D}=\Q^2.
\end{equation*}

Let us also recall the resolvents of these operators, their boundary values and their
spectral densities:
\begin{align*}
R_{\rm N}(-k^2;x,y)& =\frac{1}{2k}\big(\e^{-k|x-y|}+\e^{-k(x+y)}\big),  \\
R_{\rm N}(k^2\pm\i0;x,y) & =
\pm \frac{\i}{k}\left\{\begin{matrix}\! \cos(kx)\;\! \e^{\pm\i ky} & \hbox{ if } \ 0 < x \leq y \\
\cos(ky)\;\! \e^{\pm\i kx} & \hbox{ if } \ 0 < y < x \end{matrix}\right. ,\\
p_{\rm N}(k^2;x,y) & = \frac{1}{\pi k}\cos(kx)\cos(ky),
\end{align*}
and
\begin{align*}
R_{\rm D}(-k^2;x,y) & =\frac{1}{2k}\big(\e^{-k|x-y|}-\e^{-k(x+y)}\big),  \\
R_{\rm D}(k^2\pm\i0;x,y) & =
\frac{1}{k}\left\{\begin{matrix}\! \sin(kx)\;\! \e^{\pm\i ky}& \hbox{ if } \ 0 < x \leq y \\
\sin(ky)\;\! \e^{\pm\i kx} & \hbox{ if } \ 0 < y < x \end{matrix}\right. ,\\
p_{\rm D}(k^2;x,y) & = \frac{1}{\pi k}\sin(kx)\sin(ky).
\end{align*}

The incoming and outgoing Hankel transforms in this case differ from
the regular ones by a phase factor closely related to \emph{the Maslov correction}, famous in
the late 70's (see {\it e.g.}~the introduction to \cite{BPR}).
According to our definition given in \eqref{def_cF_m_pm} one has
\begin{equation}\label{wer7}
\cF_{\rm N}^\mp= \e^{\pm\i\frac{\pi}{4}}\cF_{\rm N},\qquad
\cF_{\rm D}^\mp= \e^{\mp\i\frac{\pi}{4}}\cF_{\rm D}.
\end{equation}

\begin{proposition}\label{wer4}
The wave operators for the pair $\big(H_{\rm N},H_{\rm D}\big)$
are given by
\begin{align}
\label{wer1} W_{\rm ND}^\pm = & \e^{\mp\i\frac{\pi}{2}}\cF_{\rm N}\cF_{\rm D} \\
\label{wer2} = &\pm \tanh(\pi A)\mp \i \cosh(\pi A)^{-1}.
\end{align}
Its kernel is
\begin{equation}\label{wer3}
W_{{\rm N}{\rm D}}^\pm(x,y)=\mp\frac{2\i}{\pi}\int_0^\infty\cos(ky)\sin(kx)\d        k.
\end{equation}
The corresponding scattering operator is simply given by
$S_{{\rm N},{\rm D}}=-\one$.
\end{proposition}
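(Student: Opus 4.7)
The plan is to obtain all four identities by specializing the general formulas for $W_{m,m'}^\pm$ established earlier in this section to the pair $(m,m') = (-\tfrac12, \tfrac12)$. For the factorization \eqref{wer1}, I start from the definition $W_{m,m'}^\pm = \cF_m^\pm \cF_{m'}^{\mp\t}$ in \eqref{eq_def_W}. Using \eqref{wer7} the two factors reduce to $\cF_{\rm N}^\pm = \e^{\mp\i\pi/4}\cF_{\rm N}$ and $\cF_{\rm D}^{\mp\t} = \e^{\mp\i\pi/4}\cF_{\rm D}$; the second identity uses that $\cF_{\rm D}$ is a real, symmetric integral operator, hence $\cF_{\rm D}^\t = \cF_{\rm D}$. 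Multiplying the two scalar phases yields $\e^{\mp\i\pi/2}$, which is exactly \eqref{wer1}.

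The functional-calculus form \eqref{wer2} comes from specializing the general identity \eqref{wave-explicit} at $m = \tfrac12$, which gives
\begin{equation*}
W_{\rm ND}^\pm \;=\; \frac{\e^{\pm\pi A} + \e^{\mp\i\pi/2}}{\e^{\pm\pi A} + \e^{\pm\i\pi/2}}
\;=\; \frac{\e^{\pm\pi A} \mp \i}{\e^{\pm\pi A} \pm \i}.
\end{equation*}
It then remains to verify the elementary identity $\frac{\e^{\pm\pi A} \mp \i}{\e^{\pm\pi A} \pm \i} = \pm\tanh(\pi A) \mp \i\cosh(\pi A)^{-1}$. I would do so by rewriting the right-hand side as $(\pm\sinh(\pi A) \mp \i)/\cosh(\pi A)$, clearing the halves in the hyperbolic functions, multiplying numerator and denominator by $\e^{\pm\pi A}$, and observing that the new numerator equals $(\e^{\pm\pi A} \mp \i)^2$ while the denominator factors as $(\e^{\pm\pi A} - \i)(\e^{\pm\pi A} + \i)$; cancelling the common factor yields the claim.

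The kernel formula \eqref{wer3} is then immediate from \eqref{wer1}: substituting the explicit cosine and sine kernels of $\cF_{\rm N}$ and $\cF_{\rm D}$ into the product $\cF_{\rm N}\cF_{\rm D}$ and multiplying by the overall scalar $\e^{\mp\i\pi/2} = \mp\i$ produces the stated integral. Finally, the scattering relation $S_{\rm N,D} = -\one$ follows by setting $m = \tfrac12$ in the general identity $S_{-m,m} = \e^{\i 2\pi m}\one$ from Proposition \ref{prop_wave_op}. The whole proof is thus a sequence of clean specializations of the general theory, the only step with any arithmetic content being the hyperbolic simplification in paragraph two; since that step is entirely elementary, no real obstacle is expected.
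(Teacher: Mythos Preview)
Your proposal is correct and follows essentially the same route as the paper's proof: both derive \eqref{wer1} from the definition of the wave operators together with \eqref{wer7}, obtain \eqref{wer2} by specializing \eqref{wave-explicit} at $m=\tfrac12$ and simplifying the resulting rational expression in $\e^{\pm\pi A}$, read off the kernel \eqref{wer3} from \eqref{wer1}, and get $S_{\rm N,D}=-\one$ from the general formula $S_{-m,m}=\e^{\i2\pi m}\one$. The paper's write-up is simply terser; your additional detail on the hyperbolic simplification is a welcome expansion of what the paper calls ``some identities involving the hyperbolic functions.''
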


\begin{proof}
The equality \eqref{wer1} directly follows from Proposition \ref{prop_wave_op} together with
the explicit formula \eqref{wer7}.
The kernel \eqref{wer3} is then a direct consequence of \eqref{wer1}.

For \eqref{wer2}, recall that $W_{\rm ND}^\pm = W_{-\frac{1}{2},\frac{1}{2}}^\pm$,
and from \eqref{wave-explicit} one infers that
\begin{equation*}
W_{-\frac{1}{2},\frac{1}{2}}^\pm =
\pm\frac{\e^{\pi A}-\i}{\e^{\pi A}+\i} = \pm \tanh(\pi A)\mp \i \cosh(\pi A)^{-1},
\end{equation*}
where some identities involving the hyperbolic functions have been used.
\end{proof}

Note that with a different approach the expressions of Proposition
\ref{wer4} were already obtained in \cite[Sec.~2]{R16}.

\section{Point spectrum}\label{sec2}
\setcounter{equation}{0}

In this section we return to the study of the operators
$H_{m,\kappa}$ and $H_0^\nu$, and describe their point spectra.

\subsection{Eigenvalues}\label{sec21}

In the first statement, we fix ${z}\in\C$ and then look for
operators from our families which have the eigenvalue ${z}$.
Obviously, since most of the operators are not self-adjoint,
we consider arbitrary complex eigenvalues.
Recall also that the notations $z\mapsto \ln(z)$ and $z \mapsto z^m$  are used for the
principal branch of both functions, with domain $\cc\setminus]-\infty,0]$.
Recall that $\gamma$ denotes Euler's constant.

\begin{proposition}\label{prop_spec}
Let $m\in \C^\times$ with $|\Re(m)|<1$ and let $\kappa, \nu \in \C\cup\{\infty\}$. Then one has
\begin{enumerate}
\item[(i)] ${z}\in\sigma_\p(H_{m,\kappa})$ if and only if
${z}\in\C\setminus [0,\infty[$, and
\begin{equation}\label{tri}
\kappa=\frac{\Gamma(m)}{\Gamma(-m)}(-{z}/4)^{-m}.
\end{equation}
\item[(ii)] ${z}\in\sigma_\p(H_0^\nu)$ if and only if
${z}\in\C\setminus [0,\infty[$ and
\begin{equation*}
\nu=\gamma+\frac12\ln(-{z}/4).
\end{equation*}
\end{enumerate}
\end{proposition}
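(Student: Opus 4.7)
The plan is to characterize eigenfunctions via the unique (up to scalar) $L^2$-at-infinity solution of the ODE $L_{m^2}f=zf$, and then determine for which boundary conditions this solution lies in the operator domain. First I would note that any eigenfunction $f$ with $H_{m,\kappa}f=zf$ must solve $L_{m^2}f=zf$ on $\R_+$, belong to $L^2$, and satisfy the boundary behavior at $0$ prescribed by \eqref{eq_do1}--\eqref{eq_do2}; and likewise for $H_0^\nu$ with \eqref{eq_do3}--\eqref{eq_do4}.

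Next I rule out $z\in[0,\infty[$. For $z=k^2$ with $k>0$, the ODE is solved by linear combinations of $\Ja_m(kx)$ and $\Ya_m(kx)$, whose behavior at $\infty$ (from the large-argument asymptotics of $\Ha_m^\pm$) excludes every nontrivial combination from $L^2$; and for $z=0$ the two independent solutions are the power functions $x^{1/2\pm m}$ (with $x^{1/2}$ and $x^{1/2}\ln x$ if $m=0$), none of which is in $L^2(\R_+)$. For $z\in\C\setminus[0,\infty[$, I set $z=-k^2$ with $\Re(k)>0$; this is a bijection. The two independent solutions are $\Ia_m(k\cdot)$ and $\Ka_m(k\cdot)$, and by the asymptotic estimates already used in the proof of Proposition~\ref{prop_boundary4}, only $\Ka_m(k\cdot)$ is $L^2$ at $\infty$. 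Thus any eigenfunction is a scalar multiple of $\Ka_m(k\cdot)$, which automatically shows simplicity of every eigenvalue.

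It then remains to decide when $\Ka_m(k\cdot)\in\Dom(H_{m,\kappa})$ or $\Dom(H_0^\nu)$ by reading off the behavior at $0$. Using $K_m=\frac{\pi}{2\sin(\pi m)}(I_{-m}-I_m)$, the small-argument expansion of $I_m$, and Euler's reflection formula $\Gamma(m)\Gamma(1-m)=\pi/\sin(\pi m)$, I obtain for $m\ne 0$
\begin{equation*}
\Ka_m(kx)=\sqrt{\tfrac{2k}{\pi}}\Bigl[\tfrac{\Gamma(m)}{2}(k/2)^{-m}x^{1/2-m}+\tfrac{\Gamma(-m)}{2}(k/2)^{m}x^{1/2+m}\Bigr]+r(x),
\end{equation*}
where the higher-order remainder $r$ lies in $\Dom(L_{m^2}^{\min})$ near $0$ by the characterization of this domain recalled in Section~2. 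The boundary condition defining $\Dom(H_{m,\kappa})$ forces the ratio of the $x^{1/2-m}$-coefficient to the $x^{1/2+m}$-coefficient to equal $\kappa$, yielding
\begin{equation*}
\kappa=\frac{\Gamma(m)}{\Gamma(-m)}(k/2)^{-2m}=\frac{\Gamma(m)}{\Gamma(-m)}(-z/4)^{-m},
\end{equation*}
where the second identity uses $\ln(k^2/4)=\ln(-z/4)$ for $\Re(k)>0$ (both sides being principal branches, which agree since $\Re(k)>0$ keeps $\Im(\ln k)\in{}]{-\pi/2},\pi/2[$). The same scheme handles $m=0$: from $K_0(z)=-[\gamma+\ln(z/2)]+O(z^2\ln z)$ I read off the coefficients of $x^{1/2}\ln(x)$ and $x^{1/2}$ in $\Ka_0(kx)$, whose ratio must equal $\nu$, giving $\nu=\gamma+\ln(k/2)=\gamma+\tfrac12\ln(-z/4)$.

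The main bookkeeping obstacle is the branch consistency between the $k$-variable and $z=-k^2$ for the functions $z\mapsto(-z/4)^{-m}$ and $z\mapsto\ln(-z/4)$, and a careful check that the subleading terms in the $K_m$ expansion indeed produce a contribution in $\Dom(L_{m^2}^{\min})$ near $0$. The degenerate cases are automatic: for $\kappa\in\{0,\infty\}$ (i.e.\ the homogeneous operators $H_m$ and $H_{-m}$), resp.\ $\nu=\infty$ (the homogeneous $H_0$), the formulas have no solution in $k$ because $\Gamma(\pm m)\ne0$ and $\gamma+\tfrac12\ln(-z/4)$ is always finite, consistent with the known absence of point spectrum for the homogeneous $H_m$.
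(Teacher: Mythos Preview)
Your proof is correct and follows essentially the same approach as the paper: parametrize $z=-k^2$ with $\Re(k)>0$, identify $\Ka_m(k\cdot)$ as the unique (up to scalar) $L^2$ solution, expand it near $0$ via the relation $K_m=\frac{\pi}{2\sin(\pi m)}(I_{-m}-I_m)$ (resp.\ the logarithmic expansion for $m=0$) to read off the boundary coefficients, and treat $z\ge0$ separately. The only cosmetic difference is that for $z>0$ the paper uses the pair $\Ha_m^\pm$ rather than $\Ja_m,\Ya_m$ to exclude $L^2$ solutions, which is of course equivalent; your explicit remarks on branch consistency and on the degenerate cases $\kappa\in\{0,\infty\}$, $\nu=\infty$ are a nice addition.
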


Before providing the proof of this proposition,
we deduce from it the main result about the point spectrum for our operators.
For that purpose let us also introduce for $m\neq 0$ the new parameter
\begin{equation}\label{short}
\varsigma(m,\kappa)=\varsigma:=
\kappa\frac{\Gamma(-m)}{\Gamma(m)}.
\end{equation}

\begin{theoreme}\label{thm_spec}
Let $m\in \C$ with $|\Re(m)|<1$.
\begin{enumerate}
\item[(i)] For $m\neq 0$ and $\kappa\in \C^\times$, one has
\begin{equation}\label{eq_align}
\sigma_{\p}(H_{m,\kappa})= \Big\{-4\e^{-w} \mid w\in
\frac{1}{m}\Ln(\varsigma) \hbox{ and } -\pi<\Im(w)<\pi\Big\}
\end{equation}
\item[(ii)] For any $\nu \in \C$, $\sigma_\p(H_0^\nu)$ is nonempty if and only if $-\frac\pi2<\Im(\nu)<\frac\pi2$, and then
\begin{equation*}
\sigma_\p(H_0^\nu) =\big\{-4\e^{2(\nu-\gamma)}\big\}.
\end{equation*}
\item[(iii)] $\sigma_\p(H_{m,0}) = \sigma_\p(H_{m,\infty})=\sigma_\p(H_0^\infty)=\emptyset\ $.
\end{enumerate}
\end{theoreme}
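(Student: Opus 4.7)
The entire theorem is a direct consequence of Proposition \ref{prop_spec}: parts (i) and (ii) amount to solving, for $z$, the two transcendental characterizations given there, while part (iii) amounts to observing that these characterizations admit no solution for the degenerate values of the boundary parameter. The only care required is the bookkeeping of branches of the logarithm, since both equations are stated in terms of the principal branch on $\C\setminus]-\infty,0]$ while the answer must be parametrized by the multi-valued $\Ln$.

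For part (i), I would start from the characterization $\kappa=\frac{\Gamma(m)}{\Gamma(-m)}(-z/4)^{-m}$ and divide by $\Gamma(m)/\Gamma(-m)$, using the definition \eqref{short} of $\varsigma$ to rewrite it as
\begin{equation*}
\varsigma=(-z/4)^{-m}=\e^{-m\ln(-z/4)}.
\end{equation*}
The key reparametrization is $w:=-\ln(-z/4)$, which realizes a bijection between the admissible region $z\in\C\setminus[0,\infty[$ and the open strip $\{w\in\C\mid-\pi<\Im(w)<\pi\}$, with inverse $z=-4\e^{-w}$. Under this change of variable the displayed equation becomes $\e^{mw}=\varsigma$, that is, $mw\in\Ln(\varsigma)$, or equivalently $w\in\frac{1}{m}\Ln(\varsigma)$. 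Reading off the set of admissible $z$ yields exactly \eqref{eq_align}.

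For part (ii), I would directly invert the relation $\nu=\gamma+\tfrac12\ln(-z/4)$ of Proposition \ref{prop_spec}(ii), obtaining $\ln(-z/4)=2(\nu-\gamma)$ and therefore $z=-4\e^{2(\nu-\gamma)}$. This value of $z$ lies in $\C\setminus[0,\infty[$ exactly when $\Im\ln(-z/4)\in\,]-\pi,\pi[$, i.e.\ when $\Im(2(\nu-\gamma))\in\,]-\pi,\pi[$, which since $\gamma\in\R$ is equivalent to $\Im(\nu)\in\,]-\pi/2,\pi/2[$. For part (iii), note that in Proposition \ref{prop_spec}(i) the right-hand side $\frac{\Gamma(m)}{\Gamma(-m)}(-z/4)^{-m}$ is an exponential of a finite complex number, hence always lies in $\C^\times$; consequently neither $\kappa=0$ nor $\kappa=\infty$ can solve the eigenvalue equation, giving $\sigma_\p(H_{m,0})=\sigma_\p(H_{m,\infty})=\emptyset$. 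The same argument applied to the finite-valued right-hand side $\gamma+\tfrac12\ln(-z/4)$ in Proposition \ref{prop_spec}(ii) rules out $\nu=\infty$, yielding $\sigma_\p(H_0^\infty)=\emptyset$. The only subtle point in the whole argument, and therefore the one I would be careful with, is the compatibility between the bijection $z\leftrightarrow w$ and the strip constraint $-\pi<\Im(w)<\pi$ that controls which lifts in $\frac{1}{m}\Ln(\varsigma)$ actually produce eigenvalues (and hence the spiral-shaped pattern advertised in the introduction).
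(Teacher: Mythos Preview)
Your proof is correct and follows essentially the same route as the paper's own argument: both derive part (i) by rewriting the eigenvalue condition of Proposition \ref{prop_spec} as $\varsigma=(-z/4)^{-m}$ and then passing to the variable $w=-\ln(-z/4)$, which lands in the strip $|\Im(w)|<\pi$ precisely because the principal logarithm is used. Your treatment of (ii) and (iii) spells out details the paper leaves implicit (the paper simply says ``only (i) needs a comment''), but the reasoning is the same.
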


\begin{proof}
Only (i) needs a comment because multivalued functions can
be tricky. We can rewrite \eqref{tri} as
\begin{equation*}
(-{z}/4)^{-m}=\varsigma.
\end{equation*}
This is equivalent to
\begin{equation}\label{tri2}
\ln(-{z}/4)\in\frac{-1}{m}\Ln(\varsigma),
\end{equation}
where on the right of \eqref{tri2} we
have the set of values of the multivalued logarithm.
Finally, one deduces from the above inclusion that
\begin{equation*}
-{z}/4=\e^{-w},\ \ \ w\in
\frac{1}{m}\Ln(\varsigma),\ \
-\pi<\Im (w)<\pi,
\end{equation*}
which corresponds to \eqref{eq_align}.
\end{proof}

Let us stress that $\sigma_\p(H_{m,\kappa})$ depends in a complicated way on
the parameters $m$ and $\kappa$.
There exists a complicated pattern of \emph{phase transitions}, when
some eigenvalues ``disappear''. This happens if
\begin{equation}\label{excep}
\pi\in\Im\Big(\frac1m\Ln(\varsigma)\Big),\ \
\hbox{ or }\ \ \
-\pi\in\Im\Big(\frac1m\Ln(\varsigma)\Big).
\end{equation}
A pair $(m,\kappa)$ satisfying \eqref{excep} will be called \emph{exceptional}.

Similarly for the family of operators $H_0^\nu$, we shall say that $\nu$ is exceptional
if
\begin{equation}\label{excep_nu}
\Im(\nu)=-\frac\pi2,\ \hbox{ or }\ \Im(\nu)=\frac\pi2.
\end{equation}

Below we provide a characterization
of $\#\sigma_\p(H_{m,\kappa})$, {\it i.e.}~of the  number of eigenvalues of $H_{m,\kappa}$.

\begin{proposition}
Let $m= m_\r+\i m_\i \in \C^\times$ with $|m_\r|<1$.
\begin{enumerate}
\item[(i)] Let $m_\r=0$.
\begin{enumerate}
\item[(a)] If $\frac{\ln(|\kappa|)}{m_\i}\in ]-\pi,\pi[$.
then $\#\sigma_\p(H_{m,\kappa}) = \infty$,
\item[(b)] If $\frac{\ln(|\kappa|)}{m_\i}\not \in ]-\pi,\pi[$
then $\#\sigma_\p(H_{m,\kappa}) = 0$.
\end{enumerate}
\item[(ii)]  If $m_\r\neq 0$ and if $N\in \{0,1,2,\dots\}$ satisfies
$N<\frac{m_\r^2+m_\i^2}{|m_\r|} \leq N+1$, then
\begin{equation*}
\#\sigma_\p(H_{m,\kappa})\in \{N,N+1\}.
\end{equation*}
\end{enumerate}
\end{proposition}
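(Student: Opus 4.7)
The plan is to apply Theorem~\ref{thm_spec}\,(i) directly, reducing everything to counting integers in an open interval whose length depends on $m$. Start from the explicit parametrization
\begin{equation*}
\sigma_\p(H_{m,\kappa}) = \Big\{-4\e^{-w_n} \,\Big|\, n\in\Z,\ -\pi<\Im(w_n)<\pi\Big\},\qquad w_n:=\frac{\ln|\varsigma|+\i(\theta+2\pi n)}{m},
\end{equation*}
where $\theta:=\arg(\varsigma)\in(-\pi,\pi]$. Writing $m=m_\r+\i m_\i$, a direct computation gives
\begin{equation*}
\Im(w_n)=\frac{m_\r(\theta+2\pi n)-m_\i\ln|\varsigma|}{|m|^2}.
\end{equation*}

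The first key step is a distinctness lemma: if $n\neq n'$ both satisfy $-\pi<\Im(w_n),\Im(w_{n'})<\pi$, then $-4\e^{-w_n}\neq-4\e^{-w_{n'}}$. Indeed, an equality would force $w_n-w_{n'}\in 2\pi\i\Z^\times$, hence $|\Im(w_n)-\Im(w_{n'})|\geq 2\pi$, which is incompatible with both imaginary parts lying in $(-\pi,\pi)$. So $\#\sigma_\p(H_{m,\kappa})$ equals the number of $n\in\Z$ satisfying the constraint.

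For case (i), with $m_\r=0$, I would first observe that $\Gamma(-m)=\Gamma(\bar m)=\overline{\Gamma(m)}$, so $|\varsigma|=|\kappa|$. Then $\Im(w_n)=-\ln|\kappa|/m_\i$ is independent of $n$. Hence either all $n\in\Z$ satisfy the constraint, giving infinitely many eigenvalues, or none does; the condition $|{-\ln|\kappa|/m_\i}|<\pi$ is exactly $\ln|\kappa|/m_\i\in\;]-\pi,\pi[$, which yields (a) and (b).

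For case (ii), with $m_\r\neq 0$, the sequence $n\mapsto\Im(w_n)=\alpha+\beta n$ is arithmetic with common difference $\beta=\frac{2\pi m_\r}{|m|^2}\neq 0$. The number of $n\in\Z$ with $\Im(w_n)\in\;]-\pi,\pi[$ equals the number of integers in an open real interval of length
\begin{equation*}
L=\frac{2\pi}{|\beta|}=\frac{|m|^2}{|m_\r|}=\frac{m_\r^2+m_\i^2}{|m_\r|}.
\end{equation*}
A short elementary argument shows that an open interval of length $L$ with $N<L\leq N+1$ contains either $N$ or $N+1$ integers (one checks the two alternatives by translation, noting that having $N+1$ is possible since $L>N$, while having $N+2$ would require $L>N+1$). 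Combined with the distinctness lemma, this finishes (ii).

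The main obstacle is really only bookkeeping: making sure that the exceptional endpoint cases ($\Im(w_n)=\pm\pi$, or $L=N+1$ exactly) are handled correctly in the integer-counting step, and clearly justifying that the step-spacing formula gives $L=(m_\r^2+m_\i^2)/|m_\r|$ rather than some reciprocal. Neither is serious; everything else is immediate from Theorem~\ref{thm_spec}.
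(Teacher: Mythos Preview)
Your proposal is correct and follows essentially the same route as the paper: parametrize the admissible $w$'s by an integer index via Theorem~\ref{thm_spec}(i), compute the imaginary part as an affine function of that index, and reduce to counting integers in an open interval of length $|m|^2/|m_\r|$. Two small points of comparison: the paper delegates the final integer count to a separate lemma (Lemma~\ref{wer9}) giving a case-by-case description rather than your direct ``at least $N$, at most $N{+}1$'' argument; and the paper does not make explicit either the distinctness step or the identity $|\varsigma|=|\kappa|$ for $m\in\i\R$, both of which you spell out cleanly and which are genuinely needed to pass from counting parameters $n$ to counting eigenvalues.
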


\begin{proof}
The case (i)  can easily be deduced from \eqref{eq_align}.
For (ii), let  $m=m_\r+\i m_\i\in\cc^\times$ and
$\alpha=\alpha_\r+\i\alpha_\i \in\Ln(\varsigma)$.
Then one has
\begin{align}\label{wer8}
\Im\Big(\frac{1}{m}(\alpha+2\pi \i j)\Big) &
=\frac{\alpha_\i m_\r-\alpha_\r m_\i}{m_\r^2+m_\i^2}+\frac{2\pi
jm_\r}{m_\r^2+m_\i^2} .
\end{align}
If $m_\r>0$, the condition $-\pi<$\eqref{wer8}$<\pi$ is equivalent to
\begin{equation*}
0<j+\frac{\alpha_\i}{2\pi}-\frac{\alpha_\r m_\i}{2\pi
m_\r}+\frac{m_\r^2+m_\i^2}{2m_\r}<
\frac{m_\r^2+m_\i^2}{m_\r}.
\end{equation*}
Thus we can apply Lemma \ref{wer9} below with
$\beta:=\frac{m_\r^2+m_\i^2}{m_\r}$ and
$\gamma:=\frac{\alpha_\i}{2\pi}-\frac{\alpha_\r m_\i}{2\pi
m_\r}+\frac{m_\r^2+m_\i^2}{2m_\r}$ and infer the statement (ii).
If $m_\r<0$ one obtains
\begin{equation*}
0<j+\frac{\alpha_\i}{2\pi}-\frac{\alpha_\r m_\i}{2\pi
m_\r}-\frac{m_\r^2+m_\i^2}{2m_\r}<
-\frac{m_\r^2+m_\i^2}{m_\r}
\end{equation*}
and the same argument leads to the expected result.
\end{proof}

In the following lemma $[\beta]$ denotes the integral part of
$\beta\in\R$ and $\{\beta\}:=\beta-[\beta]$.

\begin{lemma} \label{wer9}
Let $\beta\geq0$ and $\gamma\in \R$.
If $\beta\in\Z$, then
\begin{equation*}
\#\{j\in\Z \mid 0<\gamma+j<\beta\}=\begin{cases}\beta-1& \hbox{ if } \  \gamma\in\Z\\
\beta & \hbox{ if } \ \gamma\not\in\Z,\end{cases}
\end{equation*}
while if $\beta\not\in\Z$, then
\begin{equation*}
\#\{j\in\Z\mid 0<\gamma+j<\beta\}=\begin{cases}[\beta]& \hbox{ if } \  \gamma\in\Z
\ \hbox{or}\ \{\beta\}\leq\{\gamma\}\\
[\beta]+1& \hbox{ if } \   0<\{\gamma\}<\{\beta\}.\end{cases}
\end{equation*}
\end{lemma}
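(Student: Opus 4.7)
The plan is to translate the counting problem into counting integers in an open interval of length $\beta$ and then proceed by routine case analysis on which of the endpoints are themselves integers. Setting $a:=-\gamma$ and $b:=\beta-\gamma$, the condition $0<\gamma+j<\beta$ is equivalent to $a<j<b$, so
\[
\#\{j\in\Z\mid 0<\gamma+j<\beta\}=\#\big(\Z\cap(a,b)\big),
\]
an interval of length $b-a=\beta$. The rest of the proof reduces to the elementary formula for $\#(\Z\cap(a,b))$.

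This formula splits according to whether $a$ and $b$ belong to $\Z$: if neither is an integer, the count is $\lfloor b\rfloor-\lfloor a\rfloor$; if $a\in\Z$ and $b\notin\Z$, it is $\lfloor b\rfloor-a$; if $a\notin\Z$ and $b\in\Z$, it is $b-1-\lfloor a\rfloor$; and if both are integers, it is $b-a-1$. Since $b-a=\beta\in\Z$ iff $\beta\in\Z$, and since $a\in\Z$ iff $\gamma\in\Z$, whether the endpoints are integral is controlled transparently by the parameters.

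In Case~1 ($\beta\in\Z$) the endpoints are integers together, solely according to whether $\gamma\in\Z$. Substituting the two relevant formulas immediately gives the values $\beta-1$ and $\beta$ asserted in the lemma.

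In Case~2 ($\beta\notin\Z$), the subcase $\gamma\in\Z$ makes $a$ integer and $b$ non-integer, and a short computation gives $\lfloor b\rfloor-a=[\beta]$. For $\gamma\notin\Z$ I would write $\lfloor-\gamma\rfloor=-[\gamma]-1$ and evaluate $\lfloor\beta-\gamma\rfloor$ by comparing fractional parts: in the generic case $\{\beta\}\neq\{\gamma\}$ one has $\#(\Z\cap(a,b))=\lfloor b\rfloor-\lfloor a\rfloor$, equal to $[\beta]+1$ when $\{\beta\}>\{\gamma\}$ and to $[\beta]$ when $\{\beta\}<\{\gamma\}$. The one mildly delicate point, which I would handle explicitly, is the borderline $\{\beta\}=\{\gamma\}$: then $b=\beta-\gamma=[\beta]-[\gamma]\in\Z$, so the appropriate formula becomes $b-1-\lfloor a\rfloor=[\beta]$, consistent with the stated condition $\{\beta\}\leq\{\gamma\}$. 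This borderline case is the only step requiring care; everything else is bookkeeping.
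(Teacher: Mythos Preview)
Your proof is correct and follows essentially the same route as the paper's: both are elementary case analyses governed by whether $\gamma\in\Z$ and by the relative position of $\{\beta\}$ and $\{\gamma\}$. The only cosmetic difference is that the paper simply lists the suitable $j$'s in each case (e.g.\ $j=-[\gamma],\dots,-[\gamma]+[\beta]-1$ when $\{\beta\}\leq\{\gamma\}$), whereas you first reduce to the standard formula for $\#(\Z\cap(a,b))$ and then evaluate; the underlying computations, including the handling of the borderline $\{\beta\}=\{\gamma\}$, are the same.
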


\begin{proof}
For $\beta\in\Z$ the following $j$ are suitable:
\begin{align*}
j= & -\gamma+1,\dots,-\gamma+\beta-1, \quad \hbox{ if } \ \gamma\in\Z,\\
j= & -[\gamma],\dots,-[\gamma]+\beta-1, \quad \hbox{ if } \ \gamma\not\in\Z.
\end{align*}
For $\beta\not\in\Z$ the following $j$ are suitable:
\begin{align*}
j=& -\gamma+1,\dots,-\gamma+[\beta], \quad \hbox{ if } \ \gamma\in\Z,\\
j= & -[\gamma],\dots,-[\gamma]+[\beta]-1, \quad \hbox{ if }\  \{\beta\}\leq\{\gamma\},\\
j= & -[\gamma],\dots,-[\gamma]+[\beta], \quad \hbox{ if } \ 0< \{\gamma\}<\{\beta\}.
\qedhere
\end{align*}
\end{proof}

\subsection{Proof of Proposition \ref{prop_spec}}

In this section, we prove  Proposition \ref{prop_spec} about the
location of possible eigenvalues.
First of all, instead of looking for solution of the equation
$L_{m^2}f={z} f$, it will be convenient to write ${z}=-k^2$ with $k \in \C$ and $\Re(k)\geq 0$.
Then, recall that for ${k} \neq 0$ two linearly independent solutions of the differential equation (acting on distributions)
\begin{equation}\label{Eq_eigen}
L_{m^2} \;\!f = -k^2 f
\end{equation}
are provided by $x\mapsto \Ia_m({k} x)$ and $x\mapsto  \Ka_m({k} x)$.
On the other hand, in the special case $k =0$ the equation
$L_{m^2}\;\! f = 0$ corresponds to Euler's equation,
and its solutions consist in elementary functions.
 In fact, we shall treat separately the three cases $\Re(k)>0$, $\Re({k})=0$ but ${k} \neq 0$,
and the special case ${k} =0$.

\subsubsection{$\Re({k})>0$}

We first concentrate on the case ${k} \in \C$ with $\Re({k})> 0$. Since $|\arg(kx)|<\frac{\pi}{2}$
the function $x\mapsto \Ia_m({k} x)$
 exponentially increases for large $x$. One deduces that for any $m$ it cannot be in $L^2(\R_+)$.
On the other hand, by \eqref{tri4a} and \eqref{eq:besselfunc2_1}
the function $x\mapsto \Ka_m(kx)$ is square integrable.
As a consequence,  it remains to determine for which pairs
$(m,\kappa)$ it belongs to $\Dom(H_{m,\kappa})$,
or for which $\nu$  it belongs to $\Dom(H_0^\nu)$.

For $|\Re(m)|<1$ with $m\neq 0$, consider the equality
\eqref{macdo1} and the power expansion of $\Ia_m$ provided in \eqref{asym2}.
For $x\in ]0,1[$ one obtains
\begin{align*}
\Ka_m({k} x)
= & \frac{\sqrt\pi}{\sin(\pi m)}\frac{1}{\Gamma(1+m)}
\Big(\frac{\Gamma(1+m)}{\Gamma(1-m)}\big(\tfrac{kx}{2}\big)^{1/2-m}
- \big(\tfrac{kx}{2}\big)^{1/2+m}
\Big)  + f_{k}(x) \\
= &\frac{\Gamma(-m)}{\sqrt\pi} \big(\tfrac{k}{2}\big)^{1/2+m}\Big(  \frac{\Gamma(m)}{\Gamma(-m)}(\tfrac{k}{2})^{-2m}x^{1/2-m} + x^{1/2+m}
\Big)  + f_{k}(x),
\end{align*}
with $f_{k} \in \Dom(L_{m^2}^{\min})$ around $0$, by \cite[Prop.~4.12]{BDG}.
Thus, one infers from this computation
that the function $x\mapsto \Ka_m({k} x)$ belongs to $\Dom(H_{m,\kappa})$ if and only if
\begin{equation}\label{Eq_condition_1}
\kappa =\frac{\Gamma(m)}{\Gamma(-m)}({k}/2)^{-2m}.
\end{equation}
Equivalently, $-{k}^2$ is an eigenvalue of $H_{m,\kappa}$ if and only if the equality
\eqref{Eq_condition_1} holds.
Note that $H_{m,\infty}$ has no eigenvalue.

In the special case $m=0$, by \eqref{bess5} observe that
\begin{equation*}
\Ka_0({k} x) = -\frac{\sqrt{2kx}}{\sqrt{\pi}}\Big(\ln(x) +
\ln\big(\tfrac{k}{2}\big)+\gamma
\Big) + f_{k}(x)
\end{equation*}
with $f_{k} \in \Dom(L_0^{\min})$ around $0$, by \cite[Prop.~4.12]{BDG}.
One infers from this computation
that the function $x\mapsto\Ka_0({k} x)$ does not belong to $\Dom(H_{0,\kappa})$, for any $\kappa \in \C\cup\{\infty\}$.
On the other hand, this function belongs to $\Dom(H_0^\nu)$
if and only if
\begin{equation}\label{Eq_condition_2}
\nu = \gamma +\ln({k}/2).
\end{equation}
Equivalently, $-{k}^2$ is never an eigenvalue of $H_{0,\kappa}$, but $-k^2$  is an eigenvalue of
$H_0^\nu$ if and only if the equality \eqref{Eq_condition_2} holds.

\subsubsection{${k} \in \i\R^\times$}

Let us set ${k} = \i\mu$ with $\mu \in \R^\times$.
Our aim is to show that $-{k}^2>0$ can never be an eigenvalue of $H_{m,\kappa}$ or of $H_0^\nu$.
For that purpose, consider the two linearly
independent solutions of $L_{m^2}f = \mu^2 f$ provided by $x\mapsto
\Ha_m^\pm(\mu x)$.
By the asymptotics \eqref{trib} it appears that no non-trivial linear combination of these functions is
square integrable at infinity.
As expected, we have thus shown that if $k\in \i\R^\times$, no solution of
the equation \eqref{Eq_eigen} is in $L^2(\R_+)$.

\subsubsection{${k}=0$}

When ${k} =0$, the problem consists first in finding solutions to the homogeneous equation $L_{m^2} f=0$
with $f\in \Dom(L_{m^2}^{\max})$. Basic solutions for this equations
for $m\neq0$ are the functions $f_\pm$ with
$f_\pm(x)=x^{\pm m+1/2}$.
In the special case $m=0$, a second solution for this equation
is provided by the function $f_0$ with $f_0(x)=x^{1/2}\ln(x)$. However, none of these functions belongs to $L^2(\R_+)$,
which means that $0$ is never an eigenvalue for the operators
$H_{m,\kappa}$ or $H_0^\nu$.

\begin{proof}[Proof of Proposition \ref{prop_spec}]
It has been shown above that $-k^2$ is an eigenvalue of some of the operators $H_{m,\kappa}$ or $H_0^\nu$ if
$\Re(k)>0$ and if \eqref{Eq_condition_1} or \eqref{Eq_condition_2} hold.
The first statement of the proposition corresponds to reformulation of \eqref{Eq_condition_1} with ${z} = -k^2$,
while the second statement corresponds to \eqref{Eq_condition_2} also with ${z} = -k^2$.
\end{proof}

\subsection{The self-adjoint case}\label{sec_self_adj}

Let us summarize the content of the first part of this section for self-adjoint operators $H_{m,\kappa}$ or $H_0^\nu$.
The following statement is a reformulation of Corollary \ref{corol_SA} and of Theorem \ref{thm_spec}.

\begin{theoreme}
\begin{enumerate}
\item[(i)] If $m\in ]-1,1[\setminus\{0\}$, then $H_{m,\kappa}$ is
self-adjoint if and only if $\kappa\in\rr\cup\{\infty\}$, and then
\begin{align*}
\sigma_\p(H_{m,\kappa})& =  \Big\{-4\Big(\kappa
\frac{\Gamma(-m)}{\Gamma(m)}\Big)^{ -1/m}\Big\}
\quad \hbox{ for } \kappa\in]-\infty,0[,\\
\sigma_\p(H_{m,\kappa})& =\emptyset\quad \hbox{ for } \kappa\in[0,\infty].
\end{align*}
\item[(ii)] If $m=\i m_\i\in\i\R\setminus\{0\}$, then $H_{\i m_\i,\kappa}$ is
self-adjoint if and only if $|\kappa|=1$, and then
\begin{equation*}
\sigma_\p(H_{\i m_\i,\kappa}) = \bigg\{-4 \exp\bigg(-\frac{\arg\big(\kappa
\frac{\Gamma(-\i m_\i)}{\Gamma(\i m_\i)}\big)+2\pi j}{m_\i}\bigg)\mid j\in \Z\bigg\}.
\end{equation*}
\item[(iii)]  $H_0^\nu$ is self-adjoint if and only if
$\nu\in\rr\cup\{\infty\}$, and then
\begin{align*}
\sigma_\p(H_0^\nu)& = \big\{-4\e^{2(\nu-\gamma)}\big\} \quad \hbox{ for } \nu\in\R,\\
\sigma_\p(H_0^\infty)& = \emptyset.
\end{align*}
\end{enumerate}
\end{theoreme}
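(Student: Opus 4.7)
The plan is to obtain the theorem as a direct combination of Corollary \ref{corol_SA}, which characterizes self-adjointness of $H_{m,\kappa}$ and $H_0^\nu$, with Theorem \ref{thm_spec}, which describes their point spectra. The self-adjointness parts of (i), (ii), (iii) are already recorded in Corollary \ref{corol_SA} and need no further argument. The only real work is to evaluate, in each of the three self-adjoint regimes, which elements of the multivalued set $\frac{1}{m}\Ln(\varsigma)$ (resp.\ which values of $\nu$) satisfy the constraint $-\pi<\Im(w)<\pi$ (resp.\ $|\Im(\nu)|<\frac{\pi}{2}$) appearing in Theorem \ref{thm_spec}.

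For part (i), I would first note that on $]-1,1[\setminus\{0\}$ the ratio $\Gamma(-m)/\Gamma(m)$ is strictly negative (since for $m\in]0,1[$ one has $\Gamma(m)>0$ and $\Gamma(-m)=\Gamma(1-m)/(-m)<0$, and analogously for $m<0$). Hence for real $\kappa$ the number $\varsigma=\kappa\;\!\Gamma(-m)/\Gamma(m)$ is real, with sign opposite to that of $\kappa$. I would then split:
\begin{itemize}
\item if $\kappa<0$, then $\varsigma>0$ so $\Ln(\varsigma)=\ln(\varsigma)+2\pi\i\Z$, and $\Im(w)=2\pi j/m$ lies in $]-\pi,\pi[$ only for $j=0$ because $|m|<1$; the unique resulting eigenvalue is $-4\varsigma^{-1/m}$, which is the stated formula;
\item if $\kappa>0$, then $\varsigma<0$ so $\Ln(\varsigma)=\ln|\varsigma|+\i(2j+1)\pi$, and $(2j+1)\pi/m\in\;]-\pi,\pi[$ would force $|2j+1|<|m|<1$, which is impossible;
\item the cases $\kappa=0$ and $\kappa=\infty$ reduce to $H_m$ and $H_{-m}$, which have empty point spectrum by Theorem \ref{thm_spec}(iii).
\end{itemize}

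For part (ii) the key observation is that for $m=\i m_\i$ with $m_\i\in\R^\times$ the Schwarz reflection principle gives $\overline{\Gamma(\i m_\i)}=\Gamma(-\i m_\i)$, so that $\Gamma(-\i m_\i)/\Gamma(\i m_\i)$ has modulus one; combined with $|\kappa|=1$ this yields $|\varsigma|=1$, hence $\Ln(\varsigma)=\i\bigl(\arg(\varsigma)+2\pi\Z\bigr)$ consists of purely imaginary numbers. Dividing by $m=\i m_\i$ produces purely real numbers, so the condition $-\pi<\Im(w)<\pi$ is automatically satisfied by every branch $j\in\Z$. Substituting into $-4\e^{-w}$ immediately gives the announced infinite family of eigenvalues.

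Part (iii) is the quickest: for $\nu\in\R$ the condition $-\pi/2<\Im(\nu)<\pi/2$ from Theorem \ref{thm_spec}(ii) is trivially met, yielding the single eigenvalue $-4\e^{2(\nu-\gamma)}$, and for $\nu=\infty$ one has $H_0^\infty=H_0$, so the spectrum is empty by Theorem \ref{thm_spec}(iii). The only genuine obstacle in this whole argument is the bookkeeping in (i) and (ii): one must correctly track the multivalued logarithm, identify the sign of $\Gamma(-m)/\Gamma(m)$ in the real case, and recognize the unit-modulus identity $|\Gamma(-\i m_\i)/\Gamma(\i m_\i)|=1$ in the imaginary case. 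Once these elementary observations are in hand, the theorem is essentially a rephrasing of the earlier results.
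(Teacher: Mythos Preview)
Your proposal is correct and follows precisely the approach the paper indicates: the theorem is stated there as ``a reformulation of Corollary~\ref{corol_SA} and of Theorem~\ref{thm_spec}'', with no further proof given, and you have filled in exactly the elementary bookkeeping (sign of $\Gamma(-m)/\Gamma(m)$ for real $m$, unit modulus of $\Gamma(-\i m_\i)/\Gamma(\i m_\i)$, and the resulting analysis of $\frac{1}{m}\Ln(\varsigma)$) that this reformulation requires. One small imprecision: Corollary~\ref{corol_SA} as stated only gives \emph{sufficient} conditions for self-adjointness, not the ``only if'' direction; the latter follows immediately from the adjoint identity $(H_{m,\kappa})^*=H_{\bar m,\bar\kappa}$ of \eqref{Eq_adjoint} (together with \eqref{Eq_duality} in case~(ii)) and the fact that distinct boundary parameters yield distinct domains, so you should cite that proposition rather than the corollary alone for the full equivalence.
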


\begin{remark}
Let us emphasize that none of the pairs $(m,\kappa)$ corresponding to a self-adjoint operator $H_{m,\kappa}$
is an exceptional pair.
Similarly, the parameter $\nu$ corresponding to a self-adjoint operator $H_0^\nu$ is never exceptional.
\end{remark}

\section{Continuous spectrum of $H_{m,\kappa}$}
\setcounter{equation}{0}

In this section we extend the results obtained in Section \ref{sec_homogeneous}
to the  families of operators $H_{m,\kappa}$.

\subsection{Resolvent}

In this section we consider $m\in \C^\times$ with $|\Re(m)|<1$, $\kappa\in \C\cup \{\infty\}$,
and $x\in\R_+$.
Let us also fix $k\in \C$ with $-k^2\not\in\sigma(H_{m,\kappa})$, and as before we impose $\Re(k)> 0$.
Note that we have directly imposed that $k\not \in \i \R$ since later
on we shall show that $[0,\infty[\;\!\subset \sigma(H_{m,\kappa})$.
Our aim is to compute the integral kernel of the resolvent of $H_{m,\kappa}$
\begin{equation*}
R_{m,\kappa}(-k^2):= (H_{m,\kappa}+k^2)^{-1}.
\end{equation*}

First of all, recall from Section \ref{sec2} that the map $x\mapsto \Ka_m(kx)$ is a solution of the equation
\begin{equation}\label{eq_kernel}
(L_{m^2}+k^2)f=0
\end{equation}
and belongs to $L^2(\R_+)$.
Similarly, both functions $x\mapsto \Ia_{m}(kx)$ and
$x\mapsto \Ia_{-m}(kx)$ satisfy the equation \eqref{eq_kernel},
but only a certain linear combination belongs to $\Dom(H_{m,\kappa})$
around $0$.

Recall now the parameter that we have introduced in \eqref{short}, namely
\begin{equation*}
\varsigma=\kappa\frac{\Gamma(-m)}{\Gamma(m)}.
\end{equation*}
For $(\frac{k}{2})^{2m}\varsigma\neq1$,
by taking \eqref{eq_Im_asym} into account,
one infers that the map
\begin{equation}\label{def_v_m_kappa}
x\mapsto \frac{\Ia_{m}(kx)-\varsigma (\frac{k}{2})^{2m} \Ia_{-m}(kx)}
{1-\varsigma (\frac{k}{2})^{2m}},
\end{equation}
satisfies \eqref{eq_do1} or \eqref{eq_do2} around $0$,
and hence belongs to $\Dom(H_{m,\kappa})$ around $0$.
Obviously, it also solves \eqref{eq_kernel}.
Furthermore, by using the formulas of Section \ref{subsec_Wronsk},
the Wronskian of \eqref{def_v_m_kappa} and $x\mapsto\Ka_m(kx)$ equals $k$

From the general theory of Sturm-Liouville operators,
as recalled for example in \cite[Prop.~A.1]{BDG}, the kernel of the
operator $(H_{m,\kappa}+k^2)^{-1}$
is provided for  $\varsigma (\frac{k}{2})^{2m}\neq 1$ by the expression
\begin{eqnarray*}
&&R_{m,\kappa}(-k^2;x,y)\\&=& \frac1{k\big(1-\varsigma (\frac{k}{2})^{2m}\big)}
\left\{\begin{matrix}
\big(\Ia_{m}(kx)-\varsigma (\frac{k}{2})^{2m} \Ia_{-m}(kx)\big)\;\!\Ka_m(ky) & \hbox{ if } 0 < x < y, \\[2ex]
\big(\Ia_{m}(ky)-\varsigma (\frac{k}{2})^{2m} \Ia_{-m}(ky)\big)\;\! \Ka_m(kx) & \hbox{ if } 0 < y < x.
\end{matrix}\right.
\end{eqnarray*}

Let us still provide a relation between $R_{m,\kappa}(-k^2)$ and
$R_{m}(-k^2)$. Indeed, by the definition \eqref{def_v_m_kappa}, and by taking
the equalities $H_{m,0}=H_{m}$ and $H_{m,\infty}=H_{-m}$ into account, one gets
\begin{align}
\label{eq_dif_res_1} R_{m,\kappa}(-k^2)
& = \frac{1}{1-\varsigma (\tfrac{k}{2})^{2m}}
\Big(R_m(-k^2)-\varsigma (\tfrac{k}{2})^{2m} R_{-m}(-k^2)\Big) \\
\label{eq_dif_res_1bis} & = R_m(-k^2) - \frac{ \varsigma(\tfrac{k}{2})^{2m}}
{1- \varsigma(\tfrac{k}{2})^{2m}}  \tfrac{m}{k^2} \;\!P_m(-k^2),
\end{align}
where $P_m(-k^2)$ is the projection defined in \eqref{proj1}, see also \eqref{eq_dif_res_m}.

Let us finally observe that for fixed $\kappa$ the condition $\varsigma(\tfrac{k}{2})^{2m}=1$
defines a discrete set which corresponds to the eigenvalues of $H_{m,\kappa}$ by Proposition \ref{prop_spec}.
On the other hand, since $R_{m,\kappa}(-k^2)$ is a rank one perturbation of $R_{m}(-k^2)$, one infers
that $[0,\infty[$ belongs to the spectrum of $H_{m,\kappa}$, as already mentioned at the beginning of this section.

\subsection{Boundary value of the resolvent and spectral density}\label{sec_spec_2}

Since $[0,\infty[$ belongs to the spectrum of $H_{m,\kappa}$,
it is natural to mimic the computations performed in Section \ref{sec_spec_1} for $H_m$.
Recall that an exceptional pair $(m,\kappa)$ has been defined in \eqref{excep}.

\begin{theoreme}
Let $m\in \C^\times$ with $|\Re(m)|<1$, let $\kappa\in \C\cup
\{\infty\}$, and let $k>0$.
\begin{enumerate}
\item[(i)] If $(m,\kappa)$ is not an exceptional pair, then the boundary values of the resolvent
\begin{equation}\label{eq_lim_abs_2}
R_{m,\kappa}(k^2\pm \i 0):=\lim_{\epsilon \searrow 0}  R_{m,\kappa}(k^2\pm \i \epsilon)
\end{equation}
exist in the sense of operators from  $\langle\Q\rangle^{-s}L^2(\R_+)$
to $\langle\Q\rangle^{s}L^2(\R_+)$ for any $s>\frac{1}{2}$,
uniformly in $k$ on each compact subset of $\R_+$.
The kernel of $R_{m,\kappa}(k^2\pm \i 0)$ is given for $0<x\leq y$ by
\begin{align*}
& R_{m,\kappa}(k^2\pm\i0;x,y)\\
&= \frac{\pm \i}{k\big(1-\varsigma \e^{\mp \i \pi m}\big(\tfrac{k}{2}\big)^{2m}\big)}
\Big(\Ja_{m}(kx)- \varsigma\big(\tfrac{k}{2}\big)^{2m}\Ja_{-m}(kx)\Big)\Ha_m^\pm(ky)
\end{align*}
and the same expression with the role of $x$ and $y$ exchanged for $0<y<x$.
\item[(ii)] If $(m,\kappa)$ is an exceptional pair, the same statement holds
for $k$ uniformly  on each compact subset of
\begin{equation}\label{exco}
\Big\{k\in\rr_+ \mid \big(\tfrac{k}{2}\big)^{2m}\varsigma\e^{\mp\i \pi
m}\neq1\Big\}.
\end{equation}
\end{enumerate}
\end{theoreme}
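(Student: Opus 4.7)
The plan is to leverage the rank-one perturbation identity \eqref{eq_dif_res_1bis}, which expresses
\[
R_{m,\kappa}(-k^2)=R_m(-k^2)-\frac{\varsigma(k/2)^{2m}}{1-\varsigma(k/2)^{2m}}\cdot\frac{m}{k^2}\;\!P_m(-k^2),
\]
and reduce the limiting absorption principle for $H_{m,\kappa}$ to that for $H_m$, already obtained in Theorem \ref{thm_boundary}. The first term on the right-hand side has well-defined boundary values $R_m(k^2\pm\i0)$ as operators $\langle\Q\rangle^{-s}L^2(\R_+)\to\langle\Q\rangle^sL^2(\R_+)$ for $s>\frac12$, with the explicit kernel \eqref{eq_boundary}. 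It therefore suffices to control the boundary behavior of the rank-one correction.

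For the scalar factor, I would track the principal branch carefully. As $-k^2\to k_0^2\pm\i 0$ with $k_0>0$ and $\Re(k)>0$, one has $k\to\mp\i k_0$, hence $(k/2)^{2m}\to\e^{\mp\i\pi m}(k_0/2)^{2m}$. Consequently
\[
\frac{\varsigma(k/2)^{2m}}{1-\varsigma(k/2)^{2m}}\cdot\frac{m}{k^2}\;\longrightarrow\;-\frac{\varsigma\e^{\mp\i\pi m}(k_0/2)^{2m}}{1-\varsigma\e^{\mp\i\pi m}(k_0/2)^{2m}}\cdot\frac{m}{k_0^2},
\]
and this limit is finite and continuous in $k_0$ precisely on the set \eqref{exco}, which is all of $\R_+$ in the non-exceptional case and the complement of a discrete set when $(m,\kappa)$ is exceptional. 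For the operator $P_m(-k^2)$, whose kernel \eqref{proj1} is a product of two $\Ka_m$-factors, I would use the same Hilbert--Schmidt/dominated-convergence argument as in the proof of Theorem \ref{thm_boundary} (based on the bound \eqref{eq_K_0} and its analogue for $m\neq0$), applied to the weighted kernel $\langle x\rangle^{-s}\Ka_m(kx)\Ka_m(ky)\langle y\rangle^{-s}$. This yields the boundary value of $P_m(-k^2)$ in the same topology, with its kernel obtained from the analytic continuation $\Ka_m(\mp\i k_0\cdot)\propto\Ha_m^\pm(k_0\cdot)$, and the convergence is uniform on compacts of $\R_+$.

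Combining the two limits and performing the algebraic simplification of the resulting expression is the next step. The factor $(1-\varsigma\e^{\mp\i\pi m}(k_0/2)^{2m})^{-1}$ naturally multiplies the combination $\Ia_m(kx)-\varsigma(k/2)^{2m}\Ia_{-m}(kx)$ appearing in the kernel of $R_{m,\kappa}(-k^2)$, and the identities relating $\Ia_{\pm m}$ on the imaginary axis to $\Ja_m$, together with $\Ka_m$ to $\Ha_m^\pm$ (the contents of Section \ref{sec_Bessel_eq} and Appendix \ref{secB1}), produce precisely the stated kernel for $R_{m,\kappa}(k^2\pm\i0)$. For the exceptional case (ii), the same argument carries over verbatim provided $k$ stays in a compact subset of \eqref{exco}, since then the denominator $1-\varsigma\e^{\mp\i\pi m}(k/2)^{2m}$ is bounded away from zero.

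The main obstacle I anticipate is bookkeeping of branches and phase factors: verifying that the limits of $(k/2)^{2m}$, of $\Ia_{\pm m}(kx)$, and of $\Ka_m(ky)$ as $k\to\mp\i k_0$ combine, after multiplication by the prefactor $(1-\varsigma(k/2)^{2m})^{-1}/k$, into exactly the asserted expression without spurious sign or phase errors. The analytic estimates themselves are routine extensions of Proposition \ref{prop_boundary4}, but the algebraic rearrangement linking the $\Ia$/$\Ka$-representation valid for $\Re(k)>0$ to the $\Ja$/$\Ha^\pm$-representation valid on $\i\R_\mp$ must be executed with some care, especially because $m$ is complex and $(\cdot)^{2m}$ is not periodic.
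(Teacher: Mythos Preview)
Your approach is correct but takes a slightly different route from the paper. The paper works with the decomposition \eqref{eq_dif_res_1},
\[
R_{m,\kappa}(-k^2)=\frac{1}{1-\varsigma(k/2)^{2m}}\Big(R_m(-k^2)-\varsigma(k/2)^{2m}R_{-m}(-k^2)\Big),
\]
rather than the rank-one perturbation form \eqref{eq_dif_res_1bis} that you use. This lets the paper invoke Theorem~\ref{thm_boundary} twice---once for $R_m$ and once for $R_{-m}$, both of which are covered since $|\Re(m)|<1$---without any further analytic estimate; the kernel formula then drops out from \eqref{eq_boundary} for $\pm m$ together with the identity $\Ha_{-m}^\pm=\e^{\pm\i\pi m}\Ha_m^\pm$ \eqref{eq_H_pm_m}. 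Your route requires instead a direct Hilbert--Schmidt bound on the weighted kernel $\langle x\rangle^{-s}\Ka_m(kx)\Ka_m(ky)\langle y\rangle^{-s}$, which is easy (and is in fact precisely what the paper does later for $H_0^\nu$, where no analogue of \eqref{eq_dif_res_1} is available). So both arguments are valid; the paper's is marginally shorter here because it recycles existing results, while yours is the one that generalizes to the $m=0$ case. Your concern about phase bookkeeping is well placed but resolves cleanly once you use \eqref{eq_dif_res_1}: the $\Ja_{\pm m}$ appear directly and no $\Ia$-to-$\Ja$ conversion on the imaginary axis is needed.
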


\begin{proof}
By taking the equality \eqref{eq_dif_res_1} into account one infers that
the limiting absorption principle \eqref{eq_lim_abs_2} can be deduced from Theorem \ref{thm_boundary} for $R_m$
and for $R_{-m}$.
The local uniformity in $k$ is also consequence of these estimates and of the explicit
formula for the pre-factors appearing in \eqref{eq_dif_res_1},
as long as the first factor has no singularity.

For the kernel one directly gets for $0<x\leq y$ that
\begin{align*}
& R_{m,\kappa}(k^2\pm \i 0;x,y) \\
&=\frac{1}{1-\varsigma \e^{\mp \i \pi m}\big(\tfrac{k}{2}\big)^{2m}}R_{m}(k^2\pm \i 0;x,y)\\
& \quad -\frac{\varsigma \e^{\mp \i \pi m}\big(\tfrac{k}{2}\big)^{2m}}
{1-\varsigma \e^{\mp \i \pi m}\big(\tfrac{k}{2}\big)^{2m}}R_{-m}(k^2\pm \i 0;x,y) \\
& = \frac{\pm \i}{k\Big(1-\varsigma \e^{\mp \i \pi m}\big(\tfrac{k}{2}\big)^{2m}\Big)}
\Big(\Ja_{m}(kx)- \varsigma\big(\tfrac{k}{2}\big)^{2m} \Ja_{-m}(kx)\Big)\Ha_m^\pm(ky),
\end{align*}
where \eqref{eq_H_pm_m} and \eqref{eq_boundary} have been taken into account.
Note that for $0<y<x$ one gets the same expression with the role of $x$ and $y$ exchanged.
\end{proof}

Based on the previous result, the corresponding \emph{spectral density} can now be defined for any $m\in \C^\times$ with $|\Re(m)|<1$.
More precisely, for any $k>0$
if $(m,\kappa)$ is not an exceptional pair, or for $k$ belonging to \eqref{exco} if $(m,\kappa)$ is an exceptional pair,
let us set
\begin{equation*}
p_{m,\kappa}(k^2):= \frac{1}{2\pi \i}\Big(
R_{m,\kappa}\big(k^2+\i0\big) -R_{m,\kappa}\big(k^2-\i0\big)\Big),
\end{equation*}
which is bounded from  $\langle\Q\rangle^{-s}L^2(\R_+)$
to $\langle\Q\rangle^{s}L^2(\R_+)$ for any $s>\frac{1}{2}$.

\begin{proposition}
The kernel of the spectral density is given by the following formula:
\begin{align*}
& p_{m,\kappa}(k^2;x,y) \\
&=\frac{\Big(\Ja_{m}(kx)- \varsigma\big(\tfrac{k}{2}\big)^{2m}\Ja_{-m}(kx)\Big)
\Big(\Ja_{m}(ky)- \varsigma\big(\tfrac{k}{2}\big)^{2m}\Ja_{-m}(ky)\Big)}{\pi k
\Big( \sin^2(\pi m) + \big(\cos(\pi m)-\varsigma\big(\tfrac{k}{2}\big)^{2m}\big)^2\Big)}.
\end{align*}
\end{proposition}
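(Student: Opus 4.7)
The plan is to substitute the explicit boundary-value formulas from the previous theorem into the definition
\[
p_{m,\kappa}(k^2;x,y)=\frac{1}{2\pi \i}\bigl(R_{m,\kappa}(k^2+\i0;x,y)-R_{m,\kappa}(k^2-\i0;x,y)\bigr),
\]
and verify that the left factor $\Ja_m(kx)-\varsigma(k/2)^{2m}\Ja_{-m}(kx)$ comes out unchanged, while the $\Ha_m^\pm(ky)$ pieces combine into the second symmetric factor. Fix the abbreviation $a:=\varsigma(k/2)^{2m}$ and assume $0<x\le y$ (the other case being symmetric in $x,y$ by inspection of the resolvent kernel). Then the prefactor $\Ja_m(kx)-a\Ja_{-m}(kx)$ is independent of the $\pm$ sign, so the whole computation reduces to the identity
\[
\frac{1}{2}\left[\frac{\Ha_m^+(ky)}{1-a\e^{-\i\pi m}}+\frac{\Ha_m^-(ky)}{1-a\e^{+\i\pi m}}\right]
=\frac{\Ja_m(ky)-a\Ja_{-m}(ky)}{\sin^2(\pi m)+(\cos(\pi m)-a)^2}.
\]

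The main tool for the numerator side is the standard expansion $\Ha_m^\pm=\Ja_m\pm\i\Ya_m$ together with $\Ya_m=(\cos(\pi m)\Ja_m-\Ja_{-m})/\sin(\pi m)$, which rewrites
\[
\Ha_m^\pm(ky)=\frac{\pm\i\bigl(\e^{\mp\i\pi m}\Ja_m(ky)-\Ja_{-m}(ky)\bigr)}{\sin(\pi m)}.
\]
Substituting this into the left-hand side, clearing the common denominator $(1-a\e^{-\i\pi m})(1-a\e^{+\i\pi m})\sin(\pi m)$, and expanding, the coefficient of $\Ja_m(ky)$ reduces to $\e^{-\i\pi m}-\e^{+\i\pi m}$ (the terms proportional to $a$ cancel), and the coefficient of $-\Ja_{-m}(ky)$ reduces to $a(\e^{-\i\pi m}-\e^{+\i\pi m})$; each produces the factor $-2\i\sin(\pi m)$, which cancels the remaining $\sin(\pi m)$ in the denominator.

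For the denominator, a direct expansion gives
\[
(1-a\e^{-\i\pi m})(1-a\e^{+\i\pi m})=1-2a\cos(\pi m)+a^2=\sin^2(\pi m)+(\cos(\pi m)-a)^2,
\]
which is exactly the denominator appearing in the statement, and this also confirms that the formula is well-defined precisely when $(m,\kappa)$ is not exceptional (in the exceptional case the vanishing of this expression at the corresponding $k$ matches the excluded points in \eqref{exco}). Putting everything together then produces the claimed symmetric kernel.

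The only delicate step is the sign bookkeeping when combining the two $\Ha_m^\pm$ terms: one must track the $\pm\i$ prefactors in $R_{m,\kappa}(k^2\pm\i0;x,y)$, the $\pm\i$ in the expression of $\Ha_m^\pm$ via $\Ja_m,\Ja_{-m}$, and the conjugate denominators $1-a\e^{\mp\i\pi m}$ simultaneously. Once these are gathered correctly the cross-terms in $a\Ja_m$ and the constant terms in $\Ja_{-m}$ annihilate and only the two complementary pieces survive, yielding the product $(\Ja_m(kx)-a\Ja_{-m}(kx))(\Ja_m(ky)-a\Ja_{-m}(ky))$ in the numerator. No further analytic input is needed beyond the identities for $\Ha_m^\pm$ already recorded in Section \ref{sec_Bessel_eq}.
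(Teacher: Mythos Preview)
Your proof is correct and follows essentially the same route as the paper's own argument: both substitute the boundary-value kernels, factor out the $x$-dependent piece $\Ja_m(kx)-a\Ja_{-m}(kx)$, put the two $\Ha_m^\pm(ky)$ terms over the common denominator $(1-a\e^{-\i\pi m})(1-a\e^{+\i\pi m})=1-2a\cos(\pi m)+a^2$, and then reduce the numerator to $2\bigl(\Ja_m(ky)-a\Ja_{-m}(ky)\bigr)$ using the expression of $\Ha_m^\pm$ in terms of $\Ja_m,\Ja_{-m}$. The only cosmetic difference is that the paper invokes this last identity directly from the appendix, whereas you derive it via $\Ha_m^\pm=\Ja_m\pm\i\Ya_m$ and the formula for $\Ya_m$; the computations are otherwise identical.
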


\begin{proof}
For the kernel of this operator, observe that for $0<x\leq y$ one has
\begin{align*}
& 2\pi k \;\!p_{m,\kappa}(k^2;x,y) \\
& = \frac{1}{1-\varsigma \e^{-\i \pi m}\big(\tfrac{k}{2}\big)^{2m}}
\Big(\Ja_{m}(kx)- \varsigma\big(\tfrac{k}{2}\big)^{2m} \Ja_{-m}(kx)\Big)\Ha_m^+(ky) \\
& \quad + \frac{1}{1-\varsigma \e^{\i \pi m}\big(\tfrac{k}{2}\big)^{2m}}
\Big(\Ja_{m}(kx)- \varsigma\big(\tfrac{k}{2}\big)^{2m}\Ja_{-m}(kx)\Big)\Ha_m^-(ky) \\
& = \frac{\Ja_{m}(kx)- \varsigma\big(\tfrac{k}{2}\big)^{2m} \Ja_{-m}(kx)}{
\big(1-\varsigma \e^{-\i \pi m}\big(\tfrac{k}{2}\big)^{2m}\big)
\big(1-\varsigma \e^{\i \pi m}\big(\tfrac{k}{2}\big)^{2m}\big)} \\
&\quad \times\left\{\Big(1-\varsigma \e^{\i \pi m}\big(\tfrac{k}{2}\big)^{2m}\Big)\Ha_m^+(ky) +
\Big(1-\varsigma \e^{-\i \pi m}\big(\tfrac{k}{2}\big)^{2m}\Big)\Ha_m^-(ky)\right\}  \\
& = \frac{\Big(\Ja_{m}(kx)- \varsigma\big(\tfrac{k}{2}\big)^{2m}\Ja_{-m}(kx)\Big)
\Big(\Ja_{m}(ky)- \varsigma \big(\tfrac{k}{2}\big)^{2m} \Ja_{-m}(ky)\Big)}{\frac{1}{2}
\Big(1-2\varsigma \cos(\pi m) \big(\tfrac{k}{2}\big)^{2m} +
\varsigma^2\big(\tfrac{k}{2}\big)^{4m}\Big)}.
\end{align*}
Then, by using a simple trigonometric equality for the denominator, and since the role of $x$ and $y$ can be exchanged,
one directly obtains the desired expression for any $x,y\in \R_+$.
\end{proof}

\subsection{Generalized Hankel transform}

We would like to generalize the definition of the Hankel transformations
of Section \ref{sec_Hankel} and
to show their relations with the operators $H_{m,\kappa}$
The main idea is to factorize the spectral density,
but let us stress that this factorization is not unique.

One possibility, which works well for $m$ and $\kappa$ real, is as follows.
For any $m\in \R^\times$ with $|m|<1$ and any $\kappa \in \R\cup \{\infty\}$ one could set
\begin{equation*}
\cF_{m,\kappa}:C_{\rm c}(\R_+)\to L^2(\R_+)
\end{equation*}
with
\begin{equation*}
\big(\cF_{m,\kappa} f\big)(x):=\int_0^\infty\cF_{m,\kappa} (x,y)f(y)\d y,
\end{equation*}
and
\begin{equation*}
\cF_{m,\kappa} (x,y) := \sqrt{\frac2\pi}
\frac{\Ja_{m}(xy)- \varsigma \Ja_{-m}(xy)\big(\tfrac{y}{2}\big)^{2m}}{
\sqrt{\sin^2(\pi m) + \big(\cos(\pi m)-\varsigma
\big(\tfrac{y}{2}\big)^{2m}\big)^2}}.
\end{equation*}
For real $m,\kappa$, the denominator is the square root of a positive
number, which has a clear meaning. For  general $m,\kappa$ we need to
choose the branch of the square root, which is ambiguous.

Another option, which we will prefer since it has always a unique
definition, is to define the \emph{incoming} and
\emph{outgoing Hankel transformations} $\cF_{m,\kappa}^\mp$ given by
the kernels
\begin{equation*}
\cF_{m,\kappa}^\mp(x,y) :=\e^{\mp \i \frac{\pi}{2}m}\sqrt{\frac2\pi}
\frac{\Ja_{m}(xy)- \varsigma \Ja_{-m}(xy)\big(\tfrac{y}{2}\big)^{2m}}{
1-\varsigma \e^{\mp\i \pi m}\big(\tfrac{y}{2}\big)^{2m}}.
\end{equation*}
Note that the denominator of this kernel has a singularity if $(m,\kappa)$ is an exceptional pair.
For simplicity, we shall not consider this situation anymore in the sequel.
Thus, if $(m,\kappa)$ is not an exceptional pair, the following equalities can easily be obtained:
\begin{equation}\label{eq_m_kappa_pm}
\cF_{m,\kappa}^\mp =
\Big(\cF_m - \varsigma  \cF_{-m}\big(\tfrac{\Q}{2}\big)^{2m}\Big)
\frac{\e^{\mp \i \frac{\pi}{2}m}}{1-\varsigma\e^{\mp\i \pi m}\big(\tfrac{\Q}{2}\big)^{2m}}.
\end{equation}

Let us now show that these transformations provide a kind of diagonalization
of the operator $H_{m,\kappa}$. The statements and the proofs are divided into several shorter statements.

\begin{lemma}
For any $m\in \C^\times$ with $|\Re(m)|<1$ and any $\kappa\in \C\cup
\{\infty\}$ with $(m,\kappa)$ not an exceptional pair, the operators $\cF_{m,\kappa}^\mp $
extend continuously to the following operators:
\begin{equation}\label{eq_a_gauche}
\begin{split}
& \Big(\Xi_m(-A)-\varsigma\Xi_{-m}(-A) (2\Q)^{-2m} \Big)
\frac{\e^{\mp \i \frac{\pi}{2}m}}{1-\varsigma \e^{\mp \i \pi m} (2\Q)^{-2m}}\Ju \\
&=  \Ju \Big(\Xi_m(A)-\varsigma\Xi_{-m}(A) \big(\tfrac{\Q}{2}\big)^{2m}\Big)
\frac{\e^{\mp \i \frac{\pi}{2}m}}{1-\varsigma \e^{\mp \i \pi m} \big(\tfrac{\Q}{2}\big)^{2m}}
\end{split}
\end{equation}
where $\Xi_m$ and $\Xi_{-m}$ have been defined in \eqref{eq_def_Xi}.
\end{lemma}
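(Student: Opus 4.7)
The plan is to read the two expressions in~\eqref{eq_a_gauche} directly off~\eqref{eq_m_kappa_pm} by inserting the two factorisations $\cF_{\pm m}=\Ju\,\Xi_{\pm m}(A)=\Xi_{\pm m}(-A)\Ju$ provided by Proposition~\ref{prop_sur_A}, and then to check separately that the resulting multiplication operators are bounded. As a preliminary I would record the elementary commutation relation
\begin{equation*}
\Ju\, g(\Q)=g(\Q^{-1})\,\Ju,
\end{equation*}
valid for any bounded Borel function $g$ on $\R_+$, which follows at once from the defining formula~\eqref{def_de_J}.

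For the right-hand expression in~\eqref{eq_a_gauche}, I would substitute $\cF_m=\Ju\,\Xi_m(A)$ and $\cF_{-m}=\Ju\,\Xi_{-m}(A)$ into~\eqref{eq_m_kappa_pm} and factor out the common $\Ju$ on the left; the remaining bracketed operator is exactly $\Xi_m(A)-\varsigma\,\Xi_{-m}(A)(\tfrac{\Q}{2})^{2m}$ multiplying the operator-valued denominator already appearing in~\eqref{eq_m_kappa_pm}. For the left-hand expression, I would instead substitute $\cF_{\pm m}=\Xi_{\pm m}(-A)\,\Ju$ and then apply the commutation relation above to push both factors of $\Ju$ all the way to the right, past $(\tfrac{\Q}{2})^{2m}$ and past the multiplicative denominator; this step replaces every occurrence of $(\tfrac{\Q}{2})^{2m}$ by $(2\Q)^{-2m}$ and yields the claimed form.

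It remains to verify the continuous extension, i.e.\ boundedness of the operators that appear on the right-hand sides. The factors $\Xi_{\pm m}(\pm A)$ are bounded because $\Xi_{\pm m}$ is a bounded function on $\R$, as already exploited in the proof of Proposition~\ref{prop_sur_A}, and $\Ju$ is unitary. The multiplicative factors reduce to the two scalar functions
\begin{equation*}
h_\mp(y)=\frac{\e^{\mp\i\pi m/2}}{1-\varsigma\,\e^{\mp\i\pi m}(y/2)^{2m}}\qquad\text{and}\qquad (y/2)^{2m}h_\mp(y)
\end{equation*}
on $\R_+$ (together with their analogues with $(y/2)^{2m}$ replaced by $(2y)^{-2m}$). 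The key observation is that the denominator $1-\varsigma\,\e^{\mp\i\pi m}(y/2)^{2m}$ is bounded away from zero on $\R_+$: its vanishing would require $(y/2)^{2m}=\e^{\pm\i\pi m}/\varsigma$ for some $y>0$, which after taking logarithms is precisely condition~\eqref{excep}, ruled out by the non-exceptional hypothesis. A short case analysis according to whether $\Re(m)>0$, $\Re(m)<0$, or $\Re(m)=0$ (where $(y/2)^{2m}$ is a pure phase) then shows that $h_\mp$ and $(y/2)^{2m}h_\mp$ admit finite limits at $0$ and at $\infty$, and therefore are bounded on $\R_+$. This case-by-case verification of the scalar multipliers is where I expect the bulk of the (modest) technical work to lie.
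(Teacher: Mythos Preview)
Your proposal is correct and follows essentially the same route as the paper: substitute the two factorisations $\cF_{\pm m}=\Ju\,\Xi_{\pm m}(A)=\Xi_{\pm m}(-A)\,\Ju$ from Proposition~\ref{prop_sur_A} into~\eqref{eq_m_kappa_pm} and use the commutation $\Ju\,g(\Q)=g(\Q^{-1})\,\Ju$ to move $\Ju$ to the desired side. The paper's proof is a one-sentence pointer to exactly these ingredients; your additional verification that the scalar multipliers $h_\mp$ and $(\tfrac{\Q}{2})^{2m}h_\mp$ are bounded (via the non-exceptional hypothesis and a case split on the sign of $\Re(m)$) is not spelled out in the paper but is a welcome and correct elaboration.
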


\begin{proof}
The proof simply consists in using the equalities obtained in Proposition \ref{prop_sur_A}
and in a short computation based on the definition of $\Ju$, see equation \eqref{def_de_J}.
\end{proof}

Let us still provide the expressions for the adjoint of these operators, namely $\cF_{m,\kappa}^{\mp \t}$
are equal to the operators
\begin{equation}\label{eq_a_droite}
\begin{split}
& \Ju \frac{\e^{\mp \i \frac{\pi}{2}m}}{1-\varsigma \e^{\mp \i \pi m} (2\Q)^{-2m}}
\Big(\Xi_m(A)-\varsigma  (2\Q)^{-2m}\Xi_{-m}(A)\Big) \\
& =  \frac{\e^{\mp \i \frac{\pi}{2}m}}{1-\varsigma \e^{\mp \i \pi m}  \big(\tfrac{\Q}{2}\big)^{2m}}
\Big(\Xi_m(-A)-\varsigma   \big(\tfrac{\Q}{2}\big)^{2m}\Xi_{-m}(-A)\Big) \Ju.
\end{split}
\end{equation}
Additional information on the operators $\cF_{m,\kappa}^{\pm\t}$ are provided in the next statement.

\begin{lemma}\label{lem_proj_m_kappa}
For any $m\in \C^\times$ with $|\Re(m)|<1$ and  $\kappa\in \C\cup
\{\infty\}$ with $(m,\kappa)$ not an exceptional pair, the following equalities hold:
\begin{equation*}
\cF_{m,\kappa}^{\pm \t} \cF_{m,\kappa}^{\mp}=\one.
\end{equation*}
\end{lemma}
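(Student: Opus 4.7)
The plan is to compute $\cF_{m,\kappa}^{\pm\t}\cF_{m,\kappa}^\mp$ directly from the factored expression \eqref{eq_m_kappa_pm}. Using $\cF_m^\t=\cF_m$ (Proposition \ref{prop_sur_A}) together with the fact that functions of $\Q$ are their own transposes, the transpose of \eqref{eq_m_kappa_pm} reads
$$\cF_{m,\kappa}^{\pm\t}=\frac{\e^{\pm\i\pi m/2}}{1-\varsigma\e^{\pm\i\pi m}(\tfrac{\Q}{2})^{2m}}\big(\cF_m-\varsigma(\tfrac{\Q}{2})^{2m}\cF_{-m}\big).$$
Forming the product with $\cF_{m,\kappa}^\mp$, the phases $\e^{\pm\i\pi m/2}$ and $\e^{\mp\i\pi m/2}$ cancel, and, setting $D_\pm:=1-\varsigma\e^{\pm\i\pi m}(\tfrac{\Q}{2})^{2m}$, the two rational multiplication factors $D_\pm^{-1}$ and $D_\mp^{-1}$ on the far left and far right will give $\one$ after absorbing $D_\pm D_\mp$ from the middle, provided we can establish the operator identity
$$\big(\cF_m-\varsigma(\tfrac{\Q}{2})^{2m}\cF_{-m}\big)\big(\cF_m-\varsigma\cF_{-m}(\tfrac{\Q}{2})^{2m}\big)=D_+D_-=1-2\varsigma\cos(\pi m)(\tfrac{\Q}{2})^{2m}+\varsigma^2(\tfrac{\Q}{2})^{4m}.$$

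Expanding the left-hand side and invoking $\cF_m^2=\cF_{-m}^2=\one$ (Proposition \ref{prop_sur_A}) disposes of the $\varsigma^0$ and $\varsigma^2$ contributions. What remains, at order $\varsigma$, is
$$\cF_m\cF_{-m}(\tfrac{\Q}{2})^{2m}+(\tfrac{\Q}{2})^{2m}\cF_{-m}\cF_m=2\cos(\pi m)(\tfrac{\Q}{2})^{2m}.$$
Setting $g(A):=\cF_m\cF_{-m}=\Xi_m(-A)\Xi_{-m}(A)$ (using $\cF_m=\Xi_m(-A)\Ju$ and $\Ju f(A)\Ju=f(-A)$), a further application of $\cF_{\pm m}^2=\one$ gives $\cF_{-m}\cF_m=g(A)^{-1}$. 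The dilation covariance $\e^{\i\tau A}\Q\e^{-\i\tau A}=\e^\tau\Q$ yields the commutation $(\tfrac{\Q}{2})^{2m}f(A)(\tfrac{\Q}{2})^{-2m}=f(A+2\i m)$, so after factoring out $(\tfrac{\Q}{2})^{2m}$ on the right the problem reduces to the scalar identity
$$g(t)+g(t+2\i m)^{-1}=2\cos(\pi m).$$

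The heart of the proof is the verification of this scalar identity. Applying $\Gamma(z)\Gamma(1-z)=\pi/\sin(\pi z)$ in the defining formula \eqref{eq_def_Xi} of $\Xi_m$, or equivalently reading off the explicit expression for $\cF_{-m}$ in Proposition \ref{prop_wave_op}, one gets
$$g(t)=\frac{\cos\!\big(\pi(m+\i t)/2\big)}{\cos\!\big(\pi(m-\i t)/2\big)}.$$
Then $g(t+2\i m)^{-1}=\cos\!\big(\pi(3m-\i t)/2\big)/\cos\!\big(\pi(m-\i t)/2\big)$, and a single application of the sum-to-product formula $\cos X+\cos Y=2\cos\tfrac{X+Y}{2}\cos\tfrac{X-Y}{2}$, with $X=\pi(m+\i t)/2$ and $Y=\pi(3m-\i t)/2$, produces exactly $2\cos(\pi m)\cos\!\big(\pi(m-\i t)/2\big)$ in the numerator, completing the check. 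The only technical obstacle is justifying the use of unbounded multiplications $(\tfrac{\Q}{2})^{\pm 2m}$ and functions of $A$ in the above manipulations; however, once $(m,\kappa)$ is outside the exceptional set all operators appearing in the argument are either bounded or regular enough on a common dense core such as $C_{\rm c}^\infty(\R_+)$, so the operator identity extends to all of $L^2(\R_+)$ by continuity of the bounded operators $\cF_{m,\kappa}^\mp$.
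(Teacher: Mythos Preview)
Your proof is correct and follows essentially the same strategy as the paper's: expand the product, use $\cF_{\pm m}^2=\one$ to handle the $\varsigma^0$ and $\varsigma^2$ terms, and reduce the $\varsigma^1$ cross term to a scalar identity via the commutation $(\tfrac{\Q}{2})^{2m}f(A)(\tfrac{\Q}{2})^{-2m}=f(A+2\i m)$. The only notable difference is in the final scalar verification: the paper computes $\Xi_{-m}(t-2\i m)\Xi_m(-t+2\i m)+\Xi_m(t)\Xi_{-m}(-t)$ directly from the Gamma-function definition \eqref{eq_def_Xi} and the reflection formula, whereas you invoke the closed cosine form $g(t)=\cos\!\big(\tfrac{\pi}{2}(m+\i t)\big)/\cos\!\big(\tfrac{\pi}{2}(m-\i t)\big)$ already established in Proposition~\ref{prop_wave_op}, which is a mild shortcut.
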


\begin{proof}
The proof consists simply in considering the terms \eqref{eq_a_gauche} and \eqref{eq_a_droite} and in checking that
their product (for the correct sign) is equal to $\one$. Indeed, observe first that on $C_{\rm c}(\R_+)$ the following equalities hold:
\begin{align*}
& \Big(\Xi_m(A)-\varsigma (2\Q)^{-2m}\Xi_{-m}(A)  \Big)
\Big(\Xi_m(-A)-\varsigma\Xi_{-m}(-A) (2\Q)^{-2m} \Big) \\
& =1 + \varsigma^2 (2\Q)^{-4m} - \varsigma \Big(  (2\Q)^{-2m}\Xi_{-m}(A) \Xi_m(-A) + \Xi_m(A)\Xi_{-m}(-A) (2\Q)^{-2m} \Big).
\end{align*}
If one then shows that
\begin{equation}\label{eq_to_prove}
(2\Q)^{-2m}\Xi_{-m}(A) \Xi_m(-A) + \Xi_m(A)\Xi_{-m}(-A) (2\Q)^{-2m} = 2\cos(\pi m) (2\Q)^{-2m}
\end{equation}
the statement of the lemma directly follow by using the full expressions provided in
\eqref{eq_a_gauche} and \eqref{eq_a_droite}.

For the proof of \eqref{eq_to_prove} that recall $\{U_\tau\}_{\tau\in \R}$ corresponds to the dilation
group, namely $U_\tau=\e^{\i\tau A}$ as introduced in Section \ref{sec_notation}.
Then for any $f\in C_{\rm c}(\R_+)$ and $x\in \R_+$ one has
\begin{align*}
\big((2\Q)^{-2m} U_\tau (2\Q)^{2m}f\big)(x) & = \e^{-2m\ln(x)} \big(U_\tau \Q^{2m}f\big)(x) \\
& = \e^{-2m\ln(x)} \e^{\tau/2} \big( \Q^{2m}f\big)(\e^\tau x) \\
& = \e^{-2m\ln(x)} \e^{\tau/2} \e^{2m\ln(e^\tau x)} f(\e^\tau x) \\
& = \e^{2m\tau}\big(\e^{\i\tau A}f\big)(x) \\
& = \big(\e^{\i \tau (A-\i 2m)}f\big)(x).
\end{align*}
One infers from this computation that the l.h.s.~of \eqref{eq_to_prove} is equal to
\begin{equation*}
\Big(\Xi_{-m}(A-\i 2 m) \Xi_m(-A+\i 2 m) + \Xi_m(A)\Xi_{-m}(-A)\Big) (2\Q)^{-2m}.
\end{equation*}
Finally, by taking into account the explicit formula \eqref{eq_def_Xi} for $\Xi_m$
and the equality $\Gamma(z)\Gamma(1-z)=\frac{\pi}{\sin(\pi z)}$ one gets (with $t$ instead of $A$)
\begin{align*}
& \Xi_{-m}(t-\i 2 m) \Xi_m(-t+\i 2 m) + \Xi_m(t)\Xi_{-m}(-t) \\
& = \Gamma\big(\tfrac{m+1+\i t}{2}\big) \Gamma\big(\tfrac{-m+1-\i t}{2}\big)
\Big(\tfrac{1}{\Gamma\big(\frac{-3m+1-\i t}{2}\big)\Gamma\big(\frac{3m+1+\i t}{2}\big)}
+\tfrac{1}{\Gamma\big(\frac{m+1-\i t}{2}\big)\Gamma\big(\frac{-m+1+\i t}{2}\big)}
\Big) \\
& = \frac{1}{\sin(\alpha)}\big(\sin(\alpha-2\beta)+\sin(\alpha+2\beta)\big)
\end{align*}
with $\alpha := \pi\big(\tfrac{-m+1-\i t}{2}\big)$ and $\beta:=\pi\tfrac{m}{2}$.
Some trigonometric identities lead then directly to the equality
\begin{equation*}
 \frac{1}{\sin(\alpha)}\big(\sin(\alpha-2\beta)+\sin(\alpha+2\beta)\big) = 2\cos(2\beta) = 2\cos(\pi m),
\end{equation*}
as expected.
\end{proof}

Let us now set
\begin{equation}\label{def_1}
\one_{\R_+}(H_{m,\kappa}) := \cF_{m,\kappa}^{\pm}\;\!\cF_{m,\kappa}^{\mp \t}
\end{equation}
and observe that this is again a projection.
In the self-adjoint case this operator corresponds to the usual projection onto the continuous spectrum of $H_{m,\kappa}$.
One can then prove the analogue of Proposition \ref{priop1}.

\begin{proposition}\label{prop_inter}
Let $m\in \C^\times$ with $|\Re(m)|<1$ and $\kappa\in \C\cup
\{\infty\}$ with $(m,\kappa)$ not an exceptional pair. Then for
any $k\in \C$ with $\Re(k)>0$ and $-k^2\not \in \sigma_{\rm p}(H_{m,\kappa})$ the following equalities hold:
\begin{equation*}
R_{m,\kappa}(-k^2) \one_{\R_+}(H_{m,\kappa})
=\cF_{m,\kappa}^{\pm}(\Q^2+k^2)^{-1} \cF_{m,\kappa}^{\mp \t}
=\one_{\R_+}(H_{m,\kappa})R_{m,\kappa}(-k^2).
\end{equation*}
\end{proposition}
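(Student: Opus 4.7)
The plan is to establish the intertwining identity
\begin{equation*}
(H_{m,\kappa}+k^2)\cF_{m,\kappa}^\pm g \;=\; \cF_{m,\kappa}^\pm(\Q^2+k^2)g,\qquad g\in C_{\rm c}^\infty(\R_+),
\end{equation*}
and then propagate it to all of $L^2(\R_+)$ via the bounded inverses $R_{m,\kappa}(-k^2)$ and $(\Q^2+k^2)^{-1}$. Both equalities of the proposition then follow by right- or left-multiplication by $\cF_{m,\kappa}^{\mp\t}$ (resp.~$\cF_{m,\kappa}^\pm$) and use of the definition \eqref{def_1}.

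For the intertwining, first observe that, as a function of $x$, the kernel $\cF_{m,\kappa}^\pm(x,y)$ is a linear combination of $\Ja_m(xy)$ and $\Ja_{-m}(xy)$, both of which satisfy $L_{m^2}\phi = y^2\phi$. Since $g$ has compact support away from $0$ and $\infty$, differentiation under the integral is legitimate and yields $L_{m^2}(\cF_{m,\kappa}^\pm g) = \cF_{m,\kappa}^\pm(\Q^2 g)$ pointwise on $\R_+$; together with the boundedness of $\cF_{m,\kappa}^\pm$ on $L^2(\R_+)$ this shows $\cF_{m,\kappa}^\pm g\in\Dom(L_{m^2}^{\max})$. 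To fit the boundary conditions of \eqref{eq_do1}--\eqref{eq_do2}, I expand $\Ja_{\pm m}(xy)$ via the power series of $J_{\pm m}$ and use the Gamma identity $\varsigma/\Gamma(1-m)=-\kappa/\Gamma(1+m)$, which follows from \eqref{short} together with $\Gamma(1-m)=-m\Gamma(-m)$, to obtain
\begin{equation*}
\cF_{m,\kappa}^\pm(x,y)=\alpha_\pm(y)\big(\kappa x^{1/2-m}+x^{1/2+m}\big)+O_y\big(x^{5/2-|\Re(m)|}\big) \text{ as } x\to 0,
\end{equation*}
with $\alpha_\pm(y)$ explicit and integrable against $g(y)\,dy$ thanks to the non-exceptional hypothesis. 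Since $|\Re(m)|<1$, the remainder integrates to $o(x^{3/2})$; thus the coefficients of $x^{1/2-m}$ and $x^{1/2+m}$ in the $x\to 0$ expansion of $\cF_{m,\kappa}^\pm g$ stand in the ratio $\kappa:1$, which is precisely the boundary condition of \eqref{eq_do1}--\eqref{eq_do2}, so $\cF_{m,\kappa}^\pm g\in\Dom(H_{m,\kappa})$ and the intertwining is established.

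Given the intertwining, fix $h\in C_{\rm c}^\infty(\R_+)$ and set $g:=h/(\Q^2+k^2)$, which remains in $C_{\rm c}^\infty(\R_+)$ since $\Re(k)>0$. The intertwining yields $(H_{m,\kappa}+k^2)\cF_{m,\kappa}^\pm g=\cF_{m,\kappa}^\pm h$, and applying $R_{m,\kappa}(-k^2)$ gives $\cF_{m,\kappa}^\pm(\Q^2+k^2)^{-1}h=R_{m,\kappa}(-k^2)\cF_{m,\kappa}^\pm h$. By density of $C_{\rm c}^\infty(\R_+)$ in $L^2(\R_+)$ and the boundedness of both sides this promotes to the operator identity $\cF_{m,\kappa}^\pm(\Q^2+k^2)^{-1}=R_{m,\kappa}(-k^2)\cF_{m,\kappa}^\pm$. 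Right-multiplying by $\cF_{m,\kappa}^{\mp\t}$ and invoking \eqref{def_1} gives the first equality of the proposition. For the second, I run the same argument with the superscripts $\pm$ and $\mp$ swapped, obtaining $\cF_{m,\kappa}^\mp(\Q^2+k^2)^{-1}=R_{m,\kappa}(-k^2)\cF_{m,\kappa}^\mp$, and take the bilinear transpose. Since $(\Q^2+k^2)^{-1}$ is a multiplication operator (hence self-transpose) and the explicit integral kernel of $R_{m,\kappa}(-k^2)$ is symmetric in $(x,y)$, this yields $(\Q^2+k^2)^{-1}\cF_{m,\kappa}^{\mp\t}=\cF_{m,\kappa}^{\mp\t}R_{m,\kappa}(-k^2)$; left-multiplying by $\cF_{m,\kappa}^\pm$ and using \eqref{def_1} closes the argument.

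The main obstacle is the boundary analysis at $x=0$. The fact that $\alpha_\pm(y)$ multiplies the full combination $\kappa x^{1/2-m}+x^{1/2+m}$ (rather than an arbitrary linear combination) hinges on the Gamma identity $\varsigma/\Gamma(1-m)=-\kappa/\Gamma(1+m)$; once this miracle is recognised, the subleading Bessel terms must be estimated with uniformity in $y$ on the compact support of $g$ to legitimize differentiation under the integral and to transfer the pointwise kernel asymptotics to a genuine asymptotic expansion of $\cF_{m,\kappa}^\pm g$. The compactness of $\mathrm{supp}\,g$ inside $\R_+$ makes these estimates routine, but they constitute the technical heart of the proof.
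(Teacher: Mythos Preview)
Your proof is correct and takes a genuinely different route from the paper's. The paper works entirely at the resolvent level: it writes $\cF_{m,\kappa}^{\mp\t}$ via \eqref{reso0} and $R_{m,\kappa}(-k^2)$ as a rank-one perturbation of $R_{\pm m}(-k^2)$ via \eqref{reso1}--\eqref{reso2}, invokes the already-proved intertwining $\cF_{\pm m}R_{\pm m}(-k^2)=(\Q^2+k^2)^{-1}\cF_{\pm m}$ from Proposition~\ref{priop1}, and then shows the residual rank-one contribution vanishes using the Bessel integral identity \eqref{eq_almost_done_3}--\eqref{eq_useful}. The crux of that argument is the cancellation $(\tfrac{k}{2})^{2m}\cF_mP_m(-k^2)=(\tfrac{\Q}{2})^{2m}\cF_{-m}P_m(-k^2)$.

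You instead establish the intertwining directly on the unbounded operator by verifying $\cF_{m,\kappa}^\pm g\in\Dom(H_{m,\kappa})$ for $g\in C_{\rm c}^\infty(\R_+)$. The ``miracle'' in your approach is the elementary Gamma identity $\varsigma/\Gamma(1-m)=-\kappa/\Gamma(1+m)$, which forces the two leading powers $x^{1/2\pm m}$ in the expansion of the kernel to appear in the precise ratio $\kappa:1$. This replaces the paper's integral identity \eqref{eq_useful} by a purely algebraic observation, and the remaining analysis (uniform control of subleading Bessel terms on the compact support of $g$, identification of the expansion coefficients via the $o(x^{3/2})$ characterization of $\Dom(L_{m^2}^{\min})$) is standard. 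Your argument is arguably more transparent about \emph{why} the generalized Hankel transform lands in the correct domain; the paper's bypasses domain questions altogether by manipulating only bounded operators, at the price of needing the less obvious cancellation \eqref{eq_useful}. One small point: your boundary expansion as written covers the case $\kappa\in\C$ (condition \eqref{eq_do1}); for $\kappa=\infty$ one has $H_{m,\infty}=H_{-m}$ and $\cF_{m,\infty}^\pm$ reduces to a phase times $\cF_{-m}$, so the statement follows directly from Proposition~\ref{priop1} without further work.
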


\begin{proof}
We will use the following convenient expression for
$\cF_{m,\kappa}^{\mp\t}$ and two
formulas for the resolvent:
\begin{align}
\label{reso0}  \cF_{m,\kappa}^{\mp \t} &= \frac{\e^{\mp \i \frac{\pi}{2}m}}{1-\varsigma
\e^{\mp\i \pi m}\big(\tfrac{\Q}{2}\big)^{2m}} \Big(\cF_m - \varsigma \big(\tfrac{\Q}{2}\big)^{2m} \cF_{-m}\Big),\\
\label{reso1}  R_{m,\kappa}(-k^2)&= R_m(-k^2)-\frac{\varsigma(\frac{k}{2})^{2m}}
{1-\varsigma(\frac{k}{2})^{2m}}\frac{m}{k^2}P_m(-k^2),\\
\label{reso2}  R_{m,\kappa}(-k^2)&= R_{-m}(-k^2)-\frac{1}
{1-\varsigma(\frac{k}{2})^{2m}}\frac{m}{k^2}P_m(-k^2).
\end{align}
By multiplying \eqref{reso1} and
\eqref{reso2} from the left by $\cF_m$ and $\cF_{-m}$ respectively we obtain
\begin{align*}
\cF_mR_{m,\kappa}(-k^2) =& (X^2+k^2)^{-1}\cF_m-\frac{\varsigma(\frac{k}{2})^{2m}}
{1-\varsigma(\frac{k}{2})^{2m}}\frac{m}{k^2}\cF_mP_m(-k^2),\\
\cF_{-m}R_{m,\kappa}(-k^2) =& (X^2+k^2)^{-1}\cF_{-m}-\frac{1}
{1-\varsigma(\frac{k}{2})^{2m}}\frac{m}{k^2}\cF_{-m}P_m(-k^2).
\end{align*}
By combining then the above two identities, and by taking \eqref{reso0} into account,
we get
\begin{align*}
&\cF_{m,\kappa}^{\mp \t} R_{m,\kappa}(-k^2)\ = \ (\Q^2+k^2)^{-1}\cF_{m,\kappa}^{\mp \t}\\
&-\frac{\e^{\mp \i \frac{\pi}{2}m}}{\Big(1-\varsigma
\e^{\mp\i \pi m}\big(\tfrac{\Q}{2}\big)^{2m}\Big)\Big(1-
\varsigma(\frac{k}{2})^{2m}\Big)}\frac{m\varsigma}{k^2}\Big(\big(\tfrac{k}{2}\big)^{2m}\cF_m -
\big(\tfrac{\Q}{2}\big)^{2m}\cF_{-m}\Big)P_m(-k^2).
\end{align*}

By taking the equality \eqref{beqo} into account, one can deduce
that for any $z>0$ and $m\in \C$ with $|\Re(m)|<1$ one has
\begin{equation}\label{eq_almost_done_3}
\int_0^\infty  \Ka_m(z^{-1}x) \Ja_m(x)\d x = z^{2m} \int_0^\infty \Ka_m(z^{-1}x) \Ja_{-m}(x)  \d x.
\end{equation}
We infer then from this equality that
\begin{equation}\label{eq_useful}
\Big(\frac{k}{2}\Big)^{2m}\cF_mP_m(-k^2)=
\Big(\frac{\Q}{2}\Big)^{2m}\cF_{-m}P_m(-k^2),
\end{equation}
and as a direct consequence,
\begin{equation}\label{eq_equality}
\cF_{m,\kappa}^{\mp \t} R_{m,\kappa}(-k^2)=(\Q^2+k^2)^{-1}  \cF_{m,\kappa}^{\mp \t}.
\end{equation}
This equality corresponds to one of the identities of our theorem, the proof of the
other identity being analogous.
\end{proof}

\subsection{Spectral projections}\label{sec_proj_2}

We first describe the spectral projections corresponding to
eigenvalues of $H_{m,\kappa}$.

\begin{proposition}\label{prop_aa}
For any $-k^2\in\sigma_\p(H_{m,\kappa})$ one has
\begin{equation*}
\one_{\{-k^2\}}(H_{m,\kappa})=P_m(-k^2).
\end{equation*}
\end{proposition}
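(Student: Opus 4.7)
The plan is to identify the Riesz projection with the negative residue of the resolvent at the eigenvalue and then extract that residue from the explicit rank-one perturbation formula \eqref{eq_dif_res_1bis}.

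Writing $z=-k^2$, we have $R_{m,\kappa}(z)=(H_{m,\kappa}-z)^{-1}$, so the Riesz projection is
\begin{equation*}
\one_{\{-k_0^2\}}(H_{m,\kappa})= -\frac{1}{2\pi\i}\oint_\Gamma R_{m,\kappa}(z)\,\d z
=-\mathop{\mathrm{Res}}_{z=-k_0^2} R_{m,\kappa}(z),
\end{equation*}
where $\Gamma$ is a small positively oriented circle around the isolated eigenvalue $-k_0^2$. Recall from \eqref{eq_dif_res_1bis} that
\begin{equation*}
R_{m,\kappa}(-k^2)= R_m(-k^2)-\frac{\varsigma(\tfrac{k}{2})^{2m}}{1-\varsigma(\tfrac{k}{2})^{2m}}\,\frac{m}{k^2}\,P_m(-k^2).
\end{equation*}
Since $\sigma(H_m)=[0,\infty[$, the first term $R_m(-k^2)$ is holomorphic on a neighborhood of $-k_0^2\in \C\setminus[0,\infty[$, so it contributes nothing to the residue. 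Similarly $P_m(-k^2)$ and $m/k^2$ are holomorphic at $k=k_0$. Hence the entire pole of $R_{m,\kappa}$ at $-k_0^2$ comes from the scalar factor $\big(1-\varsigma(\tfrac{k}{2})^{2m}\big)^{-1}$.

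The eigenvalue condition from Proposition~\ref{prop_spec} is exactly $\varsigma(k_0/2)^{2m}=1$. I would compute the derivative of $k\mapsto \varsigma(\tfrac{k}{2})^{2m}$ at $k=k_0$, which equals $2m\varsigma(\tfrac{k_0}{2})^{2m}/k_0 = 2m/k_0$, and then use the change of variables $z=-k^2$, for which $\d z/\d k\big|_{k_0}=-2k_0$. A short calculation gives
\begin{equation*}
\mathop{\mathrm{Res}}_{z=-k_0^2}\frac{1}{1-\varsigma(\tfrac{k}{2})^{2m}}
=\frac{1}{-\frac{\d}{\d z}\big(\varsigma(\tfrac{k}{2})^{2m}\big)\big|_{z=-k_0^2}}
=\frac{1}{-(2m/k_0)\cdot(-1/(2k_0))}=\frac{k_0^2}{m}.
\end{equation*}
Consequently,
\begin{equation*}
\mathop{\mathrm{Res}}_{z=-k_0^2}R_{m,\kappa}(z)
=-\frac{k_0^2}{m}\cdot\varsigma\big(\tfrac{k_0}{2}\big)^{2m}\cdot\frac{m}{k_0^2}\,P_m(-k_0^2)
=-P_m(-k_0^2),
\end{equation*}
where we used $\varsigma(k_0/2)^{2m}=1$ once more. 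Therefore $\one_{\{-k_0^2\}}(H_{m,\kappa})=P_m(-k_0^2)$, as claimed.

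The only substantive point requiring care is the bookkeeping of signs and the chain-rule factor in passing from the variable $k$ (in which the resolvent is most naturally written) to the spectral variable $z$; once these are tracked correctly, the statement reduces to the observation that the pole and its residue are entirely encoded in the single scalar prefactor multiplying $P_m(-k^2)$ in \eqref{eq_dif_res_1bis}. The fact that $P_m(-k^2)$ is itself a rank-one projection, already established earlier via the identities \eqref{proj1}--\eqref{proj2} together with \eqref{int3}--\eqref{int4}, then also confirms consistency: the Riesz projection onto a simple eigenvalue must be rank one, matching the rank of $P_m(-k_0^2)$.
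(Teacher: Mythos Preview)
Your proof is correct and follows essentially the same route as the paper's own argument: both compute the Riesz projection as the negative of the residue of the resolvent, invoke the rank-one perturbation formula \eqref{eq_dif_res_1bis}, discard the holomorphic contribution $R_m$, and extract the residue from the scalar prefactor using the eigenvalue condition $\varsigma(k_0/2)^{2m}=1$. Your version is slightly more explicit about the chain rule between the variables $k$ and $z=-k^2$, but the substance is identical.
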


\begin{proof}
Let $\gamma$ be a contour encircling $-k^2$ in the complex plane, with no other eigenvalue inside $\gamma$
and with no intersection with $[0,\infty[$.
We use \eqref{eq_dif_res_1bis}, and then compute the residue of
the resolvent at $z=-k^2$:
\begin{align*}
\one_{\{-k^2\}}(H_{m,\kappa})
& =-\frac{1}{2\pi \i}\int_\gamma R_{m,\kappa}(z)\d z\\
&=-\frac{1}{2\pi \i}\int_\gamma \frac{\varsigma\frac{(-z)^m}{2^{2m}}} {1-\varsigma\frac{(-z)^m}{2^{2m}}}\frac{m}{z}P_m(z) \\
&= - \frac{\varsigma\frac{(-z)^m}{2^{2m}}}{\frac{\d}{\d z}\big(1-\varsigma\frac{(-z)^m}{2^{2m}}\big)} \frac{m}{z}P_m(z)\Big|_{z=-k^2} \\
& =\ P_m(-k^2).
\qedhere
\end{align*}
\end{proof}

Let us now assume that $(m,\kappa)$ is not an exceptional pair.
As in Section \ref{sec_proj_1} we can consider for any $0<
  a<b<\infty $ the operator
\begin{equation}\label{eq_proj_m_kappa_1}
\one_{[a,b]}(H_{m,\kappa}) := 2\int_{\sqrt{a}}^{\sqrt{b}}p_{m,\kappa}(k^2) k\;\!\d k
\end{equation}
which is bounded from  $\langle\Q\rangle^{-s}L^2(\R_+)$
to $\langle\Q\rangle^{s}L^2(\R_+)$ for any $s>\frac{1}{2}$.
Its kernel is given for $x,y\in \R_+$ by the expression
\begin{align}\label{eq_proj_m_kappa_2}
& \one_{[a,b]}(H_{m,\kappa})(x,y) \\
&= 2\int_{\sqrt{a}}^{\sqrt{b}}\frac{\Big(\Ja_{m}(kx)- \varsigma\big(\tfrac{k}{2}\big)^{2m}\Ja_{-m}(kx)\Big)
\Big(\Ja_{m}(ky)- \varsigma\big(\tfrac{k}{2}\big)^{2m}\Ja_{-m}(ky)\Big)}{\pi
\Big( \sin^2(\pi m) + \big(\cos(\pi m)-\varsigma\big(\tfrac{k}{2}\big)^{2m}\big)^2\Big)}\;\!\d k.\nonumber
\end{align}

\begin{proposition}\label{prop_diag}
For any $0< a <b$, any $m\in \C^\times$ with $|\Re(m)|<1$, and any $\kappa\in \C\cup \{\infty\}$ with $(m,\kappa)$
not an exceptional pair one has
\begin{equation}\label{eq_diag_m_kappa}
\one_{[a,b]}(H_{m,\kappa})=\cF_{m,\kappa}^{\pm}\;\!\one_{[a,b]}(\Q^2)\;\!\cF_{m,\kappa}^{\mp \t}
\end{equation}
in $\B\big(L^2(\R_+)\big)$. In addition, $\one_{[a,b]}(H_{m,\kappa})$ is a projection.
\end{proposition}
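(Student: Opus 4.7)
The plan is to imitate the proof of Proposition \ref{priop} (the homogeneous case), substituting the explicit kernel of $\cF_{m,\kappa}^{\pm}$ for that of $\cF_m$, and then to derive the projection property directly from Lemma \ref{lem_proj_m_kappa}.

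First, I would fix $f\in C_{\rm c}(\R_+)$ and $x\in\R_+$, and compute
\begin{equation*}
\big(\cF_{m,\kappa}^{\pm}\one_{[a,b]}(\Q^2)\cF_{m,\kappa}^{\mp\t}f\big)(x)
=\int_{\sqrt a}^{\sqrt b}\cF_{m,\kappa}^{\pm}(x,k)\big(\cF_{m,\kappa}^{\mp\t}f\big)(k)\,\d k,
\end{equation*}
then write out $(\cF_{m,\kappa}^{\mp\t}f)(k)=\int_0^\infty\cF_{m,\kappa}^{\mp}(y,k)f(y)\,\d y$ and interchange the order of integration by Fubini's theorem (using the decay of $f$ and boundedness of the integrand on $[\sqrt a,\sqrt b]$). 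The heart of the computation is then the pointwise product of kernels. Using the definition of $\cF_{m,\kappa}^{\pm}$, the phases $\e^{\pm\i\pi m/2}$ cancel and the denominator factors combine as
\begin{equation*}
\bigl(1-\varsigma\e^{+\i\pi m}(\tfrac{k}{2})^{2m}\bigr)\bigl(1-\varsigma\e^{-\i\pi m}(\tfrac{k}{2})^{2m}\bigr)=\sin^2(\pi m)+\bigl(\cos(\pi m)-\varsigma(\tfrac{k}{2})^{2m}\bigr)^2,
\end{equation*}
which is exactly the denominator appearing in the spectral density $p_{m,\kappa}(k^2;x,y)$. The numerator is manifestly the required product of two factors of the form $\Ja_m(k\cdot)-\varsigma(k/2)^{2m}\Ja_{-m}(k\cdot)$. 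Comparing with \eqref{eq_proj_m_kappa_2} (and recalling the Jacobian $\d(k^2)=2k\,\d k$) identifies the kernel obtained with $\one_{[a,b]}(H_{m,\kappa})(x,y)$, which proves \eqref{eq_diag_m_kappa}. Since the right-hand side is manifestly bounded on $L^2(\R_+)$, this simultaneously gives the natural continuous extension of $\one_{[a,b]}(H_{m,\kappa})$ initially defined only between weighted spaces in \eqref{eq_proj_m_kappa_1}.

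For the projection property, I would square the right-hand side of \eqref{eq_diag_m_kappa}:
\begin{equation*}
\bigl(\cF_{m,\kappa}^{\pm}\one_{[a,b]}(\Q^2)\cF_{m,\kappa}^{\mp\t}\bigr)^2
=\cF_{m,\kappa}^{\pm}\one_{[a,b]}(\Q^2)\bigl(\cF_{m,\kappa}^{\mp\t}\cF_{m,\kappa}^{\pm}\bigr)\one_{[a,b]}(\Q^2)\cF_{m,\kappa}^{\mp\t}.
\end{equation*}
By Lemma \ref{lem_proj_m_kappa} (applied with the signs swapped, which is legitimate since the lemma encodes both sign choices), $\cF_{m,\kappa}^{\mp\t}\cF_{m,\kappa}^{\pm}=\one$, and since $\one_{[a,b]}(\Q^2)^2=\one_{[a,b]}(\Q^2)$, idempotence follows.

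I expect no serious obstacle here: once the formula \eqref{eq_proj_m_kappa_2} for the spectral projection and the definition of $\cF_{m,\kappa}^{\mp}$ are in hand, the proof is an algebraic verification. The only delicate point is the legitimacy of Fubini on a domain of noncompact support in one variable; this is handled by the fact that $f$ has compact support, $k$ ranges over the compact $[\sqrt a,\sqrt b]$, and the Bessel kernels $\Ja_{\pm m}(kx)$ are uniformly bounded on $[\sqrt a,\sqrt b]\times\supp f$ by \eqref{asym2} and the behavior at infinity. The hypothesis that $(m,\kappa)$ is non-exceptional is used precisely to ensure that the denominator $1-\varsigma\e^{\mp\i\pi m}(k/2)^{2m}$ does not vanish for $k\in[\sqrt a,\sqrt b]$, so that all intermediate expressions are well-defined and bounded.
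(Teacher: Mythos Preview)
Your proposal is correct and follows essentially the same approach as the paper's proof: compute the kernel of $\cF_{m,\kappa}^{\pm}\one_{[a,b]}(\Q^2)\cF_{m,\kappa}^{\mp\t}$ on $C_{\rm c}(\R_+)$ via Fubini, match it against the explicit formula \eqref{eq_proj_m_kappa_2}, and then invoke Lemma~\ref{lem_proj_m_kappa} for idempotence. Your added remarks on the denominator identity and the role of the non-exceptional hypothesis make explicit what the paper leaves implicit, but the argument is the same.
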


\begin{proof}
Let us recall that the l.h.s.~has been defined in \eqref{eq_proj_m_kappa_1}, and check that
the r.h.s.~corresponds to this expression. Indeed for any $f\in C_{\rm c}(\R_+)$
and $x\in \R_+$ one has
\begin{align*}
&\big(\cF_{m,\kappa}^{\pm}\;\!\one_{[a,b]}(\Q^2)\;\!\cF_{m,\kappa}^{\mp \t} f\big)(x) \\
& = \e^{\pm \i \frac{\pi}{2}m}\sqrt{\frac2\pi} \int_{\sqrt{a}}^{\sqrt{b}}
\frac{\Ja_{m}(xk)- \varsigma \big(\tfrac{k}{2}\big)^{2m} \Ja_{-m}(xk)}{
1-\varsigma \e^{\pm\i \pi m}\big(\tfrac{k}{2}\big)^{2m}}\;\!\big(\cF_{m,\kappa}^{\mp \t} f\big)(k)\;\!\d k \\
& = \frac{2}{\pi}\int_{\sqrt{a}}^{\sqrt{b}} \bigg(
\frac{\Ja_{m}(xk)- \varsigma \big(\tfrac{k}{2}\big)^{2m} \Ja_{-m}(xk)}{
1-\varsigma \e^{\pm\i \pi m}\big(\tfrac{k}{2}\big)^{2m}}\   \times \\
& \qquad \times \int_0^\infty
\frac{\Ja_{m}(ky)- \varsigma \big(\tfrac{k}{2}\big)^{2m} \Ja_{-m}(ky)}{
1-\varsigma \e^{\mp \i \pi m}\big(\tfrac{k}{2}\big)^{2m}}\;\!f(y)\;\!\d y\bigg)\d k \\
& = \int_0^\infty\one_{[a,b]}(H_{m,\kappa})(x,y)\;\! f(y)\;\!\d y,
\end{align*}
where the kernel $\one_{[a,b]}(H_{m,\kappa})(x,y)$ has been defined in \eqref{eq_proj_m_kappa_2}.
Note that since the r.h.s.~of \eqref{eq_diag_m_kappa} defines a bounded operator on $L^2(\R_+)$, this
equality provides a natural continuous extension of $\one_{[a,b]}(H_{m,\kappa})$ as a bounded operator on $L^2(\R_+)$.
Finally, by Lemma \ref{lem_proj_m_kappa} one readily infers that $\one_{[a,b]}(H_{m,\kappa})$ is a projection.
\end{proof}

Note that the equality
\begin{equation}\label{eq_diag_m_kappa1}
\one_{\Xi}(H_{m,\kappa})=\cF_{m,\kappa}^{\pm}\;\!\one_{\Xi}(\Q^2)\;\!\cF_{m,\kappa}^{\mp \t}
\end{equation}
extends \eqref{eq_diag_m_kappa} to any Borel subset $\Xi$ of $\R_+$.
In particular, $\one_{\R_+}(H_{m,\kappa})$ obtained with this definition corresponds to
the one already introduced in \eqref{def_1}.

\subsection{M{\o}ller  operators and scattering operator}

In this section we extend the results obtained for the M{\o}ller
and the scattering operators
to the larger family of operators $H_{m,\kappa}$.
For that purpose, we consider pairs $(m,\kappa)$ and $(m',\kappa')$ which are not exceptional.
The easiest way to introduce the wave operators for
the pair $(H_{m,\kappa},H_{m',\kappa'})$ is to define them using the Hankel
transformations:
\begin{equation}
W_{m,\kappa;m',\kappa'}^\pm=  \cF_{m,\kappa}^{\pm}\;\! \cF_{m',\kappa'}^{\mp \t}.
\label{moller1}
\end{equation}

These definitions immediately imply the following identities:

\begin{proposition}
The following identities hold:
\begin{align}
\label{inve_3} W_{m,\kappa;m',\kappa'}^{\mp \t} \;\!W_{m,\kappa;m',\kappa'}^\pm & = \one_{\R_+}(H_{m',\kappa'}), \\
\nonumber W_{m,\kappa;m',\kappa'}^\pm \;\!W_{m,\kappa;m',\kappa'}^{\mp \t}&= \one_{\R_+}(H_{m,\kappa}), \\
\nonumber W_{m,\kappa;m',\kappa'}^{\pm\t} & = W_{m',\kappa';m,\kappa}^{\mp},\\
\nonumber W_{m,\kappa;m',\kappa'}^\pm H_{m',\kappa'} & = H_{m,\kappa}W_{m,\kappa;m',\kappa'}^\pm .
\end{align}
\end{proposition}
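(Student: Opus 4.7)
The plan is to reduce each of the four identities to three ingredients already at our disposal: the definition \eqref{moller1} of $W_{m,\kappa;m',\kappa'}^\pm$, Lemma \ref{lem_proj_m_kappa} which asserts $\cF_{m,\kappa}^{\pm\t}\cF_{m,\kappa}^{\mp}=\one$ (and, by transposing, $\cF_{m,\kappa}^{\mp\t}\cF_{m,\kappa}^{\pm}=\one$), and Proposition \ref{prop_inter} for the intertwining with the resolvents. Throughout, we use $(AB)^\t=B^\t A^\t$ and $\cF_{m,\kappa}^{\mp\t\t}=\cF_{m,\kappa}^{\mp}$.

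For the third identity I would simply compute
\[
W_{m,\kappa;m',\kappa'}^{\pm\t}=\bigl(\cF_{m,\kappa}^{\pm}\cF_{m',\kappa'}^{\mp\t}\bigr)^{\t}=\cF_{m',\kappa'}^{\mp}\cF_{m,\kappa}^{\pm\t}=W_{m',\kappa';m,\kappa}^{\mp},
\]
which in passing yields a convenient expression for use below. For the first identity, I would then expand
\[
W_{m,\kappa;m',\kappa'}^{\mp\t}\,W_{m,\kappa;m',\kappa'}^{\pm}=\cF_{m',\kappa'}^{\pm}\cF_{m,\kappa}^{\mp\t}\,\cF_{m,\kappa}^{\pm}\cF_{m',\kappa'}^{\mp\t}
\]
and cancel the inner pair via Lemma \ref{lem_proj_m_kappa} (applied with the signs $\pm\leftrightarrow\mp$), leaving $\cF_{m',\kappa'}^{\pm}\cF_{m',\kappa'}^{\mp\t}$, which equals $\one_{\R_+}(H_{m',\kappa'})$ by definition \eqref{def_1}. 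The second identity is entirely symmetric: the inner pair $\cF_{m',\kappa'}^{\mp\t}\cF_{m',\kappa'}^{\pm}$ collapses to $\one$, leaving $\cF_{m,\kappa}^{\pm}\cF_{m,\kappa}^{\mp\t}=\one_{\R_+}(H_{m,\kappa})$.

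For the intertwining with the Hamiltonians, I would first upgrade Proposition \ref{prop_inter} to a resolvent-level intertwining free of the projection. Starting from
\[
R_{m,\kappa}(-k^2)\cF_{m,\kappa}^{\pm}\cF_{m,\kappa}^{\mp\t}=\cF_{m,\kappa}^{\pm}(\Q^2+k^2)^{-1}\cF_{m,\kappa}^{\mp\t},
\]
I would multiply on the right by $\cF_{m,\kappa}^{\pm}$ and invoke $\cF_{m,\kappa}^{\mp\t}\cF_{m,\kappa}^{\pm}=\one$ (the transpose of Lemma \ref{lem_proj_m_kappa}) to conclude
\[
R_{m,\kappa}(-k^2)\cF_{m,\kappa}^{\pm}=\cF_{m,\kappa}^{\pm}(\Q^2+k^2)^{-1}.
\]
Rewriting this as $(H_{m,\kappa}+k^2)\cF_{m,\kappa}^{\pm}f=\cF_{m,\kappa}^{\pm}(\Q^2+k^2)f$ for $f\in\Dom(\Q^2)$ gives $H_{m,\kappa}\cF_{m,\kappa}^{\pm}=\cF_{m,\kappa}^{\pm}\Q^2$ on $\Dom(\Q^2)$, and by the same manipulation starting from the other half of Proposition \ref{prop_inter} one obtains $\cF_{m',\kappa'}^{\mp\t}H_{m',\kappa'}=\Q^2\cF_{m',\kappa'}^{\mp\t}$ on $\Dom(H_{m',\kappa'})$. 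Sandwiching these two intertwiners yields
\[
W_{m,\kappa;m',\kappa'}^\pm H_{m',\kappa'}=\cF_{m,\kappa}^{\pm}\cF_{m',\kappa'}^{\mp\t}H_{m',\kappa'}=\cF_{m,\kappa}^{\pm}\Q^2\cF_{m',\kappa'}^{\mp\t}=H_{m,\kappa}W_{m,\kappa;m',\kappa'}^\pm
\]
on $\Dom(H_{m',\kappa'})$, provided one checks that $\cF_{m',\kappa'}^{\mp\t}$ maps this domain into $\Dom(\Q^2)$.

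The only genuine obstacle is this last domain verification: the chain of intertwiners is essentially algebraic, but it must be ensured that at each stage the vectors lie in the correct domain so that the unbounded operators $\Q^2$ and $H_{m,\kappa}$ can be applied. I expect to handle this by noting that $\cF_{m',\kappa'}^{\mp\t}$ intertwines resolvents of $H_{m',\kappa'}$ and $\Q^2$, so it maps $\Dom(H_{m',\kappa'})$ into $\Dom(\Q^2)$, after which the computation above is justified. All other steps are short algebraic manipulations with the previously proven identities.
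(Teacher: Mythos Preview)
Your proposal is correct and matches the paper's approach: the paper offers no proof, stating only that ``these definitions immediately imply the following identities,'' and your argument spells out precisely the algebraic verifications the paper leaves implicit. Your treatment of the intertwining relation via Proposition~\ref{prop_inter} (in fact via the stronger identity \eqref{eq_equality} established in its proof) is the natural route, and your domain check---that the resolvent intertwining forces $\cF_{m',\kappa'}^{\mp\t}$ to map $\Dom(H_{m',\kappa'})$ into $\Dom(\Q^2)$---is the right way to close the argument.
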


By \eqref{inve_3},
$W_{m,\kappa;m',\kappa'}^{-\t}$ can be treated as an inverse of
$W_{m,\kappa;m',\kappa'}^+$. Therefore, we define
the scattering operator as
\begin{equation*}
S_{m,\kappa;m',\kappa'}:=W_{m,\kappa;m',\kappa'}^{-\t}W_{m,\kappa;m',\kappa'}^-.
\end{equation*}
Clearly, the scattering operator can be expressed in terms of the
Hankel transform:
\begin{equation*}
S_{m,\kappa;m',\kappa'}=\cF_{m',\kappa'}^{+}\;\! \cF_{m,\kappa}^{- \t}\;\!\cF_{m,\kappa}^{-} \;\!\cF_{m',\kappa'}^{+\t}.
\end{equation*}

In order to  analyze the scattering operator it is convenient to
introduce the operators
\begin{equation}\label{reso3}
\cG_{m,\kappa}^\mp:=\cF_{m,\kappa}^{\mp\t} \cF_{m,\kappa}^{\mp}.
\end{equation}

\begin{proposition}
The following equalities hold:
\begin{equation*}
\cG_{m,\kappa}^\mp
= \e^{\mp \i \pi m}\;\!\frac{\one-\varsigma\e^{\pm \i\pi m}(\frac{\Q}{2})^{2m}}{\one-\varsigma\e^{\mp \i\pi m}(\frac{\Q}{2})^{2m}}.
\end{equation*}
\end{proposition}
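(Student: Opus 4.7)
The plan is to start from the explicit formulas \eqref{eq_a_gauche} and \eqref{eq_a_droite}, use that $\Ju$ is involutive, and reduce everything to an identity for functions of the dilation generator $A$. Choosing for $\cF_{m,\kappa}^\mp$ the form with $\Ju$ on the left and for $\cF_{m,\kappa}^{\mp\t}$ the form with $\Ju$ on the right, the product $\cF_{m,\kappa}^{\mp\t}\cF_{m,\kappa}^\mp$ produces two factors $\Ju$ that meet in the middle and cancel. This yields
\begin{equation*}
\cG_{m,\kappa}^\mp=\frac{\e^{\mp\i\pi m}}{\bigl(1-\varsigma \e^{\mp\i\pi m}(\Q/2)^{2m}\bigr)^{2}}\,M,
\end{equation*}
where $M$ is the product of the two middle factors involving $\Xi_{\pm m}(\mp A)$ and $(\Q/2)^{2m}$.

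Next, I would expand $M$ into four terms. Two of them are diagonal: from the definition \eqref{eq_def_Xi} one has $\Xi_m(-A)\Xi_m(A)=\one$ and $\Xi_{-m}(-A)\Xi_{-m}(A)=\one$, so these two terms contribute $\one$ and $\varsigma^{2}(\Q/2)^{4m}$ respectively. The two cross terms combine as
\begin{equation*}
-\varsigma\bigl[\Xi_m(-A)\Xi_{-m}(A)(\Q/2)^{2m}+(\Q/2)^{2m}\Xi_{-m}(-A)\Xi_m(A)\bigr].
\end{equation*}
To handle the second summand I would apply the commutation rule $(\Q/2)^{2m}\psi(A)=\psi(A+\i 2m)(\Q/2)^{2m}$, which is obtained by exactly the same short direct computation as in the proof of Lemma \ref{lem_proj_m_kappa}. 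Moving $(\Q/2)^{2m}$ to the right then reduces the bracket to a scalar multiple of $(\Q/2)^{2m}$.

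The key arithmetic step is to verify
\begin{equation*}
\Xi_m(-t)\Xi_{-m}(t)+\Xi_{-m}(-t-\i 2m)\Xi_m(t+\i 2m)=2\cos(\pi m),
\end{equation*}
which is the point where the Gamma function enters. Substituting the definition of $\Xi_{\pm m}$, the factors $\e^{\i\ln(2)\cdot}$ cancel, and applying the reflection formula $\Gamma(z)\Gamma(1-z)=\pi/\sin(\pi z)$ with $\sin\bigl(\pi\tfrac{\pm m+1\mp\i t}{2}\bigr)=\cos\bigl(\pi\tfrac{\pm m\mp\i t}{2}\bigr)$ converts the sum into
\begin{equation*}
\frac{\cos\bigl(\pi\tfrac{m+\i t}{2}\bigr)+\cos\bigl(\pi\tfrac{3m-\i t}{2}\bigr)}{\cos\bigl(\pi\tfrac{m-\i t}{2}\bigr)},
\end{equation*}
and the sum-to-product identity $\cos A+\cos B=2\cos\bigl(\tfrac{A+B}{2}\bigr)\cos\bigl(\tfrac{A-B}{2}\bigr)$ with $A=\pi(m+\i t)/2$, $B=\pi(3m-\i t)/2$ collapses the numerator to $2\cos(\pi m)\cos\bigl(\pi\tfrac{m-\i t}{2}\bigr)$, giving the claimed value. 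This is essentially the same identity used in Lemma \ref{lem_proj_m_kappa}, and is the main (but not difficult) obstacle.

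Finally, assembling the four terms gives $M=\one-2\varsigma\cos(\pi m)(\Q/2)^{2m}+\varsigma^{2}(\Q/2)^{4m}$, which factors as
\begin{equation*}
M=\bigl(\one-\varsigma \e^{\i\pi m}(\Q/2)^{2m}\bigr)\bigl(\one-\varsigma \e^{-\i\pi m}(\Q/2)^{2m}\bigr).
\end{equation*}
Inserting this into the expression for $\cG_{m,\kappa}^\mp$, one factor in the numerator cancels against one factor of the squared denominator, leaving precisely
\begin{equation*}
\cG_{m,\kappa}^\mp=\e^{\mp\i\pi m}\,\frac{\one-\varsigma\e^{\pm\i\pi m}(\Q/2)^{2m}}{\one-\varsigma\e^{\mp\i\pi m}(\Q/2)^{2m}},
\end{equation*}
as required.
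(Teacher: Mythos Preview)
Your proof is correct and follows exactly the route the paper indicates: the paper's own proof is a single sentence pointing to \eqref{eq_a_gauche}, \eqref{eq_a_droite} and the identity \eqref{eq_to_prove}, and you have spelled out precisely that computation, including a re-derivation of \eqref{eq_to_prove} via the reflection formula. One cosmetic remark: when you write $\cG_{m,\kappa}^\mp=\e^{\mp\i\pi m}\bigl(1-\varsigma\e^{\mp\i\pi m}(\Q/2)^{2m}\bigr)^{-2}M$ you are tacitly commuting the right-hand denominator past $M$ before knowing $M$ is a function of $\Q$; it would be cleaner to keep the sandwich form $D^{-1}MD^{-1}$ until after you have shown $M$ factors as a function of $\Q$, but this does not affect the validity of the argument.
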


\begin{proof}
The proof consists in a simple computation, starting from the expressions \eqref{eq_a_gauche} and \eqref{eq_a_droite}
and taking the equality \eqref{eq_to_prove} into account.
\end{proof}

Let us stress that $\cG_{m,\kappa}^\mp$ are simply functions of $\Q$.
Finally, by using the Hankel transformations, one can obtain a diagonal
representation of scattering operators. These operators are expressed in terms
of the operators \eqref{reso3}, namely
\begin{equation*}
\cF_{m',\kappa'}^{- \t} \;\!S_{m,\kappa;m',\kappa'} \;\!\cF_{m',\kappa'}^{+}
= \cG_{m,\kappa}^-\cG_{m',\kappa'}^+
= \cF_{m',\kappa'}^{+\t} \;\!S_{m,\kappa;m',\kappa'}\;\!\cF_{m',\kappa'}^{-}.
\end{equation*}

In the non-exceptional case, the operator $H_{m,\kappa}$ generates a bounded
one-parameter group, at least on the range of the projection
$\one_{\R_+}(H_{m,\kappa})$, by the formula
\begin{equation}\label{def_eq_group}
\e^{\i tH_{m,\kappa}}\one_{\R_+}(H_{m,\kappa})=\one_{\R_+}(H_{m,\kappa})\e^{\i tH_{m,\kappa}}
=\cF_{m,\kappa}^{\pm}\e^{\i t\Q^2}\cF_{m,\kappa}^{\mp \t}.
\end{equation}
In this context we can then show that
$W_{m,\kappa;m',\kappa'}^\pm$ correspond to the M{\o}ller operators
as usually defined in the time-dependent scattering theory.

\begin{proposition}\label{prop_wave1}
For any $m,m'\in \C$ with $|\Re(m)|<1$ and $|\Re(m')|<1$, and for any $\kappa,\kappa'\in \C\cup \{\infty\}$
such that $(m,\kappa)$ and $(m',\kappa')$ are not exceptional pairs,
the M{\o}ller  operators exist and coincide with the operators
defined in \eqref{moller1}:
\begin{equation*}
\slim_{t\to \pm \infty}
\one_{\R_+}(H_{m,\kappa})\e^{\i tH_{m,\kappa}}\e^{-\i tH_{m',\kappa'}}\one_{\R_+}(H_{m',\kappa'})
= W_{m,\kappa;m',\kappa'}^\pm.
\end{equation*}
\end{proposition}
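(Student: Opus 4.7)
The plan is to adapt the argument of Theorem~\ref{corol_existence} by first reducing the claim to a strong-limit computation on the ``middle'' factor of a conjugated product. Applying \eqref{def_eq_group} with a single choice of sign $\pm$ on both one-parameter groups (since the right-hand side of \eqref{def_eq_group} is independent of this choice), one writes
\begin{equation*}
\one_{\R_+}(H_{m,\kappa})\;\!\e^{\i tH_{m,\kappa}}\e^{-\i tH_{m',\kappa'}}\;\!\one_{\R_+}(H_{m',\kappa'})
= \cF_{m,\kappa}^{\pm}\;\!\e^{\i t\Q^2}\;\!\cF_{m,\kappa}^{\mp \t}\;\!\cF_{m',\kappa'}^{\pm}\;\!\e^{-\i t\Q^2}\;\!\cF_{m',\kappa'}^{\mp \t}.
\end{equation*}
Since the outer $\cF^{\pm}$ factors are bounded, the whole proposition reduces to the identity
\begin{equation*}
\slim_{t\to\pm\infty} \e^{\i t\Q^2}\;\!\cF_{m,\kappa}^{\mp \t}\;\!\cF_{m',\kappa'}^{\pm}\;\!\e^{-\i t\Q^2}=\one,
\end{equation*}
from which the identification with $W_{m,\kappa;m',\kappa'}^{\pm}=\cF_{m,\kappa}^{\pm}\cF_{m',\kappa'}^{\mp \t}$ follows immediately.

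To compute this limit, I expand using \eqref{eq_m_kappa_pm} and its transpose (recall $\cF_m^\t=\cF_m$):
\begin{equation*}
\cF_{m,\kappa}^{\mp \t}\cF_{m',\kappa'}^{\pm}
= h^\mp_m(\Q)\,\bigl(\cF_m - \varsigma (\Q/2)^{2m}\cF_{-m}\bigr)\bigl(\cF_{m'} - \varsigma' \cF_{-m'}(\Q/2)^{2m'}\bigr)\, h^\pm_{m'}(\Q),
\end{equation*}
with $h^\mp_m(\Q):=\frac{\e^{\mp\i\pi m/2}}{1-\varsigma\e^{\mp\i\pi m}(\Q/2)^{2m}}$ and $h^\pm_{m'}(\Q)$ defined analogously. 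These envelopes are bounded multiplication operators on $L^2(\R_+)$ precisely because the pairs are non-exceptional, and they commute with $\e^{\pm\i t\Q^2}$, so the conjugation only acts on the middle bracket. Expanding that bracket produces four summands; to legitimately take strong limits of each, I pair every unbounded factor $(\Q/2)^{2m}$ or $(\Q/2)^{2m'}$ with an adjacent envelope, obtaining bounded combinations such as $(\Q/2)^{2m}h_m^\mp(\Q)$. Each summand then reads $(\text{bounded function of }\Q)\cdot\cF_{\epsilon m}\cF_{\epsilon' m'}\cdot(\text{bounded function of }\Q)$.

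By Proposition~\ref{prop_sur_A} each product $\cF_{\epsilon m}\cF_{\epsilon' m'}=\Xi_{\epsilon m}(-A)\Xi_{\epsilon' m'}(A)$ is a bounded function of $A$, and the $\Gamma$-function asymptotics used in Theorem~\ref{corol_existence} show that it extends continuously to $[-\infty,\infty]$. Lemma~\ref{lem_limit_A} (already invoked for the homogeneous case) therefore gives
\begin{equation*}
\slim_{t\to\pm\infty}\e^{\i t\Q^2}\cF_{\epsilon m}\cF_{\epsilon' m'}\e^{-\i t\Q^2}=\e^{\pm\i\pi(\epsilon m-\epsilon' m')/2}
\qquad\text{for}\ \epsilon,\epsilon'\in\{+,-\}.
\end{equation*}
Summing the four contributions and reassembling the powers of $(\Q/2)$, the middle bracket converges strongly to the factorized expression
\begin{equation*}
\bigl(\e^{\pm\i\pi m/2}-\varsigma\e^{\mp\i\pi m/2}(\Q/2)^{2m}\bigr)\bigl(\e^{\mp\i\pi m'/2}-\varsigma'\e^{\pm\i\pi m'/2}(\Q/2)^{2m'}\bigr).
\end{equation*}
The decisive step is the elementary algebraic cancellation
\begin{equation*}
h^\mp_m(\Q)\bigl(\e^{\pm\i\pi m/2}-\varsigma\e^{\mp\i\pi m/2}(\Q/2)^{2m}\bigr)=\one
\end{equation*}
and its analogue for $h^\pm_{m'}(\Q)$ on the right, both of which are immediate from the definitions of the envelopes. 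Multiplying through yields $\one$, establishing the reduction.

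The main obstacle is not the asymptotic analysis, which is borrowed directly from Theorem~\ref{corol_existence}, but the boundedness bookkeeping: individual operators such as $(\Q/2)^{2m}\cF_{-m}$ are unbounded, so strong limits cannot be taken blindly in the naive expansion of the middle bracket. One must first absorb each unbounded power into an adjacent envelope $h^{\pm}$, a step which is valid exactly because the non-exceptional hypothesis keeps every combined factor bounded.
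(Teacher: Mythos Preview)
Your proof is correct and follows essentially the same route as the paper: reduce via \eqref{def_eq_group} to the strong limit of $\e^{\i t\Q^2}\cF_{m,\kappa}^{\mp\t}\cF_{m',\kappa'}^{\pm}\e^{-\i t\Q^2}$, expand using \eqref{eq_m_kappa_pm} (equivalently \eqref{eq_a_gauche}--\eqref{eq_a_droite}), apply Lemma~\ref{lem_limit_A} to the $\Xi_{\epsilon m}(-A)\Xi_{\epsilon' m'}(A)$ factors, and observe the algebraic cancellation against the envelopes. Your explicit bookkeeping---absorbing each unbounded $(\Q/2)^{2m}$ into an adjacent envelope before taking strong limits term by term---is in fact more careful than the paper's own presentation, which applies Lemma~\ref{lem_limit_A} to the full middle bracket without isolating bounded summands.
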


\begin{proof}
By \eqref{def_eq_group} we have
\begin{equation*}
\one_{\R_+}(H_{m,\kappa}) \e^{\i tH_{m,\kappa}}\e^{-\i tH_{m',\kappa'}}\one_{\R_+}(H_{m',\kappa})
=  \cF_{m,\kappa}^{\pm}\e^{\i t\Q^2}\cF_{m,\kappa}^{\mp \t}\cF_{m',\kappa'}^{\pm}\e^{-\i t\Q^2}\cF_{m',\kappa'}^{\mp \t}.
\end{equation*}
Let us then observe that
\begin{align}\label{eq_product}
& \cF_{m,\kappa}^{\mp \t} \cF_{m',\kappa'}^{\pm} \\
\nonumber & = \frac{\e^{\mp \i \frac{\pi}{2}m}}{1-\varsigma \e^{\mp \i \pi m} \big(\tfrac{\Q}{2}\big)^{2m}}
\Big(\Xi_m(-A)-\varsigma \big(\tfrac{\Q}{2}\big)^{2m}\Xi_{-m}(-A)  \Big) \times \\
\nonumber & \quad \times
\Big(\Xi_{m'}(A)-\varsigma'\Xi_{-m'}(A) \big(\tfrac{\Q}{2}\big)^{2m'} \Big)
\frac{\e^{\pm \i \frac{\pi}{2}m'}}{1-\varsigma' \e^{\pm \i \pi m'} \big(\tfrac{\Q}{2}\big)^{2m'}}.
\end{align}
By using \eqref{eq-eq} and Lemma \ref{lem_limit_A} one infers that
\begin{align*}
&\slim_{t\to \pm \infty} \e^{\i t\Q^2}
\Big(\Xi_m(-A)-\varsigma \big(\tfrac{\Q}{2}\big)^{2m}\Xi_{-m}(-A)  \Big) \times \\
& \qquad \qquad \times \Big(\Xi_{m'}(A)-\varsigma'\Xi_{-m'}(A) \big(\tfrac{\Q}{2}\big)^{2m'} \Big) \e^{-\i t \Q^2} \\
&= \Big(\e^{\pm \i \frac{\pi}{2}m}-\varsigma \big(\tfrac{\Q}{2}\big)^{2m}\e^{\mp \i \frac{\pi}{2}m}  \Big)
\Big(\e^{\mp \i \frac{\pi}{2}m'}-\varsigma'\e^{\pm \i \frac{\pi}{2}m'} \big(\tfrac{\Q}{2}\big)^{2m'} \Big) \\
&= \e^{\pm \i \frac{\pi}{2}(m-m')} \Big(1-\varsigma \e^{\mp \i \pi m} \big(\tfrac{\Q}{2}\big)^{2m}  \Big)
\Big(1-\varsigma' \e^{\pm \i \pi m'} \big(\tfrac{\Q}{2}\big)^{2m'} \Big).
\end{align*}
This together with \eqref{eq_product} yields
\begin{equation}\label{eq_exis}
\slim_{t\to \pm \infty} \e^{\i t\Q^2}  \cF_{m,\kappa}^{\mp \t} \cF_{m',\kappa'}^{\pm}\e^{-\i t \Q^2} =\one
\end{equation}
which directly implies the statement.
\end{proof}

\section{Continuous spectrum of $H_0^\nu$}
\setcounter{equation}{0}

In this section we mimic the computations and results of the previous
section analyzing  the family of operators $H_0^\nu$.

\subsection{Resolvent}

From now on we consider $\nu\in \C\cup\{\infty\}$ and $x\in\R_+$.
Let us also fix $k\in \C$ with $\Re(k)>0$, and $-k^2\not\in \sigma(H_0^\nu)$.
Our first aim is to compute the integral kernel of the resolvent of $H_0^\nu$
\begin{equation*}
R_{0}^\nu(-k^2):= (H_0^\nu+k^2)^{-1}.
\end{equation*}

For that purpose, recall first that the map $x\mapsto \Ka_0(kx)$ satisfies \eqref{eq_kernel} and belongs to $L^2(\R_+)$.
We then consider the map
\begin{equation*}
x\mapsto \Ia_0(kx)+\frac{\pi}{2(\gamma+\ln(\tfrac{k}{2})-\nu)} \Ka_0(kx),
\end{equation*}
and infer from \eqref{eq_Im_asym} and \eqref{eq:besselfunc2_1}
that this map and the map $x\mapsto \Ia_0(kx)$
satisfy the equation \eqref{eq_kernel} as well as the equation \eqref{eq_do3} or \eqref{eq_do4} around $0$.
In addition, their Wronskian is equal to $k$.

From the general theory of Sturm-Liouville operators,
as recalled for example in \cite[Prop.~A.1]{BDG},
one deduces that the kernel of $R_0^\nu(-k^2)$ is given by the expression
\begin{equation*}
R_0^\nu(-k^2;x,y)= \frac{1}k
\left\{\begin{matrix}
\Big(\Ia_0(kx)+\frac{\pi}{2\big(\gamma +\ln(\tfrac{k}{2})-\nu\big)} \Ka_0(kx)\Big)\;\! \Ka_0(ky) & \hbox{ if } 0 < x < y, \\
\Big(\Ia_0(ky)+\frac{\pi}{2\big(\gamma+\ln(\tfrac{k}{2})-\nu\big)} \Ka_0(ky)\Big)\;\! \Ka_0(kx) & \hbox{ if } 0 < y < x.
\end{matrix}\right.
\end{equation*}

Let us also observe that the following relation holds:
\begin{equation}\label{eq_resolv_0}
R_0^{\nu}(-k^2)= R_0(-k^2)+\frac{1}{2k^2(\gamma + \ln(\frac{k}{2})-\nu)} P_0(-k^2),
\end{equation}
where $P_0(-k^2)$ is the projection defined in \eqref{proj2}.
Hence $R_0^{\nu}(-k^2)$ is well-defined except for
$\nu =\gamma + \ln(\frac{k}{2})$. This restriction
corresponds to the eigenvalue of $H_0^\nu$, as already mentioned in Proposition \ref{prop_spec}.
Note also that since $R_{0}^\nu(-k^2)$ is a rank one perturbation of $R_{0}(-k^2)$, one again infers
that $[0,\infty[$ belongs to the spectrum of $H_{0}^\nu$.
This justifies {\it a posteriori} our choice for $\Re(k)>0$.

\subsection{Boundary value of the resolvent and spectral density}\label{sec_spec_2-}

Since $[0,\infty[$ belongs to the spectrum of $H_0^\nu$
it is natural to mimic the computations performed in Section \ref{sec_spec_1}.
Note that it will be convenient to use the function $\Ya_0$, as introduced in Section \ref{sec_Neumann}.
Note also that since the special case $\nu=\infty$ has already been treated in Section
\ref{sec_homogeneous}, when considering the operator $H_0$, we shall
not consider it again.

\begin{proposition}
Let $\nu \in \C$, and let $k>0$.
\begin{enumerate}
\item[(i)] If $\nu$ is not an exceptional value, then the boundary values of the resolvent
\begin{equation*}
R_0^\nu(k^2\pm \i 0):=\lim_{\epsilon \searrow 0}  R_{0}^\nu(k^2\pm \i \epsilon)
\end{equation*}
exist in the sense of operators from  $\langle\Q\rangle^{-s}L^2(\R_+)$
to $\langle\Q\rangle^{s}L^2(\R_+)$ for any $s>\frac{1}{2}$,
uniformly in $k$ on each compact subset of $\R_+$.
The kernel of $R_{0}^\nu(k^2\pm \i 0)$ is given for $0<x\leq y$ by
\begin{align*}
& R_{0}^\nu(k^2\pm\i0; x,y) \\
& = \frac{\pm \i}{k\big(\gamma +\ln\big(\frac{k}{2}\big)-\nu\mp \i \frac{\pi}{2}\big)}
\Big(\big(\gamma+\ln\big(\tfrac{k}{2}\big)- \nu \big)\Ja_0(kx)- \tfrac{\pi}{2} \Ya_0(kx)\Big)
\Ha_0^\pm(ky).
\end{align*}
and the same expression with the role of $x$ and $y$ exchanged for $0<y<x$.
\item[(ii)] If $\nu$ is an exceptional value, then the same statement holds
for $k$ uniformly on each compact subset of $\R_+\setminus \{2\e^{\Re(\nu)-\gamma}\}$.
\end{enumerate}
\end{proposition}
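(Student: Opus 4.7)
The plan is to base everything on the rank-one decomposition \eqref{eq_resolv_0},
\begin{equation*}
R_0^\nu(-k^2)=R_0(-k^2)+\frac{1}{2k^2\big(\gamma+\ln(k/2)-\nu\big)}P_0(-k^2),
\end{equation*}
together with the limiting absorption principle for $R_0$ already contained in Theorem~\ref{thm_boundary} with $m=0$. The latter provides $R_0(k^2\pm i0)$ as a bounded operator between the weighted spaces, with kernel $\pm\tfrac{i}{k}\Ja_0(kx)\Ha_0^\pm(ky)$ for $0<x\leq y$. The genuinely new piece of analysis is therefore the boundary value of the rank-one correction and the subsequent recombination of the two contributions into the announced kernel.

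To pass to the limit in the rank-one summand I would set $k_\epsilon:=\sqrt{-k^2\mp i\epsilon}$ (principal branch), so that $\Re(k_\epsilon)>0$ for $\epsilon>0$ and $k_\epsilon\to\mp ik$ as $\epsilon\searrow 0$. The scalar prefactor is analytic in $k_\epsilon$ and, using $k_\epsilon^2\to -k^2$ together with $\ln(k_\epsilon/2)\to\ln(k/2)\mp i\pi/2$, its limit is $-\tfrac{1}{2k^2(\gamma+\ln(k/2)-\nu\mp i\pi/2)}$. For the kernel, the standard identity $K_0(\mp iz)=\pm\tfrac{i\pi}{2}H_0^\pm(z)$ valid for $z>0$ becomes, in the dimension-$1$ normalisation, $\Ka_0(\mp ikr)=\pm i e^{\mp i\pi/4}\Ha_0^\pm(kr)$, from which $P_0(-k_\epsilon^2;x,y)\to\pi k\Ha_0^\pm(kx)\Ha_0^\pm(ky)$. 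Adding the contribution of $R_0(k^2\pm i0)$, expanding the $\Ha_0^\pm(kx)$ factor via $\Ha_0^\pm=\Ja_0\pm i\Ya_0$, and putting everything over the common denominator $k(\gamma+\ln(k/2)-\nu\mp i\pi/2)$, one finds that the two $\Ja_0(kx)$ contributions combine: the $\pm i\Ja_0(kx)\cdot(\mp i\pi/2)$ piece generated by the denominator exactly cancels the $-\tfrac{\pi}{2}\Ja_0(kx)$ coming from the real part of $\Ha_0^\pm(kx)$, leaving the numerator $\pm i\bigl((\gamma+\ln(k/2)-\nu)\Ja_0(kx)-\tfrac{\pi}{2}\Ya_0(kx)\bigr)\Ha_0^\pm(ky)$, which is exactly the kernel of the statement.

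The convergence in Hilbert--Schmidt norm from $\langle\Q\rangle^{-s}L^2(\R_+)$ to $\langle\Q\rangle^sL^2(\R_+)$ for $s>\tfrac12$, uniform in $k$ on compact subsets, is then obtained by dominated convergence, exactly as in the proof of Theorem~\ref{thm_boundary}: the bound \eqref{pwe2} controls the $R_0$ summand, the estimate \eqref{eq_K_0} dominates $\Ka_0(k_\epsilon\cdot)$ uniformly for $k_\epsilon$ in a compact set of the right half-plane, and the scalar prefactor is locally bounded away from any zero of its limit. For part~(ii), the only obstruction is the vanishing of $\gamma+\ln(k/2)-\nu\mp i\pi/2$ at some real positive $k$, which occurs precisely when $\Im(\nu)=\pm\pi/2$ (the exceptional condition~\eqref{excep_nu}) and only at the single point $k=2e^{\Re(\nu)-\gamma}$; excluding a compact neighbourhood of this point restores the uniform bound on the prefactor and the rest of the argument is unchanged. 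I expect the main technical step to be the bookkeeping in the recombination above: one must keep careful track of the logarithmic terms and the phase factors arising from the Bessel-family identities at imaginary argument, and verify the cancellation of the two $\Ja_0(kx)$ contributions. Once this algebraic identity is in hand, the remaining analysis is a direct transcription of the arguments already developed for $R_m$ and $R_{m,\kappa}$.
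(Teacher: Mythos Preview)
Your proposal is correct and follows essentially the same approach as the paper: start from the rank-one decomposition \eqref{eq_resolv_0}, invoke Theorem~\ref{thm_boundary} for the $R_0$ summand, handle the $P_0$ summand via the estimate \eqref{eq_K_0} and dominated convergence, and then recombine the two limiting kernels using $\Ha_0^\pm=\Ja_0\pm i\Ya_0$ to obtain the stated formula. The paper's computation is organized slightly differently (it first writes the combined kernel as $\pm\tfrac{i}{k}\big(\Ja_0(kx)\pm i\tfrac{\pi/2}{\gamma+\ln(k/2)-\nu\mp i\pi/2}\Ha_0^\pm(kx)\big)\Ha_0^\pm(ky)$ before substituting $\Ha_0^\pm=\Ja_0\pm i\Ya_0$), but the algebraic cancellation you describe is exactly the one carried out there.
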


\begin{proof}
The starting point for the proof of this statement is formula \eqref{eq_resolv_0}.
In addition, since the first term on the r.h.s.~of this equality has already been treated in Section
\ref{sec_spec_1}, we shall concentrate on the second term only.
For that purpose and as in the proof of Theorem \ref{thm_boundary}
we consider for $s>\frac{1}{2}$ and $x,y>0$ the expression
\begin{equation}\label{eq_ker1}
\langle x\rangle^{-s}\Ka_0\big(\sqrt{-k^2\mp\i \epsilon}\;\!x\big)\Ka_0\big(\sqrt{-k^2\mp\i \epsilon}\;\!y\big) \langle y\rangle^{-s}.
\end{equation}
By the estimate provided in \eqref{eq_K_0} one easily infers that
this kernel belongs to $L^2(\R_+\times\R_+)$. In addition, since
this kernel converges pointwise to
\begin{equation*}
\pm \i \langle x\rangle^{-s}\Ha_0^\pm(kx)\Ha_0^\pm(ky) \langle y\rangle^{-s}
\end{equation*}
one concludes by the Lebesgue Dominated Convergence Theorem that
this convergence also holds in $L^2(\R_+\times\R_+)$, which is equivalent to a convergence in the Hilbert-Schmidt norm.
Note that the uniform convergence in $k$ on compact subsets of $\R_+$ can be directly checked,
as well as the convergence of the prefactors,
as long as  this factor has no singularity.

For the computation of the kernel of $R_{0}^\nu(k^2\pm\i0)$ one has for $0 < x \leq y$
\begin{align*}
& R_{0}^\nu(k^2\pm\i0;x,y) \\
& = R_{0}(k^2\pm\i0;x,y)
- \frac{\pi }{2 k\big(\gamma + \ln\big(\frac{\mp \i k}{2}\big)-\nu\big)}
\;\!\Ha_0^\pm(kx) \Ha_0^\pm(ky) \\
& = \pm \frac{\i}{k}\Big(\Ja_0(kx) \pm \i \frac{\frac{\pi}{2}}{\big(\gamma + \ln\big(\frac{k}{2}\big)-\nu \mp \i \frac{\pi}{2}\big)}
\;\!\Ha_0^\pm(kx)\Big) \Ha_0^\pm(ky)  \\
& = \frac{\pm \i}{k\big(\gamma+ \ln\big(\frac{k}{2}\big)-\nu \mp \i \frac{\pi}{2}\big)}
\Big(\big(\gamma+ \ln\big(\tfrac{k}{2}\big)-\nu\big)\Ja_0(kx) -  \tfrac{\pi}{2} \Ya_0(kx)\Big)
\Ha_0^\pm(ky),
\end{align*}
as expected. For $0<y\leq x$ the same expression can be obtained, with the role of $x$ and $y$ exchanged.
\end{proof}

Based on the previous result, the corresponding spectral density can now be computed,
namely if $\nu$ is not exceptional for any $k>0$ and for any $s>\frac{1}{2}$ one has
\begin{equation*}
p_{0}^\nu(k^2):= \frac{1}{2\pi \i}\Big(R_{0}^\nu\big(k^2+\i0\big) -R_{0}^\nu\big(k^2-\i0\big)\Big) \in \B\big( \langle\Q\rangle^{-s}L^2(\R_+),\langle\Q\rangle^{s}L^2(\R_+)\big).
\end{equation*}
If $\nu$ is exceptional, the same formulas hold once a suitable restriction on $k$
has been imposed. In the sequel, this restriction will be made tacitly.

\begin{proposition}
The kernel of the spectral density is given by the following formula:
\begin{align*}
& p_{0}^\nu(k^2;x,y) \\
& = \frac{\big((\gamma + \ln(\frac{k}{2})-\nu )\Ja_0(kx) - \frac{\pi}{2} \Ya_0(kx)\big)
\big((\gamma + \ln(\frac{k}{2})-\nu )\Ja_0(ky)- \frac{\pi}{2} \Ya_0(ky)\big)}
{\pi k \big((\gamma+ \ln(\frac{k}{2})-\nu)^2+ (\frac{\pi}{2})^2\big)}.
\end{align*}
\end{proposition}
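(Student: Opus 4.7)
The plan is to compute $p_0^\nu(k^2;x,y)$ directly from its definition by substituting the boundary value formula for $R_0^\nu(k^2\pm\i 0;x,y)$ established in the previous proposition, and then simplifying the resulting algebraic expression using the standard identity $\Ha_0^\pm = \Ja_0 \pm \i \Ya_0$ (inherited from the analogous identity for the Bessel family in dimension~$2$).

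More precisely, introduce the shorthand $\alpha:=\gamma+\ln(k/2)-\nu$ and $F(z):=\alpha\Ja_0(kz)-\tfrac{\pi}{2}\Ya_0(kz)$. The previous proposition gives, for $0<x\le y$,
\begin{equation*}
R_0^\nu(k^2\pm\i 0;x,y)=\frac{\pm\i}{k(\alpha\mp\i\pi/2)}F(x)\Ha_0^\pm(ky),
\end{equation*}
and by inspection the same formula holds with $x,y$ exchanged when $0<y<x$, so by symmetry it suffices to work in one of the two regions. The first step is to substitute these expressions into
\begin{equation*}
p_0^\nu(k^2;x,y)=\frac{1}{2\pi\i}\Big(R_0^\nu(k^2+\i 0;x,y)-R_0^\nu(k^2-\i 0;x,y)\Big),
\end{equation*}
factor out $F(x)/k$, and put the remaining two terms over the common denominator $\alpha^2+\pi^2/4$.

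The second step is to use $\Ha_0^+ +\Ha_0^-=2\Ja_0$ and $\Ha_0^+-\Ha_0^-=2\i\Ya_0$ to collapse the numerator $(\alpha+\i\pi/2)\Ha_0^+(ky)+(\alpha-\i\pi/2)\Ha_0^-(ky)$ into $2\alpha\Ja_0(ky)-\pi\Ya_0(ky)=2F(y)$. Dividing by $2\pi\i$ and combining constants then yields
\begin{equation*}
p_0^\nu(k^2;x,y)=\frac{F(x)F(y)}{\pi k(\alpha^2+\pi^2/4)},
\end{equation*}
which is exactly the claimed formula after unfolding the abbreviations.

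There is essentially no obstacle: the argument is pure algebra once the $\Ha_0^\pm=\Ja_0\pm\i\Ya_0$ identity is invoked. The only point requiring a moment of care is making sure the symmetric expression in $x,y$ is recovered—this is automatic because the boundary kernel is itself symmetric under $x\leftrightarrow y$ (the roles of $x,y$ in the ``$\Ja_0$--$\Ha_0^\pm$'' factors swap when crossing the diagonal, but the final formula for $p_0^\nu$ is manifestly symmetric in $x,y$), so the computation done for $0<x\le y$ extends to all $x,y\in\R_+$.
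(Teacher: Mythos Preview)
Your proof is correct and takes essentially the same approach as the paper: both substitute the boundary-value kernels from the previous proposition into the definition of $p_0^\nu$, put the two terms over the common denominator $(\alpha-\i\pi/2)(\alpha+\i\pi/2)$, and collapse the numerator using $\Ha_0^\pm=\Ja_0\pm\i\Ya_0$. Your use of the abbreviations $\alpha$ and $F$ makes the algebra a bit cleaner, but the underlying computation is identical.
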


\begin{proof}
For $0 < x \leq y$ one has
\begin{align*}
&2\pi k p_{0}^\nu(k^2;x,y) \\
& =  \frac{1}{\gamma + \ln\big(\frac{k}{2}\big)-\nu - \i \frac{\pi}{2}}
\Big(\big(\gamma + \ln\big(\tfrac{k}{2}\big)-\nu\big)\Ja_0(kx)- \tfrac{\pi}{2} \Ya_0(kx)\Big)
\Ha_0^+(ky) \\
&\ \ + \frac{1}{\gamma + \ln\big(\frac{k}{2}\big)-\nu + \i \frac{\pi}{2}}
\Big(\big(\gamma + \ln\big(\tfrac{k}{2}\big) -\nu \big)\Ja_0(kx)- \tfrac{\pi}{2} \Ya_0(kx)\Big)
\Ha_0^-(ky) \\
& = \frac{\big((\gamma + \ln(\frac{k}{2})-\nu )\Ja_0(kx) - \frac{\pi}{2} \Ya_0(kx)\big)
\big((\gamma + \ln(\frac{k}{2})-\nu )\Ja_0(ky)- \frac{\pi}{2} \Ya_0(ky)\big)}
{\frac{1}{2}\big(\gamma + \ln(\frac{k}{2})-\nu - \i \frac{\pi}{2}\big)
\big(\gamma + \ln(\frac{k}{2})-\nu + \i \frac{\pi}{2}\big)}.
\end{align*}
Since the role of $x$ and $y$ can be exchanged, one directly gets the statement.
\end{proof}

\subsection{Generalized Hankel transform}

Let us now define the \emph{incoming} and \emph{outgoing Hankel transformations} $\cF_0^{\nu \mp}$ given by
the kernels
\begin{align}
\nonumber \cF_0^{\nu \mp}(x,y) & :=\sqrt{\frac2\pi}
\Big(\Ja_0(xy) \pm \frac{\i \frac{\pi}{2}}{\gamma+ \ln\big(\frac{y}{2}\big)-\nu \mp \i \frac{\pi}{2}}
\;\!\Ha_0^\pm(xy)\Big)\\
\label{eq_to_be_used}& = \sqrt{\frac2\pi}
\bigg(\frac{\big(\gamma+ \ln\big(\frac{y}{2}\big)-\nu \big)\Ja_0(xy)- \frac{\pi}{2} \Ya_0(xy)}
{\gamma + \ln\big(\frac{y}{2}\big)-\nu \mp \i \frac{\pi}{2}}\bigg).
\end{align}
Note that we have written two expressions since they will be both useful
later on. Note also since the denominator has a singularity if $\nu$ is exceptional, we shall
ignore this special case in the sequel.

In order to have a better picture of the maps $\cF_0^{\nu \mp}$ we recall that
Mellin-Barnes representation of $\Ha_0^\pm$ has been provided in \eqref{eq_H_0}.
Based on this formula, the following statement can be proved. Note
that $C_{\rm b}(\R)$ denotes the set of continuous and bounded functions on $\R$.

\begin{lemma}
The map $\H_0^\pm : C_{\rm c}(\R_+)\to L^2(\R_+)$ with kernel
$$
\H^\pm_0(x,y):=\sqrt{\frac2\pi} \Ha^\pm_0(xy)
$$
continuously extends to a bounded operator of the form $\Ju\Xi_0^\pm(A)= \Xi_0^\pm(-A)\Ju$ with
\begin{equation}\label{eq_Xi_pm}
\Xi_0^\pm(t):=\frac{1}{\pi} \Gamma\big(\tfrac{1+\i t}{2}\big)^2\e^{\i \ln(2)t}\e^{\mp\frac{\pi}{2}t}
\end{equation}
and $\Xi_0^\pm\in C_{\rm b}(\R)$.
\end{lemma}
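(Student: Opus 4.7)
The plan is to mimic the proof of Proposition \ref{prop_sur_A}, replacing the Mellin--Barnes representation of $\Ja_m$ used there by the corresponding representation \eqref{eq_H_0} of $\Ha_0^\pm$. Concretely, I first apply $\Ju$ on the left so that the candidate operator has kernel
\begin{equation*}
(\Ju\H_0^\pm)(x,y) = \frac{1}{x}\sqrt{\frac{2}{\pi}}\;\!\Ha_0^\pm\Big(\frac{y}{x}\Big).
\end{equation*}
Substituting the Mellin--Barnes representation \eqref{eq_H_0} for $\Ha_0^\pm(y/x)$ into this expression, pulling the factor $\frac{1}{x}$ inside the integral and combining it with the factor $(y/x)^{1/2}$ coming from the normalization $\Ha_0^\pm(\cdot) = \sqrt{\pi\cdot/2}\;\!H_0^\pm(\cdot)$, produces a kernel of the form
\begin{equation*}
\frac{1}{2\pi}\;\!\frac{1}{\sqrt{xy}}\int_{-\infty}^{+\infty}\Xi_0^\pm(t)\;\!\Big(\frac{y}{x}\Big)^{-\i t}\d t
\end{equation*}
for a certain symbol $\Xi_0^\pm(t)$. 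The standard formula for the kernel of a pseudodifferential operator $\psi(A)$ in terms of the dilation generator (as used in the proof of Proposition \ref{prop_sur_A} and recalled in \cite[Lem.~6.4]{BDG}) then identifies $\Ju\H_0^\pm$ with $\Xi_0^\pm(A)$, and a direct computation of the Gamma factors that appear in \eqref{eq_H_0} yields precisely the expression \eqref{eq_Xi_pm}.

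The remaining point is to check that $\Xi_0^\pm$ is bounded and continuous on $\R$, which is what actually makes $\Xi_0^\pm(A)$ a bounded operator on $L^2(\R_+)$ and therefore allows the continuous extension from $C_{\rm c}(\R_+)$ to the whole of $L^2(\R_+)$. Continuity on $\R$ is clear since the Gamma function has no poles on the line $\frac{1+\i t}{2}$. For boundedness one invokes Stirling's formula (Lemma \ref{stirling}) in the form
\begin{equation*}
\Big|\Gamma\Big(\tfrac{1+\i t}{2}\Big)\Big|^2 \sim 2\pi\;\!\e^{-\pi|t|/2}\qquad \hbox{ as } |t|\to\infty,
\end{equation*}
which combined with the exponential prefactor $\e^{\mp\frac{\pi}{2}t}$ gives
\begin{equation*}
|\Xi_0^\pm(t)|\sim 2\;\!\e^{-\frac{\pi}{2}(|t|\pm t)}\qquad \hbox{ as } |t|\to\infty.
\end{equation*}
This quantity tends to $0$ at one infinity and to $2$ at the other (according to the sign), so $\Xi_0^\pm\in C_{\rm b}(\R)$. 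The second identification $\Xi_0^\pm(-A)\Ju$ follows at once from the relation $\Ju\;\! \e^{\i\tau A}\Ju = \e^{-\i\tau A}$ already used in Proposition \ref{prop_sur_A}.

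The main technical obstacle is verifying the boundedness of $\Xi_0^\pm$: everything hinges on the cancellation between the $\e^{-\pi|t|/2}$ decay of $|\Gamma(\frac{1+\i t}{2})|^2$ coming from Stirling and the $\e^{\mp\frac{\pi}{2}t}$ exponential prefactor. This exact cancellation is precisely why the factors $\e^{\mp\frac{\pi}{2}t}$ had to be built into the definition \eqref{eq_Xi_pm} in the first place, and it reflects the fact that $\Ha_0^\pm$ decays at infinity only in one of the two half-lines of $\R$.
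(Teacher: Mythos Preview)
Your proposal is correct and follows essentially the same route as the paper's own proof: apply $\Ju$ on the left, insert the Mellin--Barnes representation \eqref{eq_H_0} at $m=0$, recognize the resulting kernel as that of $\Xi_0^\pm(A)$ via \cite[Lem.~6.4]{BDG}, and then use Lemma~\ref{stirling} to verify $\Xi_0^\pm\in C_{\rm b}(\R)$. One minor expository point: the factor $(y/x)^{-1/2}$ you need is already present in \eqref{eq_H_0} (it is the $(x/2)^{-1/2}$ there), not an extra contribution from the normalization $\Ha_0^\pm=\sqrt{\pi\cdot/2}\,H_0^\pm$; but this does not affect the argument.
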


\begin{proof}
The operator $\Ju \H_0^\pm : C_{\rm c}(\R_+)\to L^2(\R_+)$ has kernel
\begin{equation*}
\sqrt{\frac2\pi} \frac{1}{x} \Ha^\pm_0\Big(\frac{y}{x}\Big)
= \frac{1}{2\pi^{2}}\frac{1}{\sqrt{xy}}
\int_{-\infty}^{+\infty}\Gamma\big(\tfrac{1+\i t}{2}\big)^2\e^{\i \ln(2)t}\e^{\mp\frac{\pi}{2}t} \frac{y^{-\i t}}{x^{-\i t}}\d t.
\end{equation*}
By using \cite[Lem.~6.4]{BDG} one infers that this kernel corresponds to the kernel of the operator
defined by $\Xi^\pm_0(A)$, see also the proof of Proposition \ref{prop_sur_A}.

Clearly, the map $t\mapsto \frac{1}{\pi} \Gamma\big(\tfrac{1+\i t}{2}\big)^2\e^{\i \ln(2)t}\e^{\mp\frac{\pi}{2}t}$
is continuous and locally bounded. In order to show that it is bounded, let us estimate its asymptotic values as $t\to \pm \infty$.
By a consequence of Lemma \ref{stirling} one gets
\begin{align*}
\Big|\frac{1}{\pi} \Gamma\big(\tfrac{1+\i t}{2}\big)^2\e^{\i \ln(2)t}\e^{\mp\frac{\pi}{2}t}\Big|
&=2\e^{-\frac{\pi}{2}|t|}\e^{\mp\frac{\pi}{2}t}\Big(1+O(t^{-1})\Big).
\end{align*}
One then infers that $\Xi^\pm_0\in C_{\rm b}(\R)$, as expected.
It also means that $\Xi^{\pm}_0(A)$ extends continuously to a bounded operator in $L^2(\R_+)$.
\end{proof}

Based on the previous result one directly infers the following statement:

\begin{lemma}
For any $\nu\in \C$  with $\nu$ not exceptional the operator $\cF_0^{\nu \mp}$ extends continuously to the following operator:
\begin{equation}\label{eq_hp1}
\begin{split}
&\Ju \Big(\Xi_0(A) \pm
\Xi_0^\pm(A) \frac{\i \frac{\pi}{2}}{\gamma+\ln(\frac{\Q}{2})-\nu\mp \i \frac{\pi}{2}}\Big) \\
& =  \Big(\Xi_0(-A) \pm
\Xi_0^\pm(-A)  \frac{\i \frac{\pi}{2}}{\gamma-\ln(2\Q)-\nu\mp \i \frac{\pi}{2}}\Big)\Ju
\end{split}
\end{equation}
with $\Xi_0$ defined in \eqref{eq_def_Xi} and $\Xi_0^\pm$ defined in \eqref{eq_Xi_pm}.
\end{lemma}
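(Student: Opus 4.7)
The plan is to decompose the kernel \eqref{eq_to_be_used} as a sum of two integral operators and identify each as a composition of operators whose continuous extension is already understood, namely the Hankel transform $\cF_0$ of order $0$ from Proposition \ref{prop_sur_A} and the operator $\H_0^\pm$ from the lemma immediately preceding. First I would write
\begin{equation*}
\cF_0^{\nu\mp} \;=\; \cF_0 \;\pm\; \H_0^\pm\, \phi_\mp(\Q),
\end{equation*}
where $\phi_\mp$ is the bounded function on $\R_+$ defined by
\begin{equation*}
\phi_\mp(y) \;:=\; \frac{\i\frac{\pi}{2}}{\gamma+\ln\big(\tfrac{y}{2}\big)-\nu\mp\i\frac{\pi}{2}}\,.
\end{equation*}
The identification is immediate from \eqref{eq_to_be_used} once one notes that if a kernel factors as $K(x,y)\phi(y)$ then the associated operator is $\mathcal{K}\,\phi(\Q)$.

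Next I would verify that $\phi_\mp(\Q)$ is bounded on $L^2(\R_+)$, and here the non-exceptionality assumption enters: the denominator has imaginary part $-\Im(\nu)\mp\frac{\pi}{2}$, which vanishes exactly in the exceptional cases \eqref{excep_nu}; outside them it is bounded away from zero uniformly in $y$, and since $\phi_\mp$ vanishes as $y\to0$ and as $y\to\infty$, it lies in $C_{\rm b}(\R_+)$. Combined with the boundedness of $\cF_0 = \Ju\Xi_0(A) = \Xi_0(-A)\Ju$ from Proposition \ref{prop_sur_A} and of $\H_0^\pm = \Ju\Xi_0^\pm(A) = \Xi_0^\pm(-A)\Ju$ from the preceding lemma, one obtains that $\cF_0^{\nu\mp}$ extends continuously to a bounded operator.

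To obtain the first expression in \eqref{eq_hp1} I would factor $\Ju$ on the left of both terms, using $\cF_0 = \Ju\Xi_0(A)$ and $\H_0^\pm \phi_\mp(\Q) = \Ju\Xi_0^\pm(A)\phi_\mp(\Q)$; this gives the form
\begin{equation*}
\cF_0^{\nu\mp} \;=\; \Ju\Big(\Xi_0(A) \;\pm\; \Xi_0^\pm(A)\,\phi_\mp(\Q)\Big),
\end{equation*}
which is exactly the top line of \eqref{eq_hp1}. For the second expression I would instead factor $\Ju$ on the right, using $\cF_0 = \Xi_0(-A)\Ju$ and $\H_0^\pm = \Xi_0^\pm(-A)\Ju$, together with the elementary intertwining relation
\begin{equation*}
\Ju\,\phi_\mp(\Q) \;=\; \phi_\mp(\Q^{-1})\,\Ju,
\end{equation*}
which follows directly from the definition \eqref{def_de_J}. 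Substituting $y\mapsto 1/x$ in $\phi_\mp$ converts $\ln(y/2)$ into $-\ln(2x)$, producing exactly the multiplication operator $\frac{\i\pi/2}{\gamma-\ln(2\Q)-\nu\mp\i\pi/2}$ appearing in the second line of \eqref{eq_hp1}.

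There is no serious obstacle here: all ingredients have already been established, and the only delicate point is the uniform lower bound on $|\gamma+\ln(y/2)-\nu\mp\i\pi/2|$ for non-exceptional $\nu$, which one has to check by hand but which is essentially a one-line observation. The rest is the algebraic bookkeeping of the two equivalent ways to commute $\Ju$ past the multiplication operator.
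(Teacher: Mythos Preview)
Your proof is correct and follows essentially the same approach as the paper, which simply invokes Proposition~\ref{prop_sur_A} and the preceding lemma on $\H_0^\pm$ together with the definition of $\Ju$. Your write-up is more detailed than the paper's one-line proof; the only minor quibble is that the decomposition $\cF_0^{\nu\mp}=\cF_0\pm\H_0^\pm\phi_\mp(\Q)$ is read off from the first (unlabeled) form of the kernel rather than from \eqref{eq_to_be_used} itself.
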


Let us still provide the expression for the adjoint of this operator, namely
\begin{equation}\label{eq_hp2}
\begin{split}
(\cF_0^{\nu \mp})^\t & = \Big(\Xi_0(-A) \pm \frac{\i \frac{\pi}{2}}{\gamma +\ln(\frac{\Q}{2})-\nu \mp \i \frac{\pi}{2}}\Xi_0^\pm(-A)
\Big)\Ju \\
& = \Ju \Big(\Xi_0(A) \pm \frac{\i \frac{\pi}{2}}{\gamma - \ln(2\Q)-\nu \mp \i \frac{\pi}{2}} \Xi_0^\pm(A)
\Big)
\end{split}
\end{equation}
In order to derive alternative formulas for these operators, let us first recall the equality
\begin{equation}\label{eq_Gamma}
\Gamma\big(\tfrac{1+\i t}{2}\big)\Gamma\big(\tfrac{1-\i t}{2}\big) = \frac{\pi}{\cosh\big(\frac{\pi}{2}t\big)} \qquad \forall t\in \R,
\end{equation}
and prove the following statement.

\begin{lemma}
The map $\cY_0 : C_{\rm c}(\R_+)\to L^2(\R_+)$ with kernel
\begin{equation*}
\cY_0(x,y):=\sqrt{\frac2\pi} \Ya_0(xy)
\end{equation*}
continuously extends to the bounded operator $\i \Ju \Xi_0(A)\tanh\big(\frac{\pi}{2}A\big)$.
\end{lemma}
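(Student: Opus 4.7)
The plan is to exploit the standard identity $Y_0 = \tfrac{1}{2\i}(H_0^+ - H_0^-)$, which translates immediately to the dimension-$1$ functions as $\Ya_0 = \tfrac{1}{2\i}(\Ha_0^+ - \Ha_0^-)$. At the level of operators this yields
\begin{equation*}
\cY_0 = \frac{1}{2\i}\bigl(\H_0^+ - \H_0^-\bigr),
\end{equation*}
so by the lemma just established for $\H_0^\pm$ the operator $\cY_0$ extends continuously to $\frac{1}{2\i}\Ju\bigl(\Xi_0^+(A)-\Xi_0^-(A)\bigr)$. The whole task then reduces to an algebraic identification of the scalar symbol.

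Using the explicit formula \eqref{eq_Xi_pm}, I would compute
\begin{equation*}
\frac{1}{2\i}\bigl(\Xi_0^+(t)-\Xi_0^-(t)\bigr)
= \frac{\i}{\pi}\,\Gamma\bigl(\tfrac{1+\i t}{2}\bigr)^2 \e^{\i\ln(2)t}\sinh\bigl(\tfrac{\pi}{2}t\bigr),
\end{equation*}
since the sole $\pm$-dependent factor is $\e^{\mp\frac{\pi}{2}t}$. The goal is to match this with
\begin{equation*}
\i\,\Xi_0(t)\tanh\bigl(\tfrac{\pi}{2}t\bigr) = \i\,\e^{\i\ln(2)t}\frac{\Gamma(\tfrac{1+\i t}{2})}{\Gamma(\tfrac{1-\i t}{2})}\tanh\bigl(\tfrac{\pi}{2}t\bigr),
\end{equation*}
which, after cancelling the common factors $\i\,\e^{\i\ln(2)t}\Gamma(\tfrac{1+\i t}{2})$, amounts to
\begin{equation*}
\frac{1}{\pi}\,\Gamma\bigl(\tfrac{1+\i t}{2}\bigr)\Gamma\bigl(\tfrac{1-\i t}{2}\bigr)\sinh\bigl(\tfrac{\pi}{2}t\bigr)=\tanh\bigl(\tfrac{\pi}{2}t\bigr).
\end{equation*}
This identity is exactly the reflection formula \eqref{eq_Gamma} together with $\sinh/\cosh=\tanh$, so the equality holds.

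Putting these two steps together gives $\cY_0 = \i\Ju\Xi_0(A)\tanh\bigl(\tfrac{\pi}{2}A\bigr)$ on $C_{\rm c}(\R_+)$. Boundedness is automatic from the decomposition $\cY_0=\tfrac{1}{2\i}(\H_0^+-\H_0^-)$; one can alternatively notice directly that $|\Xi_0(t)|=1$ for $t\in\R$ (the numerator and denominator are complex conjugates) and $|\tanh(\tfrac{\pi}{2}t)|\leq 1$, so the multiplier is in $L^\infty(\R)$. I do not anticipate a genuine obstacle here: the only point requiring care is the bookkeeping of signs of $\i$ and the use of \eqref{eq_Gamma} to convert the product of two $\Gamma$-values into $\pi/\cosh$, after which the hyperbolic identity completes the proof.
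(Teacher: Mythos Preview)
Your proof is correct and follows essentially the same approach as the paper. The only cosmetic difference is that the paper uses the decomposition $\Ya_0=\mp\i(\Ha_0^\pm-\Ja_0)$ (hence computes $\Xi_0^\pm(t)-\Xi_0(t)$) rather than your $\Ya_0=\tfrac{1}{2\i}(\Ha_0^+-\Ha_0^-)$; both reduce immediately to the reflection identity \eqref{eq_Gamma} and the same hyperbolic simplification.
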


\begin{proof}
From the equality $\Ya_0=\mp \i (\Ha_0^\pm-\Ja_0)$ one infers that
\begin{equation*}
\cY_0=\mp \i \big(\H_0^\pm-\cF_0\big) =  \mp \i \Ju \big(\Xi_0^\pm(A)-\Xi_0(A)\big)
\end{equation*}
with
\begin{align*}
& \Xi_0^\pm(t)-\Xi_0(t) =\e^{i\ln(2)t}\frac{\Gamma\big(\frac{1+\i t}{2}\big)}{\Gamma\big(\frac{1-\i t}{2}\big)}
\Big(\tfrac{1}{\pi} \Gamma\big(\tfrac{1+\i t}{2}\big)\Gamma\big(\tfrac{1-\i t}{2}\big)\e^{\mp \frac{\pi}{2}t}-1\Big) \\
& = \Xi_0(t)\Big(\frac{1}{\cosh\big(\frac{\pi}{2}t\big)}\e^{\mp \frac{\pi}{2}t}-1\Big)
= \mp \Xi_0(t) \frac{\sinh\big(\frac{\pi}{2}t\big)}{\cosh\big(\frac{\pi}{2}t\big)},
\end{align*}
which directly leads to the statement.
\end{proof}

\begin{corollary}
For any $\nu \in \C$ not exceptional the following alternative description of $\cF_0^{\nu \mp}$ and $(\cF_0^{\nu \mp})^\t$ hold:
\begin{equation}\label{eq_alter_1}
\cF_0^{\nu \mp} = \cF_0 \Big(\gamma +\ln\big(\tfrac{\Q}{2}\big)-\nu -\i \tfrac{\pi}{2} \tanh\big(\tfrac{\pi}{2}A\big)\Big)
\frac{1}{\gamma + \ln\big(\frac{\Q}{2}\big)-\nu \mp \i \frac{\pi}{2}}
\end{equation}
and
\begin{equation}\label{eq_alter_2}
(\cF_0^{\nu \mp})^\t
= \frac{1}{\gamma+ \ln\big(\frac{\Q}{2}\big)- \nu \mp \i \frac{\pi}{2}}
\Big(\gamma +\ln\big(\tfrac{\Q}{2}\big)-\nu +\i \tfrac{\pi}{2} \tanh\big(\tfrac{\pi}{2}A\big)
\Big)\cF_0.
\end{equation}
\end{corollary}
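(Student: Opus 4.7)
The plan is to read off the identity \eqref{eq_alter_1} directly from the second form of the kernel displayed in \eqref{eq_to_be_used}, and then obtain \eqref{eq_alter_2} by taking the bilinear transpose. Nothing beyond the algebraic manipulations of the preceding lemmas is needed.

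First I would rewrite the kernel in \eqref{eq_to_be_used} as a sum of two pieces, one proportional to $\Ja_0(xy)$ and one proportional to $\Ya_0(xy)$, sharing the common denominator $\gamma+\ln(y/2)-\nu\mp \i\tfrac{\pi}{2}$. Since the multiplications in the numerator and the denominator act on the $y$-variable, this yields the operator decomposition
\begin{equation*}
\cF_0^{\nu\mp}
=\Big(\cF_0\big(\gamma+\ln(\tfrac{\Q}{2})-\nu\big)-\tfrac{\pi}{2}\cY_0\Big)\frac{1}{\gamma+\ln\big(\tfrac{\Q}{2}\big)-\nu\mp \i\tfrac{\pi}{2}},
\end{equation*}
where I can factor the shared $\Q$-multiplication on the right because both terms have it.

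Next I would invoke the identity $\cY_0=\i\;\!\cF_0\tanh\big(\tfrac{\pi}{2}A\big)$, which is an immediate consequence of the preceding lemma (giving $\cY_0=\i\Ju\,\Xi_0(A)\tanh(\tfrac{\pi}{2}A)$) combined with the equality $\cF_0=\Ju\,\Xi_0(A)$ from Proposition \ref{prop_sur_A}. Substituting and factoring $\cF_0$ out of the bracket on the left delivers \eqref{eq_alter_1}. This step is purely formal; the only subtlety is to keep the ordering straight, since $\tanh(\tfrac{\pi}{2}A)$ does not commute with any function of $\Q$, so one must not push the denominator through it.

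Then I would obtain \eqref{eq_alter_2} simply by applying the bilinear transpose $\t$ to \eqref{eq_alter_1}. The required ingredients are: $\cF_0^\t=\cF_0$ (Proposition \ref{prop_sur_A}), $\Q^\t=\Q$ (multiplication operators are self-transposed with respect to $\langle\cdot|\cdot\rangle$), and $A^\t=-A$ (a one-line integration by parts on $\tfrac{1}{2\i}(\partial_x x+x\partial_x)$), whence $\tanh(\tfrac{\pi}{2}A)^\t=-\tanh(\tfrac{\pi}{2}A)$ by functional calculus. Reversing the order of composition and flipping the sign in front of the $\tanh$ term yields precisely \eqref{eq_alter_2}.

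The only ``obstacle'' is really a matter of bookkeeping: verifying that each passage between a two-variable kernel identity and an identity between unbounded operators is justified (in particular on the common dense domain $C_{\rm c}(\R_+)$, where all factors are well-defined and the $\tanh(\tfrac{\pi}{2}A)$-factor is bounded). Since $\nu$ is assumed not to be exceptional, the denominator $\gamma+\ln(\tfrac{\Q}{2})-\nu\mp \i\tfrac{\pi}{2}$ is a bounded invertible multiplication operator, so all compositions make sense and the formal manipulations are rigorous.
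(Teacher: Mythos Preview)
Your proof is correct and follows essentially the same route as the paper: start from the kernel formula \eqref{eq_to_be_used}, substitute the identity $\cY_0=\i\,\cF_0\tanh(\tfrac{\pi}{2}A)$ from the preceding lemma, and then pass to the transpose. The paper phrases the last step via the oddness $\tanh(-t)=-\tanh(t)$ (equivalently, $\Ju$ conjugates $A$ to $-A$), which is exactly your observation $A^\t=-A$ in disguise. One small wording slip: the denominator $\gamma+\ln(\tfrac{\Q}{2})-\nu\mp\i\tfrac{\pi}{2}$ is an \emph{unbounded} multiplication operator; what you mean (and use) is that its reciprocal is bounded and nowhere singular under the non-exceptional hypothesis.
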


\begin{proof}
The statement follows directly from the definition \eqref{eq_to_be_used}, the previous result and
the equality $\tanh(-t)=-\tanh(t)$ for any $t\in \R$.
\end{proof}

We can now derive additional properties of the operator $\cF_0^{\nu \pm}$.

\begin{lemma}
For any $\nu\in \C$ not exceptional the equalities $(\cF_0^{\nu \pm})^\t\cF_0^{\nu \mp}=\one$ hold.
\end{lemma}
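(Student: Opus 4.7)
My plan is to exploit the alternative descriptions \eqref{eq_alter_1} and \eqref{eq_alter_2} together with the involutive property $\cF_0^2=\one$ (which follows from $\cF_0^\t=\cF_0=\cF_0^{-1}$ in Proposition \ref{prop_sur_A}) to reduce the claim to a single algebraic identity. Setting $B:=\gamma+\ln\bigl(\tfrac{\Q}{2}\bigr)-\nu$ and $T:=\i\tfrac{\pi}{2}\tanh\bigl(\tfrac{\pi}{2}A\bigr)$, the two alternative formulas read
\begin{equation*}
\cF_0^{\nu\mp}=\cF_0(B-T)\tfrac{1}{B\mp\i\pi/2},\qquad (\cF_0^{\nu\mp})^\t=\tfrac{1}{B\mp\i\pi/2}(B+T)\cF_0,
\end{equation*}
so substituting into $(\cF_0^{\nu\pm})^\t\cF_0^{\nu\mp}$ and collapsing $\cF_0^2=\one$ gives
\begin{equation*}
(\cF_0^{\nu\pm})^\t\cF_0^{\nu\mp}=\tfrac{1}{B\pm\i\pi/2}(B+T)(B-T)\tfrac{1}{B\mp\i\pi/2}.
\end{equation*}
The claim therefore reduces to proving the operator identity $(B+T)(B-T)=B^2+\tfrac{\pi^2}{4}$: once this is shown, the right-hand side factors as $(B+\i\pi/2)(B-\i\pi/2)$, and since these are functions of $\Q$ alone they commute with and cancel the outer resolvent factors to leave $\one$.

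Expanding $(B+T)(B-T)=B^2-T^2+[T,B]$, the key step is to evaluate $[T,B]$ via the canonical commutation between $A$ and $\ln(\Q)$. Differentiating $U_\tau\Q U_\tau^{-1}=\e^\tau\Q$ at $\tau=0$ gives $[A,\Q]=-\i\Q$, and by functional calculus $[A,\ln(\Q)]=-\i$, so $[A,B]=-\i$ is a scalar. The standard derivation rule for canonically conjugate pairs then yields $[g(A),B]=-\i\,g'(A)$ for any smooth $g$; applied to $g(t)=\i\tfrac{\pi}{2}\tanh(\tfrac{\pi}{2}t)$ this produces $[T,B]=\tfrac{\pi^2}{4}\cosh^{-2}\bigl(\tfrac{\pi}{2}A\bigr)$, while $-T^2=\tfrac{\pi^2}{4}\tanh^2\bigl(\tfrac{\pi}{2}A\bigr)$. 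The elementary identity $\tanh^2+\cosh^{-2}\equiv 1$ then gives $[T,B]-T^2=\tfrac{\pi^2}{4}$, which is exactly the reduction required.

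The only subtle point is that $B$ and $T$ are unbounded, so the algebraic manipulations must first be carried out on a common invariant core; the cleanest device is to pass, via the Mellin transform, to the unitarily equivalent picture on $L^2(\R)$ in which $A$ becomes multiplication and $\ln(\Q)$ becomes $\i\partial_t$, so that everything reduces to the familiar Schwartz-space calculus of position and momentum. The non-exceptionality of $\nu$ enters exactly once: it ensures $\Im(\nu)\neq\pm\tfrac{\pi}{2}$, so that $\tfrac{1}{B\pm\i\pi/2}$ is a bounded multiplication operator, and the bounded operator identity then extends by continuity to all of $L^2(\R_+)$.
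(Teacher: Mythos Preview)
Your proof is correct and takes a genuinely different route from the paper's. The paper works with the representations \eqref{eq_hp1} and \eqref{eq_hp2} in terms of $\Xi_0(A)$ and $\Xi_0^\pm(A)$, computes the four cross-products \eqref{eq_Xi_0}--\eqref{eq_Xi_2}, and then reduces the statement to the somewhat opaque commutator identity \eqref{eq_to_check}, which is verified via the computation \eqref{eq_comput}. You instead use the alternative descriptions \eqref{eq_alter_1} and \eqref{eq_alter_2}, which allows the involution $\cF_0^2=\one$ to collapse the middle and leaves the clean algebraic identity $(B+T)(B-T)=B^2+\tfrac{\pi^2}{4}$. Both arguments ultimately hinge on the same canonical commutation relation $[\i A,\ln(\Q)]=\one$, but your packaging is more transparent: the single commutator $[T,B]=\tfrac{\pi^2}{4}\cosh^{-2}\bigl(\tfrac{\pi}{2}A\bigr)$ together with the Pythagorean identity $\tanh^2+\cosh^{-2}=1$ does all the work, and the role of non-exceptionality (boundedness of $(B\pm\i\pi/2)^{-1}$) is made explicit. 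The paper's approach has the minor advantage of not invoking the intermediate Corollary on the $\Ya_0$-representation, but at the cost of a longer and less illuminating computation.
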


\begin{proof}
The proof consists in computing the product of the terms \eqref{eq_hp1} and \eqref{eq_hp2} (with the correct sign)
and in checking that this product is equal to $\one$. For that purpose, one first observes that for any $t\in \R$ one has
\begin{equation}\label{eq_Xi_0}
\Xi_0(-t)\Xi_0(t)=1.
\end{equation}
By taking \eqref{eq_Gamma} into account, one also observes that
\begin{equation}\label{eq_Xi_1}
\Xi_0(-t)\Xi_0^\pm(t) =  \Xi_0^\mp(-t)\Xi_0(t) = \frac{\e^{\mp \frac{\pi}{2}t}}{\cosh\big(\frac{\pi}{2}t\big)}
\end{equation}
and that
\begin{equation}\label{eq_Xi_2}
\Xi_0^\mp(-t) \Xi_0^\pm(t)= \left(\frac{\e^{\mp \frac{\pi}{2}t}}{\cosh\big(\frac{\pi}{2}t\big)}\right)^2.
\end{equation}

Then, by a few algebraic manipulations, one easily reduces the statement to the following equality in the form sense on $C_{\rm c}(\R_+)$:
\begin{equation}\label{eq_to_check}
\mp \frac{1}{\pi}\left[\i \frac{\e^{\mp \frac{\pi}{2}A}}{\cosh\big(\frac{\pi}{2}A\big)}, \ln(\Q) \right]
- \frac{\e^{\mp \frac{\pi}{2}A}}{\cosh\big(\frac{\pi}{2}A\big)}
+ \tfrac{1}{2}  \left(\frac{\e^{\mp \frac{\pi}{2}A}}{\cosh\big(\frac{\pi}{2}A\big)}\right)^2=0.
\end{equation}
In order to check this equality, let us recall that the equality $[\i A,\ln(\Q)]=\one$ holds, once suitably defined.
One then infers that
\begin{align}\label{eq_comput}
\nonumber & \mp \frac{1}{\pi}\left[\i \frac{\e^{\mp \frac{\pi}{2}A}}{\cosh\big(\frac{\pi}{2}A\big)}, \ln(\Q) \right] \\
\nonumber & = \frac{1}{\cosh\big(\frac{\pi}{2}A\big)^2}\Big(\tfrac{1}{2}\e^{\mp \frac{\pi}{2}A}\cosh\big(\tfrac{\pi}{2}A\big)
\pm \tfrac{1}{2} \e^{\mp \frac{\pi}{2}A}\sinh\big(\tfrac{\pi}{2}A\big)\Big) \\
& = \frac{\e^{\mp \frac{\pi}{2}A}}{\cosh\big(\frac{\pi}{2}A\big)}
- \tfrac{1}{2}  \frac{\big(\e^{\mp \frac{\pi}{2}A}\big)^2}{\cosh\big(\frac{\pi}{2}A\big)^2},
\end{align}
where the equalities $\cosh(y)=\frac{\e^y+\e^{-y}}{2}$ and $\sinh(y) = \frac{\e^y-\e^{-y}}{2}$ have been used for the last equality.
This final expression leads directly to the equality \eqref{eq_to_check}.
\end{proof}

Once again it is natural to set
\begin{equation*}
\one_{\R_+}(H_{0}^\nu) :=  \cF_0^{\nu \pm }\;\!(\cF_0^{\nu\mp})^\t
\end{equation*}
which is again a projection.

\begin{proposition}\label{prop_nightmare}
Let $\nu\in \C$ be not exceptional. Then, for any $k\in \C$ with $\Re(k)>0$ and $-k^2\not \in \sigma_{\rm p}(H_{0}^\nu)$
the following equalities hold:
\begin{equation*}
R_0^\nu(-k^2) \one_{\R_+}(H_0^\nu)
=\cF_0^{\nu\pm}(\Q^2+k^2)^{-1} (\cF_0^{\nu \mp})^\t
= \one_{\R_+}(H_0^\nu)R_0^\nu(-k^2).
\end{equation*}
\end{proposition}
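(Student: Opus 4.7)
My plan is to imitate closely the proof of Proposition \ref{prop_inter}, substituting the multiplicative $m\neq 0$ algebra there by the action of the nonlocal operator $\tanh(\tfrac{\pi}{2}A)$ that appears in the alternative descriptions \eqref{eq_alter_1}--\eqref{eq_alter_2} of $\cF_0^{\nu\mp}$ and $(\cF_0^{\nu\mp})^\t$. The desired pair of equalities is, by the definition $\one_{\R_+}(H_0^\nu) = \cF_0^{\nu\pm}(\cF_0^{\nu\mp})^\t$, a consequence of the single intertwining
\[(\cF_0^{\nu\mp})^\t R_0^\nu(-k^2) = (\Q^2+k^2)^{-1}(\cF_0^{\nu\mp})^\t,\]
together with its transpose; so I would first reduce the problem to proving this intertwining.

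As preparation I would use the rank-one decomposition \eqref{eq_resolv_0} together with the homogeneous diagonalization $\cF_0 R_0(-k^2) = (\Q^2+k^2)^{-1}\cF_0$ from Section \ref{sec_homogeneous} to obtain
\[\cF_0 R_0^\nu(-k^2) = (\Q^2+k^2)^{-1}\cF_0 + \frac{1}{2k^2(\gamma+\ln(k/2)-\nu)}\cF_0 P_0(-k^2).\]
Writing $B:=\gamma+\ln(\Q/2)-\nu+\i\tfrac{\pi}{2}\tanh(\tfrac{\pi}{2}A)$, the multiplication operator $(\gamma+\ln(\Q/2)-\nu\mp\i\pi/2)^{-1}$ appearing in \eqref{eq_alter_2} commutes with $(\Q^2+k^2)^{-1}$, so the intertwining reduces to the commutator identity
\[[B,(\Q^2+k^2)^{-1}]\cF_0 = -\frac{1}{2k^2(\gamma+\ln(k/2)-\nu)}\,B\,\cF_0 P_0(-k^2),\]
and only the $\i\tfrac{\pi}{2}\tanh(\tfrac{\pi}{2}A)$-part of $B$ contributes to the commutator on the left.

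The heart of the argument is then the $m=0$ analogue of \eqref{eq_useful}. I would start from the classical integral $\int_0^\infty y\,J_0(xy)K_0(ky)\,\d y = (x^2+k^2)^{-1}$, which yields
\[(\cF_0\Ka_0(k\cdot))(x) = \sqrt{2k/\pi}\,\frac{\sqrt{x}}{x^2+k^2},\]
so that $\cF_0 P_0(-k^2)$ is a rank-one operator with completely explicit range (using $P_0(-k^2) = \pi k\,|\Ka_0(k\cdot)\rangle\langle\Ka_0(k\cdot)|$ in the bilinear pairing). A Mellin-transform computation---$A$ acts as multiplication by the Mellin variable $t$, and the Mellin transform of $\sqrt{x}/(x^2+k^2)$ is proportional to $k^{-\i t}/\cosh(\pi t/2)$---then determines the action of $\tanh(\tfrac{\pi}{2}A)$ on the range of $\cF_0 P_0(-k^2)$ in closed form, producing precisely the $\ln(k/2)$ contribution needed to match the scalar coefficient on the right-hand side. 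Once the intertwining is in place, the first equality of the statement follows by left-multiplying by $\cF_0^{\nu\pm}$; the second follows by transposing the intertwining and invoking the symmetry of $R_0^\nu(-k^2)$ under the bilinear form $\langle\cdot|\cdot\rangle$, a consequence of the formal symmetry of $L_0$ together with the fact that the scalar boundary condition defining $\Dom(H_0^\nu)$ is preserved by transposition.

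The main obstacle is the explicit cancellation in this Mellin-transform calculation. In Proposition \ref{prop_inter} the analogous identity was purely algebraic and followed from a single symmetric integral identity relating $\Ja_m$ and $\Ja_{-m}$; at $m=0$ that duality collapses and must be replaced by the interplay between multiplication by $\ln(\Q/2)$ and the nonlocal operator $\tanh(\tfrac{\pi}{2}A)$ acting on a very specific function, so the cancellation is noticeably less transparent. A plausible alternative would be to derive the identity by analytic continuation, taking $m\to 0$ in Proposition \ref{prop_inter} along the slice $\kappa=(\nu m-1)/(\nu m+1)$ from the Remark following Corollary \ref{corol_SA}, but checking the required strong-operator convergences looks at least as involved as the direct Mellin computation.
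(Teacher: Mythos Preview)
Your overall strategy coincides with the paper's: reduce the statement to the single intertwining $(\cF_0^{\nu\mp})^\t R_0^\nu(-k^2)=(\Q^2+k^2)^{-1}(\cF_0^{\nu\mp})^\t$, plug in the alternative description \eqref{eq_alter_2}, use \eqref{eq_resolv_0}, and arrive at your displayed ``commutator identity''. The transposition argument for the remaining equality is also what the paper (implicitly) does.

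There is however a genuine gap in the heart of the argument. Your reduced identity
\[
\i\tfrac{\pi}{2}\big[\tanh(\tfrac{\pi}{2}A),(\Q^2+k^2)^{-1}\big]\cF_0
=-\frac{1}{2k^2(\gamma+\ln(k/2)-\nu)}\,B\,\cF_0 P_0(-k^2)
\]
has a $\nu$-independent left-hand side, so it forces \emph{two} separate identities: (a) $B$ acts on the range of $\cF_0 P_0(-k^2)$ as the scalar $\gamma+\ln(k/2)-\nu$ (equivalently $(\ln(\Q/k)+\i\tfrac{\pi}{2}\tanh(\tfrac{\pi}{2}A))\cF_0 P_0(-k^2)=0$), and then (b) $\i\tfrac{\pi}{2}[\tanh(\tfrac{\pi}{2}A),(\Q^2+k^2)^{-1}]\cF_0=-\tfrac{1}{2k^2}\cF_0 P_0(-k^2)$. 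Your Mellin computation (the transform of $\sqrt{x}/(x^2+k^2)$ being proportional to $k^{-\i t}/\cosh(\tfrac{\pi}{2}t)$, so that applying $\tanh(\tfrac{\pi}{2}t)$ amounts to $\i\frac{2}{\pi}\partial_t$ up to the $\ln k$ term) is a clean proof of (a), and indeed produces the $\ln(k/2)$ you mention. But it says nothing about (b): the commutator on the left is an operator on all of $L^2(\R_+)$, not just on the one-dimensional range of $\cF_0 P_0(-k^2)$, and its identification with a specific rank-one operator is a separate computation that your proposal does not address. This is precisely the extra complication absent from the $m\neq0$ case: there both $\cF_m$ and $\cF_{-m}$ individually intertwine resolvents, so no commutator term ever appears and \eqref{eq_useful} alone suffices; at $m=0$ the role of $\cF_{-m}$ is taken over by $\tanh(\tfrac{\pi}{2}A)\cF_0$, which does not intertwine, and the resulting commutator must be computed.

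The paper handles these as \eqref{eq_horror_1} and \eqref{eq_horror_2}. For \eqref{eq_horror_1} it writes down the principal-value kernel of $\i\tanh(\tfrac{\pi}{2}A)$ and checks by a direct partial-fraction calculation that the commutator kernel equals $-2k^2(xy)^{1/2}(x^2+k^2)^{-1}(y^2+k^2)^{-1}$, which is exactly the kernel of $-\cF_0 P_0(-k^2)\cF_0$ obtained from \eqref{beqo}. For \eqref{eq_horror_2} it again uses the explicit kernel and an elementary antiderivative. Your Mellin approach is a pleasant alternative for (a)/\eqref{eq_horror_2}, but you still need either the paper's kernel computation or some other argument for (b)/\eqref{eq_horror_1}.
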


\begin{proof}
Let us first prove that
\begin{equation}\label{eq_perm_4}
(\cF_0^{\nu \mp})^\t  R_0^\nu(-k^2) - (\Q^2+k^2)^{-1} (\cF_0^{\nu \mp})^\t =0
\end{equation}
which implies the second equality of the statement.
By taking into account the expression for $(\cF_0^{\nu \mp})^\t$ provided in \eqref{eq_alter_2}
together with the equality \eqref{eq_resolv_0} one observes that the l.h.s.~of \eqref{eq_perm_4} is equal to
\begin{align*}
& \frac{1}{\gamma+ \ln\big(\frac{\Q}{2}\big)- \nu \mp \i \frac{\pi}{2}}\bigg\{
\Big(\gamma +\ln\big(\tfrac{\Q}{2}\big)-\nu +\i \tfrac{\pi}{2} \tanh\big(\tfrac{\pi}{2}A\big)
\Big)\cF_0 R_0^\nu(-k^2) \\
& \ \ - (\Q^2+k^2)^{-1}
\Big(\gamma +\ln\big(\tfrac{\Q}{2}\big)-\nu +\i \tfrac{\pi}{2} \tanh\big(\tfrac{\pi}{2}A\big)
\Big)\cF_0  \bigg\}  \\
& = \frac{1}{\gamma+ \ln\big(\frac{\Q}{2}\big)- \nu \mp \i \frac{\pi}{2}}\bigg\{
\i \tfrac{\pi}{2} \big[\tanh\big(\tfrac{\pi}{2}A\big), (\Q^2+k^2)^{-1}\big] \cF_0 \\
& +\frac{1}{2k^2(\gamma + \ln(\frac{k}{2})-\nu)}\Big(\gamma +\ln\big(\tfrac{\Q}{2}\big)-\nu +\i \tfrac{\pi}{2} \tanh\big(\tfrac{\pi}{2}A\big)
\Big)\cF_0  P_0(-k^2)\bigg\}.
\end{align*}
By a few algebraic computations, one then observes that the term inside the curly bracket would be equal to $0$
if
\begin{equation}\label{eq_horror_1}
\pi k^2 \big[\i\tanh\big(\tfrac{\pi}{2}A\big), (\Q^2+k^2)^{-1}\big] = - \cF_0  P_0(-k^2)\cF_0
\end{equation}
and
\begin{equation}\label{eq_horror_2}
\Big(\ln\big(\tfrac{\Q}{k}\big) +\i \tfrac{\pi}{2} \tanh\big(\tfrac{\pi}{2}A\big)
\Big)\cF_0  P_0(-k^2)\cF_0=0.
\end{equation}

From now on, let us compute some kernels, by always considering $x,y\in \R_+$.
Since by \eqref{beqo}
$\int_0^\infty \Ja_0(xy)\Ka_0(ky)\d y = \frac{\sqrt{kx}}{x^2+k^2}$,
one first infers that the kernel of the operator
$\cF_0  P_0(-k^2)\cF_0$ is given by
\begin{equation}\label{eq_yes_0}
\cF_0  P_0(-k^2)\cF_0(x,y)= 2k^2 (xy)^{1/2}\;\!\frac{1}{x^2+k^2}\;\!\frac{1}{y^2+k^2}.
\end{equation}
On the other hand, the kernel of $\tanh\big(\tfrac{\pi}{2}A\big)$ is given by (see for example the proof of \cite[Lem.~9.2]{R16})
\begin{equation*}
\i\tanh\big(\tfrac{\pi}{2}A\big)(x,y)=-\frac{2}{\pi} {\mathrm{Pv}}\bigg(\frac{1}{\frac{x}{y}-\frac{y}{x}}\bigg)(xy)^{-1/2},
\end{equation*}
where ${\mathrm{Pv}}$ denotes the principal value distribution. One then infers the following kernel
\begin{align}\label{eq_yes_1}
\nonumber &\pi k^2 \big[\i\tanh\big(\tfrac{\pi}{2}A\big), (\Q^2+k^2)^{-1}\big](x,y) \\
\nonumber &=-2k^2{\mathrm{Pv}}\bigg(\frac{1}{\frac{x}{y}-\frac{y}{x}}\bigg)(xy)^{-1/2}\Big(\frac{1}{y^2+k^2}-\frac{1}{x^2+k^2}\Big) \\
&= -2k^2 (xy)^{1/2}\;\! \frac{1}{y^2+k^2}\;\!\frac{1}{x^2+k^2}.
\end{align}
By comparing \eqref{eq_yes_0} with \eqref{eq_yes_1} one directly gets \eqref{eq_horror_1}.

In order to check \eqref{eq_horror_2} let us first deduce from \eqref{eq_yes_0} that
\begin{equation}\label{eq_yes_2}
\Big(\ln\big(\tfrac{\Q}{k}\big) \cF_0  P_0(-k^2)\cF_0\Big)(x,y)
= 2k^2 \ln\big(\tfrac{x}{k}\big)(xy)^{1/2}\;\!\frac{1}{x^2+k^2}\;\!\frac{1}{y^2+k^2}.
\end{equation}
On the other hand, since
\begin{equation*}
\int \Big(\frac{1}{x^2-y^2}\Big)\frac{y}{y^2+1}\d y
= \frac{1}{2(x^2+1)}\;\!\ln\bigg(\frac{x^2+1}{|x^2-y^2|}\bigg),
\end{equation*}
one can easily compute the kernel
$-\i \tfrac{\pi}{2} \tanh\big(\tfrac{\pi}{2}A\big)\cF_0  P_0(-k^2)\cF_0$
and observe that it corresponds to the expression obtained in \eqref{eq_yes_2}.
This finishes the proof of the equality \eqref{eq_horror_2}.
\end{proof}

\subsection{Spectral projections}

Let us start again with a result about the spectral projection corresponding to the eigenvalues of $H_0^\nu$.
The proof of the following statement can be mimicked from the proof of Proposition \ref{prop_aa}.

\begin{proposition}\label{prop_aa2}
For any $-k^2\in\sigma_\p(H_0^\nu)$ one has
\begin{equation*}
\one_{\{-k^2\}}(H_0^\nu)=P_0(-k^2).
\end{equation*}
\end{proposition}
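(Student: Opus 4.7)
The plan is to mimic the proof of Proposition \ref{prop_aa}, using the explicit resolvent decomposition \eqref{eq_resolv_0} together with the residue theorem.

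First I would choose a small positively oriented contour $\cC \subset \C\setminus \sigma(H_0^\nu)$ enclosing the eigenvalue $-k^2$ and no other spectral point. Since $\Re(k)>0$ forces $-k^2\in \C\setminus[0,\infty[$, and since by Proposition \ref{prop_spec}(ii) the point $-k^2$ is the unique eigenvalue of $H_0^\nu$, such a contour exists. By the definition of the Riesz projection,
\begin{equation*}
\one_{\{-k^2\}}(H_0^\nu) = -\frac{1}{2\pi\i}\int_{\cC} R_0^\nu(z)\,\d z.
\end{equation*}

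The identity \eqref{eq_resolv_0}, written in the generic spectral variable $z=-\tilde k^2$, reads
\begin{equation*}
R_0^\nu(z) = R_0(z) + \frac{1}{-2z\big(\gamma+\ln(\sqrt{-z}/2)-\nu\big)}\;\!P_0(z),
\end{equation*}
where $\gamma$ denotes Euler's constant. Now $R_0(z)$ is holomorphic on $\C\setminus[0,\infty[$, hence contributes $0$ to the contour integral; only the pole coming from the denominator at $z_0=-k^2$ matters. Setting $g(z):=\gamma+\ln(\sqrt{-z}/2)-\nu$, Proposition \ref{prop_spec}(ii) gives exactly $g(z_0)=0$, while differentiating the principal branch yields $g'(z)=\tfrac{1}{2z}$, so $g'(z_0)=-\tfrac{1}{2k^2}\neq 0$. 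Thus this is a simple pole, and both $(-2z)^{-1}$ and the operator-valued function $P_0(z)$ from \eqref{proj2} are holomorphic at $z_0$.

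Applying the residue theorem then gives
\begin{equation*}
\one_{\{-k^2\}}(H_0^\nu) = -\mathrm{Res}_{z=z_0}\;\!R_0^\nu(z) = -\frac{1}{-2z_0}\cdot\frac{1}{g'(z_0)}\;\!P_0(z_0) = -\frac{1}{2k^2}\cdot(-2k^2)\;\!P_0(-k^2) = P_0(-k^2),
\end{equation*}
which is the claim. No essential obstacle is expected: the argument is the same algebraic residue computation as in Proposition \ref{prop_aa}, the only mildly delicate point being the differentiation of the principal branch of $\ln(\sqrt{-z}/2)$, which is smooth in any small disk around $z_0\in\C\setminus[0,\infty[$ by the conventions fixed in Section \ref{sec_notation}.
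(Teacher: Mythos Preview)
Your proof is correct and follows exactly the approach indicated by the paper, which simply states that the argument of Proposition~\ref{prop_aa} can be mimicked: you use the resolvent decomposition~\eqref{eq_resolv_0}, discard the holomorphic term $R_0(z)$, and compute the residue at the simple zero of $g(z)=\gamma+\ln(\sqrt{-z}/2)-\nu$. The computation of $g'(z_0)=-\tfrac{1}{2k^2}$ and the resulting residue are handled correctly.
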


As in Section \ref{sec_proj_1} or \ref{sec_proj_2}  we can also consider for any $0<a<b$ the operator
\begin{equation*}
\one_{[a,b]}(H_0^\nu) := 2\int_{\sqrt{a}}^{\sqrt{b}}p_0^\nu(k^2) k\;\!\d k,
\end{equation*}
which is bounded from  $\langle\Q\rangle^{-s}L^2(\R_+)$
to $\langle\Q\rangle^{s}L^2(\R_+)$ for any $s>\frac{1}{2}$.
Its kernel is given for $x,y\in \R_+$ by the expression
\begin{align}\label{def_H_0_nu_ker}
& \one_{[a,b]}(H_0^\nu)(x,y) \\
&= 2\int_{\sqrt{a}}^{\sqrt{b}}
\tfrac{\big((\gamma + \ln(\frac{k}{2})-\nu )\Ja_0(kx) - \frac{\pi}{2} \Ya_0(kx)\big)
\big((\gamma + \ln(\frac{k}{2})-\nu )\Ja_0(ky)- \frac{\pi}{2} \Ya_0(ky)\big)}
{\pi \big((\gamma+ \ln(\frac{k}{2})-\nu)^2+ (\frac{\pi}{2})^2\big)}
\;\!\d k.\nonumber
\end{align}

One can now obtain a result similar to the one contained in Proposition \ref{prop_diag}.

\begin{proposition}
For any $0< a <b$ and any $\nu\in \C \cup\{\infty\}$ not exceptional one has
\begin{equation}\label{to_be_extended}
\one_{[a,b]}(H_0^{\nu})=\cF_0^{\nu\pm}\;\!\one_{[a,b]}(\Q^2)\;\!(\cF_0^{\nu \mp})^\t
\end{equation}
in $\B\big(L^2(\R_+)\big)$. In addition, $\one_{[a,b]}(H_0^\nu)$ is a projection.
\end{proposition}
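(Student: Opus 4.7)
The plan is to follow the strategy used in Proposition \ref{prop_diag}: verify the equality \eqref{to_be_extended} by computing the action of the right-hand side on an arbitrary test function $f\in C_{\rm c}(\R_+)$ and matching the resulting kernel with the explicit expression \eqref{def_H_0_nu_ker}, then deduce the projection property from the identity $(\cF_0^{\nu \pm})^\t\cF_0^{\nu \mp}=\one$ established in the preceding lemma.

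First I would fix $f\in C_{\rm c}(\R_+)$ and $x\in \R_+$, and use the kernel description \eqref{eq_to_be_used} of $\cF_0^{\nu\mp}$ (and of its transpose $(\cF_0^{\nu\mp})^\t$, which has the same kernel up to the exchange of variables since the kernel is symmetric in its arguments). Applying Fubini's theorem, one gets
\begin{equation*}
\big(\cF_0^{\nu\pm}\;\!\one_{[a,b]}(\Q^2)\;\!(\cF_0^{\nu \mp})^\t f\big)(x)
=\int_0^\infty K(x,y)\;\! f(y)\;\!\d y,
\end{equation*}
with
\begin{equation*}
K(x,y)=\frac{2}{\pi}\int_{\sqrt a}^{\sqrt b}
\frac{\big(\ell_k\Ja_0(kx)-\tfrac{\pi}{2}\Ya_0(kx)\big)\big(\ell_k\Ja_0(ky)-\tfrac{\pi}{2}\Ya_0(ky)\big)}
{\big(\ell_k\mp \i\tfrac{\pi}{2}\big)\big(\ell_k\pm \i\tfrac{\pi}{2}\big)}\;\!\d k,
\end{equation*}
where I have set $\ell_k:=\gamma+\ln(k/2)-\nu$ for brevity. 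Since the two linear factors in the denominator are complex conjugates, their product equals $\ell_k^{2}+(\pi/2)^{2}$, which matches exactly the kernel of $\one_{[a,b]}(H_0^\nu)$ as given in \eqref{def_H_0_nu_ker}. This proves \eqref{to_be_extended}. As a byproduct, this equality provides the natural continuous extension of $\one_{[a,b]}(H_0^\nu)$ to a bounded operator on $L^2(\R_+)$, since the right-hand side is manifestly bounded.

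For the projection property, I would simply compute
\begin{equation*}
\bigl(\cF_0^{\nu\pm}\;\!\one_{[a,b]}(\Q^2)\;\!(\cF_0^{\nu \mp})^\t\bigr)^2
=\cF_0^{\nu\pm}\;\!\one_{[a,b]}(\Q^2)\bigl[(\cF_0^{\nu \mp})^\t\cF_0^{\nu\pm}\bigr]\one_{[a,b]}(\Q^2)\;\!(\cF_0^{\nu \mp})^\t,
\end{equation*}
and invoke the lemma above this proposition, which ensures $(\cF_0^{\nu\mp})^\t\cF_0^{\nu\pm}=\one$ (the roles of the two signs being symmetric in that identity). The inner bracket reduces to the identity, and the two copies of $\one_{[a,b]}(\Q^2)$ merge into one, giving back $\cF_0^{\nu\pm}\;\!\one_{[a,b]}(\Q^2)\;\!(\cF_0^{\nu \mp})^\t=\one_{[a,b]}(H_0^\nu)$.

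The only delicate step is the first one, namely the algebraic simplification of $K(x,y)$. There is no serious analytic obstacle since $f$ has compact support and $\Ja_0, \Ya_0$ are continuous on $\R_+$, so the integrand is bounded on $[\sqrt a,\sqrt b]\times\supp(f)$ and the use of Fubini is legitimate; one only needs to be careful about the fact that the two occurrences of $\ell_k\mp\i\pi/2$ come from $(\cF_0^{\nu\mp})^\t$ and $\cF_0^{\nu\pm}$ respectively, so that their product is real and equals $\ell_k^{2}+(\pi/2)^{2}$. Once this matching is verified, both assertions of the proposition follow immediately.
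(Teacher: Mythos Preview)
Your proof is correct and follows essentially the same approach as the paper, which simply says to mimic the proof of Proposition~\ref{prop_diag} using the kernel \eqref{def_H_0_nu_ker} and the expression \eqref{eq_to_be_used}. One small imprecision: you write that the two denominator factors are ``complex conjugates'' and that their product is ``real''; this holds only when $\nu\in\R$, whereas the proposition allows $\nu\in\C$. The conclusion is unaffected, since the algebraic identity $(\ell_k-\i\tfrac{\pi}{2})(\ell_k+\i\tfrac{\pi}{2})=\ell_k^{2}+(\tfrac{\pi}{2})^{2}$ holds for all $\ell_k\in\C$ and matches \eqref{def_H_0_nu_ker} regardless.
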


\begin{proof}
The proof can be mimicked from the one of Proposition \ref{prop_diag}.
The new necessary information are the kernel of $\one_{[a,b]}(H_0^\nu)(x,y)$,
which is provided in \eqref{def_H_0_nu_ker}, and the equality recalled in \eqref{eq_to_be_used}.
\end{proof}

Finally, observe that the equality
\begin{equation*}
\one_{\Xi}(H_0^{\nu})=\cF_0^{\nu\pm}\;\!\one_{\Xi}(\Q^2)\;\!(\cF_0^{\nu \mp})^\t
\end{equation*}
extends \eqref{to_be_extended} to any Borel subset $\Xi$ of $\R_+$.

\subsection{M{\o}ller operators and scattering operator}

In this section we extend the results obtained for the M{\o}ller
and the scattering operators
to the family of operators $H_0^\nu$.
As before, we shall consider any $\nu\in \C$ which is not an exceptional value.

For the pair $(H_0^\nu,H_0^{\nu'})$ using the Hankel
transformations we define
\begin{equation}
W_{0;0}^{\nu;\nu';\pm}=  \cF_{0}^{\nu\pm}\;\! (\cF_{0}^{\nu' \mp})^\t.
\label{moller2}
\end{equation}
Note that the M{\o}ller operators for any pairs $(H_0^\nu,H_{m,\kappa})$ or $(H_{m,\kappa},H_0^\nu)$
could be defined similarly,
but the corresponding expressions can also be directly obtained
by using the chain rule. For that reason, we shall not analyze them separately.

Based on these definitions and on the results obtained so far one easily infers
some identities:

\begin{proposition}
The following identities hold:
\begin{align}
\label{inve2} \big(W_{0;0}^{\nu;\nu';\mp}\big)^\t \;\!W_{0;0}^{\nu;\nu';\pm}
& = \one_{\R_+}(H_0^{\nu'}), \\
\nonumber W_{0;0}^{\nu;\nu';\pm} \;\!\big(W_{0;0}^{\nu;\nu';\mp}\big)^\t&= \one_{\R_+}(H_0^\nu), \\
\nonumber \big(W_{0;0}^{\nu;\nu';\pm}\big)^\t & = W_{0;0}^{\nu';\nu;\mp},\\
\nonumber W_{0;0}^{\nu;\nu';\pm} H_0^{\nu'} & = H_{0}^\nu W_{0;0}^{\nu;\nu';\pm}.
\end{align}
\end{proposition}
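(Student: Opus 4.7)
The plan is to derive all four identities directly from the definition \eqref{moller2} together with the two basic facts already established for the generalized Hankel transformations at $m=0$: the inversion formula $(\cF_0^{\nu\pm})^\t \cF_0^{\nu\mp}=\one$ proven in the lemma preceding Proposition \ref{prop_nightmare}, and the resolvent intertwining relation from Proposition \ref{prop_nightmare}. These play exactly the same role for $H_0^\nu$ that $\cF_{m,\kappa}^{\pm\t}\cF_{m,\kappa}^{\mp}=\one$ and Proposition \ref{prop_inter} played in the $H_{m,\kappa}$-case, so the whole argument parallels the proof of the analogous identities \eqref{inve_3} for $H_{m,\kappa}$.

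First I would prove identity (3), the transposition formula. Starting from $W_{0;0}^{\nu;\nu';\pm}=\cF_0^{\nu\pm}(\cF_0^{\nu'\mp})^\t$ and using that $(AB)^\t = B^\t A^\t$ together with the trivial $((\cF_0^{\nu'\mp})^\t)^\t = \cF_0^{\nu'\mp}$, one immediately reads off $(W_{0;0}^{\nu;\nu';\pm})^\t = \cF_0^{\nu'\mp}(\cF_0^{\nu\pm})^\t = W_{0;0}^{\nu';\nu;\mp}$. Next I would establish (1) and (2) by inserting the inversion formula in the middle:
\begin{equation*}
(W_{0;0}^{\nu;\nu';\mp})^\t W_{0;0}^{\nu;\nu';\pm}
= \cF_0^{\nu'\pm}(\cF_0^{\nu\mp})^\t\cF_0^{\nu\pm}(\cF_0^{\nu'\mp})^\t
= \cF_0^{\nu'\pm}(\cF_0^{\nu'\mp})^\t = \one_{\R_+}(H_0^{\nu'}),
\end{equation*}
where in the middle equality we used $(\cF_0^{\nu\mp})^\t\cF_0^{\nu\pm}=\one$, and in the last equality the definition of $\one_{\R_+}(H_0^{\nu'})$. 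Identity (2) follows by the same computation with the roles of $\nu$ and $\nu'$ swapped (or equivalently by applying (1) with $\nu\leftrightarrow\nu'$ and transposing using (3)).

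Finally, for the intertwining relation (4), the cleanest route is resolvent intertwining: take $k\in\C$ with $\Re(k)>0$ and $-k^2\notin\sigma_{\rm p}(H_0^\nu)\cup\sigma_{\rm p}(H_0^{\nu'})$, and observe using Proposition \ref{prop_nightmare} (applied to both $\nu$ and $\nu'$) that
\begin{equation*}
R_0^\nu(-k^2)W_{0;0}^{\nu;\nu';\pm} = \cF_0^{\nu\pm}(\Q^2+k^2)^{-1}(\cF_0^{\nu'\mp})^\t = W_{0;0}^{\nu;\nu';\pm}R_0^{\nu'}(-k^2),
\end{equation*}
since $W_{0;0}^{\nu;\nu';\pm}$ maps $\Ran\;\!\one_{\R_+}(H_0^{\nu'})$ into $\Ran\;\!\one_{\R_+}(H_0^\nu)$ by (1)--(2). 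The intertwining of resolvents on the continuous-spectrum subspace is equivalent to the stated intertwining $W_{0;0}^{\nu;\nu';\pm}H_0^{\nu'}=H_0^\nu W_{0;0}^{\nu;\nu';\pm}$ as operator identities on the appropriate domains.

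The only subtle point, and the one I would handle most carefully, is the bookkeeping of domains in (4): because $H_0^\nu$ and $H_0^{\nu'}$ are typically non-self-adjoint, one cannot invoke a functional-calculus shortcut, so the intertwining must be read off from the resolvent identity above after checking that $W_{0;0}^{\nu;\nu';\pm}$ sends $\Dom(H_0^{\nu'})\cap\Ran\;\!\one_{\R_+}(H_0^{\nu'})$ into $\Dom(H_0^\nu)$, which is immediate from the resolvent intertwining. Everything else is purely algebraic manipulation of the two-line facts already proven.
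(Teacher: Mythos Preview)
Your proposal is correct and follows exactly the approach the paper intends: the paper states that ``these definitions immediately imply the following identities'' and gives no explicit proof, and your argument simply spells out those immediate consequences using the inversion relation $(\cF_0^{\nu\pm})^\t\cF_0^{\nu\mp}=\one$, the definition of $\one_{\R_+}(H_0^\nu)$, and the intertwining \eqref{eq_perm_4} from the proof of Proposition~\ref{prop_nightmare}. One minor remark: the resolvent intertwining $R_0^\nu(-k^2)W_{0;0}^{\nu;\nu';\pm}=W_{0;0}^{\nu;\nu';\pm}R_0^{\nu'}(-k^2)$ that you obtain actually holds on all of $L^2(\R_+)$ (not just on the continuous subspace), since both \eqref{eq_perm_4} and its transpose are unrestricted identities of bounded operators; this makes the passage to $W_{0;0}^{\nu;\nu';\pm}H_0^{\nu'}=H_0^\nu W_{0;0}^{\nu;\nu';\pm}$ even cleaner than you suggest.
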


By \eqref{inve2},
$\big(W_{0;0}^{\nu;\nu';-}\big)^\t$ can be treated as an inverse of
$W_{0;0}^{\nu;\nu';+}$. Therefore, we define
the scattering operator as
\begin{equation*}
S_{0;0}^{\nu;\nu'}:=\big(W_{0;0}^{\nu;\nu';-}\big)^\t W_{0;0}^{\nu;\nu';-}.
\end{equation*}
Clearly, the scattering operator can be expressed in terms of the
Hankel transform:
\begin{equation*}
S_{0;0}^{\nu;\nu'}=\cF_0^{\nu' +}\;\! (\cF_0^{\nu -})^\t\;\!\cF_0^{\nu -} \;\!(\cF_0^{\nu' +})^\t.
\end{equation*}

In order to  analyze the scattering operator it is convenient to
introduce the operators
\begin{equation}\label{reso4}
\cG_0^{\nu \mp}:=(\cF_0^{\nu \mp})^\t\;\!\cF_0^{\nu \mp}.
\end{equation}

\begin{proposition}
The following equalities hold:
\begin{equation*}
\cG_0^{\nu\mp} =  \frac{\gamma+\ln\big(\frac{\Q}{2}\big)-\nu\pm \i \frac{\pi}{2}}{\gamma+\ln\big(\frac{\Q}{2}\big)-\nu\mp \i \frac{\pi}{2}}.
\end{equation*}
\end{proposition}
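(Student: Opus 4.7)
The plan is to substitute the two alternative formulas \eqref{eq_alter_1} and \eqref{eq_alter_2} into the definition $\cG_0^{\nu\mp}:=(\cF_0^{\nu\mp})^\t\cF_0^{\nu\mp}$. The middle factor $\cF_0\cF_0$ collapses to $\one$ by Proposition \ref{prop_sur_A}. Introducing the shorthand $L:=\gamma+\ln(\Q/2)-\nu$ (a function of $\Q$ only) and $T:=\i\tfrac{\pi}{2}\tanh(\tfrac{\pi}{2}A)$ (a function of $A$ only), this reduces the identity to showing
\begin{equation*}
\cG_0^{\nu\mp}\;=\;\frac{1}{L\mp\i\pi/2}\,(L+T)(L-T)\,\frac{1}{L\mp\i\pi/2}\;=\;\frac{L\pm\i\pi/2}{L\mp\i\pi/2}.
\end{equation*}

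The core of the argument is then the operator identity $(L+T)(L-T)=L^2+\tfrac{\pi^2}{4}$. Expanding, $(L+T)(L-T)=L^2-T^2+[T,L]$. Here $-T^2=\tfrac{\pi^2}{4}\tanh^2(\tfrac{\pi}{2}A)$, so what I need is
\begin{equation*}
[T,L]\;=\;\tfrac{\pi^2}{4}\cosh^{-2}\!\big(\tfrac{\pi}{2}A\big),
\end{equation*}
for then $\tanh^2+\cosh^{-2}=1$ gives the desired $\pi^2/4$. Once this is in hand, the right-hand side factors as the product of commuting functions of $\Q$, namely $(L-\i\pi/2)(L+\i\pi/2)$, and the outer factors $\frac{1}{L\mp\i\pi/2}$ cancel one of them, yielding the claim.

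The main obstacle is justifying the commutator identity $[T,L]=\tfrac{\pi^2}{4}\cosh^{-2}(\tfrac{\pi}{2}A)$, which is the only place where $A$ and $\Q$ interact. Starting from the canonical relation $[A,\ln\Q]=-\i$ (a direct computation using $A=\tfrac{1}{2\i}(\partial_x x+x\partial_x)$, valid on a suitable core), the functional calculus formula $[f(A),\ln\Q]=-\i\,f'(A)$ applied to $f(t)=\i\tfrac{\pi}{2}\tanh(\tfrac{\pi}{2}t)$ gives $[T,\ln\Q]=\tfrac{\pi^2}{4}\cosh^{-2}(\tfrac{\pi}{2}A)$, and since constants commute this equals $[T,L]$. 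This commutator calculation is entirely parallel to the one carried out explicitly in \eqref{eq_comput}, so it can either be justified by the same kernel computation on $C_{\rm c}^\infty(\R_+)$ and then extended by density/boundedness, or deduced from \eqref{eq_comput} by linear combination of the two sign choices there (note $2\i\tanh(\tfrac{\pi}{2}A)=\i\e^{-\pi A/2}/\cosh(\tfrac{\pi}{2}A)-\i\e^{\pi A/2}/\cosh(\tfrac{\pi}{2}A)$). Either route closes the argument; the rest is algebra.
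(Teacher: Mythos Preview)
Your proof is correct and follows a cleaner route than the paper's. The paper starts from the representations \eqref{eq_hp1} and \eqref{eq_hp2} in terms of $\Xi_0$ and $\Xi_0^\pm$, expands the product $(\cF_0^{\nu\mp})^\t\cF_0^{\nu\mp}$ using the identities \eqref{eq_Xi_0}--\eqref{eq_Xi_2}, and then invokes the commutator \eqref{eq_comput} together with some algebraic manipulation to collapse everything to a function of $\Q$. You instead use the alternative formulas \eqref{eq_alter_1}--\eqref{eq_alter_2}, which have already absorbed the $\Xi$ products into the single operator $\tanh(\tfrac{\pi}{2}A)$; this makes the structure $(L+T)(L-T)$ transparent and lets the hyperbolic identity $\tanh^2+\cosh^{-2}=1$ do all the work once the commutator $[T,L]=\tfrac{\pi^2}{4}\cosh^{-2}(\tfrac{\pi}{2}A)$ is established. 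Both arguments rest on the same underlying commutation relation $[\i A,\ln\Q]=\one$, but yours reaches the answer with less bookkeeping.

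One small slip: in your parenthetical remark about deducing the $\tanh$ commutator from \eqref{eq_comput}, the sign in the decomposition of $2\i\tanh(\tfrac{\pi}{2}A)$ is reversed (since $2\tanh x = \e^{x}/\cosh x - \e^{-x}/\cosh x$). This does not affect your main argument, which relies on the functional-calculus formula $[f(A),\ln\Q]=-\i f'(A)$ and stands on its own.
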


\begin{proof}
Starting with \eqref{eq_hp1} and \eqref{eq_hp2}, and
by taking successively the equalities \eqref{eq_Xi_0}, \eqref{eq_Xi_1} and \eqref{eq_Gamma} into account, one infers that
\begin{align*}
(\cF_0^{\nu \mp})^\t\;\!\cF_0^{\nu \mp}
& = \Ju \Big\{
\one  \pm \frac{\i \frac{\pi}{2}}{\gamma-\ln(2\Q)-\nu\mp \i \frac{\pi}{2}}
\;\! \frac{\e^{\mp \frac{\pi}{2}A}}{\cosh\big(\frac{\pi}{2}A\big)} \\
& \quad \pm \frac{\e^{\pm \frac{\pi}{2}A}}{\cosh\big(\frac{\pi}{2}A\big)} \;\!\frac{\i \frac{\pi}{2}}{\gamma-\ln(2\Q)-\nu\mp \i \frac{\pi}{2}} \\
& \quad + \frac{\i \frac{\pi}{2}}{\gamma-\ln(2\Q)-\nu\mp \i \frac{\pi}{2}}\;\!
\frac{1}{\cosh^2\big(\frac{\pi}{2}A\big)}\;\!
\frac{\i \frac{\pi}{2}}{\gamma-\ln(2\Q)-\nu\mp \i \frac{\pi}{2}}\Big\}\Ju.
\end{align*}
By observing then that
\begin{align*}
& \frac{\i \frac{\pi}{2}}{\gamma-\ln(2\Q)-\nu\mp \i \frac{\pi}{2}}
\;\! \frac{\e^{\mp \frac{\pi}{2}A}}{\cosh\big(\frac{\pi}{2}A\big)} \\
& = \frac{\e^{\mp \frac{\pi}{2}A}}{\cosh\big(\frac{\pi}{2}A\big)} \;\!
\frac{\i \frac{\pi}{2}}{\gamma-\ln(2\Q)-\nu\mp \i \frac{\pi}{2}} \\
&\quad  + \frac{\pi}{2}\frac{1}{\gamma-\ln(2\Q)-\nu\mp \i \frac{\pi}{2}}
\Big[i\frac{\e^{\mp \frac{\pi}{2}A}}{\cosh\big(\frac{\pi}{2}A\big)},\ln(\Q)\Big]
\frac{1}{\gamma-\ln(2\Q)-\nu\mp \i \frac{\pi}{2}}
\end{align*}
and by using \eqref{eq_comput} together with some algebraic manipulations, one directly infers that
\begin{equation*}
(\cF_0^{\nu \mp})^\t\;\!\cF_0^{\nu \mp}
= \Ju \Big\{\one \pm 2 \frac{\i \frac{\pi}{2}}{\gamma-\ln(2\Q)-\nu\mp \i \frac{\pi}{2}} \Big\}\Ju
= \Ju \frac{\gamma-\ln(2\Q)-\nu\pm \i \frac{\pi}{2}}{\gamma-\ln(2\Q)-\nu\mp \i \frac{\pi}{2}} \Ju.
\end{equation*}
The statement follows then by the definition of $\Ju$.
\end{proof}

Let us stress once again that $\cG_0^{\nu \mp}$ are simply functions of $\Q$.
Finally, by using the Hankel transformations, one can obtain a diagonal
representation of scattering operators. These operators are expressed in terms
of the operators \eqref{reso4}, namely
\begin{equation*}
(\cF_{0}^{\nu' -})^\t \;\!S_{0;0}^{\nu;\nu'}\;\!\cF_{0}^{\nu' +}
= \cG_0^{\nu -}\cG_0^{\nu'+}
= (\cF_{0}^{\nu' +})^\t \;\!S_{0;0}^{\nu;\nu'} \;\!\cF_{0}^{\nu'-}.
\end{equation*}

In the non-exceptional case, the operator $H_0^\nu$ generates a bounded
one-parameter group, at least on the range of the projection
$\one_{\R_+}(H_0^\nu)$, by the formula
\begin{equation*}
\e^{\i tH_0^\nu}\one_{\R_+}(H_0^\nu)=
\one_{\R_+}(H_0^\nu)  \e^{\i tH_0^\nu}
:=\cF_0^{\nu\pm}\e^{\i t\Q^2}(\cF_0^{\nu \mp})^\t.
\end{equation*}
We finally show that
$W_{0;0}^{\nu;\nu';\pm}$ correspond to the M{\o}ller operators
as usually defined by the time-dependent scattering theory.

\begin{proposition}
For any $\nu,\nu'\in \C$ not exceptional
the M{\o}ller  operators exist and coincide with the operators
defined in \eqref{moller2}:
\begin{equation*}
\slim_{t\to \pm \infty}
\one_{\R_+}(H_0^\nu)\e^{\i tH_0^\nu}\e^{-\i tH_0^{\nu'}}\one_{\R_+}(H_0^{\nu'})
=  W_{0;0}^{\nu;\nu';\pm}.
\end{equation*}
\end{proposition}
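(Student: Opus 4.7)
The plan is to adapt the proof of Proposition \ref{prop_wave1} to the present setting. First, I will combine the Borel extension of \eqref{to_be_extended} with the bounded functional calculus to write $\e^{\i tH_0^\nu}\one_{\R_+}(H_0^\nu)=\cF_0^{\nu\pm}\e^{\i t\Q^2}(\cF_0^{\nu\mp})^\t$ (valid for either choice of sign), and then pick the same sign in both evolutions. This reduces the proposition to establishing
\begin{equation*}
\slim_{t\to\pm\infty}\e^{\i t\Q^2}(\cF_0^{\nu\mp})^\t\cF_0^{\nu'\pm}\e^{-\i t\Q^2}=\one.
\end{equation*}

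Next, setting $L:=\gamma+\ln(\Q/2)$ and $T:=\i\frac{\pi}{2}\tanh(\frac{\pi}{2}A)$, the alternative forms \eqref{eq_alter_1} and \eqref{eq_alter_2}, together with $\cF_0^2=\one$, collapse the middle product to
\begin{equation*}
(\cF_0^{\nu\mp})^\t\cF_0^{\nu'\pm}=\frac{1}{L-\nu\mp\i\frac{\pi}{2}}(L-\nu+T)(L-\nu'-T)\frac{1}{L-\nu'\pm\i\frac{\pi}{2}}.
\end{equation*}
The key algebraic input is the identity $[T,L]-T^2=(\frac{\pi}{2})^2$, which follows from $[\i A,\ln(\Q)]=\one$ (yielding $[T,L]=(\frac{\pi}{2})^2\mathrm{sech}^2(\frac{\pi}{2}A)$) combined with $T^2=-(\frac{\pi}{2})^2\tanh^2(\frac{\pi}{2}A)$ and the elementary relation $\mathrm{sech}^2+\tanh^2=1$. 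Expanding the product then produces
\begin{equation*}
(L-\nu+T)(L-\nu'-T)=L^2-(\nu+\nu')L+\nu\nu'+\big(\tfrac{\pi}{2}\big)^2+(\nu-\nu')T,
\end{equation*}
so all $A$-dependence is isolated in the single term $(\nu-\nu')T$.

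For the strong limit, I will apply Lemma \ref{lem_limit_A} to the bounded continuous function $\tanh$; the sign convention forced by \eqref{eq-eq} (as used in the proof of Theorem \ref{corol_existence}) gives $\slim_{t\to\pm\infty}\e^{\i t\Q^2}T\e^{-\i t\Q^2}=\mp\i\frac{\pi}{2}$. Since the outer factors $\frac{1}{L-\nu\mp\i\pi/2}$ and $\frac{1}{L-\nu'\pm\i\pi/2}$ are bounded (as $\nu,\nu'$ are not exceptional) and commute with $\e^{\i t\Q^2}$, and strong convergence is preserved under pre- and post-composition by bounded operators, the middle expression converges strongly to
\begin{equation*}
\frac{1}{L-\nu\mp\i\frac{\pi}{2}}\Big(L^2-(\nu+\nu')L+\nu\nu'+\big(\tfrac{\pi}{2}\big)^2\mp\i\tfrac{\pi}{2}(\nu-\nu')\Big)\frac{1}{L-\nu'\pm\i\frac{\pi}{2}}.
\end{equation*}
A short calculation identifies the parenthesized polynomial as $(L-\nu\mp\i\frac{\pi}{2})(L-\nu'\pm\i\frac{\pi}{2})$, so the four $\Q$-dependent factors cancel pairwise and the limit equals $\one$.

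The only real obstacle is bookkeeping: tracking the signs inherited from Lemma \ref{lem_limit_A} (the correct convention is $\slim_{t\to\pm\infty}\e^{\i t\Q^2}\psi(A)\e^{-\i t\Q^2}=\psi(\mp\infty)$, as forced by the commutation $\e^{\i t\Q^2}A\e^{-\i t\Q^2}=A-2t\Q^2$) and verifying the commutator identity $[T,L]-T^2=(\frac{\pi}{2})^2$ which is the algebraic fact that makes the final cancellation work. Beyond this no new analytic input is required.
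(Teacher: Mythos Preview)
Your argument is correct and takes a genuinely different route from the paper. The paper works with the $\Xi_0,\Xi_0^\pm$ representation of \eqref{eq_hp1}--\eqref{eq_hp2}, expands $(\cF_0^{\nu\mp})^\t\cF_0^{\nu'\pm}$ into four terms, and then feeds the eight explicit limit values $\Xi_0(\pm\infty)\Xi_0(\mp\infty)$, $\Xi_0^\pm(\pm\infty)\Xi_0(\mp\infty)$, etc., into Lemma~\ref{lem_limit_A}. You instead use the $\tanh$ representation \eqref{eq_alter_1}--\eqref{eq_alter_2}, and the commutator identity $[T,L]-T^2=(\tfrac{\pi}{2})^2$ (essentially \eqref{eq_comput} rewritten) collapses the expansion so that the only $A$-dependence sits in the single bounded term $(\nu-\nu')T$; one application of Lemma~\ref{lem_limit_A} then suffices. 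Your approach is cleaner and makes the algebraic mechanism behind the cancellation transparent, while the paper's approach stays closer in form to the proof of Proposition~\ref{prop_wave1}. One minor point worth making explicit in your write-up: after pulling $\e^{\i t\Q^2}$ past the outer $\Q$-factors, the inner expression $(L-\nu+T)(L-\nu'-T)$ is unbounded, so the strong-limit argument should be stated for the two bounded summands $\frac{1}{L-\nu\mp\i\pi/2}\big(L^2-(\nu+\nu')L+\nu\nu'+(\tfrac{\pi}{2})^2\big)\frac{1}{L-\nu'\pm\i\pi/2}$ and $(\nu-\nu')\frac{1}{L-\nu\mp\i\pi/2}T\frac{1}{L-\nu'\pm\i\pi/2}$ separately, which is clearly what you intend.
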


\begin{proof}
The proof is parallel to the proof of Proposition \ref{prop_wave1}.
One first observes that
\begin{align*}
(\cF_{0}^{\nu \mp})^\t \cF_{0}^{\nu' \pm}
& = \Big(\Xi_0(-A) \pm \frac{\i \frac{\pi}{2}}{\gamma+\ln\big(\frac{\Q}{2}\big)-\nu\mp \i \frac{\pi}{2}}
\;\!\Xi_0^\pm(-A)\Big)  \\
& \quad \times
\Big(\Xi_0(A) \mp \Xi_0^\mp(A) \frac{\i \frac{\pi}{2}}{\gamma +\ln\big(\frac{\Q}{2}\big)-\nu' \pm \i \frac{\pi}{2}}
\Big).
\end{align*}
By using the explicit expression for the products of $\Xi_0$ and $\Xi_0^\pm$ as given in \eqref{eq_Xi_0},
\eqref{eq_Xi_1} and \eqref{eq_Xi_2},
one then infers that for the four maps $t\mapsto \Xi_{0}(-t)\Xi_{0}(t)$,
$t\mapsto \Xi_{0}^\pm(-t)\Xi_{0}(t)$, $t\mapsto \Xi_{0}(-t)\Xi_{0}^\mp(t)$
and $t\mapsto \Xi_{0}^\pm(-t)\Xi_{0}^\mp(t)$ belong to
$C\big([-\infty,\infty]\big)$ with
\begin{eqnarray*}
\Xi_{0}(-\infty)\Xi_{0}(\infty) = 1 & \hbox{and} & \Xi_{0}(\infty)\Xi_{0}(-\infty)=1, \\
\Xi_{0}^+(-\infty)\Xi_{0}(\infty) = 2 & \hbox{and} & \Xi_{0}^+(\infty)\Xi_{0}(-\infty)= 0,\\
\Xi_{0}^-(-\infty)\Xi_{0}(\infty) = 0 & \hbox{and} & \Xi_{0}^-(\infty)\Xi_{0}(-\infty)= 2,\\
\Xi_{0}^+(-\infty)\Xi_{0}^-(\infty) = 4 & \hbox{and} & \Xi_{0}^+(\infty)\Xi_{0}^-(-\infty)=0.
\end{eqnarray*}

Based on these observations one directly deduces from Lemma \ref{lem_limit_A} that
\begin{align*}
&\slim_{t\to \pm \infty} \e^{\i t\Q^2}    (\cF_{0}^{\nu \mp})^\t \cF_{0}^{\nu' \pm} \e^{-\i t \Q^2} \\
&=\slim_{t\to \pm \infty} \e^{\i t\Q^2}
\Big(\Xi_0(-A) \pm \frac{\i \frac{\pi}{2}}{\gamma+\ln\big(\frac{\Q}{2}\big)-\nu\mp \i \frac{\pi}{2}}
\;\!\Xi_0^\pm(-A)\Big) \times \\
& \qquad \qquad \qquad \times
\Big(\Xi_0(A) \mp \Xi_0^\mp(A) \frac{\i \frac{\pi}{2}}{\gamma +\ln\big(\frac{\Q}{2}\big)-\nu' \pm \i \frac{\pi}{2}}
\Big) \e^{-\i t \Q^2} \\
& = \one .
\end{align*}
The remaining argument is similar to the one of the proof of Proposition \ref{prop_wave1}.
\end{proof}

\appendix

\section{Bessel family for dimension $1$}\label{secB1}
\setcounter{equation}{0}

In this section, we gather several properties of the Bessel family for dimension $1$.
If no additional restriction is imposed, the parameter $m$ is an arbitrary element of $\C$,
but a special attention is often required when $m\in \{\dots,-2,-1\}$.
All the described properties follow from the theory of the
usual Bessel family of dimension 2 described in literature. However,
in general, we outline an independent derivation.

\subsection{The function $\Ia_m$}

The \emph{modified Bessel function for dimension $1$}, denoted $\Ia_m$, is defined by an everywhere convergent power series
or by the so-called Schl\"afli integral representation:
\begin{align}\label{asym2}
\Ia_m(z)& = \sum_{n=0}^\infty\frac{\sqrt\pi\left(\frac{z}{2}\right)^{2n+m+\frac12}}{n!\Gamma(m+n+1)}\\
\nonumber &=\frac{\sqrt{z}}{\sqrt{2\pi}}\int_0^{\pi}
\e^{z\cos(\phi)}\cos(m\phi)\;\!\d\phi\\
\nonumber & \quad -\frac{\sqrt{z}}{\sqrt{2\pi}}\sin(\pi m)\int_0^\infty
\e^{-z\cosh(\beta)-m\beta}\;\!\d\beta \quad \hbox{ for }\Re(z)>0.
\end{align}
Note that since $z\mapsto 1/\Gamma(z)$ is an entire function, the first
expression is meaningful even for $m\in \{\dots,-2,-1\}$.
Clearly, the function $\Ia_m$ is is analytic on $\cc\setminus ]-\infty,0]$, and it is analytic on $\C$ for $m\in \Z+\frac{1}{2}$.

The function $\Ia_m$ has the following asymptotics for $z$ near $0$:
\begin{equation}\label{eq_Im_asym}
\Ia_m(z)=  \frac{\sqrt\pi}{\Gamma(m+1)} \left(\frac{z}{2}\right)^{m+\frac12} + O(|z|^{\Re(m)+\frac{5}{2}}).
\end{equation}
On the other hand, for $m\in \Z$ we have
\begin{equation*}
\Ia_{-m}(z)=\Ia_{m}(z)
=\frac{\sqrt{z}}{\sqrt{2\pi}}\int_0^\pi\e^{z\cos(\phi)}\cos(m\phi)\;\!\d\phi.
\end{equation*}
For any $m$, the analytic continuation around $0$ by the angle $\pm\pi$ multiplies
$\Ia_m$ by a phase factor, namely
\begin{equation*}
\Ia_m(\e^{\pm\i\pi}z)=\e^{\pm\i \pi (m+\frac12)}\Ia_m(z).
\end{equation*}
The value $\Ia_m(z)$ is real for $z>0$ and for $m\in\rr$, and more generally one has
\begin{equation*}
\overline{\Ia_m(z)}=\Ia_{\bar m}(\bar z).
\end{equation*}
There also exists a generating function, namely for any $t\in \C^\times$ and any $z\in \cc\setminus ]-\infty,0]$~:
\begin{equation*}
\frac{\sqrt{\pi z}}{\sqrt{2}}
\exp\left(\frac{z}{2}(t+t^{-1})\right)=\sum_{m=-\infty}^\infty
t^m\Ia_m(z).
\end{equation*}

\subsection{The function $\Ka_m$}\label{subsec_K_m}

The \emph{MacDonald function for dimension $1$}, denoted $\Ka_m$,
can be defined for $\Re(z)>0$ by the integral representation
\begin{equation}
\Ka_m(z):=\frac{\sqrt{z}}{\sqrt{2\pi}}\int_0^\infty\exp
\left(-\frac{z}{2}(s+s^{-1})\right)s^{-m-1}\d s.
\label{defo}\end{equation}
It extends by analytic continuation onto a larger domain with a
possible branch point at $0$.
Outside of the basic region $\Re(z)>0$ one can obtain
other integral formulas by an appropriate
deformation of the contour of integration in \eqref{defo},
see for example \cite[Sec.~6.22]{W}.

For $m\not \in \Z$ the functions $\Ia_m$ and $\Ka_m$ are connected by the equality
\begin{equation}\label{macdo1}
\Ka_m(z)= \frac{1}{\sin(\pi m)}\big(\Ia_{-m}(z)-\Ia_{m}(z)\big).
\end{equation}
For $m\in \Z$, the relation \eqref{macdo1} can be extended by
l'H\^opital's rule:
\begin{equation}\label{lap10}
\Ka_m(z) =\frac{(-1)^{m+1}}{\pi}
\left(\frac{\d}{\d n}\Ia_n(z)\Big|_{n=-m}
+\frac{\d}{\d n}\Ia_n(z)\Big|_{n=m}\right).
\end{equation}
By setting $h(n):=\sum\limits_{k=1}^{n-1}\frac{1}{k}$ and $h(1)=0$
the equality \eqref{lap10} leads
for $m\in $
to
\begin{align}
\nonumber \Ka_m(z)&=(-1)^{m+1} \frac{2}{\pi}\;\!\Big(\ln\Big(\frac{z}{2}\Big)+\gamma\Big)\;\!\Ia_m(z)\\
\nonumber &\quad +\frac{1}{\sqrt\pi}\sum_{k=0}^{m-1}
(-1)^k \frac{(m-k-1)!}{k!} \Big(\frac{z}{2}\Big)^{2k-m+\frac12} \\
&\quad +\frac{(-1)^m}{\sqrt\pi}\sum_{k=0}^\infty
\frac{h(k+1)+h(m+k+1)}{k!(m+k)!}\Big(\frac{z}{2}\Big)^{2k+m+\frac12},
\label{bess5}
\end{align}
where $\gamma$ is the Euler's constant.
 Note that only the first and the third terms
are present in the special case $m=0$.

The function $\Ka_m$ is analytic for $\cc\setminus]-\infty,0]$
and satisfies $\Ka_{-m}(z)=\Ka_m(z)$.
We also have
\begin{equation*}
\overline{\Ka_m(z)}=\Ka_{\bar m}(\bar z),
\end{equation*}
and the value $\Ka_m(z)$ is real for $z>0$ and for $m\in\rr$ or $m\in\i\rr$.

The function $\Ka_m$ has a well-defined asymptotic at infinity, namely for
any $\epsilon>0$ and $|\arg(z)|<\pi-\epsilon$:
\begin{equation}\label{tri4a}
\Ka_m(z) = \e^{-z}\big(1+O(|z|^{-1})\big).
\end{equation}

The asymptotics near $0$ can be obtained from \eqref{macdo1} and
\eqref{bess5}.
We present the asymptotics in the strip $|\Re(m)|<1$:
\begin{equation}\label{eq:besselfunc2_1}
\Ka_m(z) =
\left\{ \begin{array}{lcl}
-\frac{\sqrt{2  z}}{\sqrt\pi}\big(\ln\left(\frac{z}{2}\right)+\gamma\big)+O\big(|z|^{\frac52}\ln(|z|)\big)
& {\rm if} & m=0,\\
\frac{\Gamma(m)}{\sqrt\pi}
\left(\frac{z}{2}\right)^{-m+\frac12}+O(|z|^{\Re(m)+\frac{1}{2}}) &   {\rm if} & 1>\Re (m)>0,\\
\frac{\Gamma(-m)}{\sqrt\pi}
\left(\frac{z}{2}\right)^{m+\frac12} +O(|z|^{-\Re(m)+\frac{1}{2}})  & {\rm if} & -1<\Re (m)<0, \\
\frac{\sqrt{\pi}}{\sin(\pi m)} \left(\frac{z}{2}\right)^{\frac12}
\big(\frac{(z/2)^{-m}}{\Gamma(1-m)}-\frac{(z/2)^m}{\Gamma(1+m)}\big) +O(|z|^{\frac{5}{2}})
& \rm if & m\in \i \R^\times.
\end{array} \right.
\end{equation}
Actually, we will only need  the following estimates, valid for
all $m$: For $|z|<1$ we have
\begin{equation*}
|\Ka_m(z)| \leq
\left\{ \begin{array}{lcl}
C_0|z|^{\frac12}(1+|\ln(z)|)
& {\rm if} & m=0,\\
C_m|z|^{-{ |\Re (m)|} +\frac12}
&  {\rm if} & m\neq 0,
\end{array} \right.
\end{equation*}
for some constants $C_0$ and $C_m$ independent of $z$.

Let us also mention
another relation between the functions $\Ia_m$ and $\Ka_m$, namely
\begin{equation*}
\Ia_m(z)=\frac12\big(\Ka_m(\e^{-\i\pi}z) -\e^{\i\pi(m-\frac12)}\Ka_m(z)\big).
\end{equation*}

\subsection{Additional properties of $\Ia_m$ and $\Ka_m$}\label{subsec_Wronsk}

As already mentioned in Section \ref{sec_Bessel_eq} the functions
$\Ia_m$ and $\Ka_m$ are solutions of the modified Bessel equation for dimension $1$:
\begin{align}
\left(\partial_z^2-\Big(m^2-\frac14\Big)\frac{1}{z^2}-1\right)v(z) & = 0.
\end{align}
They form a basis of solutions of this equation.

For the three functions $\Ia_m$, $\Ia_{-m}$ and $\Ka_m$ their respective
Wronskians \eqref{eq_Wronsk} can be computed and are independent of $z$, namely:
\begin{align*}
W_z(\Ka_m,\Ia_m)&=1,\\
W_z(\Ka_m,\Ia_{-m})&=1,\\
W_z(\Ia_m,\Ia_{-m})&=-\sin(\pi m).
\end{align*}

Let $\La_m$ denote either the functions $\Ia_m$
or the function $\e^{i\pi m}\Ka_m$.
Then the following
\emph{contiguous relations} are satisfied:
\begin{align*}
2m \La_m(z)&=z\La_{m-1}(z)-z\La_{m+1}(z), \\
2m \partial_z \La_m(z)&=\Big(m+\frac12\Big)\La_{m-1}(z)+\Big(m-\frac12\Big)
\La_{m+1}(z).
\end{align*}
They imply the following \emph{recurrence relations}~:
\begin{align*}
\partial_z\left(z^{m-\frac12} \La_m(z)\right)& =z^{m-\frac12}\La_{m-1}(z), \\
\left(\partial_z+\Big(m-\frac12\Big)\frac{1}{z}\right)\La_m(z)& =\La_{m-1}(z), \\
\partial_z\left(z^{-m-\frac12} \La_m(z)\right)&=z^{-m-\frac12}\La_{m+1}(z), \\
\left(\partial_z+\Big(-m-\frac12\Big)\frac{1}{z}\right)\La_m(z)&= \La_{m+1}(z).
\end{align*}

Let us finally observe that for $m=\pm\frac12$ the functions
$\Ia_m$ and $\Ka_m$ coincide with well-known elementary functions:
\begin{align*}
\Ka_{\pm\frac12}(z)&= \e^{-z},\\
\Ia_{-\frac12}(z)&= \cosh(z),\\
\Ia_{\frac12}(z)&= \sinh(z).
\end{align*}
The simplicity of these relations is one of the motivations
for introducing the Bessel family for dimension $1$.

\subsection{The function $\Ja_m$}

The \emph{Bessel function for dimension $1$}, denoted  $\Ja_m$, is defined by the following relations
\begin{align}
\label{forsure11} \Ja_m(z)& = \e^{\pm\i\frac{\pi}{2}(m+\frac12)}\Ia_m(\e^{\mp\i\frac{\pi}{2}}z),\\
\nonumber &= \sum_{n=0}^\infty\frac{(-1)^n\sqrt\pi\left(\frac{z}{2}\right)^{2n+m+\frac12}}
{n!\Gamma(m+n+1)} \\
\label{forsure1} & = \frac{1}{2}\left(\e^{-\i\frac{\pi}{2}(m+\frac12)}\Ka_m(\e^{-\i\frac{\pi}{2}} z)+
\e^{\i\frac{\pi}{2}(m+\frac12)}\Ka_m(\e^{\i\frac{\pi}{2}} z) \right).
\end{align}
This function is clearly analytic on $\cc\setminus ]-\infty,0]$,
and it is analytic on $\C$ for $m\in\zz+\tfrac12$.

Some additional properties of this function are
\begin{equation*}
\Ja_m(\e^{\pm \i\pi}z)=\e^{\pm\i\pi (m+\frac12)}\Ja_m(z)
\end{equation*}
and
\begin{equation*}
\overline{\Ja_m(z)}=\Ja_{\bar m}(\bar z).
\end{equation*}
From the Taylor expansion, one infers that near $0$ one has
\begin{equation*}
\Ja_m(z)= \frac{\sqrt{\pi}}{\Gamma(m+1)}\Big(\frac{z}{2}\Big)^{m+\frac{1}{2}}+O(|z|^{\Re(m)+\frac{5}{2}})\ .
\end{equation*}
For large $z$ with $|\arg(z)|<\frac{\pi}{2}-\epsilon$ for some $\epsilon>0$ one also has
\begin{equation*}
\Ja_m(z)=\cos\Big(z-\frac{1}{2}\pi m -\frac{1}{4}\pi \Big) + \e^{|\Im(z)|}O(|z|^{-1})\ .
\end{equation*}

\subsection{The functions $\Ha_m^\pm$}

The \emph{Hankel functions for dimension $1$}, denoted $\Ha_m^\pm$, are essentially analytic continuations of the function
$\Ka_m$, one for the lower part and the other one for the upper
part of the complex plane. Indeed, the following relations are satisfied:
\begin{align*}
\Ha_m^{\pm}(z) & = \e^{\mp\i\frac{\pi}{2}(m+\frac12)}
\Ka_m(\e^{\mp\i\frac{\pi}{2}} z)\\
& = \frac{\e^{\mp\i\frac{\pi}{2}(m+\frac12)}}{\sin(\pi m)}
\Big(\Ia_{-m}(\e^{\mp\i\frac{\pi}{2}}z)
-\Ia_{m}(\e^{\mp\i\frac{\pi}{2}}z)\Big),
\end{align*}
from which one also infers that
\begin{equation}\label{eq_H_pm_m}
\Ha_{-m}^{\pm}(z)=\e^{\pm \i \pi m}\Ha_{m}^{\pm}(z).
\end{equation}

Some additional relations between $\Ja_m$ and $\Ha_m^\pm$ are:
\begin{align*}
\Ja_m(z) & =\frac{1}{2}\left(\Ha_m^{+}(z)+\Ha_m^{-}(z)\right),\\
\Ja_{-m}(z) & =\frac{1}{2}\left(\e^{\i \pi m}\Ha_m^{+}(z)+
\e^{-\i \pi m}\Ha_m^{-}(z)\right),\\
\Ha_m^{\pm}(z) &=\frac{-\e^{\mp\i \pi(m+\frac12)}\Ja_m(z)+\e^{\mp\i\frac{\pi}{2}} \Ja_{-m}(z)}
{\sin (\pi m)}.
\end{align*}

The following asymptotic expansions will also be necessary: from
\eqref{eq:besselfunc2_1} one infers that for any $\theta$,
$|\arg(z)|<\theta$, as  $z\to 0$,
\begin{equation}\label{eq_asym_Hm}
\Ha_m^\pm(z)=\left\{ \begin{array}{lcl}
\pm \i \e^{\mp \i \frac{\pi}{2}m}\frac{\sqrt{2}}{\sqrt{\pi}} z^{\frac{1}{2}} \big(\ln(z)+\gamma\mp \i \frac{\pi}{2}\big) + O\big(|z|^{\frac52}\ln(|z|)\big)
& {\rm if} & m=0,\\
\mp \i\frac{\Gamma(m)}{\sqrt\pi}
\left(\frac{z}{2}\right)^{-m+\frac12} + O(|z|^{\Re(m)+\frac{1}{2}})&   {\rm if} & 1>\Re (m)>0,\\
\mp \i \e^{\mp \i \pi m} \frac{\Gamma(-m)}{\sqrt\pi}
\left(\frac{z}{2}\right)^{m+\frac12} + O(|z|^{-\Re(m)+\frac{1}{2}}) & {\rm if} & -1<\Re (m)<0, \\
\mp \i \frac{\sqrt{\pi}}{\sin(\pi m)} \left(\frac{z}{2}\right)^{\frac12}
\big(\frac{(z/2)^{-m}}{\Gamma(1-m)}-\frac{\e^{\mp \i \pi m}(z/2)^m}{\Gamma(1+m)}\big) + O(|z|^{\frac{5}{2}})
& \rm if & m\in \i \R^\times.
\end{array} \right.
\end{equation}

On the other hand, for any $\epsilon >0$ the following asymptotic formulas are true for $|z|\to \infty$ with
$\arg(z)\not \in [\mp\tfrac{\pi}{2}-\epsilon,\mp\tfrac{\pi}{2}+\epsilon]$:
\begin{equation}\label{trib}
\Ha_m^\pm(z) = \e^{\pm\i (z-\frac{1}{2}\pi m-\frac{1}{4}\pi)}\big(1+O(|z|^{-1})\big)\ .
\end{equation}

\subsection{Function $\Ya_m$}\label{sec_Neumann}

The \emph{Neumann function for dimension $1$}, denoted $\Ya_m$, is defined by
\begin{equation*}
\Ya_m(z)=\frac{1}{2\i}\big(\Ha_m^+(z)-\Ha_m^-(z)\big),\qquad
\Ha_m^\pm = \Ja_m\pm \i\Ya_m.
\end{equation*}

The  function $\Ya_m$ is especially useful for $m\in\{0,1,2,\dots\}$, when we have
\begin{align*}
\Ya_m(z)=&\frac{2}{\pi}\Big(\log(\frac{z}{2})+\gamma\Big)\Ja_m(z)\\
&-\frac{1}{\sqrt\pi}\sum\limits_{k=0}^{m-1}\frac{(m-k-1)!}{k!}\Big(\frac{z}{2}\Big)^{2k-m+\frac12}\\
&-\frac{1}{\sqrt\pi}
\sum\limits_{k=0}^\infty (-1)^k\;\!\frac{h(k+1)+h(m+k+1)}{k!(m+k)!}\Big(\frac{z}{2}\Big)^{2k+m+\frac12},
\end{align*}
with the function $h$ introduced in Section \ref{subsec_K_m}.

\subsection{Additional properties of $\Ja_m$, $\Ha_m^\pm$ and $\Ya_m$}

As already mentioned in Section \ref{sec_Bessel_eq}
the functions $\Ja_m$, $\Ha_m^\pm$ and $\Ya_m$ are solutions of the Bessel equation for dimension $1$:
\begin{align}
\left(\partial_z^2-\Big(m^2-\frac14\Big)\frac{1}{z^2}+1\right)v(z) & = 0.
\end{align}
In addition their respective
Wronskian can be computed and are independent of $z$, for example,
\begin{align*}
W_z(\Ja_m,\Ja_{-m})&=-\sin(\pi m),\\
W_z(\Ha_m^-,\Ha_{m}^+)&=2\i.
\end{align*}

Let us still observe that for $m=\pm\frac12$ the resulting functions
coincide with well-known elementary functions:
\begin{align*}
\Ja_{{\frac12}}(z)&= \sin(z), \\
\Ja_{-{\frac12}}(z)&= \cos(z),\\
\Ha_{{\frac12}}^{\pm}(z)&= \e^{\pm \i(z-\frac{\pi}{2})},\\
\Ha_{-{\frac12}}^{\pm}(z)&= \e^{\pm \i z}.
\end{align*}

\subsection{Integral identities}

The following indefinite integrals follow  from the recurrence relations of Section \ref{subsec_Wronsk}:

\begin{align*}
&\int_y^\infty \Ka_m(ax)\Ka_m(bx)\d x\\
&=\frac{1}{a/b-b/a}
\Big(\frac{1}{b}\Ka_{m-1}(ay)\Ka_m(by)-\frac{1}{a}\Ka_m(ay)\Ka_{m-1}(by)\Big),\quad \Re(a+b)>0;\\
& \int_y^\infty \Ka_m(ax)^2\d x \\
& = -\frac{y}{2}\Ka_m(ay)^2+\frac{m}{a}\Ka_m(ay)\Ka_{m-1}(ay)+
\frac{y}{2}\Ka_{m-1}(ay)^2, \quad  \Re(a)>0.
\end{align*}
They imply the following definite integrals:
\begin{eqnarray}
\nonumber \int_0^\infty \Ka_m(ax)\Ka_m(bx)\d x
&=&\frac{1}{\sin(\pi m)} \frac{(a/b)^{m}-(b/a)^{m}}{\sqrt{ab}(a/b-b/a)},\\
\nonumber && \ m\neq0, \ |\Re (m)|<1,\ \Re(a+b)>0,\\
\nonumber \int_0^\infty \Ka_0(ax)\Ka_0(bx)\d x
&=&\frac{2}{\pi}\frac{\ln(a)-\ln(b)}{\sqrt{ab}(a/b-b/a)},\ \ \Re(a+b)>0,\\
\label{int3} \int_0^\infty \Ka_m(ax)^2\d x&=&\frac{ m}
{\sin (\pi m) a},  \\
\nonumber && \ m\neq0, \ |\Re (m)|<1,\ \Re (a)>0,\\
\label{int4} \int_0^\infty \Ka_0(ax)^2\d
x&=&\frac{1}{\pi a},\ \ \Re (a)>0.
\end{eqnarray}
In the same vein, let us also mention the definite integral
\begin{equation}\label{beqo}
\int_0^\infty  \Ka_m(ax) \Ja_m(bx)\d x =
\frac{(a/b)^m}{\sqrt{ab}(a/b+b/a)},
\end{equation}
see for example \cite[Eq.~6.521]{GR}.
We also derive an additional relation which will be useful later on.
For Bessel functions for dimension $2$ this result corresponds to \cite[Eq. 6.541]{GR}.

\begin{proposition}
For any $m\in \C$ with $\Re(m)>-1$ one has
\begin{equation}\label{manipu}
\frac{2}{\pi}\int_0^\infty \Ja_m(xp)\;\!\Ja_m(yp)
\frac{1}{(p^2+k^2)}\;\!\d p\\
=\frac{1}{k}\left\{\begin{matrix}\! \Ia_m(kx)\;\! \Ka_m(ky) & \hbox{ if } 0 < x < y, \\
\Ia_m(ky)\;\! \Ka_m(kx) & \hbox{ if } 0 < y < x.
\end{matrix}\right.
\end{equation}
\end{proposition}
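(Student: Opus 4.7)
The plan is to prove \eqref{manipu} by identifying both sides as the integral kernel of the resolvent $R_m(-k^2)=(H_m+k^2)^{-1}$ acting on $L^2(\R_+)$.

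First, the RHS of \eqref{manipu} is, by construction, exactly the kernel \eqref{eq_kernel_H_m} of $R_m(-k^2)$: this is the standard Green's-function formula, derived from the dimension-$1$ modified Bessel equation satisfied by $\Ia_m(k\cdot)$ and $\Ka_m(k\cdot)$ (Section \ref{subsec_Wronsk}), the Wronskian normalization $W(\Ka_m,\Ia_m)=1$ yielding $W_x\big(\Ia_m(k\cdot),\Ka_m(k\cdot)\big)(y)=-k$, and the fact that $\Ia_m(kx)$ is in $L^2$ near $0$ while $\Ka_m(kx)$ is in $L^2$ at $\infty$, the former encoding the boundary condition defining $\Dom(H_m)$. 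Hence proving \eqref{manipu} reduces to showing that the LHS represents this same kernel.

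Second, for $\Re(m)>-1$ and $\Re(k)>0$, the integral $F(x,y):=\tfrac{2}{\pi}\int_0^\infty \Ja_m(xp)\Ja_m(yp)(p^2+k^2)^{-1}\d p$ converges absolutely: by the small-argument asymptotic $\Ja_m(z)=O(|z|^{\Re(m)+1/2})$ from \eqref{asym2} the integrand is $O(p^{2\Re(m)+1})$ near $p=0$, while the bounded oscillatory behavior of $\Ja_m$ at infinity (Section \ref{sec_Bessel_eq}) makes it $O(p^{-2})$ there. Applying Fubini's theorem to $\int\!\!\int f(x)F(x,y)g(y)\,\d x\,\d y$ for $f,g\in C_{\rm c}^\infty(\R_+)$ identifies the integral operator with kernel $F$ as $T:=\cF_m(\Q^2+k^2)^{-1}\cF_m^\t$, which is bounded on $L^2(\R_+)$ by Proposition \ref{prop_sur_A}.

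Third, I would prove $T=R_m(-k^2)$ by verifying $(H_m+k^2)T=\one$ on $C_{\rm c}^\infty(\R_+)$. For $f\in C_{\rm c}^\infty(\R_+)$, set $h:=(\Q^2+k^2)^{-1}\cF_m^\t f$, so $Tf=\cF_m h$. The eigenfunction identity $(L_{m^2})_x\Ja_m(xp)=p^2\Ja_m(xp)$ (from the dimension-$1$ Bessel equation of Section \ref{sec_Bessel_eq}) combined with differentiation under the integral, justified by $h,\,p^2h\in L^1\cap L^2$, yields $L_{m^2}(Tf)=\cF_m(\Q^2 h)=\cF_m\big((\Q^2+k^2)h\big)-k^2Tf=\cF_m\cF_m^\t f-k^2Tf=f-k^2Tf$, i.e.~$(L_{m^2}+k^2)Tf=f$ pointwise on $\R_+$. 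Using the leading term $\Ja_m(z)=\tfrac{\sqrt{\pi}}{\Gamma(m+1)}(z/2)^{m+1/2}+O(|z|^{\Re(m)+5/2})$ from \eqref{asym2}, one then verifies $Tf(x)-c\,x^{m+1/2}\in\Dom(L_{m^2}^{\min})$ near $0$ for a computable constant $c$, placing $Tf\in\Dom(H_m)$. Thus $(H_m+k^2)Tf=f$, so $T=R_m(-k^2)$, and identifying kernels proves \eqref{manipu}.

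The main obstacle I anticipate is the rigorous verification of the boundary asymptotic $Tf(x)-c\,x^{m+1/2}\in\Dom(L_{m^2}^{\min})$ at $x=0$, which requires analyzing $\int_0^\infty\big[\Ja_m(xp)-\tfrac{\sqrt{\pi}}{\Gamma(m+1)}(xp/2)^{m+1/2}\big]h(p)\,\d p$ uniformly as $x\to 0$ by splitting at the scale $p\sim 1/x$ and combining the two asymptotic regimes of $\Ja_m$. Should this prove too delicate, a cleaner fallback is direct contour deformation of the LHS: for $0<x<y$, write $J_m(yp)=\tfrac12(H_m^+(yp)+H_m^-(yp))$, close the contour upward or downward using the exponential decay of $H_m^\pm$, collect the residues at $p=\pm\i k$, and apply the standard relations $J_m(\i z)=\e^{\i\pi m/2}I_m(z)$ and $H_m^+(\i z)=\tfrac{2}{\pi\i}\e^{-\i\pi m/2}K_m(z)$ to recover the RHS after reverting to the $\Ia_m,\Ka_m$ normalization.
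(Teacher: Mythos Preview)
Your main operator-theoretic approach is correct but takes a genuinely different route from the paper. The paper proves \eqref{manipu} by the very contour argument you list as a fallback: for $0<x<y$, it uses the identities \eqref{forsure11} and \eqref{forsure1} to write $\Ja_m(xp)\Ja_m(yp)=\tfrac12\big(\Ia_m(\e^{-\i\pi/2}xp)\Ka_m(\e^{-\i\pi/2}yp)+\Ia_m(\e^{\i\pi/2}xp)\Ka_m(\e^{\i\pi/2}yp)\big)$, turns the half-line integral into a full-line one, closes in the upper half-plane (using the exponential decay of $\Ka_m$ along with boundedness of $\Ia_m$ on the smaller argument for $x<y$), and picks up the simple pole at $p=\i k$. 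Your fallback using $H_m^\pm$ and the relations $J_m(\i z)\sim I_m$, $H_m^+(\i z)\sim K_m$ is exactly this argument in different notation.

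What your primary route buys is a conceptual identification: you are effectively proving Proposition \ref{priop1} from scratch (that $\cF_m(\Q^2+k^2)^{-1}\cF_m^\t=R_m(-k^2)$) and then reading off \eqref{manipu} by equating kernels. This reverses the paper's logical order, where \eqref{manipu} is the lemma that feeds into Proposition \ref{priop1}. The cost is precisely the obstacle you flag, namely the verification that $Tf\in\Dom(H_m)$ via the boundary asymptotic at $0$; this is doable but is real work compared to the one-residue computation. The paper's approach is shorter and self-contained at the level of special functions, while yours gives a cleaner explanation of \emph{why} the identity holds and would generalise more readily to other diagonalising transforms.
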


\begin{proof}
For $0<x<y$ one has by \eqref{forsure11} and \eqref{forsure1}
\begin{align*}
&\frac{2}{\pi}\int_0^\infty \Ja_m(xp)\;\!\Ja_m(yp)
\frac{1}{(p^2+k^2)}\;\!\d p\\
&=\frac{1}{\pi}\int_0^\infty\Big(\Ia_m(\e^{-\i\frac\pi2}xp)
\Ka_m(\e^{-\i\frac\pi2}yp)+ \Ia_m(\e^{\i\frac\pi2}xp)
\Ka_m(\e^{\i\frac\pi2}yp)\Big)\frac{\d p}{(p^2+k^2)} \\
&=\frac{1}{\pi}\int_{-\infty}^\infty\Ia_m(-\i xp)
\Ka_m(-\i yp)\frac{\d p}{(p^2+k^2)} \\
&=\frac{2\pi\i}{\pi}\frac{\Ia_m(xk)\Ka_m(yk)}{2\i k} \\
&=\frac{1}{k} \Ia_m(kx)\Ka_m(ky).
\end{align*}
In the last step we used the fact that the integral over the
semicircle on the upper half plane $p=R\e^{\i\phi}$, with $\phi\in[0,\pi]$, goes to
zero as $R\to\infty$. Besides, we have a single singularity of the
integrand inside the contour of integration at $p=\i k$, which is a
simple pole, whose residue has been evaluated.

A similar proof holds for $0<y<x$.
\end{proof}

\subsection{Barnes identities}

Integral representations of Bessel-type functions in terms of the
Gamma function are sometimes called
{\em Barnes identities } from the name of their discoverer. Before we
present them, let us quote a useful result about  asymptotics of the
Gamma function given in  \cite[Cor.~1.4.4]{AAR}, which is a consequence of the
Stirling formula.

\begin{lemma}\label{stirling}
Let $a,b \in \R$ with $a_1\leq a \leq a_2$ for two constants $a_1,a_2$. Then one has
\begin{equation*}
|\Gamma(a+\i  b)|=\sqrt{2\pi}|b|^{a-\frac12}\e^{-\frac{\pi}{2}|b|}\big(1+O(|b|^{-1})\big),
\end{equation*}
where the constant implied by the term $O(|\cdot|)$ depends only on $a_1$ and $a_2$.
\end{lemma}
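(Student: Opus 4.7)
The plan is to derive the estimate directly from the classical uniform complex Stirling formula
$$\Gamma(z) = \sqrt{2\pi}\, z^{z-\frac{1}{2}} e^{-z}\bigl(1 + O(|z|^{-1})\bigr),$$
valid as $|z|\to\infty$ uniformly in any sector $|\arg z|\le \pi-\delta$. Since $a$ ranges over the fixed compact interval $[a_1,a_2]$ and $|b|$ is large, the argument of $z=a+\i b$ stays uniformly bounded away from $\pm\pi$, so this expansion applies uniformly in $a\in[a_1,a_2]$.

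First I would extract the modulus. Writing $z = a+\i b = |z|e^{\i\theta}$ with $\theta=\arg z$ and taking real parts in $(z-\tfrac12)(\log|z|+\i\theta)-z$, I get
$$\bigl|z^{z-\frac{1}{2}}e^{-z}\bigr| = \exp\!\bigl((a-\tfrac{1}{2})\log|z| - b\theta - a\bigr).$$
Next I would expand the exponent to the required accuracy as $|b|\to\infty$. From $|z|^2 = b^2(1+a^2/b^2)$ I obtain $\log|z| = \log|b| + O(|b|^{-2})$, and from $\theta = \mathrm{sgn}(b)\tfrac{\pi}{2} - \arctan(a/b)$ I get $\theta = \mathrm{sgn}(b)\tfrac{\pi}{2} - a/b + O(|b|^{-3})$. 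Hence $-b\theta = -\tfrac{\pi}{2}|b| + a + O(|b|^{-2})$, and the exponent telescopes to
$$(a-\tfrac{1}{2})\log|z| - b\theta - a = (a-\tfrac{1}{2})\log|b| - \tfrac{\pi}{2}|b| + O(|b|^{-2}),$$
with implied constants bounded in terms of $a_1$ and $a_2$ only.

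Exponentiating and combining with the Stirling remainder, noting that $|z|^{-1}\asymp |b|^{-1}$, produces
$$|\Gamma(a+\i b)| = \sqrt{2\pi}\,|b|^{a-\frac{1}{2}}\,e^{-\frac{\pi}{2}|b|}\bigl(1+O(|b|^{-1})\bigr),$$
which is the claim. The only point that needs real care, and which I regard as the main obstacle, is not any individual calculation but the bookkeeping of \emph{uniformity}: every remainder arises from Taylor-expanding an analytic function of $a/b$ on a compact interval in $a$, so the implied constants can be chosen to depend only on $a_1,a_2$, but this must be checked at each step (the expansions of $\log(1+a^2/b^2)$ and $\arctan(a/b)$, and the Stirling error itself on the sector $|\arg z|\le\pi-\delta$) rather than quietly absorbed into an unspecified~$O$.
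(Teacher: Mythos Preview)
Your derivation is correct and is precisely the route the paper points to: the paper does not give its own proof but simply quotes the result from \cite[Cor.~1.4.4]{AAR}, noting that it ``is a consequence of the Stirling formula.'' Your argument supplies exactly that consequence, with the uniformity in $a\in[a_1,a_2]$ handled carefully, so there is nothing to add.
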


Let $m\in\C$ and $c\in\R$ with
\begin{equation}\label{stirling1}
c<\frac{\Re (m)}{2}.
\end{equation}
The following representation is a reformulation of
an identity found in \cite[Ch.~VI.5]{W}:
\begin{equation} \label{barnes1}
\Ja_m(x) =  \frac{1}{4\i\sqrt{\pi}} \int_{\gamma}
\frac{\Gamma(c+\frac{s}{2})}{\Gamma(m+1-c-\frac{s}{2})}
\left(\frac{x}{2}\right)^{m+\frac12-2c-s} \d s,
\end{equation}
where $\gamma$ is a contour which asymptotically coincides with
the vertical line
${]-\i\infty,+\i\infty[}$
and passes on the right of $-2c$.
Note that by Lemma \ref{stirling}
\begin{equation*}
\left|\frac{\Gamma(c+\i\frac{t}{2})}{\Gamma(m+1-c-\i\frac{t}{2})}\right|
\leq C (1+|t|)^{2c-\Re (m)-1},
\end{equation*}
hence the condition
\eqref{stirling1} implies the integrability of
the r.h.s.~of \eqref{barnes1}.

If we want that the contour is a straight vertical line, we need to
assume that  $ c\in\, \big]0,\frac{\Re (m)}{2}\big[$
(which implies $c>0$ and $\Re (m)>0$),
and then we can rewrite
\eqref{barnes1} as
\begin{equation} \label{barnes}
\Ja_m(x) =  \frac{1}{4\sqrt{\pi}} \int_{-\infty}^{+\infty}
\frac{\Gamma(c+\i\frac{t}{2})}{\Gamma(m+1-c-\i\frac{t}{2})}
\left(\frac{x}{2}\right)^{m+\frac12-2c-\i t} \d t.
\end{equation}

As shown in the proof of \cite[Lem.~6.3]{BDG}, the validity of \eqref{barnes} can then be extended
in the sense of distribution up to  $\Re(m)>-1$ and $0< c<\Re(m)+1$.
In particular, by choosing $c=\frac{m+1}{2}$ one infers that in the sense of
distributions for $\Re(m)>-1$ we have
\begin{equation}\label{eq_Barnes}
\Ja_m(x)=\frac{1}{4\sqrt\pi}\int_{-\infty}^{+\infty} \frac{\Gamma(\frac{m+\i
t+1}{2})}{\Gamma(\frac{m-\i t+1}{2})}\left(\frac{x}{2}\right)^{-\i
t-\frac12}\d t.
\end{equation}

Let us also consider a representation of Hankel functions similar to
\eqref{barnes1} and valid under the condition
\eqref{stirling1}. The next formula follows from \cite[Sec.~6.5]{W}:
\begin{equation}\label{barnes2}
\Ha_m^\pm(x) = \frac{\e^{\mp\i\pi(m+\frac12)\pm\i \pi c}}{4\i\pi^{\frac32}}
\int_{\gamma'}\Gamma\Big(c+\frac{s}{2}\Big)\Gamma\Big(c-m+\frac{s}{2}\Big)
\e^{\pm \i\frac{\pi}{2}s}\Big(\frac{x}{2}\Big)^{-2c-s+m+\frac12}\d s,
\end{equation}
where $\gamma'$ is a contour which asymptotically coincides with
the vertical line ${]-\i\infty,+\i\infty[}$ and
passes on the right of $-2c$ and $-2c+2m$. As a consequence of Lemma \ref{stirling} one infers that
\begin{equation*}
\left|  \Gamma\Big(c+\frac{\i t}{2}\Big)\Gamma\Big(c-m+\frac{\i t}{2}\Big)
\e^{\mp \frac{\pi}{2}t }\right|\leq
C (1+|t|)^{2c-\Re(m)-1}\e^{-\frac{\pi}{2}|t|}\e^{\mp
\frac{\pi}{2}t},
\end{equation*}
which guarantees the integrability of
\eqref{barnes2}. Here, we cannot choose $\gamma'$ to be a straight
line. However, if we are satisfied with the interpretation of the integral
\eqref{barnes2} in the sense of distributions, then under conditions
 $0<c$ and $\Re( m)<c$ a straight vertical line will work and we can
rewrite \eqref{barnes2} as
\begin{align*}
&\cH_m^\pm(x)\\
&= \frac{\e^{\mp\i
\pi(m+\frac12)\pm\i \pi c}}{4\pi^{\frac32}}
\int_{-\infty}^{+\infty}\Gamma\Big(c+\frac{\i t}{2}\Big)\Gamma\Big(c-m+\frac{\i t}{2}\Big)
\e^{\mp \frac{\pi}{2}t}\Big(\frac{x}{2}\Big)^{-2c-\i t+m+\frac12}\d t.
\end{align*}
In particular, for $-1<\Re(m)<1$ and by setting $c=\frac{\Re(m)+1}{2}$ we obtain after a few manipulations and in the
sense of distributions
\begin{equation}\label{eq_H_0}
\cH_m^\pm(x) = \frac{\e^{\mp\i
\frac{\pi}{2} m}}{4\pi^{\frac32}}
\int_{-\infty}^{+\infty}\Gamma\Big(\frac{-m+1+\i t}{2}\Big)\Gamma\Big(\frac{m+1+\i t}{2}\Big)
\e^{\mp \frac{\pi}{2}t}\Big(\frac{x}{2}\Big)^{-\i t-\frac12}\d t.
\end{equation}

\section{Propagation of the generator of dilations}
\setcounter{equation}{0}

We derive some relations between the generator of dilations $A$ and the
multiplication operator $\Q^2$. Note first that
\begin{equation*}
\e^{\i t\ln(\Q)} \;\!\e^{\i\tau A}\;\! \e^{-\i t\ln(\Q)} = \e^{\i\tau(A-t)}.
\end{equation*}
Therefore $\ln(\Q)$ and $A$ satisfy the canonical commutation
relations, which determines their properties up to unitary equivalence.
It is easy to see that for $\psi \in C\big([-\infty,\infty]\big)$
one has
\begin{equation*}
\e^{\i t\ln(\Q)} \psi(-A) \e^{-\i t\ln(\Q)} = \psi(-A+t)
\end{equation*}
and consequently
\begin{equation}\label{eq_help}
\slim_{t\to \pm \infty} \e^{\i t\ln(\Q)} \psi(-A) \e^{-\i t\ln(\Q)} = \psi(\pm \infty).
\end{equation}

The next lemma contains a similar result with the operator $\ln(\Q)$
replaced by $\Q^2$. It can be obtained by an abstract argument
based on Mourre theory, see \cite{R} for the details. We will give an
alternative elementary proof.
Note that since $\Q^2 = \varphi(\ln(\Q))$ with $\varphi(u)=\e^{2u}$ for any $u\in \R$ and since
$\varphi'>0$, the following statement has a flavor similar to the invariance principle in scattering theory.

\begin{lemma}\label{lem_limit_A}
For any $\psi\in C\big([-\infty,\infty]\big)$ the following equalities hold:
\begin{equation}\label{eq_to_be_proved}
\slim_{t\to \pm \infty} \e^{\i t\Q^2} \psi(-A)\;\! \e^{-\i t \Q^2} = \psi(\pm \infty).
\end{equation}
\end{lemma}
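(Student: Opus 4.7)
The plan is to unitarily conjugate the problem to the Mellin representation and then apply a stationary-phase / Riemann--Lebesgue argument. Define the unitary $V:L^2(\R_+,dx)\to L^2(\R,du)$ by $(Vf)(u):=e^{u/2}f(e^u)$; a direct calculation gives $V(-A)V^{-1}=i\partial_u=:P$ and $V\Q^2V^{-1}$ is multiplication by $e^{2u}$. Thus \eqref{eq_to_be_proved} is equivalent to
\begin{equation*}
\slim_{t\to\pm\infty}\,e^{ite^{2u}}\psi(P)\,e^{-ite^{2u}}=\psi(\pm\infty)\qquad\text{on }L^2(\R,du).
\end{equation*}
Since $e^{ite^{2u}}$ is unitary, this in turn amounts to showing $\bigl\|\bigl(\psi(P)-\psi(\pm\infty)\bigr)e^{-ite^{2u}}f\bigr\|_{L^2}\to0$ for every $f\in L^2(\R)$. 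Using the uniform bound $2\|\psi\|_\infty$, it is enough by density to treat $f\in C_{\rm c}^\infty(\R)$, say with $\supp f\subset[-N,N]$.

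By Plancherel (with $\hat f(\xi):=\int e^{-iu\xi}f(u)\,du$, so that $\mathcal{F}P\mathcal{F}^{-1}$ is multiplication by $-\xi$), the squared $L^2$-norm in question equals
\begin{equation*}
\int_\R\bigl|\psi(-\xi)-\psi(\pm\infty)\bigr|^2\bigl|I_t(\xi)\bigr|^2\,d\xi,\quad\text{where}\quad I_t(\xi):=\int_\R e^{-i(u\xi+te^{2u})}f(u)\,du.
\end{equation*}
The phase $\phi_t(u):=u\xi+te^{2u}$ has derivative $\phi_t'(u)=\xi+2te^{2u}$, whose unique zero (when $\xi/t<0$) is $u_0(\xi,t)=\tfrac12\ln\bigl(-\xi/(2t)\bigr)$. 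A standard non-stationary-phase estimate (repeated integration by parts with $L:=(\phi_t')^{-1}(-i)\partial_u$) gives, for every $k\in\nn$ and any interval $[a,b]$ on which $|\phi_t'|\geq c|t|$, the bound $|I_t(\xi)|\leq C_{k,f}\,c^{-k}|t|^{-k}$. Restricting $\xi$ to the half-line $J_+(M):=[-M,\infty[$ (for $t\to+\infty$) or $J_-(M):=\,]-\infty,M]$ (for $t\to-\infty$), either $\phi_t'$ has no zero, or $u_0\to-\infty$ as $|t|\to\infty$ and therefore eventually lies outside $[-N,N]$; in both cases one obtains
\begin{equation*}
\sup_{\xi\in J_{\pm}(M)}|I_t(\xi)|\leq C_{k,M,N,f}\,|t|^{-k},\qquad |t|\geq t_0(M,N).
\end{equation*}

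Splitting the $\xi$-integral at the threshold $\mp M$ and using $\|I_t\|_{L^2}=\sqrt{2\pi}\|f\|$ on the complementary half-line yields, for $t\to+\infty$,
\begin{equation*}
\int_\R|\psi(-\xi)-\psi(+\infty)|^2|I_t|^2\,d\xi\leq 4\|\psi\|_\infty^2\!\int_{\xi\geq-M}\!|I_t|^2\,d\xi + 2\pi\|f\|^2\sup_{s\geq M}|\psi(s)-\psi(+\infty)|^2;
\end{equation*}
the first term is $O(|t|^{-2k})$ by the previous bound, and the second tends to $0$ as $M\to\infty$ by continuity of $\psi$ at $+\infty$. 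Choosing $M$ large first and then $|t|$ large gives the limit $0$; the case $t\to-\infty$ is symmetric, with the splitting at $+M$ and the limit $\psi(-\infty)$. The main obstacle is the uniform control of the critical point $u_0(\xi,t)$: one must confirm that on the relevant half-line in $\xi$ the critical point (when it exists) is forced outside any fixed compact support as $|t|\to\infty$, which is precisely what lets the non-stationary-phase estimate apply uniformly and deliver the key decay of $I_t$.
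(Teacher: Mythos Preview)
Your approach is sound in spirit and begins exactly as the paper's proof does: the same unitary conjugation to $L^2(\R)$, turning $X^2$ into multiplication by $e^{2u}$ and $A$ into the momentum. From there the paper reduces to \emph{weak} convergence and uses a translation $\xi\mapsto\xi-t\frac{e^{2x}-e^{2y}}{x-y}$ in the kernel integral, whereas you go for strong convergence via Plancherel and a non-stationary-phase estimate on $I_t(\xi)$. Both routes are natural; yours is more quantitative.

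There is, however, a genuine gap at the key step. From the uniform bound
\[
\sup_{\xi\in J_+(M)}|I_t(\xi)|\le C_{k,M,N,f}\,|t|^{-k}
\]
you conclude that $\int_{\xi\ge -M}|I_t(\xi)|^2\,d\xi=O(|t|^{-2k})$. This does not follow, because $J_+(M)=[-M,\infty)$ has infinite measure: a uniform pointwise bound on an unbounded set gives no control on the $L^2$-integral. Since for $\xi\to+\infty$ one has $\psi(-\xi)\to\psi(-\infty)$ (not $\psi(+\infty)$), you cannot absorb the tail $\{\xi>M'\}$ into the ``small $\psi$-difference'' term either, so the issue cannot be sidestepped by a three-region splitting.

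The fix is short: keep track of the $\xi$-dependence in the integration-by-parts bound. For $t>0$, $u\in[-N,N]$ and $\xi\ge -M$ with $t\ge 2Me^{2N}$, one has $\phi_t'(u)=\xi+2te^{2u}\ge \tfrac12\big(\max(\xi,0)+2te^{-2N}\big)$, while $|\phi_t^{(j)}(u)|\le C_N\,t\le C_N'\,\phi_t'(u)$ for $j\ge2$. One integration by parts then yields
\[
|I_t(\xi)|\le \frac{C_{N,f}}{\max(\xi,0)+te^{-2N}},
\]
which is square-integrable on $[-M,\infty)$ with $\int_{-M}^\infty|I_t|^2\,d\xi\le C'(t-M)^{-1}\to0$. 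With this refinement in place, your splitting argument goes through and the proof is complete.
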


\begin{proof}
Let us first note that it is enough to show that
\begin{equation}\label{eq_to_be_proved1}
\wlim_{t\to \pm \infty} \e^{\i t\Q^2} \psi(-A)\;\! \e^{-\i t \Q^2} = \psi(\pm \infty).
\end{equation}
Indeed, this easily follows from the equality
\begin{align*}
& \big\|\big(\e^{\i t\Q^2} \psi(-A)\;\! \e^{-\i t \Q^2}-\psi(\pm\infty)\big)f\big\|^2 \\
& = \big(f|\e^{\i t\Q^2} |\psi|^2(-A)\;\! \e^{-\i t \Q^2}f\big) + \|\psi(\pm\infty)f\|^2\\
& \quad -\big(\psi(\pm\infty)f|\e^{\i t\Q^2} \psi(-A)\;\! \e^{-\i t \Q^2} f\big)
-\big(\e^{\i t\Q^2} \psi(-A)\;\! \e^{-\i t \Q^2} f|\psi(\pm\infty)f\big).
\end{align*}

We now introduce the unitary transformation $W: L^2(\R)\to L^2(\R_+)$ by
\begin{align*}
(Wf)(x)&= x^{-\frac12}f\big(\ln(x)\big)\qquad \forall x\in \R_+,\\
(W^{-1}g)(t)&= \e^{\frac{t}{2}}g\big(\e^t\big) \qquad \forall t\in \R,
\end{align*}
and check that
\begin{equation*}
W^{-1}X^2W =\e^{2Q}\qquad \hbox{and }\qquad  W^{-1}AW = P
\end{equation*}
with $Q$ and $P$ the usual self-adjoint operators of position and momentum in $L^2(\R)$.
Therefore, one infers that
\begin{equation*}
W^{-1}\e^{\i tX^2}\psi(-A)\e^{-\i tX^2}W
=\e^{\i t\e^{2Q}}\psi(-P)\e^{-\i t\e^{2Q}}.
\end{equation*}

For $f_1,f_2\in L^2(\R)$ with compact support one then observes that
\begin{align}\label{pij}
\nonumber &\big(f_1|\e^{\i t\e^{2Q}}\psi(-P)\e^{-\i t\e^{2Q}} f_2\big)\\
\nonumber &= \frac{1}{2\pi}\int\d x \int \d\xi \int\d y \;\!\bar{f_1(x)} \;\!\e^{\i t\e^{2x}}\;\!\psi(-\xi)\;\!\e^{\i(x-y)\xi}\;\!\e^{-\i t\e^{2y}} \;\!f_2(y)\\
\nonumber &= \frac{1}{2\pi} \int\d x\int\d\xi\int\d y \;\!\bar{f_1(x)} \;\!\psi(-\xi)\;\! \exp\left(\i(x-y)\big(\xi+  t\tfrac{\e^{2x}- \e^{2x}}{x-y}\big)\right) \;\!f_2(y)\\
&= \frac{1}{2\pi} \int\d x\int\d\xi\int\d y \;\!\bar{f_1(x)} \psi\left(-\xi+ t\tfrac{\e^{2x}-\e^{2x}}{x-y} \right)\e^{\i(x-y)\xi} f_2(y).
\end{align}
Clearly, for any $x,y$, one has
\begin{equation*}
\frac{\e^{2x}-\e^{2y}}{x-y}>0,
\end{equation*}
and thus for $y\in\supp f_2$ and $x\in\supp f_1$ there exists a strictly positive $c_0$ such that
\begin{equation*}
\frac{\e^{2x}-\e^{2y}}{x-y}\geq c_0>0.
\end{equation*}
Finally, one easily obtains that \eqref{pij} converges as $t\to \pm \infty$ to
\begin{equation*}
\big(f_1|\psi(\pm\infty)f_2\big),
\end{equation*}
which shows \eqref{eq_to_be_proved1} by a density argument.
\end{proof}

\end{document}